\pdfoutput=1
\documentclass[aps,prx,10pt,twocolumn,floatfix,amsmath,amssymb,notitlepage,nofootinbib,superscriptaddress,longbibliography]{revtex4-2}

\usepackage[T1]{fontenc}
\usepackage{graphicx}
\usepackage{bm}
\usepackage{booktabs}
\usepackage{braket}
\usepackage{amsmath,amssymb,amsfonts,amsthm}
\usepackage{mathtools}
\usepackage{dsfont}
\usepackage{soul}
\usepackage{array}
\usepackage{multirow}
\usepackage{threeparttable}
\usepackage{xcolor}
\usepackage{algpseudocode}
\usepackage[colorlinks=true,linkcolor=blue!60!black,citecolor=blue!60!black,urlcolor=blue!60!black]{hyperref}

\graphicspath{{figs/}}

\newcommand{\FF}{\mathbb{F}_2}

\newcommand{\EE}{\mathbb{E}}
\newcommand{\Var}{\mathrm{Var}}
\newcommand{\Cov}{\mathrm{Cov}}

\newcommand{\Ker}{\mathrm{ker}\,}
\newcommand{\rank}{\mathrm{rank}\,}

\newcommand{\HH}{\bm{H}}

\newcommand{\WW}{\bm{W}}

\newcommand{\Amat}{\bm{A}}
\newcommand{\RR}{\mathbb{R}}

\newcommand{\Fs}{F}
\newcommand{\Xset}{\mathcal{X}}

\newtheorem{theorem}{Theorem}
\newtheorem{proposition}{Proposition}
\newtheorem{corollary}{Corollary}
\newtheorem{lemma}{Lemma}

\newcounter{algorithm}
\renewcommand{\thealgorithm}{\arabic{algorithm}}

\newcommand{\DD}{\mathcal{D}}
\newcommand{\Drt}{\DD_{\mathrm{rt}}}
\newcommand{\Doff}{\DD_{\mathrm{off}}}

\newcommand{\pc}{p_c}
\newcommand{\pa}{p_a}

\DeclareMathOperator{\Tr}{Tr}

\newcommand{\CC}{\mathbb{C}}
\newcommand{\cI}{\mathcal{I}}
\newcommand{\cP}{\mathcal{P}}
\newcommand{\cE}{\mathcal{E}}
\newcommand{\cL}{\mathcal{L}}
\newcommand{\cH}{\mathcal{H}}

\makeatletter
\newcommand{\aliaslabel}[2]{\begingroup\phantomsection\protected@edef\@currentlabel{#2}\label{#1}\endgroup}
\newcommand{\eqalias}[1]{\aliaslabel{#1}{\theequation}}
\makeatother

\begin{document}

\title{Measure Now, Mitigate Later\texorpdfstring{:}{:} Virtual Error Cancellation\texorpdfstring{\\}{} for Logical Quantum Circuits}

\author{Oles Shtanko}
\affiliation{Google Quantum AI}
\author{Saswata Roy}
\affiliation{Google Quantum AI}
\affiliation{Department of Physics, Cornell University, Ithaca, NY, 14853, USA}
\author{Shashwat Kumar}
\author{Zlatko Minev}
\affiliation{Google Quantum AI}
\date{\today}

\begin{abstract}
  Early fault-tolerant quantum algorithms are a key step in scalable quantum computing, but are ultimately constrained by residual, uncorrectable logical errors. Logical quantum error mitigation can eliminate this bias, however its standard requirements include prior noise characterization, more complex experiments, and substantial sampling overhead. Although no mitigation technique can simultaneously overcome all three constraints, we demonstrate that the first two—noise learning and need for modified circuits—can be completely bypassed for logical circuits by leveraging only syndrome records. Specifically, we introduce virtual error cancellation: a method that, unlike previous bias-free approaches, operates entirely in classical post-processing on universal quantum circuits. We develop and numerically test end-to-end protocols that reduce the sampling overhead beyond fundamental limitations of existing syndrome-aware mitigation techniques, including syndrome-based postselection. In particular, we propose a scheme that pairs the low-latency decoder used in the experiment with a high-complexity decoder run in post-processing. In numerical simulations of a distance-7 rotated surface code, we show that our full-stack dual-decoder algorithm exhibits an error that diminishes as the number of samples increases, capable of reaching more than three orders of magnitude of suppression compared to the decoded logical error rate in an experimentally achievable setting. These results establish syndrome records as a resource for correcting universal computational estimates after quantum execution.
\end{abstract}

\maketitle

Quantum error correction provides the foundational architecture for scalable quantum computation~\cite{kitaev2003fault,fowler2012surface}. 
However, due to practical constraints on physical qubit counts, early fault-tolerant quantum processors will operate with moderate code distances and retain non-negligible residual logical error rates~\cite{google2023suppressing,bluvstein2024logical,google2025below,lacroix2025scaling}. 
Eliminating these lingering errors entirely by scaling the code distance imposes a steep hardware overhead, delaying their general utility~\cite{gidney2021how}.

\begin{figure*}[t]
    \centering
    \includegraphics[width=\textwidth]{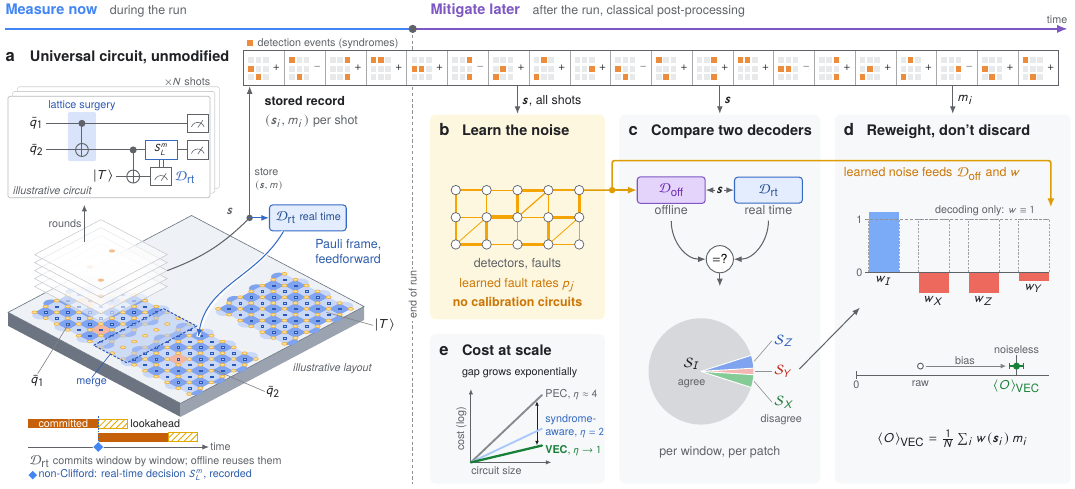}
\caption{\textbf{Concept.}
    \textbf{(a)} An unmodified universal logical circuit is executed with real-time decoding and feedforward, storing each shot's syndrome record $\bm{s}_i$ and outcome $m_i$.
    \textbf{(b)} Fault rates of a prescribed detector error model are learned from the stored syndromes, without separate calibration experiments.
    \textbf{(c)} Comparing real-time and offline decoding partitions shots into consensus ($\mathcal{S}_I$) and disagreement ($\mathcal{S}_{X,Y,Z}$) sectors.
    \textbf{(d)} VEC assigns positive weights $w_I>1$ to consensus shots and negative weights $w_Q<0$ to disagreement shots. The estimator $\langle O\rangle_{\mathrm{VEC}}=N^{-1}\sum_i w(\bm{s}_i)m_i$ cancels logical bias for exact noise parameters, at the cost of increased statistical uncertainty. Standard decoding uses $w\equiv1$; agreement-based postselection discards disagreement shots.
    \textbf{(e)} Circuit sampling overhead scales approximately as $\exp(\eta p_L V)$, where $p_L$ is the per-operation logical error rate and $V$ the number of noisy logical operations. The asymptotic benchmarks are $\eta\approx4$ for standard logical PEC and $\eta=2$ for single-decoder syndrome-aware PEC; VEC approaches $\eta=1$ in the favorable large-distance limit.
    All panels are schematic.}
\label{fig:main_schematic}
\end{figure*}

One way to ease these demands and accelerate the transition to fault-tolerant computing is logical quantum error mitigation, which suppresses residual errors in encoded circuits with a variable sampling overhead and without requiring additional physical qubits~\cite{piveteau2021error,lostaglio2021error,suzuki2022quantum,zhang2025demonstrating,umbrarescu2026infinite,xie2025noise}. 
Compared with physical-level mitigation protocols~\cite{temme2017error,li2017efficient,endo2018practical,vandenberg2023probabilistic,cai2023quantum}, logical error mitigation enjoys a distinct advantage: stabilizer measurements continuously generate syndrome records that provide rich information about the logical-level noise. 
An established approach to utilizing this information is syndrome-based postselection, which discards runs with a high likelihood of logical errors while retaining benign trials~\cite{smith2024mitigating,dinca2025confidence,zhong2025combining,kumar2026codesigning,zhou2025soft}, often using decoder soft information such as the complementary gap~\cite{bombin2024fault,gidney2025yoked,meister2024efficient,gidney2024magic}.
An alternative to postselection is probabilistic error cancellation (PEC)~\cite{temme2017error,li2017efficient,endo2018practical}, which can be extended to encoded qubits by actively inserting quasi-probability operations~\cite{piveteau2021error,suzuki2022quantum} or by reweighting shots using decoder-inferred soft information~\cite{zhou2025soft,aharonov2025salem,tsubouchi2026advantages}.
While PEC yields provably unbiased expectation values at a finite sampling cost, conventional implementations require detailed prior noise characterization and physical circuit modifications~\cite{temme2017error,li2017efficient,piveteau2021error,suzuki2022quantum}, whereas syndrome-based reweighting protocols have either been restricted to Clifford circuits~\cite{zhou2025soft} or incur steep sampling overheads~\cite{aharonov2025salem,tsubouchi2026advantages}.

Here, we redesign logical error cancellation so that it executes entirely in classical post-processing for universal logical quantum circuits. Our method, virtual error cancellation (VEC), takes advantage of syndrome information to avoid noise tomography and circuit modifications, enabling a workflow in which all mitigation occurs after circuit execution.
It utilizes a combination of a linear cancellation approach~\cite{suchsland2026quantumerrormitigationinformation} and noise learning from existing syndrome measurements~\cite{spitz2018adaptive,wagner2023learning,zheng2026learning,takou2026logical}. 
Notably, VEC requires only a fraction of the sampling cost of logical PEC and is more robust against noise mischaracterization, a known source of bias in error mitigation~\cite{govia2025bounding}. 
This advantage arises because real-time methods rely solely on low-latency hardware decoders~\cite{higgott2022pymatching,higgott2025blossom}. In contrast, offline post-processing can incorporate additional information from high-complexity decoders~\cite{bravyi2014efficient,roffe2020decoding,shutty2024efficient}, as illustrated in Fig.~\ref{fig:main_schematic}. 
In code-capacity simulations of a rotated surface code~\cite{dennis2002topological,fowler2012surface},
we demonstrate that the residual logical error rate of our end-to-end protocol diminishes with the number of shots, achieving more than three orders of magnitude of suppression compared to the decoded logical error rate, and over two orders of magnitude compared to postselection based on the decision of a higher-complexity decoder.
We also theoretically derive and numerically demonstrate up to a quadratic advantage in sampling cost over syndrome-aware PEC~\cite{zhou2025soft,aharonov2025salem,tsubouchi2026advantages}.\medskip

\textbf{The method}. In fault-tolerant quantum computation with stabilizer codes, each error-correction cycle measures multi-qubit stabilizer checks across $N_D$ detectors to extract a discrete syndrome vector $\bm{s} \in \Omega \equiv \{0, 1\}^{N_D}$. 
Based on these syndrome measurements, a decoder estimates the most likely Pauli error and applies the corresponding recovery operation. 
Although increasing the code distance $d$ exponentially suppresses the unmitigated logical error rate, any finite-distance code retains residual logical errors. 
Assuming that physical noise is described by a Pauli channel, the residual logical operation conditioned on the syndrome vector $\bm{s}$ is given by
\begin{equation}
\rho \mapsto \Lambda_{\bm{s}}(\rho) = \Big(1 - \sum_{\alpha=1}^M p_\alpha(\bm{s})\Big) \rho + \sum_{\alpha =1}^M p_\alpha(\bm{s}) \mathcal{E}_\alpha(\rho),
\label{eq:conditional_logical_channel}
\end{equation}
where $\mathcal{E}_\alpha(\rho) \equiv P_\alpha \rho P_\alpha$ is one of $M$ non-trivial logical Pauli channels occurring with conditional probability $p_\alpha(\bm{s})$.
For a single logical qubit, the non-trivial Pauli operators are $P_\alpha \in \{X_L, Y_L, Z_L\}$.

Our goal is to construct a classical post-processing method that compensates for these residual logical errors. 
Consider $N$ measurement shots, in each of which the quantum processor records the syndrome history $\bm{s}_i$ and the projective readout outcome $m_i \in \{+1, -1\}$ of a Pauli logical observable $O$. We assign a classical filter weight $w(\bm{s}_i)$ to each shot and construct the linear estimator $\hat{\langle O \rangle}_{\mathrm{VEC}} = N^{-1} \sum_{i=1}^N w(\bm{s}_i) m_i$. For this estimator to be unbiased, the syndrome ensemble average must reconstruct the noiseless expectation value
\begin{equation}
\lim_{N\to\infty} \hat{\langle O \rangle}_{\mathrm{VEC}} = \EE_{\bm{s}} [w(\bm{s})\langle O\rangle_{\bm{s}}] = \langle O\rangle_{\mathrm{ideal}}.
\label{eq:continuum_channel_inversion_cond}
\end{equation}

Instead of tailoring the filter weights to a single fixed readout, we require the syndrome-averaged logical channel to reduce to the identity operation by canceling all logical error processes regardless of the circuit or measurement. By the linear independence of the logical Pauli basis, satisfying Eq.~\eqref{eq:continuum_channel_inversion_cond} across all states and observables imposes two exact conditions on $w(\bm{s})$ with distinct physical roles (Supplementary Section~\ref{subsec:appendix_operator_channel_inversion}),
\begin{align}
\label{eq:pauli_cancellation_constraint}
    \begin{aligned}
\EE_{\bm{s}}\big[ w(\bm{s}) \big] &= 1 \quad \text{(Trace Preservation)}, \\
\EE_{\bm{s}}\big[ w(\bm{s}) p_\alpha(\bm{s}) \big] &= 0 \quad \text{(Error Cancellation)}.
    \end{aligned}
\end{align}
Because each conditional error probability satisfies $p_\alpha(\bm{s}) \ge 0$%
, the filter $w(\bm{s})$ must take negative values on error-prone syndromes. 
In post-processing, these negative weights cause the error contributions to cancel on average, eliminating logical Pauli bias across all observables simultaneously.

As with many error-mitigation methods, the cost of eliminating this bias is an increased statistical variance. Because the projective readout outcomes satisfy $m_i^2 = 1$, 
the estimator variance is bounded by $\Var[\hat{\langle O \rangle}_{\mathrm{VEC}}] \le N^{-1} \EE_{\bm{s}}[w(\bm{s})^2]$ (Supplementary Section~\ref{subsec:supp_estimator_variance}). Reaching an additive precision $\epsilon$ relative to the noiseless shot budget $N_{\mathrm{ideal}} \equiv 1/\epsilon^2$ requires a total shot count
\begin{equation}
N_{\mathrm{VEC}} = C(w) \cdot N_{\mathrm{ideal}} \leq \frac{C(w)}{\epsilon^2},
\label{eq:vpec_shot_budget_scaling}
\end{equation}
where the trace-preservation condition of Eq.~\eqref{eq:pauli_cancellation_constraint} fixes the sampling overhead multiplier as
\begin{align}
C(w) \equiv \EE_{\bm{s}}\big[ w(\bm{s})^2 \big] = 1 + \Var_{\bm{s}}\big[ w(\bm{s}) \big].
\label{eq:vpec_sampling_overhead_def}
\end{align}
Unlike conventional error-mitigation approaches, our protocol requires no additional quantum circuits beyond those already executed for the unmitigated computation. 
For example, standard probabilistic error cancellation (PEC) requires injecting calibrated physical recovery gates during circuit execution~\cite{temme2017error,li2017efficient,endo2018practical,vandenberg2023probabilistic}. 
Furthermore, as shown below, this approach can be superior to syndrome postselection, which discards contested runs at the cost of introducing systematic bias~\cite{dinca2025confidence,zhong2025combining}.\medskip

\textbf{Syndrome-resolved protocol}. The conditions in Eq.~\eqref{eq:pauli_cancellation_constraint} admit multiple filter solutions. We first identify the unique filter that minimizes the sampling overhead $C(w)$ in Eq.~\eqref{eq:vpec_sampling_overhead_def}. Let $\bm{p}(\bm{s})$ denote the conditional logical error probability vector in the committed recovery frame given syndrome $\bm{s}$, with syndrome-averaged mean $\bar{\bm{p}} \equiv \EE_{\bm{s}}[\bm{p}(\bm{s})]$ and covariance matrix $\bm{\Sigma}$ of entries $\Sigma_{\alpha\beta} \equiv \EE_{\bm{s}}[(p_\alpha(\bm{s}) - \bar{p}_\alpha)(p_\beta(\bm{s}) - \bar{p}_\beta)]$. The unique unbiased filter that minimizes $C(w)$ is
\begin{align}
w^*(\bm{s}) = 1 - \bar{\bm{p}}^\top \bm{\Sigma}^{-1} \big( \bm{p}(\bm{s}) - \bar{\bm{p}} \big), \label{eq:optimal_weights}
\end{align}
with corresponding minimal sampling overhead
\begin{align}
C_{\mathrm{syn}}^* \equiv C(w^*) = 1 + \bar{\bm{p}}^\top \bm{\Sigma}^{-1} \bar{\bm{p}}. \label{eq:optimal_sampling_overhead}
\end{align}
\aliaslabel{eq:cs_variance_lower_bound}{\ref{eq:optimal_sampling_overhead}}%
\aliaslabel{eq:overhead_mahalanobis}{\ref{eq:optimal_sampling_overhead}}%
This syndrome-resolved overhead is strictly optimal: any unbiased filter in Eq.~\eqref{eq:pauli_cancellation_constraint} obeys the universal bound $C(w) \ge C_{\mathrm{syn}}^*$, with equality if and only if $w(\bm{s}) = w^*(\bm{s})$ (Theorem~\ref{thm:variational_cs_bound}; Supplementary Section~\ref{sec:appendix_proof_variational_cs_bound}).

At large code distances, the sampling overhead in Eq.~\eqref{eq:optimal_sampling_overhead} is governed by the geometry of uncorrectable fault chains. Below threshold, the unmitigated logical error rate $p_L \equiv \|\bar{\bm{p}}\|_1$ is dominated by minimal fault paths of weight $(d+1)/2$, scaling with the physical error rate $p$ as $p_L = \mathcal{O}(p^{(d+1)/2})$ for an odd code distance $d$. On the failure syndromes $\Omega_{\mathrm{fail}}$ triggered by these paths, the decoder compares the most likely correction with competing logical interpretations. The minimal weight difference between these competing error chains defines a homological gap $\Delta W_{\min} \in \{0, 1\}$, which dictates how sharply syndrome fluctuations distinguish logical errors from identity operations. This weight gap universally sets the asymptotic sampling overhead (Theorem~\ref{thm:universal_gap_scaling_main}; Supplementary Section~\ref{sec:appendix_spectral_scaling_mahalanobis}),
\begin{align}
C_{\mathrm{syn}}^* = 1+\mathcal O\left(p^{(d+1)/2 - \Delta W_{\min}}\right), \label{eq:overhead_gap_law}
\end{align}
\aliaslabel{eq:overhead_universal_gap_law}{\ref{eq:overhead_gap_law}}%
\aliaslabel{eq:overhead_delta_w_min_def}{\ref{eq:overhead_gap_law}}%
partitioning decoding into gapped and gapless regimes.

In the gapped regime ($\Delta W_{\min} = 1$), topological parity prevents minimal-weight ties between competing logical corrections, as in repetition codes, independent bit-flip surface codes, and Calderbank--Shor--Steane (CSS) codes decoded on separate $X$ and $Z$ graphs~\cite{dennis2002topological,fowler2012surface}. Conditional failure probabilities on $\Omega_{\mathrm{fail}}$ are diluted to $\mathcal{O}(p)$, restricting the minimum covariance eigenvalue $\sigma_{\min}(\bm{\Sigma}_{\mathrm{dom}})$ on the dominant (minimum-distance) logical channels to $\sigma_{\min}(\bm{\Sigma}_{\mathrm{dom}}) / p_L = \mathcal{O}(p)$ and giving suboptimal overhead scaling $C_{\mathrm{syn}}^* - 1 = \mathcal{O}(p^{(d-1)/2})$.

In the gapless regime ($\Delta W_{\min} = 0$), competing logical cosets share equal minimal weight on ambiguous failure syndromes, driving conditional failure probabilities to $\mathcal{O}(1)$. This regime arises when $Y$ faults couple $X$ and $Z$ checks during joint surface-code decoding~\cite{bravyi2014efficient}, under gauge degeneracies in subsystem and gauge color codes~\cite{bombin2015gauge}, or when spacetime hook errors confound fast decoders~\cite{gidney2021stim}. Order-unity conditional failure rates maximize statistical contrast on the dominant logical channels with $\sigma_{\min}(\bm{\Sigma}_{\mathrm{dom}}) \sim p_L$, which achieves the optimal fault-tolerant scaling $C_{\mathrm{syn}}^* - 1 = \mathcal{O}(p_L)$.
Below we focus on mitigation in the gapless regime, which is particularly relevant: it represents the physically generic case for correlated decoding, and its excess overhead scales as $\mathcal{O}(p_L)$, achieving the efficiency of physical-level error mitigation without physical gate interventions.

Although the syndrome-resolved protocol provides the theoretical foundation for virtual error cancellation, evaluating exact posteriors $\bm{p}(\bm{s})$ across all $2^{N_D}$ syndromes is computationally intractable on general spacetime decoding graphs ($\#\mathrm{P}$-hard). Furthermore, assigning individual syndrome-resolved weights to every shot increases vulnerability to temporal noise drift and statistical fluctuations in posterior estimation. 
These practical considerations motivate the coarse-grained protocol developed below.\medskip

\begin{figure*}[t]
    \centering
    \includegraphics[width=\linewidth]{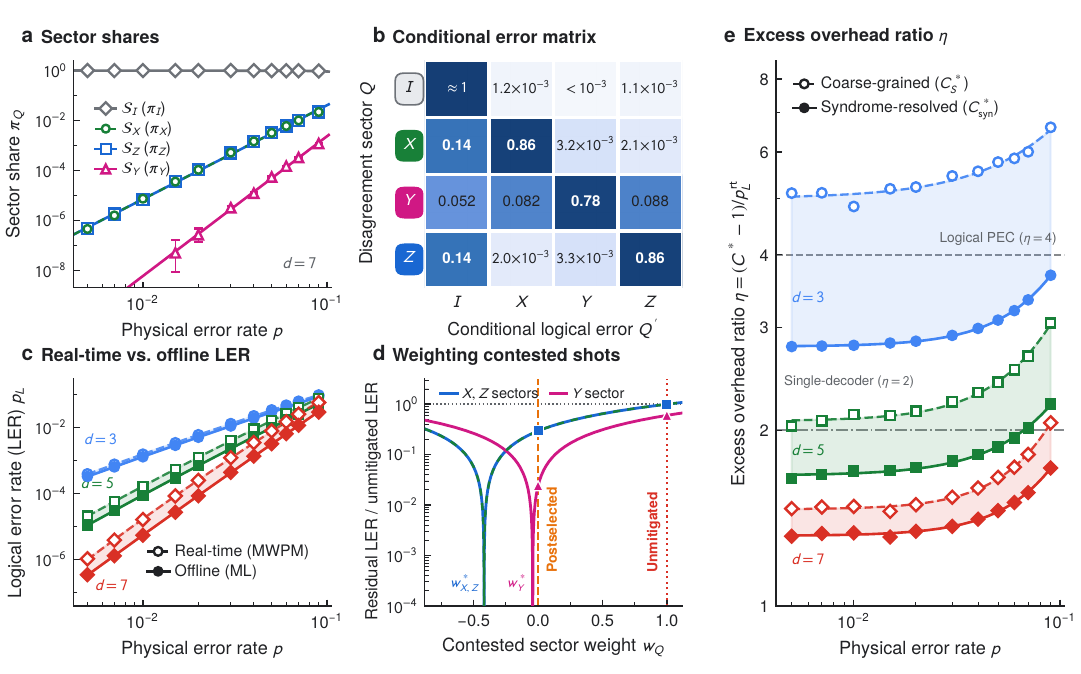}
\caption{\textbf{Dual-decoder architecture and coarse-grained virtual error cancellation} ($d = 7$ unless noted).
\textbf{(a)} Sector shares $\pi_Q$ versus physical error rate $p$ for the consensus sector $\mathcal{S}_I$ and the disagreement sectors $\mathcal{S}_{X,Y,Z}$. Lines are power-law fits ($\mathcal{S}_Y$ fitted for $p \ge 0.04$; no $\mathcal{S}_Y$ counts for $p \le 0.01$); error bars are standard errors ($68\%$ Clopper--Pearson for $\mathcal{S}_Y$).
\textbf{(b)} Conditional logical error matrix $p_{Q,Q'}$ at $p = 0.05$: probability of logical error $Q'$, relative to the real-time frame, in sector $Q$; rows sum to one.
\textbf{(c)} Logical error rate (LER) $p_L$ of real-time minimum-weight perfect matching (MWPM, open markers) and offline maximum-likelihood decoding (ML, filled markers) for $d = 3$, $5$, $7$; shading marks the gap between the two decoders.
\textbf{(d)} Residual LER relative to the unmitigated LER as one disagreement-sector weight $w_Q$ is varied at $p = 0.05$ (Supplementary Section~\ref{sec:methods_numerical}). Minima mark the optimal weights $w^*_Q$; vertical lines mark postselection ($w_Q = 0$) and no mitigation ($w_Q = 1$).
\textbf{(e)} Excess overhead ratio $\eta = (C^* - 1)/p_L$, normalized by the real-time LER, for coarse-grained ($C_S^*$, open markers) and syndrome-resolved ($C_{\mathrm{syn}}^*$, filled markers) mitigation. Horizontal lines mark standard logical PEC ($\eta = 4$) and single-decoder syndrome-aware PEC ($\eta = 2$).
}\label{fig:dual_decoder}
\end{figure*}


\aliaslabel{eq:discrete_trace_preservation}{\ref{eq:discrete_optimal_weights0}}%
\aliaslabel{eq:discrete_error_cancellation}{\ref{eq:discrete_optimal_weights0}}%
\aliaslabel{eq:between_sector_covariance_def}{\ref{eq:discrete_optimal_weights0}}%
\aliaslabel{eq:dd_sector_channels}{\ref{eq:discrete_optimal_weights0}}%
\aliaslabel{eq:discrete_overhead_hierarchy}{\ref{eq:discrete_optimal_weights0}}%
\aliaslabel{eq:dd_variance_decomposition}{\ref{eq:discrete_optimal_weights0}}%
\aliaslabel{eq:minimal_k_inequality}{\ref{eq:discrete_optimal_weights0}}%

\textbf{Coarse-grained protocol}. To make virtual error cancellation tractable without computing error probabilities for each syndrome, we now group syndromes into a small number of coarse-grained sectors with constant weights and determine the minimal number of sectors required for exact error cancellation. Specifically, we partition $\Omega$ into $K$ disjoint sectors $\mathcal{S}_k$, each occurring with probability $\pi_k$ and characterized by a conditional error centroid $\bm{p}_k$ satisfying $\bar{\bm{p}} = \sum_k \pi_k \bm{p}_k$.
All syndromes within sector $\mathcal{S}_k$ receive a constant filter weight $w_k$, subject to the coarse-grained trace-preservation and error-cancellation constraints $\sum_k \pi_k w_k = 1$ and $\sum_k \pi_k w_k \bm{p}_k = \bm{0}$ inherited from Eq.~\eqref{eq:pauli_cancellation_constraint}. 

The required number of sectors is governed by the dimension of the logical error space: canceling an $M$-component logical noise channel imposes $M + 1$ linear constraints. 
Because $K$ centroids span an affine subspace of dimension at most $K - 1$, achieving a full-rank between-sector covariance matrix that cancels all $M$ independent error channels for arbitrary $\bar{\bm{p}} \in \mathbb{R}^M$ requires at least $K \ge M + 1$ sectors (Proposition~\ref{thm:minimal_resolving_basis}; Supplementary Section~\ref{subsec:app_proof_minimal_resolving_basis}). 
At the minimal count $K = M + 1$, a unique solution exists if and only if the centroid set $\{\bm{p}_0, \dots, \bm{p}_M\}$ forms a non-degenerate $M$-simplex enclosing the unconditional mean $\bar{\bm{p}}$. 

Analogous to the syndrome-resolved protocol, minimizing the sampling overhead $\sum_k \pi_k w_k^2$ subject to these linear constraints yields the optimal sector weights $w_k^*$ and coarse-grained sampling overhead $C_S^*$,
\begin{equation}
  \label{eq:discrete_optimal_weights0}
  \begin{aligned}
w_k^* &= 1 - \bar{\bm{p}}^\top \bm{\Sigma}_S^{-1} \big( \bm{p}_k - \bar{\bm{p}} \big),  \\
C_S^* &= 1 + \bar{\bm{p}}^\top \bm{\Sigma}_S^{-1} \bar{\bm{p}},
\end{aligned}
\end{equation}
where $\bm{\Sigma}_S \equiv \sum_k \pi_k (\bm{p}_k - \bar{\bm{p}})(\bm{p}_k - \bar{\bm{p}})^\top$ is the between-sector covariance matrix (see Supplementary Section~\ref{subsec:app_discrete_filter_derivation}).
By the law of total covariance, the full decoder covariance splits into between-sector and positive-semidefinite within-sector contributions, $\bm{\Sigma} = \bm{\Sigma}_S + \bm{\Sigma}_{\mathrm{within}} \succeq \bm{\Sigma}_S$. Consequently, $\bm{\Sigma}_S^{-1} \succeq \bm{\Sigma}^{-1}$, establishing the universal overhead hierarchy $C(w) \ge C_S^* \ge C_{\mathrm{syn}}^*$ (Theorem~\ref{thm:discrete_sector_bound}; Supplementary Section~\ref{subsec:app_proof_eves_law_sector_bound}). 

Although coarse-graining substantially reduces calibration requirements, the central challenge is constructing a syndrome partition that guarantees full rank ($\bm{\Sigma}_S \succ 0$) and strong statistical contrast while remaining computationally efficient. Next, we demonstrate that such a partition emerges naturally from the consensus and disagreement between two complementary decoders.\medskip

\aliaslabel{eq:dd_overhead_pLprime}{\ref{eq:dd_overhead_eta_formula}}%
\aliaslabel{eq:dd_residual_taylor}{\ref{eq:four_sector_distance_suppression2}}%

\textbf{Dual-decoder protocol}. Having established the conditions for coarse-grained cancellation, we now show how to construct an explicit, scalable syndrome partition by comparing a fast real-time decoder against a more accurate offline decoder, which naturally sorts syndromes into consensus and directional error sectors while preserving nearly all of the optimal statistical contrast. In practice, fault-tolerant quantum processors require fast real-time decoders, such as minimum-weight perfect matching (MWPM) or union-find algorithms, to execute non-Clifford feedforward operations within microsecond control cycles; however, more accurate tensor-network or maximum-likelihood (ML) decoders (Fig.~\ref{fig:dual_decoder}c) exceed hardware latency budgets. 
 
For an $n$-logical-qubit block with logical Pauli group $\mathcal{P}_n$, comparing the real-time decoder output $\mathcal{D}_{\mathrm{rt}}(\bm{s}) \in \mathcal{P}_n$ with that of an offline reference decoder $\mathcal{D}_{\mathrm{off}}(\bm{s}) \in \mathcal{P}_n$ partitions the syndrome space into $K = 4^n$ disjoint sectors,
\begin{equation}
\begin{aligned}
\mathcal{S}_Q \equiv \big\{\bm{s} \in \Omega :\; \mathcal{D}_{\mathrm{rt}}(\bm{s})\cdot \mathcal{D}_{\mathrm{off}}(\bm{s}) \sim Q\big\}, \quad Q \in \mathcal{P}_n,
\end{aligned}
\label{eq:dd_consensus_indicator}
\end{equation}
where $\sim$ denotes equality up to a phase factor. Decoder agreement defines the consensus sector $\mathcal{S}_I$, which captures the vast majority of syndromes with probability share $q_I \equiv \pi_I = 1 - \mathcal{O}(p_L)$ (Figs.~\ref{fig:main_schematic}c and~\ref{fig:dual_decoder}a). 
Conversely, decoder disagreement along logical direction $Q \in \mathcal{P}_n \setminus \{I\}$ partitions the remaining shots into $4^n - 1$ directional fault sectors $\mathcal{S}_Q$ of probability $q_Q \equiv \pi_Q$, each characterized by a conditional error centroid $\bm{p}_Q \equiv \EE_{\bm{s}}[\bm{p}(\bm{s}) \mid \bm{s} \in \mathcal{S}_Q]$ with components $p_{Q,Q'}$ denoting the probability of logical error $Q'$ given sector $Q$ (Fig.~\ref{fig:dual_decoder}b).

The sector probabilities $q_Q$ and conditional centroids $\bm{p}_Q$ determine the coarse-grained filter weights. Subtracting the consensus centroid $\bm{p}_I$ defines the differential contrast matrix $\bm{V} \equiv (\bm{p}_{Q_1} - \bm{p}_I, \dots, \bm{p}_{Q_M} - \bm{p}_I)$. Below the fault-tolerant threshold, $\bm{V}$ is strictly diagonally dominant and therefore invertible by the L\'{e}vy--Desplanques theorem, yielding closed-form filter weights
\begin{align}
w_I &= \frac{1 + \bm{1}_M^\top \bm{V}^{-1}\bm{p}_I}{q_I}, \nonumber \\
w_Q &= -\frac{(\bm{V}^{-1}\bm{p}_I)_Q}{q_Q}, \qquad Q \in \mathcal{Q},
\label{eq:four_sector_closed_form_weights}
\end{align}
with $\bm{1}_M \equiv (1, \dots, 1)^\top$ and $\mathcal{Q} \equiv \mathcal{P}_n \setminus \{I\}$ (Theorem~\ref{thm:app_levy_desplanques_pauli}; Supplementary Section~\ref{subsec:app_proof_pauli_4sector_invertibility}). 
Here, the positive weight $w_I > 1$ slightly amplifies consensus shots, whereas negative weights $w_Q < 0$ subtract error contributions from contested shots. This mechanism contrasts with both physical PEC, which requires auxiliary noise-amplified circuits, and postselection on the agreement sector $\mathcal{S}_I$ ($w_Q = 0$), which discards contested runs while leaving the consensus error floor $\bm{p}_I$ unmitigated. Instead, virtual error cancellation retains all shots and assigns negative weights $w_Q < 0$ in post-processing to cancel consensus errors using decoder disagreement (Figs.~\ref{fig:main_schematic}c,d and~\ref{fig:dual_decoder}d).

The resulting coarse-grained sampling overhead expands asymptotically as
\begin{align}
C_S^* &= 1 + \sum_{Q \in \mathcal{Q}} \frac{\big[(\bm{V}^{-1}\bar{\bm{p}})_Q\big]^2}{q_Q} + \mathcal{O}(p_L^2) \label{eq:four_sector_overhead_matrix_form} \\
&= 1 + \mathcal{O}\left(\frac{p_L}{\sigma_{\min}(\bm{V})}\right).
\label{eq:four_sector_overhead_asymptotic}
\end{align}
In the regime of informative disagreement where diagonal elements satisfy $p_{Q,Q} = \mathcal{O}(1)$, the smallest singular value $\sigma_{\min}(\bm{V})$ remains bounded away from zero below threshold (Theorem~\ref{thm:app_asymptotic_overhead}; Supplementary Section~\ref{subsec:app_proof_pauli_4sector_invertibility}). Consequently, the excess sampling overhead scales strictly linearly with the unmitigated error rate, $C_S^* - 1 = \mathcal{O}(p_L) \propto (p/p_{\mathrm{th}})^{(d+1)/2}$, approaching zero exponentially with code distance $d$. 
As illustrated in Fig.~\ref{fig:dual_decoder}e for a rotated surface code, the four-sector overhead approaches the optimal syndrome-resolved bound in Eq.~\eqref{eq:optimal_sampling_overhead} as the code distance increases. At $d = 7$ it lies below both standard logical PEC ($\eta = 4$) and the single-decoder Bhatia--Davis ceiling ($\eta = 2$)~\cite{bhatia2000better} over most of the simulated range, and the syndrome-resolved overhead ratio decreases with distance, reaching $\eta \approx 1.3$ at $d = 7$.

Under imperfect calibration where sector probabilities $q_Q$ and centroids $\bm{p}_Q$ are estimated within maximum relative precision $\varepsilon_{\max}$, the residual logical error norm is bounded by
\begin{equation}
\|\bm{p}_{\mathrm{res}}\| = \mathcal{O}\big(\varepsilon_{\max} \|\bm{p}_I\|\big),
\label{eq:four_sector_distance_suppression2}
\end{equation}
as established in Theorem~\ref{thm:app_distance_suppression} (Supplementary Section~\ref{subsec:app_proof_distance_suppression}). Because the consensus error rate is suppressed relative to the unconditional mean, $\|\bm{p}_I\| \ll \|\bar{\bm{p}}\|$, this residual error is substantially lower than the systematic bias incurred by postselection. In addition, this consensus suppression reduces calibration errors by an extra factor of $\alpha \equiv \|\bm{p}_I\|_1 / p_L \ll 1$, outperforming physical PEC, where residual errors scale directly with the unmitigated rate $\bar{\bm{p}}$.

\begin{figure*}[t]
    \centering
    \includegraphics[width=\linewidth]{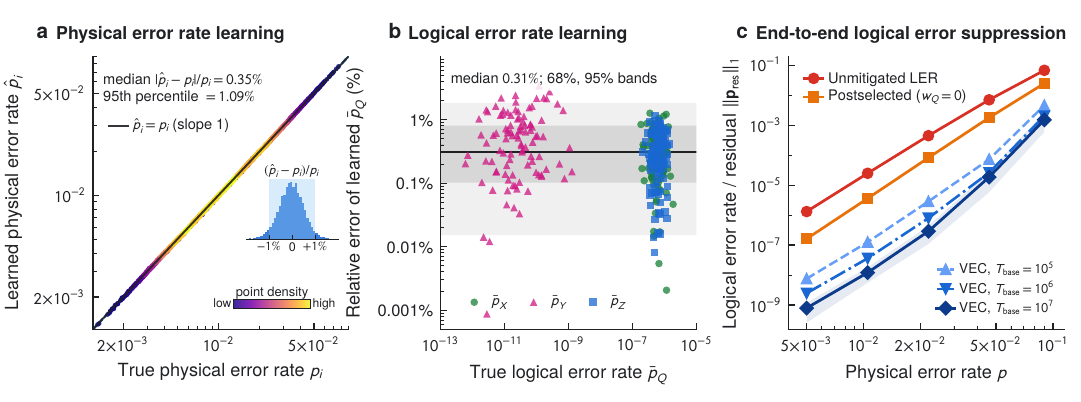}
\caption{\textbf{End-to-end noise learning and virtual error cancellation on a $d = 7$ surface code.} Code-capacity depolarizing noise with per-qubit rates distributed around a mean $p$; learning budgets scale as $T_{\mathrm{learn}} = T_{\mathrm{base}} \cdot (5\cdot 10^{-3}/p)$; $10^2$ noise realizations.
\textbf{(a)} Learned versus true per-qubit error rates for the three lowest values of $p$ ($T_{\mathrm{base}} = 10^7$), colored by point density. Inset: distribution of the relative error (median $0.35\%$).
\textbf{(b)} Relative error of the learned logical error rates $\bar{p}_X$, $\bar{p}_Y$, $\bar{p}_Z$ at $p = 5\cdot 10^{-3}$, with median and $68\%$ and $95\%$ bands.
\textbf{(c)} Unmitigated logical error rate, postselection on decoder agreement ($w_Q = 0$), and VEC residual bias $\|\bm{p}_{\mathrm{res}}\|_1$ for $T_{\mathrm{base}} = 10^5$, $10^6$, $10^7$. Lines connect means over realizations; shading spans the 16th to 84th percentiles for $T_{\mathrm{base}} = 10^7$.
}\label{fig:vlpec_validation}
\end{figure*}

The dual-decoder mechanism can be illustrated simply on a single bit-flip channel ($M = 1$). The syndrome space divides into two sectors: a dominant agreement sector $\mathcal{S}_I$ ($q_I \approx 1$) with suppressed error probability $p_I = \alpha p_L$ ($\alpha \ll 1$), and a contested sector $\mathcal{S}_X$ ($q_X \sim p_L$) where decoder disagreement heralds an error with probability $p_X = \mathcal{O}(1)$. Evaluating the filter weights of Eq.~\eqref{eq:four_sector_closed_form_weights} yields the sampling overhead
\begin{equation}
C_{\mathrm{bitflip}} = 1 + \eta p_L + \mathcal{O}(p_L^2), \qquad \eta = \frac{1}{p_X(1 - \alpha)} = \frac{1 + r}{1 - \alpha},
\label{eq:dd_overhead_eta_formula}
\end{equation}
where $\alpha \equiv p_I/p_L$ measures agreement-sector error suppression, and $r \equiv (p_L' - p_I)/(p_L - p_I)$ is the failure ratio of the offline decoder ($p_L'$) relative to the real-time decoder on contested shots (Supplementary Section~\ref{subsec:app_decoder_asymmetry_derivation}). When the offline decoder provides no advantage ($r \approx 1$), contested outcomes are unbiased coin flips ($p_X \approx 1/2$), recovering the single-decoder ceiling $\eta \ge 2$. Below threshold, larger code distances allow the offline decoder to approach an ideal herald ($r \to 0$ and $p_X \to 1$) while agreement errors vanish ($\alpha \to 0$). Whereas finite-distance mitigation ($\alpha > 0$) requires a negative weight $w_X \approx -\alpha/(1 - \alpha) < 0$ to cancel the residual error $p_I = \alpha p_L$ that persists under agreement-sector postselection ($w_X = 0$), the asymptotic limit $\alpha \to 0$ sends $w_X \to 0$ and $\eta \to 1$, yielding $C_{\mathrm{bitflip}} \to 1 + p_L$ and halving the excess sampling overhead.

Numerical simulations of rotated surface codes under code-capacity depolarizing noise confirm these scaling laws (Fig.~\ref{fig:dual_decoder}; Supplementary Section~\ref{sec:methods_numerical}). 
Below threshold, both the unmitigated logical error rate $p_L$ (Fig.~\ref{fig:dual_decoder}c) and the excess sampling overhead $C_{\mathrm{syn}}^* - 1$ follow the fault-tolerant power law $\Theta((p/p_{\mathrm{th}})^{(d+1)/2})$.
As the physical error rate $p$ decreases, the excess overhead ratio $\eta \equiv (C_{\mathrm{syn}}^* - 1)/p_L$, with $p_L$ the logical error rate of the real-time decoder, converges to constant distance-dependent plateaus (Fig.~\ref{fig:dual_decoder}e).
Because the total posterior error probability $\|\bm{p}(\bm{s})\|_1$ in the committed recovery frame can exceed $1/2$ on contested syndromes, virtual error cancellation surpasses the single-decoder syndrome-aware PEC ceiling ($\eta = 2$) and approaches $\eta \to 1$ with increasing code distance.\medskip

\textbf{Noise learning}. Both the syndrome-resolved and coarse-grained protocols require knowledge of the conditional logical error rates, either for individual syndromes or across coarse-grained sectors. In conventional error-mitigation schemes such as PEC, these rates must be estimated through separate noise-tomography routines. In fault-tolerant circuits, however, stabilizer measurements provide a syndrome record from which the underlying error rates can be learned directly~\cite{spitz2018adaptive,wagner2022pauli,wagner2023learning,zheng2026learning,takou2026logical}. Extracting noise parameters from the existing syndrome data avoids extra rounds of calibration circuits and allows the VEC weights to be generated entirely in post-processing (Fig.~\ref{fig:main_schematic}b).

To characterize the circuit noise, we learn the detector error model (DEM)~\cite{gidney2021stim} via linear inversion of the syndrome fidelities. The DEM represents microscopic noise as $N_E$ independent fault mechanisms $j \in \{1, \dots, N_E\}$ occurring with probabilities $p_j \in [0, 1/2)$. Each mechanism flips a specific subset of detectors, indicated by a binary syndrome signature $\bm{h}_j \in \{0, 1\}^{N_D}$.
Together, these signatures form the parity-check matrix $\HH = [\bm{h}_1, \dots, \bm{h}_{N_E}]$ that maps a binary fault configuration $\bm{e} \in \{0, 1\}^{N_E}$ to the measured syndrome as $\bm{s} = \HH \bm{e} \pmod 2$.

To extract the DEM probabilities $p_j$ from unlabeled syndromes, we evaluate the Walsh--Hadamard syndrome fidelity~\cite{harper2020efficient,wagner2022pauli,wagner2023learning} for binary characters $\bm{\chi} \in \{0, 1\}^{N_D}$,
\begin{equation}
\Fs(\bm{\chi}) \equiv \EE_{\bm{s}}\big[(-1)^{\bm{\chi} \cdot \bm{s}}\big] = \EE_{\bm{e}}\big[(-1)^{\bm{\chi}^\top \HH \bm{e}}\big],
\label{eq:nl_syndrome_fidelity}
\end{equation}
where $\bm{\chi} \cdot \bm{s} = \sum_{i=1}^{N_D} \chi_i s_i \pmod 2$.
Because the fault mechanisms act independently, $\Fs(\bm{\chi})$ factorizes into a product over individual faults. Introducing the non-negative attenuation coordinates $\theta_j \equiv -\ln(1 - 2p_j) \ge 0$ converts this product into a linear system,
\begin{equation}
b_{\bm{\chi}} \equiv -\ln \Fs(\bm{\chi}) = \sum_{j=1}^{N_E} A_{\bm{\chi}, j} \theta_j = (\Amat \bm{\theta})_{\bm{\chi}},
\label{eq:nl_linear_system}
\end{equation}
where $A_{\bm{\chi}, j} \equiv (\bm{\chi}^\top \bm{h}_j) \bmod 2 \in \{0, 1\}$ are the elements of the binary design matrix $\Amat$. Inverting or pseudoinverting this system yields the vector $\bm{\theta}$ and hence the DEM error rates $p_j$.

Each character $\bm{\chi}$ defines a row of $\Amat$, and inverting Eq.~\eqref{eq:nl_linear_system} requires low-weight characters that keep sample variance bounded while giving $\Amat$ full column rank. Rather than subsampling stabilizer products via compressed sensing~\cite{zheng2026learning} or inverting detector correlation moments hierarchically~\cite{takou2026logical}, we pair single-detector checks with deterministic weight-two parity combinations of adjacent same-sector checks on the Tanner graph. Grouping gauge-equivalent and symmetry-related faults (such as isotropic depolarizing or temporally translation-invariant channels) into tied parameter classes and solving Eq.~\eqref{eq:nl_linear_system} through bias-corrected weighted non-negative least squares guarantees physical fault probabilities $0 \le p_j < 1/2$ (Supplementary Section~\ref{sec:learning_noise_model}).

The accuracy of the learning procedure and the resulting end-to-end mitigation performance are shown in Fig.~\ref{fig:vlpec_validation}. The learned model recovers both the physical fault probabilities (Fig.~\ref{fig:vlpec_validation}a) and the logical Pauli error rates (Fig.~\ref{fig:vlpec_validation}b) across multiple noise realizations and error channels. Figure~\ref{fig:vlpec_validation}c compares the unmitigated logical error rate, the error rate obtained by postselecting on decoder agreement, and the residual VEC error rate arising from finite-sample noise mischaracterization. For the largest learning budget ($T_{\mathrm{base}} = 10^7$) and $p \le 2.2\cdot 10^{-2}$, the protocol reduces logical errors by more than a factor of $10^3$ relative to the unmitigated decoding and more than a factor of $10^2$ relative to postselection based on two-decoder agreement; the suppression decreases toward the highest simulated error rates (Fig.~\ref{fig:vlpec_validation}c).\medskip
\begin{figure}[t]
    \centering
    \includegraphics[width=\columnwidth]{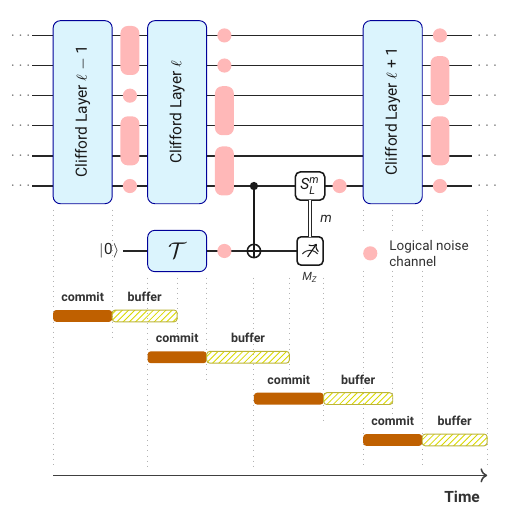}
    \caption{\textbf{Full-circuit integration and pipelined sliding-window decoding.}
    Logical circuit schematic (top) decomposed into sequential Clifford layers (blue boxes) that exhibit residual logical error channels in the form of Eq.~\eqref{eq:conditional_logical_channel} (red shapes), alongside a non-Clifford $T$-gate teleportation gadget (magic state $|\mathcal{T}\rangle$ preparation in blue, transversal $\mathrm{CNOT}$, destructive measurement $M_Z$, and real-time feedforward phase correction $S_L^m$).
    The pipelined decoding schedule (bottom) executes overlapping sliding windows with committed regions (orange) and lookahead buffers (yellow-hatched) along the time axis.
    }\label{fig:circuit_integration}
\end{figure}

\textbf{Full-circuit integration}. All procedures considered thus far address a single multi-qubit logical channel. 
Extending virtual error cancellation to a universal circuit of depth $L$ on $n_L$ logical qubits presents several challenges.
Characterizing an unconstrained $n_L$-qubit logical Pauli channel requires $4^{n_L}$ parameters, which is intractable beyond ten logical qubits~\cite{vandenberg2023probabilistic,kim2023scalable}. Compiled fault-tolerant circuits, however, decompose into $L$ sequential layers $\ell \in \{1, \dots, L\}$ partitioned between complete logical gate blocks, each comprising $K_\ell$ parallel one- and two-qubit logical gates on disjoint code patches (Fig.~\ref{fig:circuit_integration}, top panel). Because spatially separated patches experience independent physical faults, the Walsh--Hadamard design matrix $\Amat$ in Eq.~\eqref{eq:nl_linear_system} block-diagonalizes across clusters ($\Amat_\ell = \bigoplus_{c=1}^{K_\ell} \Amat_{\ell, c}$). Both DEM noise learning and the decoders $\mathcal{D}_{\mathrm{rt}}$ and $\mathcal{D}_{\mathrm{off}}$ therefore evaluate local syndrome slices $\bm{s}_{\ell, c}$ within each cluster $c$ to learn and assign calibrated weights $w_{\ell, c}(\bm{s}_{\ell, c})$ with sample complexity independent of $n_L$ and $L$. The global circuit weight factorizes as $\tilde{w}(\bm{s}) = \prod_{\ell=1}^L \prod_{c=1}^{K_\ell} w_{\ell, c}(\bm{s}_{\ell, c})$, inverting the $4^{n_L}$-dimensional channel using at most $16 K_\ell$ parameters per layer for clusters of at most two logical qubits while the composite variance grows multiplicatively with the noisy gate count (Fig.~\ref{fig:main_schematic}e).

Along the time axis, measurement and hook faults near the interface between layers $\ell$ and $\ell+1$ create paired syndrome defects across the boundary~\cite{dennis2002topological,fowler2012surface}. Hard-slicing the decoding graph severs these pairs, leaving boundary fault rates underdetermined during noise learning and inducing $\mathcal{O}(L p_L)$ estimation bias. 
We address this by executing both noise learning and decoding on overlapping sliding windows~\cite{skoric2023parallel} (Fig.~\ref{fig:circuit_integration}, bottom schedule). Each window commits the stabilizer rounds of layer $\ell$ (orange) after reading a lookahead buffer of $b$ rounds in layer $\ell+1$ (yellow-hatched), so that seam-straddling syndrome pairs are decoded with information from both sides of the seam, while unexplained boundary defects are forwarded to layer $\ell+1$. The lookahead rapidly suppresses changes of the committed decisions with increasing $b$. The per-layer sectors and the error statistics used for calibration attribute each seam-straddling fault cluster to a single layer, so that the local weights compose consistently across seams; alternatively, joint weights for adjacent pairs of layers (pairwise seam cumulants) can be used.

Non-Clifford logical $T = \mathrm{diag}(1, e^{i\pi/4})$ gates cannot be tracked by passive Pauli frames~\cite{knill2005quantum} and are executed via magic-state teleportation~\cite{bravyi2005universal,piveteau2021error,suzuki2022quantum} (Fig.~\ref{fig:circuit_integration}, state injection gadget), where a mid-circuit measurement outcome $m \in \{0, 1\}$ decoded by $\mathcal{D}_{\mathrm{rt}}$ triggers a conditional logical phase correction $S_L^m$ with $S_L = \mathrm{diag}(1, i)$. A real-time decoding error flips $m$ and injects a non-Pauli fault $S_L$ or $S_L^\dagger$. Placing the layer boundary immediately before the ancilla readout and recording the joint syndrome $\bm{s}_\ell = (\bm{s}_D, \bm{s}_A, \bm{s}_{\mathrm{prep}})$ across data block $D$, ancilla $A$, and the ancilla factory resolves this corruption. Any pre-measurement fault, heralded factory fault in $\bm{s}_{\mathrm{prep}}$, or branch flip is algebraically equivalent to a Pauli error preceding an ideal measurement-and-feed-forward channel $\mathcal{M}_T$, which local virtual inversion cancels before $\mathcal{M}_T$ acts. The magic-state factory is treated as part of the circuit: its syndrome record $\bm{s}_{\mathrm{prep}}$ enters the gadget's layer, and factory faults that leave a syndrome trace are learned and canceled within the same scheme, maintaining $\mathcal{O}(p_L)$ excess sampling overhead per gate. Any residual undetectable output error is suppressed to high order by fault-tolerant distillation~\cite{bravyi2005universal,litinski2019game} or cultivation~\cite{gidney2024magic,rosenfeld2025magic}, and the factory can be chosen to keep it below the logical error rate of the surrounding circuit.\medskip

\textbf{Discussion}. Virtual error cancellation (VEC) moves logical error mitigation entirely into classical post-processing, learning the underlying fault distribution directly from recorded stabilizer syndromes and reweighting experimental outcomes to cancel residual logical errors without extra calibration circuits or active gate insertions. Whether implemented through syndrome-resolved per-shot weights or coarse-grained confidence-sector filters, this offline formulation decouples mitigation from real-time hardware latency constraints and enables a dual-decoder architecture that pairs fast real-time recovery with a high-accuracy reference decoder. Across analytical bounds and end-to-end surface-code simulations, the resulting protocol suppresses residual logical errors by orders of magnitude beyond unmitigated decoding and biased postselection while achieving a near-twofold reduction in sampling overhead compared with syndrome-aware probabilistic methods. Below we discuss certain limitations, open questions, and future directions for this work.

Our analysis is primarily focused on a Pauli noise model. Although stabilizer measurements project many physical non-Pauli errors onto effective Pauli faults, residual coherent errors can still induce coherent logical rotations that lie outside our current syndrome-resolved analysis. 
In the coarse-grained protocol, however, each sector aggregates many syndromes, so sector averaging is likely to restore effective Pauli twirling~\cite{bravyi2018correcting}. Moreover, many non-Pauli noise channels can be converted directly into Pauli form through physical-level twirling~\cite{wallman2016noise,vandenberg2023probabilistic,kim2023scalable}. Leakage outside the computational subspace is a separate mechanism that can be readily suppressed alongside virtual mitigation using standard error-correction tools such as leakage-reduction units and erasure heralding~\cite{google2023suppressing,google2025below}.

A natural next step is to implement VEC under realistic circuit-level noise. Although the numerical benchmarks presented here focus on code-capacity noise, the framework generalizes directly to circuit-level noise and three-dimensional decoding~\cite{fowler2012surface,gidney2021stim}. Operating in the circuit setting requires only replacing the two-dimensional maximum-likelihood decoder~\cite{bravyi2014efficient} with a spacetime decoder such as belief propagation with ordered statistics decoding (BP-OSD)~\cite{roffe2020decoding} or correlated matching~\cite{higgott2025blossom,shutty2024efficient}. Scaling to larger code distances also offers a promising route to simplify noise learning. Here we generate filter weights from logical error rates inferred via a microscopic detector error model~\cite{gidney2021stim,wagner2022pauli}. Reconstructing every individual fault channel becomes computationally expensive as the code grows, yet such fine-grained detail is likely unnecessary. Universal scaling at large distances~\cite{dennis2002topological} may instead allow the conditional logical channel to be compressed into a few macroscopic parameters, substantially simplifying calibration.

Finally, while pure post-processing requires no extra hardware operations, the protocol can be easily combined with inexpensive real-time experiments to achieve improved precision. Simple targeted probes during execution can directly resolve gauge degrees of freedom and further sharpen the learned error rates~\cite{harper2020efficient}. A natural next step is to incorporate these mitigation estimates into existing quantum compilers. Embedding syndrome-conditioned reweighting into compilation toolchains will allow architects to co-optimize code distance, decoder pairing, and sampling budgets for early fault-tolerant algorithms~\cite{gidney2021how}.\medskip

\medskip\noindent\textbf{Note added}.---Concurrent with this work, related results on offline syndrome-based error mitigation were reported by Shutman \textit{et al.}~\cite{shutman2026error}.\medskip

\textbf{Acknowledgements}. The authors thank William Huggins,  Bryce Kobrin, Tom O'Brien, Earl Campbell, Alexis Morvan, Vlad Sivak, Vadim Smelyanskiy, and the broader Google Quantum AI team for valuable input and inspiring discussions.

\makeatletter
\def\bibsection{%
  \par
  \begingroup
  \let\addcontentsline\@gobblethree
  \section*{\refname}%
  \@nobreaktrue
  \endgroup
}%
\makeatother

\bibliography{references}

\begin{thebibliography}{55}%
\makeatletter
\providecommand \@ifxundefined [1]{%
 \@ifx{#1\undefined}
}%
\providecommand \@ifnum [1]{%
 \ifnum #1\expandafter \@firstoftwo
 \else \expandafter \@secondoftwo
 \fi
}%
\providecommand \@ifx [1]{%
 \ifx #1\expandafter \@firstoftwo
 \else \expandafter \@secondoftwo
 \fi
}%
\providecommand \natexlab [1]{#1}%
\providecommand \enquote  [1]{``#1''}%
\providecommand \bibnamefont  [1]{#1}%
\providecommand \bibfnamefont [1]{#1}%
\providecommand \citenamefont [1]{#1}%
\providecommand \href@noop [0]{\@secondoftwo}%
\providecommand \href [0]{\begingroup \@sanitize@url \@href}%
\providecommand \@href[1]{\@@startlink{#1}\@@href}%
\providecommand \@@href[1]{\endgroup#1\@@endlink}%
\providecommand \@sanitize@url [0]{\catcode `\\12\catcode `\$12\catcode
  `\&12\catcode `\#12\catcode `\^12\catcode `\_12\catcode `\%12\relax}%
\providecommand \@@startlink[1]{}%
\providecommand \@@endlink[0]{}%
\providecommand \url  [0]{\begingroup\@sanitize@url \@url }%
\providecommand \@url [1]{\endgroup\@href {#1}{\urlprefix }}%
\providecommand \urlprefix  [0]{URL }%
\providecommand \Eprint [0]{\href }%
\providecommand \doibase [0]{https://doi.org/}%
\providecommand \selectlanguage [0]{\@gobble}%
\providecommand \bibinfo  [0]{\@secondoftwo}%
\providecommand \bibfield  [0]{\@secondoftwo}%
\providecommand \translation [1]{[#1]}%
\providecommand \BibitemOpen [0]{}%
\providecommand \bibitemStop [0]{}%
\providecommand \bibitemNoStop [0]{.\EOS\space}%
\providecommand \EOS [0]{\spacefactor3000\relax}%
\providecommand \BibitemShut  [1]{\csname bibitem#1\endcsname}%
\let\auto@bib@innerbib\@empty
\bibitem [{\citenamefont {Kitaev}(2003)}]{kitaev2003fault}%
  \BibitemOpen
  \bibfield  {author} {\bibinfo {author} {\bibfnamefont {A.~Y.}\ \bibnamefont
  {Kitaev}},\ }\bibfield  {title} {\bibinfo {title} {{Fault-tolerant quantum
  computation by anyons}},\ }\href
  {https://doi.org/10.1016/S0003-4916(02)00018-0} {\bibfield  {journal}
  {\bibinfo  {journal} {Ann. Phys.}\ }\textbf {\bibinfo {volume} {303}},\
  \bibinfo {pages} {2} (\bibinfo {year} {2003})}\BibitemShut {NoStop}%
\bibitem [{\citenamefont {Fowler}\ \emph {et~al.}(2012)\citenamefont {Fowler},
  \citenamefont {Mariantoni}, \citenamefont {Martinis},\ and\ \citenamefont
  {Cleland}}]{fowler2012surface}%
  \BibitemOpen
  \bibfield  {author} {\bibinfo {author} {\bibfnamefont {A.~G.}\ \bibnamefont
  {Fowler}}, \bibinfo {author} {\bibfnamefont {M.}~\bibnamefont {Mariantoni}},
  \bibinfo {author} {\bibfnamefont {J.~M.}\ \bibnamefont {Martinis}},\ and\
  \bibinfo {author} {\bibfnamefont {A.~N.}\ \bibnamefont {Cleland}},\
  }\bibfield  {title} {\bibinfo {title} {{Surface codes: Towards practical
  large-scale quantum computation}},\ }\href
  {https://doi.org/10.1103/PhysRevA.86.032324} {\bibfield  {journal} {\bibinfo
  {journal} {Phys. Rev. A}\ }\textbf {\bibinfo {volume} {86}},\ \bibinfo
  {pages} {032324} (\bibinfo {year} {2012})}\BibitemShut {NoStop}%
\bibitem [{\citenamefont {{Google Quantum AI}}(2023)}]{google2023suppressing}%
  \BibitemOpen
  \bibfield  {author} {\bibinfo {author} {\bibnamefont {{Google Quantum AI}}},\
  }\bibfield  {title} {\bibinfo {title} {{Suppressing quantum errors by scaling
  a surface code logical qubit}},\ }\href
  {https://doi.org/10.1038/s41586-022-05434-1} {\bibfield  {journal} {\bibinfo
  {journal} {Nature}\ }\textbf {\bibinfo {volume} {614}},\ \bibinfo {pages}
  {676} (\bibinfo {year} {2023})}\BibitemShut {NoStop}%
\bibitem [{\citenamefont {Bluvstein}\ \emph {et~al.}(2024)\citenamefont
  {Bluvstein}, \citenamefont {Evered}, \citenamefont {Geim} \emph
  {et~al.}}]{bluvstein2024logical}%
  \BibitemOpen
  \bibfield  {author} {\bibinfo {author} {\bibfnamefont {D.}~\bibnamefont
  {Bluvstein}}, \bibinfo {author} {\bibfnamefont {S.~J.}\ \bibnamefont
  {Evered}}, \bibinfo {author} {\bibfnamefont {A.~A.}\ \bibnamefont {Geim}},
  \emph {et~al.},\ }\bibfield  {title} {\bibinfo {title} {{Logical quantum
  processor based on reconfigurable atom arrays}},\ }\href
  {https://doi.org/10.1038/s41586-023-06927-3} {\bibfield  {journal} {\bibinfo
  {journal} {Nature}\ }\textbf {\bibinfo {volume} {626}},\ \bibinfo {pages}
  {58} (\bibinfo {year} {2024})}\BibitemShut {NoStop}%
\bibitem [{\citenamefont {{Google Quantum AI and
  Collaborators}}(2025)}]{google2025below}%
  \BibitemOpen
  \bibfield  {author} {\bibinfo {author} {\bibnamefont {{Google Quantum AI and
  Collaborators}}},\ }\bibfield  {title} {\bibinfo {title} {{Quantum error
  correction below the surface code threshold}},\ }\href
  {https://doi.org/10.1038/s41586-024-08449-y} {\bibfield  {journal} {\bibinfo
  {journal} {Nature}\ }\textbf {\bibinfo {volume} {638}},\ \bibinfo {pages}
  {920} (\bibinfo {year} {2025})}\BibitemShut {NoStop}%
\bibitem [{\citenamefont {Lacroix}\ \emph {et~al.}(2025)\citenamefont
  {Lacroix}, \citenamefont {Bourassa}, \citenamefont {Heras}, \citenamefont
  {Zhang} \emph {et~al.}}]{lacroix2025scaling}%
  \BibitemOpen
  \bibfield  {author} {\bibinfo {author} {\bibfnamefont {N.}~\bibnamefont
  {Lacroix}}, \bibinfo {author} {\bibfnamefont {A.}~\bibnamefont {Bourassa}},
  \bibinfo {author} {\bibfnamefont {F.~J.~H.}\ \bibnamefont {Heras}}, \bibinfo
  {author} {\bibfnamefont {L.~M.}\ \bibnamefont {Zhang}}, \emph {et~al.},\
  }\bibfield  {title} {\bibinfo {title} {{Scaling and logic in the color code
  on a superconducting quantum processor}},\ }\href
  {https://doi.org/10.1038/s41586-025-09061-4} {\bibfield  {journal} {\bibinfo
  {journal} {Nature}\ }\textbf {\bibinfo {volume} {645}},\ \bibinfo {pages}
  {614} (\bibinfo {year} {2025})},\ \Eprint {https://arxiv.org/abs/2412.14256}
  {arXiv:2412.14256 [quant-ph]} \BibitemShut {NoStop}%
\bibitem [{\citenamefont {Gidney}\ and\ \citenamefont
  {Eker{\aa}}(2021)}]{gidney2021how}%
  \BibitemOpen
  \bibfield  {author} {\bibinfo {author} {\bibfnamefont {C.}~\bibnamefont
  {Gidney}}\ and\ \bibinfo {author} {\bibfnamefont {M.}~\bibnamefont
  {Eker{\aa}}},\ }\bibfield  {title} {\bibinfo {title} {{How to factor 2048 bit
  RSA integers in 8 hours using 20 million noisy qubits}},\ }\href
  {https://doi.org/10.22331/q-2021-04-15-433} {\bibfield  {journal} {\bibinfo
  {journal} {Quantum}\ }\textbf {\bibinfo {volume} {5}},\ \bibinfo {pages}
  {433} (\bibinfo {year} {2021})}\BibitemShut {NoStop}%
\bibitem [{\citenamefont {Piveteau}\ \emph {et~al.}(2021)\citenamefont
  {Piveteau}, \citenamefont {Sutter}, \citenamefont {Bravyi}, \citenamefont
  {Gambetta},\ and\ \citenamefont {Temme}}]{piveteau2021error}%
  \BibitemOpen
  \bibfield  {author} {\bibinfo {author} {\bibfnamefont {C.}~\bibnamefont
  {Piveteau}}, \bibinfo {author} {\bibfnamefont {D.}~\bibnamefont {Sutter}},
  \bibinfo {author} {\bibfnamefont {S.}~\bibnamefont {Bravyi}}, \bibinfo
  {author} {\bibfnamefont {J.~M.}\ \bibnamefont {Gambetta}},\ and\ \bibinfo
  {author} {\bibfnamefont {K.}~\bibnamefont {Temme}},\ }\bibfield  {title}
  {\bibinfo {title} {{Error mitigation for universal gates on encoded
  qubits}},\ }\href {https://doi.org/10.1103/PhysRevLett.127.200505} {\bibfield
   {journal} {\bibinfo  {journal} {Phys. Rev. Lett.}\ }\textbf {\bibinfo
  {volume} {127}},\ \bibinfo {pages} {200505} (\bibinfo {year}
  {2021})}\BibitemShut {NoStop}%
\bibitem [{\citenamefont {Lostaglio}\ and\ \citenamefont
  {Ciani}(2021)}]{lostaglio2021error}%
  \BibitemOpen
  \bibfield  {author} {\bibinfo {author} {\bibfnamefont {M.}~\bibnamefont
  {Lostaglio}}\ and\ \bibinfo {author} {\bibfnamefont {A.}~\bibnamefont
  {Ciani}},\ }\bibfield  {title} {\bibinfo {title} {{Error mitigation and
  quantum-assisted simulation in the error corrected regime}},\ }\href
  {https://doi.org/10.1103/PhysRevLett.127.200506} {\bibfield  {journal}
  {\bibinfo  {journal} {Phys. Rev. Lett.}\ }\textbf {\bibinfo {volume} {127}},\
  \bibinfo {pages} {200506} (\bibinfo {year} {2021})}\BibitemShut {NoStop}%
\bibitem [{\citenamefont {Suzuki}\ \emph {et~al.}(2022)\citenamefont {Suzuki},
  \citenamefont {Endo}, \citenamefont {Fujii},\ and\ \citenamefont
  {Tokunaga}}]{suzuki2022quantum}%
  \BibitemOpen
  \bibfield  {author} {\bibinfo {author} {\bibfnamefont {Y.}~\bibnamefont
  {Suzuki}}, \bibinfo {author} {\bibfnamefont {S.}~\bibnamefont {Endo}},
  \bibinfo {author} {\bibfnamefont {K.}~\bibnamefont {Fujii}},\ and\ \bibinfo
  {author} {\bibfnamefont {Y.}~\bibnamefont {Tokunaga}},\ }\bibfield  {title}
  {\bibinfo {title} {{Quantum error mitigation as a universal error reduction
  technique: Applications from the NISQ to the fault-tolerant quantum computing
  eras}},\ }\href {https://doi.org/10.1103/PRXQuantum.3.010345} {\bibfield
  {journal} {\bibinfo  {journal} {PRX Quantum}\ }\textbf {\bibinfo {volume}
  {3}},\ \bibinfo {pages} {010345} (\bibinfo {year} {2022})}\BibitemShut
  {NoStop}%
\bibitem [{\citenamefont {Zhang}\ \emph {et~al.}(2025)\citenamefont {Zhang},
  \citenamefont {Xie}, \citenamefont {Gao}, \citenamefont {Yang}, \citenamefont
  {Bao} \emph {et~al.}}]{zhang2025demonstrating}%
  \BibitemOpen
  \bibfield  {author} {\bibinfo {author} {\bibfnamefont {A.}~\bibnamefont
  {Zhang}}, \bibinfo {author} {\bibfnamefont {H.}~\bibnamefont {Xie}}, \bibinfo
  {author} {\bibfnamefont {Y.}~\bibnamefont {Gao}}, \bibinfo {author}
  {\bibfnamefont {J.-N.}\ \bibnamefont {Yang}}, \bibinfo {author}
  {\bibfnamefont {Z.}~\bibnamefont {Bao}}, \emph {et~al.},\ }\bibfield  {title}
  {\bibinfo {title} {{Demonstrating quantum error mitigation on logical
  qubits}},\ }\href {https://doi.org/10.1038/s41467-025-67768-4} {\bibfield
  {journal} {\bibinfo  {journal} {Nat. Commun.}\ }\textbf {\bibinfo {volume}
  {17}},\ \bibinfo {pages} {1021} (\bibinfo {year} {2025})},\ \Eprint
  {https://arxiv.org/abs/2501.09079} {arXiv:2501.09079} \BibitemShut {NoStop}%
\bibitem [{\citenamefont {Umbrarescu}\ \emph {et~al.}(2026)\citenamefont
  {Umbrarescu}, \citenamefont {Higgott},\ and\ \citenamefont
  {Browne}}]{umbrarescu2026infinite}%
  \BibitemOpen
  \bibfield  {author} {\bibinfo {author} {\bibfnamefont {G.}~\bibnamefont
  {Umbrarescu}}, \bibinfo {author} {\bibfnamefont {O.}~\bibnamefont
  {Higgott}},\ and\ \bibinfo {author} {\bibfnamefont {D.~E.}\ \bibnamefont
  {Browne}},\ }\href@noop {} {\bibinfo {title} {{Infinite distance
  extrapolation: How error mitigation can enhance quantum error correction}}}
  (\bibinfo {year} {2026}),\ \Eprint {https://arxiv.org/abs/2603.11285}
  {arXiv:2603.11285 [quant-ph]} \BibitemShut {NoStop}%
\bibitem [{\citenamefont {Xie}\ \emph {et~al.}(2026)\citenamefont {Xie},
  \citenamefont {Yoshioka}, \citenamefont {Tsubouchi},\ and\ \citenamefont
  {Li}}]{xie2025noise}%
  \BibitemOpen
  \bibfield  {author} {\bibinfo {author} {\bibfnamefont {H.}~\bibnamefont
  {Xie}}, \bibinfo {author} {\bibfnamefont {N.}~\bibnamefont {Yoshioka}},
  \bibinfo {author} {\bibfnamefont {K.}~\bibnamefont {Tsubouchi}},\ and\
  \bibinfo {author} {\bibfnamefont {Y.}~\bibnamefont {Li}},\ }\bibfield
  {title} {\bibinfo {title} {{Noise-Agnostic Unbiased Quantum Error Mitigation
  for Logical Qubits}},\ }\bibfield  {journal} {\bibinfo  {journal} {Phys. Rev.
  Lett.}\ }\textbf {\bibinfo {volume} {136}},\ \href
  {https://doi.org/10.1103/2n23-4qg8} {10.1103/2n23-4qg8} (\bibinfo {year}
  {2026}),\ \Eprint {https://arxiv.org/abs/2504.12864} {arXiv:2504.12864}
  \BibitemShut {NoStop}%
\bibitem [{\citenamefont {Temme}\ \emph {et~al.}(2017)\citenamefont {Temme},
  \citenamefont {Bravyi},\ and\ \citenamefont {Gambetta}}]{temme2017error}%
  \BibitemOpen
  \bibfield  {author} {\bibinfo {author} {\bibfnamefont {K.}~\bibnamefont
  {Temme}}, \bibinfo {author} {\bibfnamefont {S.}~\bibnamefont {Bravyi}},\ and\
  \bibinfo {author} {\bibfnamefont {J.~M.}\ \bibnamefont {Gambetta}},\
  }\bibfield  {title} {\bibinfo {title} {{Error mitigation for short-depth
  quantum circuits}},\ }\href {https://doi.org/10.1103/PhysRevLett.119.180509}
  {\bibfield  {journal} {\bibinfo  {journal} {Phys. Rev. Lett.}\ }\textbf
  {\bibinfo {volume} {119}},\ \bibinfo {pages} {180509} (\bibinfo {year}
  {2017})}\BibitemShut {NoStop}%
\bibitem [{\citenamefont {Li}\ and\ \citenamefont
  {Benjamin}(2017)}]{li2017efficient}%
  \BibitemOpen
  \bibfield  {author} {\bibinfo {author} {\bibfnamefont {Y.}~\bibnamefont
  {Li}}\ and\ \bibinfo {author} {\bibfnamefont {S.~C.}\ \bibnamefont
  {Benjamin}},\ }\bibfield  {title} {\bibinfo {title} {{Efficient variational
  quantum simulator incorporating active error minimisation}},\ }\href
  {https://doi.org/10.1103/PhysRevX.7.021050} {\bibfield  {journal} {\bibinfo
  {journal} {Phys. Rev. X}\ }\textbf {\bibinfo {volume} {7}},\ \bibinfo {pages}
  {021050} (\bibinfo {year} {2017})}\BibitemShut {NoStop}%
\bibitem [{\citenamefont {Endo}\ \emph {et~al.}(2018)\citenamefont {Endo},
  \citenamefont {Benjamin},\ and\ \citenamefont {Li}}]{endo2018practical}%
  \BibitemOpen
  \bibfield  {author} {\bibinfo {author} {\bibfnamefont {S.}~\bibnamefont
  {Endo}}, \bibinfo {author} {\bibfnamefont {S.~C.}\ \bibnamefont {Benjamin}},\
  and\ \bibinfo {author} {\bibfnamefont {Y.}~\bibnamefont {Li}},\ }\bibfield
  {title} {\bibinfo {title} {{Practical quantum error mitigation for
  near-future applications}},\ }\href
  {https://doi.org/10.1103/PhysRevX.8.031027} {\bibfield  {journal} {\bibinfo
  {journal} {Phys. Rev. X}\ }\textbf {\bibinfo {volume} {8}},\ \bibinfo {pages}
  {031027} (\bibinfo {year} {2018})}\BibitemShut {NoStop}%
\bibitem [{\citenamefont {van~den Berg}\ \emph {et~al.}(2023)\citenamefont
  {van~den Berg}, \citenamefont {Minev}, \citenamefont {Kandala},\ and\
  \citenamefont {Temme}}]{vandenberg2023probabilistic}%
  \BibitemOpen
  \bibfield  {author} {\bibinfo {author} {\bibfnamefont {E.}~\bibnamefont
  {van~den Berg}}, \bibinfo {author} {\bibfnamefont {Z.~K.}\ \bibnamefont
  {Minev}}, \bibinfo {author} {\bibfnamefont {A.}~\bibnamefont {Kandala}},\
  and\ \bibinfo {author} {\bibfnamefont {K.}~\bibnamefont {Temme}},\ }\bibfield
   {title} {\bibinfo {title} {{Probabilistic error cancellation with sparse
  Pauli--Lindblad models on noisy quantum processors}},\ }\href
  {https://doi.org/10.1038/s41567-023-02042-2} {\bibfield  {journal} {\bibinfo
  {journal} {Nat. Phys.}\ }\textbf {\bibinfo {volume} {19}},\ \bibinfo {pages}
  {1116} (\bibinfo {year} {2023})}\BibitemShut {NoStop}%
\bibitem [{\citenamefont {Cai}\ \emph {et~al.}(2023)\citenamefont {Cai},
  \citenamefont {Babbush}, \citenamefont {Benjamin}, \citenamefont {Endo},
  \citenamefont {Huggins}, \citenamefont {Li}, \citenamefont {McClean},\ and\
  \citenamefont {O'Brien}}]{cai2023quantum}%
  \BibitemOpen
  \bibfield  {author} {\bibinfo {author} {\bibfnamefont {Z.}~\bibnamefont
  {Cai}}, \bibinfo {author} {\bibfnamefont {R.}~\bibnamefont {Babbush}},
  \bibinfo {author} {\bibfnamefont {S.~C.}\ \bibnamefont {Benjamin}}, \bibinfo
  {author} {\bibfnamefont {S.}~\bibnamefont {Endo}}, \bibinfo {author}
  {\bibfnamefont {W.~J.}\ \bibnamefont {Huggins}}, \bibinfo {author}
  {\bibfnamefont {Y.}~\bibnamefont {Li}}, \bibinfo {author} {\bibfnamefont
  {J.~R.}\ \bibnamefont {McClean}},\ and\ \bibinfo {author} {\bibfnamefont
  {T.~E.}\ \bibnamefont {O'Brien}},\ }\bibfield  {title} {\bibinfo {title}
  {{Quantum error mitigation}},\ }\href
  {https://doi.org/10.1103/RevModPhys.95.045005} {\bibfield  {journal}
  {\bibinfo  {journal} {Rev. Mod. Phys.}\ }\textbf {\bibinfo {volume} {95}},\
  \bibinfo {pages} {045005} (\bibinfo {year} {2023})}\BibitemShut {NoStop}%
\bibitem [{\citenamefont {Smith}\ \emph {et~al.}(2024)\citenamefont {Smith},
  \citenamefont {Brown},\ and\ \citenamefont {Bartlett}}]{smith2024mitigating}%
  \BibitemOpen
  \bibfield  {author} {\bibinfo {author} {\bibfnamefont {S.~C.}\ \bibnamefont
  {Smith}}, \bibinfo {author} {\bibfnamefont {B.~J.}\ \bibnamefont {Brown}},\
  and\ \bibinfo {author} {\bibfnamefont {S.~D.}\ \bibnamefont {Bartlett}},\
  }\bibfield  {title} {\bibinfo {title} {{Mitigating errors in logical
  qubits}},\ }\href {https://doi.org/10.1038/s42005-024-01883-4} {\bibfield
  {journal} {\bibinfo  {journal} {Commun. Phys.}\ }\textbf {\bibinfo {volume}
  {7}},\ \bibinfo {pages} {386} (\bibinfo {year} {2024})},\ \Eprint
  {https://arxiv.org/abs/2405.03766} {arXiv:2405.03766 [quant-ph]} \BibitemShut
  {NoStop}%
\bibitem [{\citenamefont {Dinc{\u{a}}}\ \emph {et~al.}(2025)\citenamefont
  {Dinc{\u{a}}}, \citenamefont {Chan},\ and\ \citenamefont
  {Benjamin}}]{dinca2025confidence}%
  \BibitemOpen
  \bibfield  {author} {\bibinfo {author} {\bibfnamefont {M.}~\bibnamefont
  {Dinc{\u{a}}}}, \bibinfo {author} {\bibfnamefont {T.}~\bibnamefont {Chan}},\
  and\ \bibinfo {author} {\bibfnamefont {S.~C.}\ \bibnamefont {Benjamin}},\
  }\href@noop {} {\bibinfo {title} {{Error mitigation for logical circuits
  using decoder confidence}}} (\bibinfo {year} {2025}),\ \Eprint
  {https://arxiv.org/abs/2512.15689} {arXiv:2512.15689 [quant-ph]} \BibitemShut
  {NoStop}%
\bibitem [{\citenamefont {Zhong}\ \emph {et~al.}(2025)\citenamefont {Zhong},
  \citenamefont {Munizzi}, \citenamefont {Chen},\ and\ \citenamefont
  {de~Jong}}]{zhong2025combining}%
  \BibitemOpen
  \bibfield  {author} {\bibinfo {author} {\bibfnamefont {D.}~\bibnamefont
  {Zhong}}, \bibinfo {author} {\bibfnamefont {W.}~\bibnamefont {Munizzi}},
  \bibinfo {author} {\bibfnamefont {H.}~\bibnamefont {Chen}},\ and\ \bibinfo
  {author} {\bibfnamefont {W.~A.}\ \bibnamefont {de~Jong}},\ }\href@noop {}
  {\bibinfo {title} {{Combining error detection and mitigation: A hybrid
  protocol for near-term quantum simulation}}} (\bibinfo {year} {2025}),\
  \Eprint {https://arxiv.org/abs/2510.01181} {arXiv:2510.01181 [quant-ph]}
  \BibitemShut {NoStop}%
\bibitem [{\citenamefont {Kumar}\ \emph {et~al.}(2026)\citenamefont {Kumar},
  \citenamefont {Tsunoda}, \citenamefont {Xue}, \citenamefont {Li},
  \citenamefont {Schoelkopf},\ and\ \citenamefont
  {Ding}}]{kumar2026codesigning}%
  \BibitemOpen
  \bibfield  {author} {\bibinfo {author} {\bibfnamefont {R.~S.}\ \bibnamefont
  {Kumar}}, \bibinfo {author} {\bibfnamefont {T.}~\bibnamefont {Tsunoda}},
  \bibinfo {author} {\bibfnamefont {S.~H.}\ \bibnamefont {Xue}}, \bibinfo
  {author} {\bibfnamefont {D.}~\bibnamefont {Li}}, \bibinfo {author}
  {\bibfnamefont {R.~J.}\ \bibnamefont {Schoelkopf}},\ and\ \bibinfo {author}
  {\bibfnamefont {Y.}~\bibnamefont {Ding}},\ }\href@noop {} {\bibinfo {title}
  {{Co-designing error mitigation and error detection for logical qubits}}}
  (\bibinfo {year} {2026}),\ \Eprint {https://arxiv.org/abs/2604.19871}
  {arXiv:2604.19871 [quant-ph]} \BibitemShut {NoStop}%
\bibitem [{\citenamefont {Zhou}\ \emph {et~al.}(2025)\citenamefont {Zhou},
  \citenamefont {Pexton}, \citenamefont {Kubica},\ and\ \citenamefont
  {Ding}}]{zhou2025soft}%
  \BibitemOpen
  \bibfield  {author} {\bibinfo {author} {\bibfnamefont {Z.}~\bibnamefont
  {Zhou}}, \bibinfo {author} {\bibfnamefont {S.}~\bibnamefont {Pexton}},
  \bibinfo {author} {\bibfnamefont {A.}~\bibnamefont {Kubica}},\ and\ \bibinfo
  {author} {\bibfnamefont {Y.}~\bibnamefont {Ding}},\ }\href@noop {} {\bibinfo
  {title} {{Error Mitigation of Fault-Tolerant Quantum Circuits with Soft
  Information}}} (\bibinfo {year} {2025}),\ \Eprint
  {https://arxiv.org/abs/2512.09863} {arXiv:2512.09863 [quant-ph]} \BibitemShut
  {NoStop}%
\bibitem [{\citenamefont {Bomb{\'{\i}}n}\ \emph {et~al.}(2024)\citenamefont
  {Bomb{\'{\i}}n}, \citenamefont {Pant}, \citenamefont {Roberts},\ and\
  \citenamefont {Seetharam}}]{bombin2024fault}%
  \BibitemOpen
  \bibfield  {author} {\bibinfo {author} {\bibfnamefont {H.}~\bibnamefont
  {Bomb{\'{\i}}n}}, \bibinfo {author} {\bibfnamefont {M.}~\bibnamefont {Pant}},
  \bibinfo {author} {\bibfnamefont {S.}~\bibnamefont {Roberts}},\ and\ \bibinfo
  {author} {\bibfnamefont {K.~I.}\ \bibnamefont {Seetharam}},\ }\bibfield
  {title} {\bibinfo {title} {{Fault-tolerant postselection for low-overhead
  magic state preparation}},\ }\href
  {https://doi.org/10.1103/PRXQuantum.5.010302} {\bibfield  {journal} {\bibinfo
   {journal} {PRX Quantum}\ }\textbf {\bibinfo {volume} {5}},\ \bibinfo {pages}
  {010302} (\bibinfo {year} {2024})},\ \Eprint
  {https://arxiv.org/abs/2212.00813} {arXiv:2212.00813 [quant-ph]} \BibitemShut
  {NoStop}%
\bibitem [{\citenamefont {Gidney}\ \emph {et~al.}(2025)\citenamefont {Gidney},
  \citenamefont {Newman}, \citenamefont {Brooks},\ and\ \citenamefont
  {Jones}}]{gidney2025yoked}%
  \BibitemOpen
  \bibfield  {author} {\bibinfo {author} {\bibfnamefont {C.}~\bibnamefont
  {Gidney}}, \bibinfo {author} {\bibfnamefont {M.}~\bibnamefont {Newman}},
  \bibinfo {author} {\bibfnamefont {P.}~\bibnamefont {Brooks}},\ and\ \bibinfo
  {author} {\bibfnamefont {C.}~\bibnamefont {Jones}},\ }\bibfield  {title}
  {\bibinfo {title} {{Yoked surface codes}},\ }\href
  {https://doi.org/10.1038/s41467-025-59714-1} {\bibfield  {journal} {\bibinfo
  {journal} {Nat. Commun.}\ }\textbf {\bibinfo {volume} {16}},\ \bibinfo
  {pages} {4498} (\bibinfo {year} {2025})},\ \Eprint
  {https://arxiv.org/abs/2312.04522} {arXiv:2312.04522 [quant-ph]} \BibitemShut
  {NoStop}%
\bibitem [{\citenamefont {Meister}\ \emph {et~al.}(2024)\citenamefont
  {Meister}, \citenamefont {Pattison},\ and\ \citenamefont
  {Preskill}}]{meister2024efficient}%
  \BibitemOpen
  \bibfield  {author} {\bibinfo {author} {\bibfnamefont {N.}~\bibnamefont
  {Meister}}, \bibinfo {author} {\bibfnamefont {C.~A.}\ \bibnamefont
  {Pattison}},\ and\ \bibinfo {author} {\bibfnamefont {J.}~\bibnamefont
  {Preskill}},\ }\href@noop {} {\bibinfo {title} {{Efficient soft-output
  decoders for the surface code}}} (\bibinfo {year} {2024}),\ \Eprint
  {https://arxiv.org/abs/2405.07433} {arXiv:2405.07433 [quant-ph]} \BibitemShut
  {NoStop}%
\bibitem [{\citenamefont {Gidney}\ \emph {et~al.}(2024)\citenamefont {Gidney},
  \citenamefont {Shutty},\ and\ \citenamefont {Jones}}]{gidney2024magic}%
  \BibitemOpen
  \bibfield  {author} {\bibinfo {author} {\bibfnamefont {C.}~\bibnamefont
  {Gidney}}, \bibinfo {author} {\bibfnamefont {N.}~\bibnamefont {Shutty}},\
  and\ \bibinfo {author} {\bibfnamefont {C.}~\bibnamefont {Jones}},\
  }\href@noop {} {\bibinfo {title} {{Magic state cultivation: growing T states
  as cheap as CNOT gates}}} (\bibinfo {year} {2024}),\ \Eprint
  {https://arxiv.org/abs/2409.17595} {arXiv:2409.17595 [quant-ph]} \BibitemShut
  {NoStop}%
\bibitem [{\citenamefont {Aharonov}\ \emph {et~al.}(2025)\citenamefont
  {Aharonov}, \citenamefont {Atia}, \citenamefont {Bairey}, \citenamefont
  {Brakerski}, \citenamefont {Cohen}, \citenamefont {Golan}, \citenamefont
  {Gurwich}, \citenamefont {Lindner},\ and\ \citenamefont
  {Shutman}}]{aharonov2025salem}%
  \BibitemOpen
  \bibfield  {author} {\bibinfo {author} {\bibfnamefont {D.}~\bibnamefont
  {Aharonov}}, \bibinfo {author} {\bibfnamefont {Y.}~\bibnamefont {Atia}},
  \bibinfo {author} {\bibfnamefont {E.}~\bibnamefont {Bairey}}, \bibinfo
  {author} {\bibfnamefont {Z.}~\bibnamefont {Brakerski}}, \bibinfo {author}
  {\bibfnamefont {I.}~\bibnamefont {Cohen}}, \bibinfo {author} {\bibfnamefont
  {O.}~\bibnamefont {Golan}}, \bibinfo {author} {\bibfnamefont
  {I.}~\bibnamefont {Gurwich}}, \bibinfo {author} {\bibfnamefont {N.~H.}\
  \bibnamefont {Lindner}},\ and\ \bibinfo {author} {\bibfnamefont
  {M.}~\bibnamefont {Shutman}},\ }\href@noop {} {\bibinfo {title} {{Syndrome
  aware mitigation of logical errors}}} (\bibinfo {year} {2025}),\ \Eprint
  {https://arxiv.org/abs/2512.23810} {arXiv:2512.23810 [quant-ph]} \BibitemShut
  {NoStop}%
\bibitem [{\citenamefont {Tsubouchi}\ \emph {et~al.}(2026)\citenamefont
  {Tsubouchi}, \citenamefont {Kwon}, \citenamefont {Jiang},\ and\ \citenamefont
  {Yoshioka}}]{tsubouchi2026advantages}%
  \BibitemOpen
  \bibfield  {author} {\bibinfo {author} {\bibfnamefont {K.}~\bibnamefont
  {Tsubouchi}}, \bibinfo {author} {\bibfnamefont {H.}~\bibnamefont {Kwon}},
  \bibinfo {author} {\bibfnamefont {L.}~\bibnamefont {Jiang}},\ and\ \bibinfo
  {author} {\bibfnamefont {N.}~\bibnamefont {Yoshioka}},\ }\href@noop {}
  {\bibinfo {title} {{Quantum advantages for syndrome-aware noisy logical
  observable estimation}}} (\bibinfo {year} {2026}),\ \Eprint
  {https://arxiv.org/abs/2603.05145} {arXiv:2603.05145 [quant-ph]} \BibitemShut
  {NoStop}%
\bibitem [{\citenamefont {Suchsland}\ \emph {et~al.}(2026)\citenamefont
  {Suchsland}, \citenamefont {Schuster}, \citenamefont {Rudolph}, \citenamefont
  {Noll}, \citenamefont {O'Brien},\ and\ \citenamefont
  {Minev}}]{suchsland2026quantumerrormitigationinformation}%
  \BibitemOpen
  \bibfield  {author} {\bibinfo {author} {\bibfnamefont {P.}~\bibnamefont
  {Suchsland}}, \bibinfo {author} {\bibfnamefont {T.}~\bibnamefont {Schuster}},
  \bibinfo {author} {\bibfnamefont {M.~S.}\ \bibnamefont {Rudolph}}, \bibinfo
  {author} {\bibfnamefont {N.}~\bibnamefont {Noll}}, \bibinfo {author}
  {\bibfnamefont {T.~E.}\ \bibnamefont {O'Brien}},\ and\ \bibinfo {author}
  {\bibfnamefont {Z.~K.}\ \bibnamefont {Minev}},\ }\href
  {https://arxiv.org/abs/2609.31934} {\bibinfo {title} {{Quantum error
  mitigation from information dynamics}}} (\bibinfo {year} {2026}),\ \Eprint
  {https://arxiv.org/abs/2609.31934} {arXiv:2609.31934 [quant-ph]} \BibitemShut
  {NoStop}%
\bibitem [{\citenamefont {Spitz}\ \emph {et~al.}(2018)\citenamefont {Spitz},
  \citenamefont {Tarasinski}, \citenamefont {Beenakker},\ and\ \citenamefont
  {O'Brien}}]{spitz2018adaptive}%
  \BibitemOpen
  \bibfield  {author} {\bibinfo {author} {\bibfnamefont {S.~T.}\ \bibnamefont
  {Spitz}}, \bibinfo {author} {\bibfnamefont {B.}~\bibnamefont {Tarasinski}},
  \bibinfo {author} {\bibfnamefont {C.~W.~J.}\ \bibnamefont {Beenakker}},\ and\
  \bibinfo {author} {\bibfnamefont {T.~E.}\ \bibnamefont {O'Brien}},\
  }\bibfield  {title} {\bibinfo {title} {{Adaptive Weight Estimator for Quantum
  Error Correction in a Time-Dependent Environment}},\ }\href
  {https://doi.org/10.1002/qute.201800012} {\bibfield  {journal} {\bibinfo
  {journal} {Adv. Quantum Technol.}\ }\textbf {\bibinfo {volume} {1}},\
  \bibinfo {pages} {1800012} (\bibinfo {year} {2018})}\BibitemShut {NoStop}%
\bibitem [{\citenamefont {Wagner}\ \emph {et~al.}(2023)\citenamefont {Wagner},
  \citenamefont {Kampermann}, \citenamefont {Bru{\ss}},\ and\ \citenamefont
  {Kliesch}}]{wagner2023learning}%
  \BibitemOpen
  \bibfield  {author} {\bibinfo {author} {\bibfnamefont {T.}~\bibnamefont
  {Wagner}}, \bibinfo {author} {\bibfnamefont {H.}~\bibnamefont {Kampermann}},
  \bibinfo {author} {\bibfnamefont {D.}~\bibnamefont {Bru{\ss}}},\ and\
  \bibinfo {author} {\bibfnamefont {M.}~\bibnamefont {Kliesch}},\ }\bibfield
  {title} {\bibinfo {title} {{Learning logical Pauli noise in quantum error
  correction}},\ }\href {https://doi.org/10.1103/PhysRevLett.130.200601}
  {\bibfield  {journal} {\bibinfo  {journal} {Phys. Rev. Lett.}\ }\textbf
  {\bibinfo {volume} {130}},\ \bibinfo {pages} {200601} (\bibinfo {year}
  {2023})}\BibitemShut {NoStop}%
\bibitem [{\citenamefont {Zheng}\ \emph {et~al.}(2026)\citenamefont {Zheng},
  \citenamefont {Chu}, \citenamefont {Chen}, \citenamefont {Manes},
  \citenamefont {un~Lee}, \citenamefont {Zhou},\ and\ \citenamefont
  {Jiang}}]{zheng2026learning}%
  \BibitemOpen
  \bibfield  {author} {\bibinfo {author} {\bibfnamefont {H.}~\bibnamefont
  {Zheng}}, \bibinfo {author} {\bibfnamefont {C.-T.}\ \bibnamefont {Chu}},
  \bibinfo {author} {\bibfnamefont {S.}~\bibnamefont {Chen}}, \bibinfo {author}
  {\bibfnamefont {A.~G.}\ \bibnamefont {Manes}}, \bibinfo {author}
  {\bibfnamefont {S.}~\bibnamefont {un~Lee}}, \bibinfo {author} {\bibfnamefont
  {S.}~\bibnamefont {Zhou}},\ and\ \bibinfo {author} {\bibfnamefont
  {L.}~\bibnamefont {Jiang}},\ }\href@noop {} {\bibinfo {title} {{Efficient
  learning of logical noise from syndrome data}}} (\bibinfo {year} {2026}),\
  \Eprint {https://arxiv.org/abs/2601.22286} {arXiv:2601.22286 [quant-ph]}
  \BibitemShut {NoStop}%
\bibitem [{\citenamefont {Takou}\ \emph {et~al.}(2026)\citenamefont {Takou},
  \citenamefont {Benito}, \citenamefont {Vezvaee}, \citenamefont {Lidar},\ and\
  \citenamefont {Brown}}]{takou2026logical}%
  \BibitemOpen
  \bibfield  {author} {\bibinfo {author} {\bibfnamefont {E.}~\bibnamefont
  {Takou}}, \bibinfo {author} {\bibfnamefont {C.}~\bibnamefont {Benito}},
  \bibinfo {author} {\bibfnamefont {A.}~\bibnamefont {Vezvaee}}, \bibinfo
  {author} {\bibfnamefont {D.~A.}\ \bibnamefont {Lidar}},\ and\ \bibinfo
  {author} {\bibfnamefont {K.~R.}\ \bibnamefont {Brown}},\ }\href@noop {}
  {\bibinfo {title} {{Logical error estimation from syndrome data of
  surface-code experiments}}} (\bibinfo {year} {2026}),\ \Eprint
  {https://arxiv.org/abs/2606.11496} {arXiv:2606.11496 [quant-ph]} \BibitemShut
  {NoStop}%
\bibitem [{\citenamefont {Govia}\ \emph {et~al.}(2025)\citenamefont {Govia},
  \citenamefont {Majumder}, \citenamefont {Barron}, \citenamefont {Mitchell},
  \citenamefont {Seif}, \citenamefont {Kim}, \citenamefont {Wood},
  \citenamefont {Pritchett}, \citenamefont {Merkel},\ and\ \citenamefont
  {McKay}}]{govia2025bounding}%
  \BibitemOpen
  \bibfield  {author} {\bibinfo {author} {\bibfnamefont {L.~C.~G.}\
  \bibnamefont {Govia}}, \bibinfo {author} {\bibfnamefont {S.}~\bibnamefont
  {Majumder}}, \bibinfo {author} {\bibfnamefont {S.~V.}\ \bibnamefont
  {Barron}}, \bibinfo {author} {\bibfnamefont {B.}~\bibnamefont {Mitchell}},
  \bibinfo {author} {\bibfnamefont {A.}~\bibnamefont {Seif}}, \bibinfo {author}
  {\bibfnamefont {Y.}~\bibnamefont {Kim}}, \bibinfo {author} {\bibfnamefont
  {C.~J.}\ \bibnamefont {Wood}}, \bibinfo {author} {\bibfnamefont {E.~J.}\
  \bibnamefont {Pritchett}}, \bibinfo {author} {\bibfnamefont {S.~T.}\
  \bibnamefont {Merkel}},\ and\ \bibinfo {author} {\bibfnamefont {D.~C.}\
  \bibnamefont {McKay}},\ }\bibfield  {title} {\bibinfo {title} {{Bounding the
  systematic error in quantum error mitigation due to model violation}},\
  }\href {https://doi.org/10.1103/PRXQuantum.6.010354} {\bibfield  {journal}
  {\bibinfo  {journal} {PRX Quantum}\ }\textbf {\bibinfo {volume} {6}},\
  \bibinfo {pages} {010354} (\bibinfo {year} {2025})}\BibitemShut {NoStop}%
\bibitem [{\citenamefont {Higgott}(2022)}]{higgott2022pymatching}%
  \BibitemOpen
  \bibfield  {author} {\bibinfo {author} {\bibfnamefont {O.}~\bibnamefont
  {Higgott}},\ }\bibfield  {title} {\bibinfo {title} {{PyMatching: A Python
  package for decoding quantum error-correcting codes with minimum-weight
  perfect matching}},\ }\href {https://doi.org/10.1145/3555546} {\bibfield
  {journal} {\bibinfo  {journal} {ACM Trans. Quantum Comput.}\ }\textbf
  {\bibinfo {volume} {3}},\ \bibinfo {pages} {1} (\bibinfo {year}
  {2022})}\BibitemShut {NoStop}%
\bibitem [{\citenamefont {Higgott}\ and\ \citenamefont
  {Gidney}(2025)}]{higgott2025blossom}%
  \BibitemOpen
  \bibfield  {author} {\bibinfo {author} {\bibfnamefont {O.}~\bibnamefont
  {Higgott}}\ and\ \bibinfo {author} {\bibfnamefont {C.}~\bibnamefont
  {Gidney}},\ }\bibfield  {title} {\bibinfo {title} {{Sparse Blossom:
  correcting a million errors per core second with minimum-weight matching}},\
  }\href {https://doi.org/10.22331/q-2025-01-20-1600} {\bibfield  {journal}
  {\bibinfo  {journal} {Quantum}\ }\textbf {\bibinfo {volume} {9}},\ \bibinfo
  {pages} {1600} (\bibinfo {year} {2025})}\BibitemShut {NoStop}%
\bibitem [{\citenamefont {Bravyi}\ \emph {et~al.}(2014)\citenamefont {Bravyi},
  \citenamefont {Suchara},\ and\ \citenamefont {Vargo}}]{bravyi2014efficient}%
  \BibitemOpen
  \bibfield  {author} {\bibinfo {author} {\bibfnamefont {S.}~\bibnamefont
  {Bravyi}}, \bibinfo {author} {\bibfnamefont {M.}~\bibnamefont {Suchara}},\
  and\ \bibinfo {author} {\bibfnamefont {A.}~\bibnamefont {Vargo}},\ }\bibfield
   {title} {\bibinfo {title} {{Efficient algorithms for maximum likelihood
  decoding in the surface code}},\ }\href
  {https://doi.org/10.1103/PhysRevA.90.032326} {\bibfield  {journal} {\bibinfo
  {journal} {Phys. Rev. A}\ }\textbf {\bibinfo {volume} {90}},\ \bibinfo
  {pages} {032326} (\bibinfo {year} {2014})}\BibitemShut {NoStop}%
\bibitem [{\citenamefont {Roffe}\ \emph {et~al.}(2020)\citenamefont {Roffe},
  \citenamefont {White}, \citenamefont {Burton},\ and\ \citenamefont
  {Campbell}}]{roffe2020decoding}%
  \BibitemOpen
  \bibfield  {author} {\bibinfo {author} {\bibfnamefont {J.}~\bibnamefont
  {Roffe}}, \bibinfo {author} {\bibfnamefont {D.~R.}\ \bibnamefont {White}},
  \bibinfo {author} {\bibfnamefont {S.}~\bibnamefont {Burton}},\ and\ \bibinfo
  {author} {\bibfnamefont {E.~T.}\ \bibnamefont {Campbell}},\ }\bibfield
  {title} {\bibinfo {title} {{Decoding across the quantum low-density
  parity-check code landscape}},\ }\href
  {https://doi.org/10.1103/PhysRevResearch.2.043423} {\bibfield  {journal}
  {\bibinfo  {journal} {Phys. Rev. Res.}\ }\textbf {\bibinfo {volume} {2}},\
  \bibinfo {pages} {043423} (\bibinfo {year} {2020})}\BibitemShut {NoStop}%
\bibitem [{\citenamefont {Shutty}\ \emph {et~al.}(2024)\citenamefont {Shutty},
  \citenamefont {Newman},\ and\ \citenamefont
  {Villalonga}}]{shutty2024efficient}%
  \BibitemOpen
  \bibfield  {author} {\bibinfo {author} {\bibfnamefont {N.}~\bibnamefont
  {Shutty}}, \bibinfo {author} {\bibfnamefont {M.}~\bibnamefont {Newman}},\
  and\ \bibinfo {author} {\bibfnamefont {B.}~\bibnamefont {Villalonga}},\
  }\href@noop {} {\bibinfo {title} {{Efficient near-optimal decoding of the
  surface code through ensembling}}} (\bibinfo {year} {2024}),\ \Eprint
  {https://arxiv.org/abs/2401.12434} {arXiv:2401.12434 [quant-ph]} \BibitemShut
  {NoStop}%
\bibitem [{\citenamefont {Dennis}\ \emph {et~al.}(2002)\citenamefont {Dennis},
  \citenamefont {Kitaev}, \citenamefont {Landahl},\ and\ \citenamefont
  {Preskill}}]{dennis2002topological}%
  \BibitemOpen
  \bibfield  {author} {\bibinfo {author} {\bibfnamefont {E.}~\bibnamefont
  {Dennis}}, \bibinfo {author} {\bibfnamefont {A.}~\bibnamefont {Kitaev}},
  \bibinfo {author} {\bibfnamefont {A.}~\bibnamefont {Landahl}},\ and\ \bibinfo
  {author} {\bibfnamefont {J.}~\bibnamefont {Preskill}},\ }\bibfield  {title}
  {\bibinfo {title} {{Topological quantum memory}},\ }\href
  {https://doi.org/10.1063/1.1499754} {\bibfield  {journal} {\bibinfo
  {journal} {J. Math. Phys.}\ }\textbf {\bibinfo {volume} {43}},\ \bibinfo
  {pages} {4452} (\bibinfo {year} {2002})}\BibitemShut {NoStop}%
\bibitem [{\citenamefont {Bomb{\'{\i}}n}(2015)}]{bombin2015gauge}%
  \BibitemOpen
  \bibfield  {author} {\bibinfo {author} {\bibfnamefont {H.}~\bibnamefont
  {Bomb{\'{\i}}n}},\ }\bibfield  {title} {\bibinfo {title} {{Gauge color codes:
  optimal transversal gates and gauge fixing in topological stabilizer
  codes}},\ }\href {https://doi.org/10.1088/1367-2630/17/8/083002} {\bibfield
  {journal} {\bibinfo  {journal} {New J. Phys.}\ }\textbf {\bibinfo {volume}
  {17}},\ \bibinfo {pages} {083002} (\bibinfo {year} {2015})}\BibitemShut
  {NoStop}%
\bibitem [{\citenamefont {Gidney}(2021)}]{gidney2021stim}%
  \BibitemOpen
  \bibfield  {author} {\bibinfo {author} {\bibfnamefont {C.}~\bibnamefont
  {Gidney}},\ }\bibfield  {title} {\bibinfo {title} {{Stim: a fast stabilizer
  circuit simulator}},\ }\href {https://doi.org/10.22331/q-2021-07-06-497}
  {\bibfield  {journal} {\bibinfo  {journal} {Quantum}\ }\textbf {\bibinfo
  {volume} {5}},\ \bibinfo {pages} {497} (\bibinfo {year} {2021})}\BibitemShut
  {NoStop}%
\bibitem [{\citenamefont {Bhatia}\ and\ \citenamefont
  {Davis}(2000)}]{bhatia2000better}%
  \BibitemOpen
  \bibfield  {author} {\bibinfo {author} {\bibfnamefont {R.}~\bibnamefont
  {Bhatia}}\ and\ \bibinfo {author} {\bibfnamefont {C.}~\bibnamefont {Davis}},\
  }\bibfield  {title} {\bibinfo {title} {{A better bound on the variance}},\
  }\href {https://doi.org/10.1080/00029890.2000.12005203} {\bibfield  {journal}
  {\bibinfo  {journal} {Am. Math. Mon.}\ }\textbf {\bibinfo {volume} {107}},\
  \bibinfo {pages} {353} (\bibinfo {year} {2000})}\BibitemShut {NoStop}%
\bibitem [{\citenamefont {Wagner}\ \emph {et~al.}(2022)\citenamefont {Wagner},
  \citenamefont {Kampermann}, \citenamefont {Bru{\ss}},\ and\ \citenamefont
  {Kliesch}}]{wagner2022pauli}%
  \BibitemOpen
  \bibfield  {author} {\bibinfo {author} {\bibfnamefont {T.}~\bibnamefont
  {Wagner}}, \bibinfo {author} {\bibfnamefont {H.}~\bibnamefont {Kampermann}},
  \bibinfo {author} {\bibfnamefont {D.}~\bibnamefont {Bru{\ss}}},\ and\
  \bibinfo {author} {\bibfnamefont {M.}~\bibnamefont {Kliesch}},\ }\bibfield
  {title} {\bibinfo {title} {{Pauli channels can be estimated from syndrome
  measurements in quantum error correction}},\ }\href
  {https://doi.org/10.22331/q-2022-09-19-809} {\bibfield  {journal} {\bibinfo
  {journal} {Quantum}\ }\textbf {\bibinfo {volume} {6}},\ \bibinfo {pages}
  {809} (\bibinfo {year} {2022})},\ \Eprint {https://arxiv.org/abs/2107.14252}
  {arXiv:2107.14252} \BibitemShut {NoStop}%
\bibitem [{\citenamefont {Harper}\ \emph {et~al.}(2020)\citenamefont {Harper},
  \citenamefont {Flammia},\ and\ \citenamefont
  {Wallman}}]{harper2020efficient}%
  \BibitemOpen
  \bibfield  {author} {\bibinfo {author} {\bibfnamefont {R.}~\bibnamefont
  {Harper}}, \bibinfo {author} {\bibfnamefont {S.~T.}\ \bibnamefont
  {Flammia}},\ and\ \bibinfo {author} {\bibfnamefont {J.~J.}\ \bibnamefont
  {Wallman}},\ }\bibfield  {title} {\bibinfo {title} {{Efficient learning of
  quantum noise}},\ }\href {https://doi.org/10.1038/s41567-020-0992-8}
  {\bibfield  {journal} {\bibinfo  {journal} {Nat. Phys.}\ }\textbf {\bibinfo
  {volume} {16}},\ \bibinfo {pages} {1184} (\bibinfo {year}
  {2020})}\BibitemShut {NoStop}%
\bibitem [{\citenamefont {Kim}\ \emph {et~al.}(2023)\citenamefont {Kim},
  \citenamefont {Wood}, \citenamefont {Yoder}, \citenamefont {Merkel},
  \citenamefont {Gambetta}, \citenamefont {Temme},\ and\ \citenamefont
  {Kandala}}]{kim2023scalable}%
  \BibitemOpen
  \bibfield  {author} {\bibinfo {author} {\bibfnamefont {Y.}~\bibnamefont
  {Kim}}, \bibinfo {author} {\bibfnamefont {C.~J.}\ \bibnamefont {Wood}},
  \bibinfo {author} {\bibfnamefont {T.~J.}\ \bibnamefont {Yoder}}, \bibinfo
  {author} {\bibfnamefont {S.~T.}\ \bibnamefont {Merkel}}, \bibinfo {author}
  {\bibfnamefont {J.~M.}\ \bibnamefont {Gambetta}}, \bibinfo {author}
  {\bibfnamefont {K.}~\bibnamefont {Temme}},\ and\ \bibinfo {author}
  {\bibfnamefont {A.}~\bibnamefont {Kandala}},\ }\bibfield  {title} {\bibinfo
  {title} {{Scalable error mitigation for noisy quantum circuits produces
  competitive expectation values}},\ }\href
  {https://doi.org/10.1038/s41567-022-01914-3} {\bibfield  {journal} {\bibinfo
  {journal} {Nat. Phys.}\ }\textbf {\bibinfo {volume} {19}},\ \bibinfo {pages}
  {752} (\bibinfo {year} {2023})}\BibitemShut {NoStop}%
\bibitem [{\citenamefont {Skoric}\ \emph {et~al.}(2023)\citenamefont {Skoric},
  \citenamefont {Browne}, \citenamefont {Barnes}, \citenamefont {Gillespie},\
  and\ \citenamefont {Campbell}}]{skoric2023parallel}%
  \BibitemOpen
  \bibfield  {author} {\bibinfo {author} {\bibfnamefont {L.}~\bibnamefont
  {Skoric}}, \bibinfo {author} {\bibfnamefont {D.~E.}\ \bibnamefont {Browne}},
  \bibinfo {author} {\bibfnamefont {K.~M.}\ \bibnamefont {Barnes}}, \bibinfo
  {author} {\bibfnamefont {N.~I.}\ \bibnamefont {Gillespie}},\ and\ \bibinfo
  {author} {\bibfnamefont {E.~T.}\ \bibnamefont {Campbell}},\ }\bibfield
  {title} {\bibinfo {title} {{Parallel window decoding enables scalable fault
  tolerant quantum computation}},\ }\href
  {https://doi.org/10.1038/s41467-023-42482-1} {\bibfield  {journal} {\bibinfo
  {journal} {Nat. Commun.}\ }\textbf {\bibinfo {volume} {14}},\ \bibinfo
  {pages} {7040} (\bibinfo {year} {2023})}\BibitemShut {NoStop}%
\bibitem [{\citenamefont {Knill}(2005)}]{knill2005quantum}%
  \BibitemOpen
  \bibfield  {author} {\bibinfo {author} {\bibfnamefont {E.}~\bibnamefont
  {Knill}},\ }\bibfield  {title} {\bibinfo {title} {{Quantum computing with
  realistically noisy devices}},\ }\href {https://doi.org/10.1038/nature03350}
  {\bibfield  {journal} {\bibinfo  {journal} {Nature}\ }\textbf {\bibinfo
  {volume} {434}},\ \bibinfo {pages} {39} (\bibinfo {year} {2005})}\BibitemShut
  {NoStop}%
\bibitem [{\citenamefont {Bravyi}\ and\ \citenamefont
  {Kitaev}(2005)}]{bravyi2005universal}%
  \BibitemOpen
  \bibfield  {author} {\bibinfo {author} {\bibfnamefont {S.}~\bibnamefont
  {Bravyi}}\ and\ \bibinfo {author} {\bibfnamefont {A.}~\bibnamefont
  {Kitaev}},\ }\bibfield  {title} {\bibinfo {title} {{Universal quantum
  computation with ideal Clifford gates and noisy ancillas}},\ }\href
  {https://doi.org/10.1103/PhysRevA.71.022316} {\bibfield  {journal} {\bibinfo
  {journal} {Phys. Rev. A}\ }\textbf {\bibinfo {volume} {71}},\ \bibinfo
  {pages} {022316} (\bibinfo {year} {2005})}\BibitemShut {NoStop}%
\bibitem [{\citenamefont {Litinski}(2019)}]{litinski2019game}%
  \BibitemOpen
  \bibfield  {author} {\bibinfo {author} {\bibfnamefont {D.}~\bibnamefont
  {Litinski}},\ }\bibfield  {title} {\bibinfo {title} {{A game of surface
  codes: Large-scale quantum computing with lattice surgery}},\ }\href
  {https://doi.org/10.22331/q-2019-03-05-128} {\bibfield  {journal} {\bibinfo
  {journal} {Quantum}\ }\textbf {\bibinfo {volume} {3}},\ \bibinfo {pages}
  {128} (\bibinfo {year} {2019})}\BibitemShut {NoStop}%
\bibitem [{\citenamefont {Rosenfeld}\ \emph {et~al.}(2025)\citenamefont
  {Rosenfeld}, \citenamefont {Gidney}, \citenamefont {Roberts}, \citenamefont
  {Morvan}, \citenamefont {Lacroix} \emph {et~al.}}]{rosenfeld2025magic}%
  \BibitemOpen
  \bibfield  {author} {\bibinfo {author} {\bibfnamefont {E.}~\bibnamefont
  {Rosenfeld}}, \bibinfo {author} {\bibfnamefont {C.}~\bibnamefont {Gidney}},
  \bibinfo {author} {\bibfnamefont {G.}~\bibnamefont {Roberts}}, \bibinfo
  {author} {\bibfnamefont {A.}~\bibnamefont {Morvan}}, \bibinfo {author}
  {\bibfnamefont {N.}~\bibnamefont {Lacroix}}, \emph {et~al.},\ }\href@noop {}
  {\bibinfo {title} {{Magic state cultivation on a superconducting quantum
  processor}}} (\bibinfo {year} {2025}),\ \Eprint
  {https://arxiv.org/abs/2512.13908} {arXiv:2512.13908 [quant-ph]} \BibitemShut
  {NoStop}%
\bibitem [{\citenamefont {Bravyi}\ \emph {et~al.}(2018)\citenamefont {Bravyi},
  \citenamefont {Englbrecht}, \citenamefont {K{\"o}nig},\ and\ \citenamefont
  {Peard}}]{bravyi2018correcting}%
  \BibitemOpen
  \bibfield  {author} {\bibinfo {author} {\bibfnamefont {S.}~\bibnamefont
  {Bravyi}}, \bibinfo {author} {\bibfnamefont {M.}~\bibnamefont {Englbrecht}},
  \bibinfo {author} {\bibfnamefont {R.}~\bibnamefont {K{\"o}nig}},\ and\
  \bibinfo {author} {\bibfnamefont {N.}~\bibnamefont {Peard}},\ }\bibfield
  {title} {\bibinfo {title} {{Correcting coherent errors with surface codes}},\
  }\href {https://doi.org/10.1038/s41534-018-0106-y} {\bibfield  {journal}
  {\bibinfo  {journal} {npj Quantum Inf.}\ }\textbf {\bibinfo {volume} {4}},\
  \bibinfo {pages} {55} (\bibinfo {year} {2018})}\BibitemShut {NoStop}%
\bibitem [{\citenamefont {Wallman}\ and\ \citenamefont
  {Emerson}(2016)}]{wallman2016noise}%
  \BibitemOpen
  \bibfield  {author} {\bibinfo {author} {\bibfnamefont {J.~J.}\ \bibnamefont
  {Wallman}}\ and\ \bibinfo {author} {\bibfnamefont {J.}~\bibnamefont
  {Emerson}},\ }\bibfield  {title} {\bibinfo {title} {{Noise tailoring for
  scalable quantum computation via randomized compiling}},\ }\href
  {https://doi.org/10.1103/PhysRevA.94.052325} {\bibfield  {journal} {\bibinfo
  {journal} {Phys. Rev. A}\ }\textbf {\bibinfo {volume} {94}},\ \bibinfo
  {pages} {052325} (\bibinfo {year} {2016})}\BibitemShut {NoStop}%
\bibitem [{\citenamefont {Shutman}\ \emph {et~al.}(2026)\citenamefont
  {Shutman}, \citenamefont {Aharonov}, \citenamefont {Arad}, \citenamefont
  {Atia}, \citenamefont {Bairey}, \citenamefont {Brakerski}, \citenamefont
  {Katzir}, \citenamefont {Lindner}, \citenamefont {Golan},\ and\ \citenamefont
  {Gurwich}}]{shutman2026error}%
  \BibitemOpen
  \bibfield  {author} {\bibinfo {author} {\bibfnamefont {M.}~\bibnamefont
  {Shutman}}, \bibinfo {author} {\bibfnamefont {D.}~\bibnamefont {Aharonov}},
  \bibinfo {author} {\bibfnamefont {I.}~\bibnamefont {Arad}}, \bibinfo {author}
  {\bibfnamefont {Y.}~\bibnamefont {Atia}}, \bibinfo {author} {\bibfnamefont
  {E.}~\bibnamefont {Bairey}}, \bibinfo {author} {\bibfnamefont
  {Z.}~\bibnamefont {Brakerski}}, \bibinfo {author} {\bibfnamefont {B.~A.}\
  \bibnamefont {Katzir}}, \bibinfo {author} {\bibfnamefont {N.~H.}\
  \bibnamefont {Lindner}}, \bibinfo {author} {\bibfnamefont {O.}~\bibnamefont
  {Golan}},\ and\ \bibinfo {author} {\bibfnamefont {I.}~\bibnamefont
  {Gurwich}},\ }\href@noop {} {\bibinfo {title} {{Error Mitigation From a
  Single Unmodified Error-Corrected Circuit}}} (\bibinfo {year} {2026}),\
  \Eprint {https://arxiv.org/abs/2610.06792} {arXiv:2610.06792 [quant-ph]}
  \BibitemShut {NoStop}%
\end{thebibliography}%

\clearpage
\setcounter{theorem}{0}
\setcounter{proposition}{0}
\setcounter{corollary}{0}
\setcounter{lemma}{0}
\makeatletter
\def\l@subsubsection#1#2{}
\makeatother
\setcounter{tocdepth}{2}
\clearpage \pagebreak \setcounter{page}{1} \setcounter{section}{0} \setcounter{equation}{0} \setcounter{figure}{0} \setcounter{table}{0} \makeatletter \def\p@subsection{\thesection.} \def\p@subsubsection{\thesection.\thesubsection.} \makeatother \renewcommand{\theHsection}{S.\arabic{section}} \renewcommand{\theequation}{S.\arabic{equation}} \renewcommand{\theHequation}{S.\arabic{equation}} \renewcommand{\thefigure}{S\arabic{figure}} \renewcommand{\theHfigure}{S\arabic{figure}} \renewcommand{\thetable}{S\arabic{table}} \renewcommand{\theHtable}{S\arabic{table}} \renewcommand{\theHtheorem}{S.\arabic{theorem}} \renewcommand{\theHproposition}{S.\arabic{proposition}} \renewcommand{\theHcorollary}{S.\arabic{corollary}} \renewcommand{\theHlemma}{S.\arabic{lemma}} \renewcommand*{\thepage}{S\arabic{page}}
\aliaslabel{eq:trace_preservation_constraint}{\ref{eq:pauli_cancellation_constraint}}%
\onecolumngrid
\enlargethispage{\baselineskip}
\begin{center}
  {\large \textbf{Supplementary Information for\\[0.25em]
  Virtual Error Cancellation for Logical Quantum Circuits}}\\[0.5em]
  {\normalsize Oles Shtanko, Saswata Roy, Shashwat Kumar, and Zlatko Minev}\\[0.15em]
  {\small \textit{Google Quantum AI}}
\end{center}
\vspace{-0.5em}
\begingroup
\small
\tableofcontents
\endgroup

\section{Schematic overview}
\label{sec:supp_schematic_overview}

Supplementary Fig.~\ref{fig:supp_dual_decoder_cartoon} shows the dual-decoder partition of syndrome space and the signed recombination of sector averages in its simplest form, with all contested sectors merged into a single disagreement sector (main-text Fig.~\ref{fig:main_schematic}c,d; Supplementary Section~\ref{sec:supp_dual_decoder_protocol}).

\begin{figure}[t]
    \centering
    \includegraphics[width=0.5\linewidth]{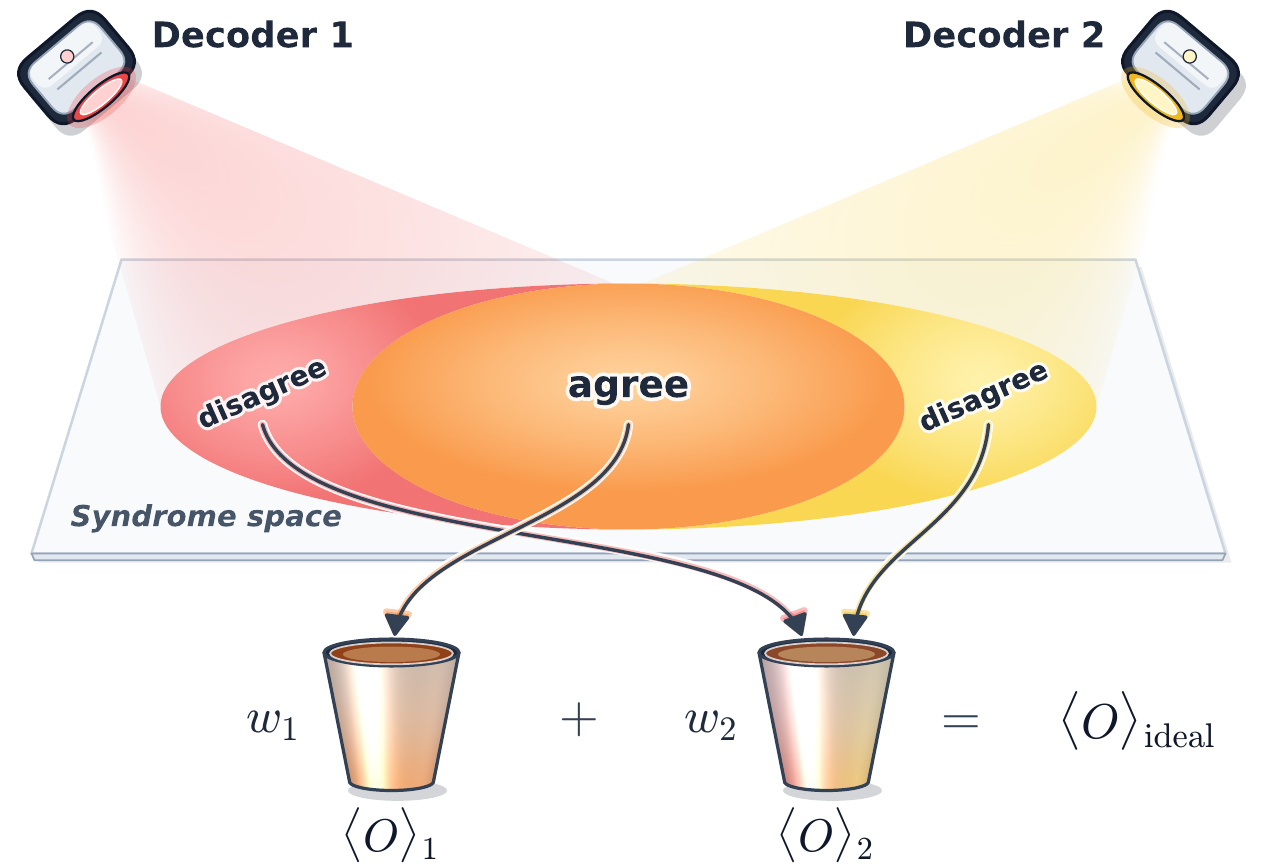}
\caption{\textbf{Dual-decoder partition of syndrome space (schematic).}
    Two decoders, depicted as projectors, evaluate the recorded syndrome $\bm{s}$ of every shot. Syndromes on which their logical corrections agree form the consensus sector (\textsf{agree}, central overlap); syndromes on which they differ form the contested sectors (\textsf{disagree}, outer crescents). Every syndrome is decoded by both decoders; the crescents mark disagreement, not coverage by a single decoder, and areas are not to scale (the contested share is $\mathcal{O}(p_L)$).
    Logical outcomes are averaged separately over the two groups, giving $\langle O \rangle_1$ and $\langle O \rangle_2$, and recombined in classical post-processing with signed weights $w_1 > 0$ and $w_2 < 0$, which here include the sector shares. The recombination recovers the ideal expectation value $\langle O \rangle_{\mathrm{ideal}}$ without circuit modification or shot rejection (main-text Fig.~\ref{fig:main_schematic}c,d).}\label{fig:supp_dual_decoder_cartoon}
\end{figure}

\section{Methods}
\label{sec:methods_numerical}
\label{sec:appendix_methodology}
\label{sec:methods}
\label{sec:methods_mcmc_estimation}

This section details the numerical simulations supporting Figs.~\ref{fig:dual_decoder} and~\ref{fig:vlpec_validation} of the main text.
Subsection~\ref{subsec:methods_circuits_noise} specifies the rotated surface-code geometry, code-capacity depolarizing channels, and Pauli-frame logical error readout.
In Subsection~\ref{subsec:methods_decoders_acceleration}, we describe the lookup-table implementation of the real-time minimum-weight perfect matching decoder alongside the exact matrix-product-state contraction of the offline reference decoder.
Subsection~\ref{subsec:methods_noise_learning_benchmark} details the syndrome-based noise-learning estimator, adjacent same-sector character selection, the inhomogeneous spatial-noise ensemble, and stratified shot-budget scaling, while Subsection~\ref{subsec:methods_sampling_uncertainty} documents statistical uncertainty estimators.

\subsection{Surface-Code Circuits and Logical Error Readout}
\label{subsec:methods_circuits_noise}

We simulate planar rotated surface codes of odd code distances $d \in \{3, 5, 7\}$.
Each patch encodes one logical qubit into $N_q = d^2 \in \{9, 25, 49\}$ data qubits at lattice coordinates $(2c+1, 2r+1)$ for column $c \in \{0, \dots, d-1\}$ and row $r \in \{0, \dots, d-1\}$.
The code space is stabilized by $N_D = d^2 - 1$ checks split equally between $Z$-type and $X$-type generators ($(d^2 - 1)/2$ of each), formed by weight-4 bulk plaquettes at $(2x, 2y)$ for $x, y \in \{1, \dots, d-1\}$ and weight-2 perimeter checks.
Logical Pauli operators act along the boundaries as $Z_L = \bigotimes_{c=0}^{d-1} Z_{(2c+1, 1)}$ and $X_L = \bigotimes_{r=0}^{d-1} X_{(1, 2r+1)}$.
In the code-capacity model, each data qubit experiences independent depolarizing noise $\mathcal{E}_p(\rho) = (1 - p)\rho + \frac{p}{3}(X\rho X + Y\rho Y + Z\rho Z)$.
We simulate this channel in \texttt{Stim}~\cite{gidney2021stim} across 11 physical error rates.
On each Monte Carlo shot, data-qubit errors are sampled from $\mathcal{E}_p$ to extract the stabilizer syndrome $\bm{s} \in \{0, 1\}^{d^2 - 1}$ together with the true logical Pauli frame flip $Q' \in \{I, X_L, Y_L, Z_L\}$.
In the numerical testbench, this frame flip is determined by the commutation parities of the sampled fault configuration with $Z_L$ and $X_L$, recorded deterministically via a noiseless reference ancilla in \texttt{Stim}.
We emphasize that this reference ancilla serves exclusively to benchmark decoder error metrics ($p_L, p_{\mathrm{post}}, p_{\mathrm{res}}$): the true logical Pauli frame flip is strictly unobservable during online fault-tolerant computation and is never supplied to the syndrome-based noise-learning estimator.

\subsection{Real-Time and Offline Decoders}
\label{subsec:methods_decoders_acceleration}

The real-time decoder $\mathcal{D}_{\mathrm{rt}}$ executes minimum-weight perfect matching (MWPM) via \texttt{PyMatching}~\cite{higgott2022pymatching} on the decoupled $X$ and $Z$ detector graphs with log-likelihood edge weights $w_e = \ln[(1 - p_e)/p_e]$.
Single-cycle Calderbank--Shor--Steane (CSS) syndrome extraction decouples into two sub-syndromes $\bm{s}_Z, \bm{s}_X \in \{0, 1\}^{(d^2 - 1)/2}$ of length $(d^2 - 1)/2 \in \{4, 12, 24\}$.
Precomputing the lookup tables $\mathcal{T}_X(\bm{s}_Z)$ and $\mathcal{T}_Z(\bm{s}_X)$ and bypassing the trivial syndrome $\bm{s} = \bm{0}$ accelerates Monte Carlo sampling without approximation.

The offline reference decoder $\mathcal{D}_{\mathrm{off}}$ evaluates the exact logical coset partition functions
\begin{equation}
Z_L(\bm{s}) = \sum_{\bm{e} \in \mathcal{C}_L(\bm{s})} \prod_{j=1}^{d^2} \left(\frac{p_j}{1 - p_j}\right)^{e_j}, \qquad L \in \{I, X, Y, Z\},
\label{eq:methods_coset_partition}
\end{equation}
where $\mathcal{C}_L(\bm{s})$ is the error coset compatible with syndrome $\bm{s}$ and logical class $L$, and $p_j$ is the fault probability on data qubit $j$.
We contract Eq.~\eqref{eq:methods_coset_partition} using the Bravyi--Suchara--Vargo (BSV) matrix-product-state (MPS) algorithm~\cite{bravyi2014efficient}.
Setting the MPS bond dimension to the exact transverse cut limit $\chi = 2^d$ ($\chi = 8, 32, 128$ for $d = 3, 5, 7$) eliminates singular-value truncation and gives $Z_L(\bm{s})$ to double-precision accuracy ($10^{-16}$).
The resulting partition functions determine both the Bayes-optimal decision $\mathcal{D}_{\mathrm{off}}(\bm{s}) = \arg\max_L Z_L(\bm{s})$ and the exact conditional posteriors $p_L(\bm{s}) = Z_L(\bm{s}) / \sum_{L'} Z_{L'}(\bm{s})$ used to evaluate $C_{\mathrm{syn}}^*$.
In the four-sector benchmark at $d = 7$ and $p = 5\cdot 10^{-2}$ (Fig.~\ref{fig:dual_decoder}b,d), the simulated transition matrix diagonals are $P_{I,I} \approx 1.00, P_{X,X} \approx P_{Z,Z} \approx 0.86,$ and $P_{Y,Y} \approx 0.78$.
The corresponding optimal sector weights are $w_I^* \approx 1.01, w_X^* = w_Z^* \approx -0.42,$ and $w_Y^* \approx -4\cdot 10^{-2}$.
Sweeping a single contested weight $w_Q$ while adjusting the consensus weight $w_I(w_Q) = w_I^* - (\pi_Q/\pi_I)(w_Q - w_Q^*)$ to maintain unit mean weight $\sum_{K} \pi_K w_K = 1$ (while holding the other two contested weights fixed at $w_{Q'}^*$) produces the normalized residual error profile $R_Q(w_Q) \equiv |(\bm{p}_{\mathrm{res}}(w_Q))_Q| / \bar{p}_Q = (\pi_Q V_{Q,Q} / \bar{p}_Q)|w_Q - w_Q^*|$ shown in Fig.~\ref{fig:dual_decoder}d.
Here $\pi_Q$ is the sector probability, $V_{Q,Q} = P_{Q,Q} - P_{I,Q} $ is the diagonal displacement, and $\bar{p}_Q$ is the unmitigated logical error rate.

\subsection{End-to-End Noise-Learning and Mitigation Benchmark}
\label{subsec:methods_noise_learning_benchmark}
\label{sec:learning_noise_model}
\label{sec:noise_learning_theory}
\label{sec:empirical_wls}
\label{sec:topological_characters}
\label{sec:symmetry_reduction}

Evaluating optimal mitigation weights and conditional posteriors requires characterizing physical fault probabilities from recorded syndrome data.
For an ensemble of $T_{\mathrm{learn}}$ unlabelled syndrome records $\{\bm{s}^{(t)}\}_{t=1}^{T_{\mathrm{learn}}}$, each binary character $\bm{\chi} \in \FF^{N_D}$ defines the empirical syndrome fidelity
\begin{equation}
\hat{\Fs}(\bm{\chi}) = \frac{1}{T_{\mathrm{learn}}} \sum_{t=1}^{T_{\mathrm{learn}}} (-1)^{\bm{\chi} \cdot \bm{s}^{(t)}}.
\label{eq:methods_fidelity_estimator}
\end{equation}
Because $\hat{\Fs}(\bm{\chi})$ arises from uncorrelated binomial parity outcomes, its sample variance is $\Var[\hat{\Fs}(\bm{\chi})] = (1 - \Fs(\bm{\chi})^2)/T_{\mathrm{learn}}$.
Linearizing the multiplicative Fourier relation under independent fault mechanisms via the attenuation coordinate $b_{\bm{\chi}} = -\ln \Fs(\bm{\chi})$, first-order error propagation yields $\Var[\hat{b}_{\bm{\chi}}] \approx (1 - \Fs(\bm{\chi})^2)/(T_{\mathrm{learn}}\Fs(\bm{\chi})^2) = (e^{2b_{\bm{\chi}}} - 1)/T_{\mathrm{learn}}$.
The strict convexity of $-\ln(\cdot)$ introduces a second-order Jensen bias $\EE[\hat{b}_{\bm{\chi}}^{\mathrm{raw}}] = b_{\bm{\chi}} + \frac{1}{2}\Var[\hat{b}_{\bm{\chi}}] + \mathcal{O}(T_{\mathrm{learn}}^{-2})$, which we cancel for $\hat{\Fs}(\bm{\chi}) \ge 3/\sqrt{T_{\mathrm{learn}}}$ via the shrinkage estimator
\begin{equation}
\hat{b}_{\bm{\chi}} = -\ln \hat{\Fs}(\bm{\chi}) - \frac{1}{2}\widehat{\Var}\big[\hat{b}_{\bm{\chi}}\big], \qquad \widehat{\Var}\big[\hat{b}_{\bm{\chi}}\big] = \frac{1 - \hat{\Fs}(\bm{\chi})^2}{T_{\mathrm{learn}}\hat{\Fs}(\bm{\chi})^2}.
\label{eq:methods_jensen_shrinkage}
\end{equation}
Direct inversion of all $N_E$ fault rates is obstructed by gauge degeneracies, where distinct faults produce identical or stabilizer-equivalent syndrome signatures, and by the sample complexity of fitting every space-time fault channel independently.
To resolve these degeneracies and maximize statistical sample efficiency, we enforce circuit symmetries described by a symmetry group $\mathcal{G}_{\mathrm{sym}}$.
The group $\mathcal{G}_{\mathrm{sym}}$ incorporates spatial lattice point symmetries, temporal translation invariance across consecutive stabilizer extraction cycles, and the isotropic permutation symmetry of physical depolarizing noise ($p_{X, q} = p_{Y, q} = p_{Z, q} = p_q/3$).
The action of $\mathcal{G}_{\mathrm{sym}}$ on the fault set $\{1, \dots, N_E\}$ induces a partition into $M_{\mathrm{tied}} \le N_E$ disjoint equivalence orbits $\mathcal{O}_1, \dots, \mathcal{O}_{M_{\mathrm{tied}}}$.
We encode this partition in the symmetry tying matrix $\bm{C} \in \{0, 1\}^{N_E \times M_{\mathrm{tied}}}$, with entries
\begin{equation}
C_{jm} = \begin{cases} 1 & \text{if fault mechanism } j \in \mathcal{O}_m, \\ 0 & \text{otherwise}, \end{cases},
\label{eq:methods_symmetry_tying_matrix}
\end{equation}
such that each fault belongs to exactly one orbit, $\sum_{m=1}^{M_{\mathrm{tied}}} C_{jm} = 1$.
The columns of $\bm{C}$ are mutually orthogonal, satisfying $\bm{C}^\top\bm{C} = \mathrm{diag}(|\mathcal{O}_1|, \dots, |\mathcal{O}_{M_{\mathrm{tied}}}|)$.
Parameter tying constrains the attenuation vector to the compressed subspace $\bm{\theta} = \bm{C}\bm{\theta}_{\mathrm{tied}}$ with $\bm{\theta}_{\mathrm{tied}} \in \RR_{\ge 0}^{M_{\mathrm{tied}}}$.
This constraint groups gauge-equivalent errors differing only by elements of the stabilizer group and isotropic Pauli symmetries, reducing parameter dimensionality while preserving physically identifiable degrees of freedom.
For code-capacity depolarizing noise on a distance-$d = 7$ surface code (Fig.~\ref{fig:vlpec_validation}a), parameter tying compresses $3N_q = 147$ individual Pauli fault rates to $M_{\mathrm{tied}} = 49$ per-qubit error rates.
For full fault-tolerant circuit-level noise, the reduction scales from $\mathcal{O}(d^3)$ space-time faults to $\mathcal{O}(d^2)$ symmetry-distinct classes.

For a code-capacity rotated surface code of distance $d$, we assemble the detector parity-check matrix strictly from stabilizer generators, defining the stabilizer-only degree-1 check matrix $\HH_{\mathrm{deg1}} \equiv [\HH_Z; \, \HH_X] \in \{0, 1\}^{(d^2 - 1) \times d^2}$ without logical operators.
For a chosen character ensemble $\Xset$, we define the effective binary design matrix
\begin{equation}
\Amat_{\mathrm{eff}} \equiv [(\Xset\HH_{\mathrm{deg1}})\bmod 2] \in \{0, 1\}^{K \times d^2}.
\label{eq:methods_effective_design_matrix}
\end{equation}
In the microscopic Pauli model, each data qubit $q$ experiences depolarizing faults $\mathcal{O}_q = \{X_q, Y_q, Z_q\}$ with rate $p_q/3$ and single-fault attenuation $\theta_{\mathrm{single}, q} = -\ln(1 - \frac{2}{3}p_q)$.
Because any active check detecting qubit $q$ anticommutes with exactly two of the three non-trivial Pauli faults (namely $\{X_q, Y_q\}$ for $Z$-checks and $\{Z_q, Y_q\}$ for $X$-checks), evaluating the untied check matrix $\HH \in \{0, 1\}^{(d^2 - 1) \times 3d^2}$ against the symmetry tying matrix $\bm{C}$ sums over the two anticommuting faults in each orbit $\mathcal{O}_q$, establishing the orbit summation identity
\begin{equation}
\Amat = [(\Xset\HH)\bmod 2]\bm{C} = 2\Amat_{\mathrm{eff}}.
\label{eq:methods_orbit_summation_identity}
\end{equation}
The linear attenuation system thus satisfies
\begin{equation}
\bm{b} = \Amat\bm{\theta}_{\mathrm{single}} = 2\Amat_{\mathrm{eff}}\bm{\theta}_{\mathrm{single}} = \Amat_{\mathrm{eff}}\bm{\theta}_{\mathrm{tied}},
\label{eq:methods_attenuation_linear_system}
\end{equation}
where the tied parameter vector has entries $(\bm{\theta}_{\mathrm{tied}})_q \equiv 2\theta_{\mathrm{single}, q} = -2\ln(1 - \frac{2}{3}p_q) \approx -\ln(1 - \frac{4}{3}p_q)$ (the $\mathcal{O}(p_q^2)$ difference between the tied Poisson attenuation $(\bm{\theta}_{\mathrm{tied}})_q$ and the exact single-qubit depolarizing attenuation $-\ln(1 - \frac{4}{3}p_q)$ corresponds to a relative deviation of $\approx p_q/3 \approx 0.16\%$ at $p = 5\cdot 10^{-3}$, well below finite-sample statistical fluctuations).
The tied parameters are estimated by solving the weighted non-negative least-squares problem directly on $\Amat_{\mathrm{eff}}$,
\begin{equation}
\hat{\bm{\theta}}_{\mathrm{tied}} = \arg\min_{\bm{\theta}_{\mathrm{tied}} \ge \bm{0}} \big\|\WW^{1/2}(\Amat_{\mathrm{eff}}\bm{\theta}_{\mathrm{tied}} - \hat{\bm{b}})\big\|_2^2,
\label{eq:methods_wnnls_estimation}
\end{equation}
where $\WW = \mathrm{diag}(\widehat{\Var}[\hat{b}_{\bm{\chi}}]^{-1})$ is the diagonal inverse-variance weight matrix.
The physical per-qubit error rates are recovered through the exponential inversion
\begin{equation}
\hat{p}_q = \frac{3}{4}\big(1 - e^{-(\hat{\bm{\theta}}_{\mathrm{tied}})_q}\big).
\label{eq:methods_depolarizing_rate_mapping}
\end{equation}
Inspecting the normal matrix $\Amat_{\mathrm{eff}}^\top\Amat_{\mathrm{eff}}$ partitions the estimation into decoupled connected components, which are solved in parallel while enforcing non-negativity $\bm{\theta}_{\mathrm{tied}} \ge \bm{0}$.

The precision and identifiability of this estimator depend fundamentally on character selection.
Elementary degree-1 checks ($\|\bm{\chi}\|_0 = 1$) leave each CSS sector rank-deficient over the real field $\RR$, with null space dimension $\dim(\Ker_{\RR}(\HH_Z)) = d^2 - (d^2 - 1)/2 = (d^2 + 1)/2$ ($5, 13, 25$ for $d \in \{3, 5, 7\}$).
Even augmenting a single sector with all same-sector degree-2 XOR pairs leaves the per-qubit column space rank-deficient: the stacked matrix $[\HH_Z; \, \text{pairs}_Z]$ attains rank $d^2 - (d - 1) \in \{7, 21, 43\} < d^2$ for $d \in \{3, 5, 7\}$, with a real nullity of $d - 1 \in \{2, 4, 6\}$.
This degeneracy arises because the $d - 1$ pairs of boundary data qubits $(q_1, q_2)$ supported by weight-2 perimeter $X$-stabilizers $S_X = X_{q_1}X_{q_2}$ intersect every $Z$-stabilizer and $Z_L$ on either both qubits or neither qubit, rendering their columns in $[\HH_Z; \, \text{pairs}_Z]$ identical.
Because $X_{q_2} = X_{q_1} S_X$ is stabilizer-equivalent to $X_{q_1}$ with identical syndrome and logical action, each boundary pair $(q_1, q_2)$ already forms a single merged detector-error-model (DEM) fault mechanism in the $X$-error sector, over which $[\HH_Z; \, \text{pairs}_Z]$ has full column rank $d^2 - (d - 1)$ even under pure or biased single-sector noise; combining $Z$- and $X$-stabilizers under isotropic parameter tying further separates the individual perimeter qubits $(q_1, q_2)$ to achieve full per-qubit rank $d^2 \in \{9, 25, 49\}$ without requiring non-commuting logical operator measurements.

To achieve full column rank with minimal sample variance, we construct the closed-form stabilizer-only character ensemble $\Xset$.
The ensemble $\Xset$ adjoins all $d^2 - 1$ degree-1 stabilizer probes $\bm{e}_u$ with weight-2 XOR pairs $\bm{\chi}_{u, v} = \bm{e}_u \oplus \bm{e}_v$ between same-sector checks that share data qubits ($\bm{h}_u \cdot \bm{h}_v = \sum_{q=1}^{d^2} (\HH_{\mathrm{deg1}})_{uq} (\HH_{\mathrm{deg1}})_{vq} > 0$).
Over the real field, the modulo-2 character sum satisfies
\begin{equation}
(\bm{h}_u \oplus \bm{h}_v)_q = (\bm{h}_u)_q + (\bm{h}_v)_q - 2(\bm{h}_u)_q(\bm{h}_v)_q.
\label{eq:methods_xor_real_identity}
\end{equation}
The non-linear overlap term $-2(\bm{h}_u \odot \bm{h}_v)$ isolates shared data qubits and projects non-trivially across $\Ker_{\RR}(\HH_{\mathrm{deg1}})$, certifying exact full column rank $\rank_{\RR}(\Amat_{\mathrm{eff}}) = d^2 \in \{9, 25, 49\}$ for all odd distances $d \in \{3, 5, 7\}$.
The resulting equation counts are $K = 14$ (or $K = 20$ when including all same-sector pairs) for $d = 3$ ($8$ degree-1 checks plus $3$ $Z$-pairs and $3$ $X$-pairs), $K = 54$ for $d = 5$ ($24 + 15 + 15$), and $K = 118$ for $d = 7$ ($48 + 35 + 35$).
At distance $d = 7$, singular value decomposition of $\Amat_{\mathrm{eff}} \in \RR^{118 \times 49}$ yields strictly positive singular values spanning $\sigma_{\min} \approx 1.4016$ to $\sigma_{\max} \approx 8.3246$, confirming that $\Amat_{\mathrm{eff}}$ is well-conditioned with condition number $\kappa_2 \approx 5.94$ ($\kappa_2 \approx 7.2$ under column preconditioning).
Because character probes satisfy $\|\bm{\chi}\|_0 \le 2$ and row weights satisfy $\|\bm{a}_{\bm{\chi}}\|_0 \le 6$, the attenuation satisfies $b_{\bm{\chi}} \le 6\|\bm{\theta}\|_\infty = \mathcal{O}(p) \ll 1$, giving $\Var[\hat{b}_{\bm{\chi}}]/\Var[\hat{b}_{\bm{e}}] = \Theta(1)$ and preventing exponential sample-variance inflation.

For the $d = 7$ validation campaign in Fig.~\ref{fig:vlpec_validation}, we generate 100 independent spatial noise realizations at each nominal error rate $p$ across the 5-point grid spanning $p \in [0.005, 0.090]$, drawing inhomogeneous per-qubit depolarizing rates $p_q$ on all $N_q = 49$ data qubits.
The physical fault rates $\hat{p}_q$ are learned directly from unlabelled stabilizer syndrome histories via the stabilizer-only adjacent same-sector character ensemble and weighted non-negative least squares formulated above.
The learning shot budget scales inversely with $p$ as
\begin{equation}
T_{\mathrm{learn}}(p) = T_{\mathrm{base}} \left(\frac{p_{\mathrm{base}}}{p}\right), \qquad p_{\mathrm{base}} = 5\cdot 10^{-3},
\label{eq:methods_stratified_budget}
\end{equation}
for base budgets $T_{\mathrm{base}} \in \{10^5, 10^6, 10^7\}$ shots.
This inverse scaling maintains uniform relative Poisson precision across error rates, while sector calibration and residual error evaluation use $N_{\mathrm{eval}} = 10^7$ shots per realization.
At nominal error rate $p = 5\cdot 10^{-3}$ with $T_{\mathrm{base}} = 10^7$ shots (lowest-$p$ data of Fig.~\ref{fig:vlpec_validation}a), the stabilizer-only estimator achieves Pearson correlation $r = 0.99992$ across all $4{,}900$ physical rate estimates, with a median relative error of $0.33\%$.
In predicting unmitigated sector-resolved logical error rates across $300$ evaluations of the $X$, $Y$, and $Z$ logical components (Fig.~\ref{fig:vlpec_validation}b), the resulting log-Pearson correlation is $r_{\log} = 0.9999991$ with median relative error $0.31\%$, an inner $68\%$ empirical coverage interval of $[0.10\%, 0.81\%]$, and a $95\%$ coverage interval of $[0.02\%, 1.84\%]$.
Contracting the learned filter weights $\hat{w}_Q$ against the reference sector centroids via Eq.~\eqref{eq:app_p_res_def} isolates the systematic miscalibration floor $p_{\mathrm{res}} = \|\bm{p}_{\mathrm{res}}\|_1$ from zero-mean readout shot noise (Fig.~\ref{fig:vlpec_validation}c), demonstrating monotonic suppression with increasing learning budget.
At $p = 5\cdot 10^{-3}$, the unmitigated logical error rate $p_L = 1.32 \cdot 10^{-6}$ and postselection rate $p_{\mathrm{post}} = 1.63 \cdot 10^{-7}$ decrease to $p_{\mathrm{res}}(100\mathrm{k}) = 7.61 \cdot 10^{-9}$ at $T_{\mathrm{base}} = 10^5$, $p_{\mathrm{res}}(1\mathrm{M}) = 2.49 \cdot 10^{-9}$ at $T_{\mathrm{base}} = 10^6$, and $p_{\mathrm{res}}(10\mathrm{M}) = 7.95 \cdot 10^{-10}$ at $T_{\mathrm{base}} = 10^7$.
For $p \le 2.2\cdot 10^{-2}$, this achieves more than $1{,}500\times$ suppression relative to $p_L$ ($1{,}657.9\times$ at $p = 5\cdot 10^{-3}$) and more than $200\times$ suppression relative to $p_{\mathrm{post}}$ ($204.4\times$ at $p = 5\cdot 10^{-3}$), decreasing at larger $p$, verifying that online stabilizer syndrome data provide sufficient information for high-fidelity noise learning without measuring logical observables.

\subsection{Statistical Estimators and Uncertainty Analysis}
\label{subsec:methods_sampling_uncertainty}

The code-capacity scaling campaign aggregates over $1.6\cdot 10^{10}$ Monte Carlo trials.
At $d = 7$, every error rate uses more than $5\cdot 10^7$ shots, and each simulation ensemble is partitioned into $B \ge 20$ independent shards to eliminate $\mathcal{O}(1/N)$ plug-in bias in non-linear quantities such as $\eta = (C^* - 1)/p_L$ and $\bm{\Sigma}$.
Point estimates and standard errors are evaluated via the delete-1 jackknife estimator $\hat{\theta}_{\mathrm{jack}} = B\hat{\theta} - (B - 1)\hat{\theta}_{(\cdot)}$ with variance $\hat{\sigma}_{\mathrm{jack}}^2 = (B - 1)B^{-1}\sum_{b=1}^B (\hat{\theta}_{(b)} - \hat{\theta}_{(\cdot)})^2$, where $\hat{\theta}_{(b)}$ is the replicate excluding shard $b$ and $\hat{\theta}_{(\cdot)} = B^{-1}\sum_{b=1}^B \hat{\theta}_{(b)}$.
For rare sector counts ($N_Y < 100$ at $p \le 3\cdot 10^{-2}$ in Fig.~\ref{fig:dual_decoder}a), two-sided $68.27\%$ ($1\sigma$) confidence intervals are computed from exact Clopper--Pearson incomplete Beta quantiles.
Guide curves are least-squares fits: power laws for the sector shares (Fig.~\ref{fig:dual_decoder}a), quadratics in log--log scale for the logical error rates (Fig.~\ref{fig:dual_decoder}c), and $\eta_0 + a p^b$ for the overhead ratios (Fig.~\ref{fig:dual_decoder}e).
When evaluating $C_{\mathrm{syn}}^* - 1 = \bar{\bm{p}}^\top \bm{\Sigma}^{-1}\bar{\bm{p}}$ at low error rates where $\kappa(\bm{\Sigma}) > 10^{10}$ or $\lambda_{\min}(\bm{\Sigma}) \le 10^{-14}$, Cholesky inversion is stabilized with adaptive Tikhonov regularization $\bm{\Sigma}_{\mathrm{reg}} = \bm{\Sigma} + \max(10^{-12}, 10^{-10}\lambda_{\max})\bm{I}_3$.

\section{Superoperator Error Cancellation}
\label{sec:supp_operator_channel_inversion}

This section derives the error-channel cancellation constraints of virtual error cancellation 
(VEC) in Eq.~\eqref{eq:pauli_cancellation_constraint} and the variance relations in 
Eqs.~\eqref{eq:vpec_shot_budget_scaling}--\eqref{eq:vpec_sampling_overhead_def} 
of the main text. In Subsection~\ref{subsec:supp_superoperator_inversion}, 
we prove that unbiased reconstruction across all input states and target operators 
is equivalent to cancelling the conditional Pauli error channels in syndrome expectation to reconstruct the identity superoperator. 
Equation~\eqref{eq:pauli_cancellation_constraint} and
the necessity of negative weights follow directly. 
Subsection~\ref{subsec:supp_estimator_variance} derives the multi-shot estimator 
variance bound and sampling overhead decomposition in
Eqs.~\eqref{eq:vpec_shot_budget_scaling}--\eqref{eq:vpec_sampling_overhead_def}. 

\subsection{Error-Channel Cancellation Constraints}
\label{subsec:appendix_operator_channel_inversion}
\label{subsec:supp_superoperator_inversion}

We consider an encoded block of $n$ logical qubits with Hilbert space $\cH_L \simeq (\CC^2)^{\otimes n}$ and $n$-qubit logical Pauli group $\cP_n = \{I, X, Y, Z\}^{\otimes n}$ of cardinality $4^n$. Let $\cP_n^* \equiv \cP_n \setminus \{I^{\otimes n}\}$ denote the set of $M \equiv 4^n - 1$ non-trivial logical Pauli operators. In Eq.~\eqref{eq:conditional_logical_channel} and Eq.~\eqref{eq:pauli_cancellation_constraint} of the main text, these $M$ error channels are indexed by integers $\alpha \in \{1, \dots, M\}$, whereas in the multi-qubit and coarse-grained sector analyses they are indexed directly by their Pauli labels $Q \in \mathcal{Q} \equiv \cP_n^*$. Both conventions label the same $M$-element basis: a bijection $\alpha \leftrightarrow Q_\alpha \in \mathcal{Q}$ identifies the conjugation map $\cE_\alpha(\rho) \equiv P_\alpha \rho P_\alpha^\dagger$ with $\mathcal{X}_{Q_\alpha}(\rho) \equiv Q_\alpha \rho Q_\alpha^\dagger$ and the posterior probability $p_\alpha(\bm{s})$ with $p_{Q_\alpha}(\bm{s})$. For each syndrome record $\bm{s} \in \Omega \equiv \{0, 1\}^{N_D}$ occurring with probability $P(\bm{s})$, the residual logical channel after real-time decoding takes the Pauli form of Eq.~\eqref{eq:conditional_logical_channel},
\begin{equation}
\Lambda_{\bm{s}}(\rho) = \left(1 - \sum_{\alpha=1}^M p_\alpha(\bm{s})\right) \cI(\rho) + \sum_{\alpha=1}^M p_\alpha(\bm{s}) \cE_\alpha(\rho),
\label{eq:supp_conditional_pauli_channel}
\end{equation}
where $\cI(\rho) \equiv \rho$ is the identity superoperator on the operator space $\mathcal{B}(\cH_L)$. The conditional probabilities $\bm{p}(\bm{s}) \equiv (p_1(\bm{s}), \dots, p_M(\bm{s}))^\top$ reside in the physical simplex $p_\alpha(\bm{s}) \ge 0$ with $\sum_{\alpha=1}^M p_\alpha(\bm{s}) \le 1$, and their syndrome average defines the unmitigated logical error vector $\bar{\bm{p}} \equiv \EE_{\bm{s}}[\bm{p}(\bm{s})]$ with components $\bar{p}_\alpha \equiv \EE_{\bm{s}}[p_\alpha(\bm{s})]$.

For an arbitrary input state $\rho \in \mathcal{B}(\cH_L)$ and Hermitian logical operator $O \in \mathcal{B}(\cH_L)$, the syndrome-conditioned expectation value is $\langle O \rangle_{\bm{s}} = \Tr[O \Lambda_{\bm{s}}(\rho)]$, while the ideal noiseless expectation value is $\langle O \rangle_{\mathrm{ideal}} = \Tr[O \rho] = \Tr[O \cI(\rho)]$. Linearity of the trace and syndrome expectation implies that the weighted estimator in Eq.~\eqref{eq:continuum_channel_inversion_cond} satisfies
\begin{equation}
\EE_{\bm{s}}\big[w(\bm{s})\langle O \rangle_{\bm{s}}\big] = \Tr\!\Big[O \, \EE_{\bm{s}}\big[w(\bm{s})\Lambda_{\bm{s}}\big](\rho)\Big].
\label{eq:supp_reweighted_trace_expectation}
\end{equation}
Requiring the unbiasedness condition $\EE_{\bm{s}}[w(\bm{s})\langle O \rangle_{\bm{s}}] = \langle O \rangle_{\mathrm{ideal}}$ in Eq.~\eqref{eq:continuum_channel_inversion_cond} to hold for every state $\rho$ and Hermitian observable $O$ is equivalent to the vanishing of the Hilbert--Schmidt pairing $\Tr[O\,\Delta\Lambda(\rho)] = 0$ for $\Delta\Lambda \equiv \EE_{\bm{s}}[w(\bm{s})\Lambda_{\bm{s}}] - \cI$. Because density operators and Hermitian operators each span $\mathcal{B}(\cH_L)$, non-degeneracy of the inner product $\langle A, B \rangle_{\mathrm{HS}} \equiv \Tr(A^\dagger B)$ forces $\Delta\Lambda = 0$ on $\mathcal{B}(\cH_L)$. Universal reconstruction across all states and operators therefore holds if and only if the weighted channel obeys the superoperator identity
\begin{equation}
\EE_{\bm{s}}\big[w(\bm{s})\Lambda_{\bm{s}}\big] = \cI.
\label{eq:supp_superoperator_identity_general}
\end{equation}
We now expand the left-hand side of this channel equation in the Pauli basis. Substituting Eq.~\eqref{eq:supp_conditional_pauli_channel} into Eq.~\eqref{eq:supp_superoperator_identity_general} and regrouping terms around $\cI$ gives
\begin{align}
\EE_{\bm{s}}\big[w(\bm{s})\Lambda_{\bm{s}}\big] &= \EE_{\bm{s}}\!\left[w(\bm{s})\left(1 - \sum_{\alpha=1}^M p_\alpha(\bm{s})\right)\right]\cI + \sum_{\alpha=1}^M \EE_{\bm{s}}\big[w(\bm{s})p_\alpha(\bm{s})\big]\cE_\alpha \nonumber \\
&= \EE_{\bm{s}}\big[w(\bm{s})\big]\cI + \sum_{\alpha=1}^M \EE_{\bm{s}}\big[w(\bm{s})p_\alpha(\bm{s})\big](\cE_\alpha - \cI) = \cI.
\label{eq:channel_inversion_expanded}
\end{align}

To show that Eq.~\eqref{eq:channel_inversion_expanded} enforces the dual constraints in Eq.~\eqref{eq:pauli_cancellation_constraint}, we prove the linear independence of the $4^n$ Pauli maps $\{\cE_\alpha\}_{\alpha=0}^M$, where $\cE_0 \equiv \cI$ corresponds to $P_0 \equiv I^{\otimes n}$. Any two $n$-qubit Pauli operators $P_\alpha$ and $P_\beta$ obey $P_\alpha P_\beta P_\alpha^\dagger = (-1)^{\langle \alpha, \beta \rangle} P_\beta$, where $\langle \alpha, \beta \rangle \in \{0, 1\}$ is the binary symplectic inner product (equal to $0$ when $P_\alpha$ and $P_\beta$ commute and $1$ when they anticommute). Each operator $P_\beta$ is thus an eigenoperator of $\cE_\alpha$ with eigenvalue $H_{\beta\alpha} \equiv (-1)^{\langle \alpha, \beta \rangle}$. Suppose a linear combination vanishes on $\mathcal{B}(\cH_L)$, so that $\sum_{\alpha=0}^M c_\alpha \cE_\alpha = 0$ for scalars $c_\alpha \in \CC$. Evaluating this map on each basis element $P_\beta \in \cP_n$ yields
\begin{equation}
0 = \sum_{\alpha=0}^M c_\alpha \cE_\alpha(P_\beta) = \left(\sum_{\alpha=0}^M H_{\beta\alpha} c_\alpha\right) P_\beta.
\label{eq:supp_walsh_hadamard_pauli}
\end{equation}
Since $P_\beta \neq 0$, the sum $\sum_{\alpha=0}^M H_{\beta\alpha} c_\alpha$ must vanish for every $\beta \in \{0, \dots, M\}$. Because the form $\langle \alpha, \beta \rangle$ is non-degenerate over $\FF^{2n}$, the $4^n \times 4^n$ Walsh--Hadamard matrix $\HH_{\mathrm{WH}}$ with entries $H_{\beta\alpha} = (-1)^{\langle \alpha, \beta \rangle}$ satisfies the orthogonality relation
\begin{equation}
\big(\HH_{\mathrm{WH}}^\top \HH_{\mathrm{WH}}\big)_{\alpha\gamma} = \sum_{\beta=0}^M (-1)^{\langle \alpha \oplus \gamma, \beta \rangle} = 4^n \delta_{\alpha\gamma},
\label{eq:supp_walsh_hadamard_orthogonality}
\end{equation}
where $\alpha \oplus \gamma$ adds the binary vectors of $P_\alpha$ and $P_\gamma$. Equation~\eqref{eq:supp_walsh_hadamard_orthogonality} shows that $\{\cE_\alpha\}_{\alpha=0}^M$ are mutually orthogonal in the Pauli transfer matrix (PTM) representation and that $\HH_{\mathrm{WH}}^{-1} = 4^{-n}\HH_{\mathrm{WH}}^\top$. Multiplying Eq.~\eqref{eq:supp_walsh_hadamard_pauli} by $\HH_{\mathrm{WH}}^{-1}$ forces $c_\alpha = 0$ for all $\alpha \in \{0, \dots, M\}$. The shifted maps $\{\cI, \cE_1 - \cI, \dots, \cE_M - \cI\}$ therefore form a basis of the $4^n$-dimensional Pauli channel space, and equating coefficients in Eq.~\eqref{eq:channel_inversion_expanded} with $\cI = 1\cdot\cI + \sum_{\alpha=1}^M 0\cdot(\cE_\alpha - \cI)$ yields the $M + 1$ conditions of Eq.~\eqref{eq:pauli_cancellation_constraint},
\begin{align}
\EE_{\bm{s}}\big[w(\bm{s})\big] &= 1, \label{eq:supp_trace_preservation_derived} \\
\EE_{\bm{s}}\big[w(\bm{s})p_\alpha(\bm{s})\big] &= 0 \quad \text{for all } \alpha \in \{1, \dots, M\}.
\label{eq:supp_error_cancellation_derived}
\end{align}

These two conditions determine the correlation structure and sign of $w(\bm{s})$. Combining Eq.~\eqref{eq:supp_trace_preservation_derived} and Eq.~\eqref{eq:supp_error_cancellation_derived} fixes the cross-covariance between any unbiased filter $w(\bm{s})$ and each error rate $p_\alpha(\bm{s})$ at a universal negative value,
\begin{equation}
\Cov_{\bm{s}}\big(w(\bm{s}), p_\alpha(\bm{s})\big) = \EE_{\bm{s}}\big[w(\bm{s})p_\alpha(\bm{s})\big] - \EE_{\bm{s}}\big[w(\bm{s})\big]\EE_{\bm{s}}\big[p_\alpha(\bm{s})\big] = -\bar{p}_\alpha.
\label{eq:supp_invariant_cross_cov}
\end{equation}
In vector notation, this relation reads $\Cov_{\bm{s}}(w(\bm{s}), \bm{p}(\bm{s})) = -\bar{\bm{p}}$. Decomposing the filter into non-negative parts $w_\pm(\bm{s}) \equiv \max(\pm w(\bm{s}), 0) \ge 0$ with $w(\bm{s}) = w_+(\bm{s}) - w_-(\bm{s})$ rewrites Eq.~\eqref{eq:supp_error_cancellation_derived} as
\begin{equation}
\EE_{\bm{s}}\big[w_+(\bm{s})p_\alpha(\bm{s})\big] = \EE_{\bm{s}}\big[w_-(\bm{s})p_\alpha(\bm{s})\big].
\label{eq:supp_general_sign_problem}
\end{equation}
At any physical error rate $p > 0$, every syndrome admits a positive posterior $p_\alpha(\bm{s}) > 0$ for each active channel $\alpha$. Because $\EE_{\bm{s}}[w_+(\bm{s})] \ge \EE_{\bm{s}}[w(\bm{s})] = 1$, the left-hand side of Eq.~\eqref{eq:supp_general_sign_problem} is positive, forcing $\EE_{\bm{s}}[w_-(\bm{s})p_\alpha(\bm{s})] > 0$ and $\mathbb{P}(w(\bm{s}) < 0) > 0$. Even if $p_\alpha(\bm{s})$ vanishes on a subset of syndromes, the optimal affine filter $w^*(\bm{s}) = 1 - \bar{\bm{p}}^\top \bm{\Sigma}^{-1}(\bm{p}(\bm{s}) - \bar{\bm{p}})$ derived in Supplementary Section~\ref{sec:appendix_proof_variational_cs_bound}, where $\bm{\Sigma} \equiv \Cov_{\bm{s}}(\bm{p}(\bm{s}))$, takes negative values on high-error syndromes whenever $\bar{p}_\alpha > 0$ and $\Var_{\bm{s}}[p_\alpha(\bm{s})] > 0$.

\subsection{Estimator Variance and Sampling Overhead}
\label{subsec:supp_estimator_variance}

We now derive the multi-shot estimator variance bound and sampling overhead in Eqs.~\eqref{eq:vpec_shot_budget_scaling}--\eqref{eq:vpec_sampling_overhead_def}. For an ensemble of $N$ measurement shots indexed by $i \in \{1, \dots, N\}$, each run records a syndrome $\bm{s}_i \in \Omega$ and a binary readout outcome $m_i \in \{+1, -1\}$ for the target Pauli observable $O$ with $O^2 = I^{\otimes n}$. At a fixed syndrome $\bm{s}_i$, averaging $m_i$ over the Born-rule readout distribution gives $\EE[m_i \mid \bm{s}_i] = \Tr[O \Lambda_{\bm{s}_i}(\rho)] = \langle O \rangle_{\bm{s}_i}$. Because $m_i \in \{+1, -1\}$ implies $m_i^2 = 1$ on every shot, the conditional second moment is $\EE[m_i^2 \mid \bm{s}_i] = 1$. Multiplying by $w(\bm{s}_i)$ gives the fixed-syndrome moments $\EE[w(\bm{s}_i)m_i \mid \bm{s}_i] = w(\bm{s}_i)\langle O \rangle_{\bm{s}_i}$ and $\EE[(w(\bm{s}_i)m_i)^2 \mid \bm{s}_i] = w(\bm{s}_i)^2$. Taking the expectation over $\bm{s}_i \sim P(\bm{s})$ and invoking Eq.~\eqref{eq:supp_reweighted_trace_expectation} and Eq.~\eqref{eq:supp_superoperator_identity_general} yields the unconditional moments
\begin{align}
\EE\big[w(\bm{s}_i) m_i\big] &= \EE_{\bm{s}_i}\Big[w(\bm{s}_i)\,\EE[m_i \mid \bm{s}_i]\Big] = \EE_{\bm{s}}\big[w(\bm{s})\langle O \rangle_{\bm{s}}\big] = \langle O \rangle_{\mathrm{ideal}}, \nonumber \\
\EE\Big[\big(w(\bm{s}_i) m_i\big)^2\Big] &= \EE_{\bm{s}_i}\Big[w(\bm{s}_i)^2\,\EE\big[m_i^2 \mid \bm{s}_i\big]\Big] = \EE_{\bm{s}}\big[w(\bm{s})^2\big].
\label{eq:supp_single_shot_moments}
\end{align}
Applying the law of total variance over the joint variable $(m_i, \bm{s}_i)$ splits the single-shot variance into within-syndrome projection noise and between-syndrome weight spread,
\begin{align}
\Var\big[w(\bm{s}_i) m_i\big] &= \EE_{\bm{s}_i}\Big[\Var\big[w(\bm{s}_i) m_i \mid \bm{s}_i\big]\Big] + \Var_{\bm{s}_i}\Big(\EE\big[w(\bm{s}_i) m_i \mid \bm{s}_i\big]\Big) \nonumber \\
&= \EE_{\bm{s}}\Big[w(\bm{s})^2\big(1 - \langle O \rangle_{\bm{s}}^2\big)\Big] + \Big(\EE_{\bm{s}}\big[w(\bm{s})^2\langle O \rangle_{\bm{s}}^2\big] - \langle O \rangle_{\mathrm{ideal}}^2\Big) \nonumber \\
&= \EE_{\bm{s}}\big[w(\bm{s})^2\big] - \langle O \rangle_{\mathrm{ideal}}^2,
\label{eq:supp_total_variance_single_shot}
\end{align}
where the intermediate term $\EE_{\bm{s}}[w(\bm{s})^2\langle O \rangle_{\bm{s}}^2]$ cancels identically between the two pieces.

For the $N$-shot sample-mean estimator $\hat{\langle O \rangle}_{\mathrm{VEC}} = N^{-1}\sum_{i=1}^N w(\bm{s}_i) m_i$, mutual independence across shots implies $\Var[\hat{\langle O \rangle}_{\mathrm{VEC}}] = N^{-2}\sum_{i=1}^N \Var[w(\bm{s}_i)m_i]$. Substituting Eq.~\eqref{eq:supp_total_variance_single_shot} and noting that $\langle O \rangle_{\mathrm{ideal}}^2 \ge 0$ for any real expectation value $\langle O \rangle_{\mathrm{ideal}} \in [-1, 1]$ establishes the exact $N$-shot variance identity and state-independent upper bound,
\begin{equation}
\Var\big[\hat{\langle O \rangle}_{\mathrm{VEC}}\big] = \frac{1}{N}\Big(\EE_{\bm{s}}\big[w(\bm{s})^2\big] - \langle O \rangle_{\mathrm{ideal}}^2\Big) \le \frac{1}{N}\EE_{\bm{s}}\big[w(\bm{s})^2\big].
\label{eq:supp_vec_estimator_variance}
\end{equation}
Equality holds in the weak-signal regime $\langle O \rangle_{\mathrm{ideal}} = 0$. Expanding the second moment of $w(\bm{s})$ around its syndrome mean and substituting the trace-preservation condition $\EE_{\bm{s}}[w(\bm{s})] = 1$ from Eq.~\eqref{eq:supp_trace_preservation_derived} gives the sampling overhead decomposition of Eq.~\eqref{eq:vpec_sampling_overhead_def},
\begin{equation}
C(w) \equiv \EE_{\bm{s}}\big[w(\bm{s})^2\big] = \big(\EE_{\bm{s}}[w(\bm{s})]\big)^2 + \Var_{\bm{s}}\big[w(\bm{s})\big] = 1 + \Var_{\bm{s}}\big[w(\bm{s})\big].
\label{eq:supp_sampling_overhead_decomposition}
\end{equation}
In a noiseless reference circuit ($w(\bm{s}) \equiv 1$ and $C(1) = 1$), the worst-case single-shot variance at $\langle O \rangle_{\mathrm{ideal}} = 0$ is unity, so achieving root-mean-square error $\epsilon > 0$ requires $N_{\mathrm{ideal}} \equiv 1/\epsilon^2$ shots. Matching the mitigated variance ceiling $\Var[\hat{\langle O \rangle}_{\mathrm{VEC}}] \le C(w)/N_{\mathrm{VEC}}$ from Eq.~\eqref{eq:supp_vec_estimator_variance} and Eq.~\eqref{eq:supp_sampling_overhead_decomposition} to the target variance $\epsilon^2 = 1/N_{\mathrm{ideal}}$ yields the shot-budget scaling of Eq.~\eqref{eq:vpec_shot_budget_scaling},
\begin{equation}
N_{\mathrm{VEC}} = C(w)\cdot N_{\mathrm{ideal}} \le \frac{C(w)}{\epsilon^2}.
\label{eq:supp_shot_budget_scaling}
\end{equation}

The error-channel cancellation constraints of Eqs.~\eqref{eq:supp_trace_preservation_derived}--\eqref{eq:supp_error_cancellation_derived} and the variance decomposition of Eqs.~\eqref{eq:supp_sampling_overhead_decomposition}--\eqref{eq:supp_shot_budget_scaling} reduce superoperator reconstruction to a constrained variance-minimization problem over the syndrome probability space. Because both the linear conditions and the variance upper bound in Eq.~\eqref{eq:supp_vec_estimator_variance} depend only on $\bm{p}(\bm{s})$ and not on the input state $\rho$ or observable $O$, a single filter $w(\bm{s})$ cancels all $M$ logical Pauli error channels across arbitrary observables. Its sampling cost is governed entirely by $\Var_{\bm{s}}[w(\bm{s})]$. Whenever the number of syndrome configurations $|\Omega| = 2^{N_D}$ exceeds $M + 1$, the feasible filters form an affine subspace within which Supplementary Section~\ref{sec:appendix_proof_variational_cs_bound} determines the unique variance-minimizing filter $w^*(\bm{s})$ and its minimum sampling overhead $C_{\mathrm{syn}}^*$.

\section{Variational Theory of Error Cancellation and Optimal Mitigation Weights}
\label{sec:appendix_proof_variational_cs_bound}

This section establishes the variational theory of multi-process virtual error cancellation (VEC) in Eqs.~\eqref{eq:optimal_weights}--\eqref{eq:optimal_sampling_overhead} of the main text. Subsection~\ref{subsec:variational_preliminaries} formulates the space of unbiased filters over the syndrome probability space and proves the exact decoder feasibility condition (Lemma~\ref{lem:feasibility_condition_formal}). In Subsection~\ref{subsec:proof_variational_cs_bound}, we state and prove the Generalized Variational Matrix Bound (Theorem~\ref{thm:variational_cs_bound}) by two complementary methods: a matrix Cauchy--Schwarz completion of squares and a Karush--Kuhn--Tucker (KKT) functional minimization with Schur complement inversion. Subsection~\ref{subsec:bhatia_davis_ceiling} derives the single-decoder Bhatia--Davis variance ceiling and postselection limit (Corollary~\ref{cor:bhatia_davis_ceiling}). Subsections~\ref{subsec:app_residual_error_matrix} and~\ref{subsec:scale_invariance_derivations} then derive the calibration error-transfer matrix (Proposition~\ref{thm:taylor_expansion_full}) and prove exact scale invariance under global multiplicative miscalibration (Proposition~\ref{thm:scale_invariance_theorem}).

\subsection{Mathematical Setting, Feasible Filter Geometry, and Feasibility Lemma}
\label{subsec:variational_preliminaries}

Let $(\Omega, \mathcal{F}, P)$ be a probability space, where $\bm{s} \in \Omega$ records the detector syndrome history. Real-time decoding applies a recovery frame whose residual $M$-channel Pauli error probabilities conditioned on $\bm{s}$ form the square-integrable random vector $\bm{p}(\bm{s}) = (p_1(\bm{s}), \dots, p_M(\bm{s}))^\top \in L^2(\Omega, \mathcal{F}, P; \RR^M)$. Under $P$, the vector $\bm{p}(\bm{s})$ lies in the simplex $\mathcal{P} \equiv \{ \bm{p} \in \RR^M \mid p_\alpha \ge 0, \sum_{\alpha=1}^M p_\alpha \le 1 \}$ almost surely. Following Eqs.~\eqref{eq:optimal_weights}--\eqref{eq:optimal_sampling_overhead} of the main text, we define the unmitigated mean error vector and the decoder covariance matrix as
\begin{equation}
\bar{\bm{p}} \equiv \EE_{\bm{s}}[\bm{p}(\bm{s})] \in \mathcal{P} \quad \text{and} \quad \bm{\Sigma} \equiv \Cov_{\bm{s}}(\bm{p}(\bm{s})) = \EE_{\bm{s}}\big[(\bm{p}(\bm{s}) - \bar{\bm{p}})(\bm{p}(\bm{s}) - \bar{\bm{p}})^\top\big] \in \RR^{M \times M},
\label{eq:app_cs_pL_and_D_def}
\end{equation}
where $\bm{\Sigma} \succ 0$ is positive definite whenever all $M$ error channels fluctuate independently across syndromes.

Let $L^2(\Omega, \mathcal{F}, P)$ denote the real Hilbert space of square-integrable scalar random variables with inner product $\langle u, v \rangle \equiv \EE_{\bm{s}}[u(\bm{s}) v(\bm{s})]$ and norm $\|u\|_{L^2} \equiv \sqrt{\EE_{\bm{s}}[u(\bm{s})^2]}$. The family of unbiased VEC filters forms a closed affine subspace $\mathcal{W} \subset L^2(\Omega, \mathcal{F}, P)$ defined by the linear channel-inversion constraints of Eq.~\eqref{eq:pauli_cancellation_constraint},
\begin{align}
\EE_{\bm{s}}\big[ w(\bm{s}) \big] &= 1 \quad \text{(Trace Preservation)}, \label{eq:app_cs_trace_preservation} \\
\EE_{\bm{s}}\big[ w(\bm{s}) \bm{p}(\bm{s}) \big] &= \bm{0} \quad \text{(Vector Error Cancellation)}.
\label{eq:app_cs_pauli_cancellation}
\end{align}
For any feasible filter $w \in \mathcal{W}$, the sampling overhead multiplier in Eq.~\eqref{eq:vpec_sampling_overhead_def} equals the squared $L^2$ norm,
\begin{equation}
C(w) \equiv \|w\|_{L^2}^2 = \EE_{\bm{s}}\big[ w(\bm{s})^2 \big] = \big( \EE_{\bm{s}}[w(\bm{s})] \big)^2 + \Var_{\bm{s}}\big[ w(\bm{s}) \big] = 1 + \Var_{\bm{s}}\big[ w(\bm{s}) \big].
\label{eq:app_cs_overhead_variance_relation}
\end{equation}
Minimizing $C(w)$ over $\mathcal{W}$ is therefore equivalent to minimizing the filter variance $\Var_{\bm{s}}[w(\bm{s})]$. The two conditions in Eqs.~\eqref{eq:app_cs_trace_preservation}--\eqref{eq:app_cs_pauli_cancellation} fix the cross-covariance vector between any feasible filter $w \in \mathcal{W}$ and $\bm{p}(\bm{s})$, and they determine when an unbiased filter exists.

\begin{lemma}[Invariant Cross-Covariance]
\label{lem:invariant_cross_covariance}
Every measurable filter $w \in \mathcal{W}$ satisfying trace preservation $\EE_{\bm{s}}[w(\bm{s})] = 1$ in Eq.~\eqref{eq:app_cs_trace_preservation} and vector error cancellation $\EE_{\bm{s}}[w(\bm{s})\bm{p}(\bm{s})] = \bm{0}$ in Eq.~\eqref{eq:app_cs_pauli_cancellation} has the same cross-covariance vector with $\bm{p}(\bm{s})$. This vector is fixed by the negative mean error rate,
\begin{align}
\Cov_{\bm{s}}\big( w(\bm{s}), \bm{p}(\bm{s}) \big) &\equiv \EE_{\bm{s}}\big[ w(\bm{s}) \bm{p}(\bm{s}) \big] - \EE_{\bm{s}}\big[ w(\bm{s}) \big] \EE_{\bm{s}}\big[ \bm{p}(\bm{s}) \big] \nonumber \\
&= \bm{0} - (1)(\bar{\bm{p}}) = -\bar{\bm{p}}.
\label{eq:app_cs_covariance_evaluation}
\end{align}
\end{lemma}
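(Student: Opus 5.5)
The claim follows directly from bilinearity of the covariance, so the plan is a short computation rather than a structural argument. First, I will confirm that every quantity in Eq.~\eqref{eq:app_cs_covariance_evaluation} is finite. The filter $w$ lies in $L^2(\Omega,\mathcal{F},P)$, and each component $p_\alpha(\bm{s})$ is bounded in $[0,1]$ almost surely because $\bm{p}(\bm{s})$ sits in the simplex $\mathcal{P}$. By Cauchy--Schwarz in $L^2$, $|\EE_{\bm{s}}[w\,p_\alpha]|\le \|w\|_{L^2}\|p_\alpha\|_{L^2}\le \|w\|_{L^2}<\infty$. Likewise $\EE_{\bm{s}}[w]$ is finite since $L^2\subset L^1$ on a probability space. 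Hence the cross-covariance vector $\Cov_{\bm{s}}(w,\bm{p})$ is well defined componentwise.

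Next, I will expand the covariance in its raw-moment form, $\Cov_{\bm{s}}(w,p_\alpha)=\EE_{\bm{s}}[(w-\EE w)(p_\alpha-\bar p_\alpha)]=\EE_{\bm{s}}[w\,p_\alpha]-\EE_{\bm{s}}[w]\,\bar p_\alpha$. This uses only linearity of expectation and the definition $\bar{\bm{p}}=\EE_{\bm{s}}[\bm{p}(\bm{s})]$ from Eq.~\eqref{eq:app_cs_pL_and_D_def}. Into this identity I substitute the two feasibility constraints: Eq.~\eqref{eq:app_cs_pauli_cancellation} sets the first term to zero, and Eq.~\eqref{eq:app_cs_trace_preservation} sets $\EE_{\bm{s}}[w]=1$. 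Stacking the $M$ components then gives $-\bar{\bm{p}}$, and this value does not depend on which $w\in\mathcal{W}$ was chosen. That independence is exactly the asserted invariance.

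There is no genuine obstacle here. The only point needing care is the integrability check in the first step, which guarantees that the expectations can be split by linearity. I will also note that the same computation was already carried out informally in Eq.~\eqref{eq:supp_invariant_cross_cov}; the lemma simply records it in the $L^2$ setting, where it will later serve as the fixed inner-product constraint $\langle w-1,\bm{p}-\bar{\bm{p}}\rangle=-\bar{\bm{p}}$ in the Cauchy--Schwarz proof of Theorem~\ref{thm:variational_cs_bound}.
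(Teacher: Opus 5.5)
Your proposal is correct and matches the paper's argument: substitute $\EE_{\bm{s}}[w(\bm{s})\bm{p}(\bm{s})]=\bm{0}$ and $\EE_{\bm{s}}[w(\bm{s})]=1$ into the raw-moment form of the covariance to obtain $-\bar{\bm{p}}$, independently of the choice of $w$. Your added integrability check, via Cauchy--Schwarz with $w\in L^2$ and $p_\alpha(\bm{s})\in[0,1]$, is a harmless refinement that the paper leaves implicit.
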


\begin{lemma}[Decoder Feasibility for Exact Multi-Process Channel Inversion]
\label{lem:feasibility_condition_formal}
Let $(\Omega, \mathcal{F}, P)$ be the syndrome probability space with conditional error vector $\bm{p} \in L^2(\Omega, \mathcal{F}, P; \RR^M)$, mean $\bar{\bm{p}} \equiv \EE_{\bm{s}}[\bm{p}(\bm{s})]$, and covariance matrix $\bm{\Sigma} \equiv \Cov_{\bm{s}}(\bm{p}(\bm{s}))$. An unbiased filter $w \in L^2(\Omega, \mathcal{F}, P)$ satisfying $\EE_{\bm{s}}[w(\bm{s})] = 1$ and $\EE_{\bm{s}}[w(\bm{s})\bm{p}(\bm{s})] = \bm{0}$ in Eqs.~\eqref{eq:app_cs_trace_preservation}--\eqref{eq:app_cs_pauli_cancellation} exists if and only if
\begin{equation}
\bar{\bm{p}} \in \operatorname{ker}(\bm{\Sigma})^\perp = \operatorname{range}(\bm{\Sigma}).
\label{eq:app_feasibility_condition}
\end{equation}
In particular, whenever $\bm{\Sigma} \succ 0$ is positive definite, a feasible filter exists.
\end{lemma}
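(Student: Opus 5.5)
The plan is to reduce existence of a feasible filter to a finite-dimensional linear-algebra statement about the cross-covariance. The constraints in Eqs.~\eqref{eq:app_cs_trace_preservation}--\eqref{eq:app_cs_pauli_cancellation} are equivalent to a centred form, which I derive first. Write $w = 1 + u$ with $u \in L^2$.

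Trace preservation becomes $\EE_{\bm{s}}[u] = 0$. Error cancellation becomes $\EE_{\bm{s}}[u\,\bm{p}] = -\bar{\bm{p}}$. Because $u$ has zero mean, this is the same as $\EE_{\bm{s}}[u\,(\bm{p}-\bar{\bm{p}})] = -\bar{\bm{p}}$, consistent with Lemma~\ref{lem:invariant_cross_covariance}. So a feasible filter exists if and only if $-\bar{\bm{p}}$ lies in the image of the linear map
\begin{equation*}
T: \{u \in L^2 : \EE u = 0\} \to \RR^M,\qquad T u \equiv \EE_{\bm{s}}\big[u(\bm{s})\,(\bm{p}(\bm{s})-\bar{\bm{p}})\big].
\end{equation*}

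\emph{Sufficiency.} Suppose $\bar{\bm{p}} \in \operatorname{range}(\bm{\Sigma})$, and choose $\bm{\lambda}$ with $\bm{\Sigma}\bm{\lambda} = \bar{\bm{p}}$. Set $u = -\bm{\lambda}^\top(\bm{p}-\bar{\bm{p}})$. This $u$ is in $L^2$ because $\bm{p}$ is bounded on the simplex, and it has zero mean. Then $Tu = -\bm{\Sigma}\bm{\lambda} = -\bar{\bm{p}}$, so $w = 1 - \bm{\lambda}^\top(\bm{p}-\bar{\bm{p}})$ is feasible. For $\bm{\Sigma}\succ 0$ this is exactly $w^*$ with $\bm{\lambda}=\bm{\Sigma}^{-1}\bar{\bm{p}}$, which gives the final sentence of the lemma.

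\emph{Necessity.} This is the step needing care. Take any $\bm{v} \in \ker(\bm{\Sigma})$. Then $\EE[(\bm{v}^\top(\bm{p}-\bar{\bm{p}}))^2] = \bm{v}^\top\bm{\Sigma}\bm{v} = 0$, so $\bm{v}^\top(\bm{p}-\bar{\bm{p}}) = 0$ almost surely. Hence for any feasible $u$,
\begin{equation*}
-\bm{v}^\top\bar{\bm{p}} = \bm{v}^\top T u = \EE\big[u\,\bm{v}^\top(\bm{p}-\bar{\bm{p}})\big] = 0 .
\end{equation*}
This holds for every $\bm{v}\in\ker(\bm{\Sigma})$, so $\bar{\bm{p}}\in\ker(\bm{\Sigma})^\perp$.

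Finally, $\ker(\bm{\Sigma})^\perp = \operatorname{range}(\bm{\Sigma})$ is the standard identity for a symmetric matrix, since $\operatorname{range}(\bm{\Sigma}) = \ker(\bm{\Sigma}^\top)^\perp$. The only point I expect to be delicate is the almost-sure vanishing in the necessity step, which is what lets the expectation pass through the arbitrary $L^2$ function $u$. Cauchy--Schwarz would also give it directly: $|\EE[u\,\bm{v}^\top(\bm{p}-\bar{\bm{p}})]| \le \|u\|_{L^2}\sqrt{\bm{v}^\top\bm{\Sigma}\bm{v}} = 0$.
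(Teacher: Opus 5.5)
Your proposal is correct and takes essentially the same route as the paper. Necessity comes from the almost-sure vanishing of $\bm{v}^\top(\bm{p}-\bar{\bm{p}})$ for $\bm{v}\in\ker(\bm{\Sigma})$, and sufficiency from an explicit affine filter. The differences are cosmetic: you reparametrize as $w=1+u$, and you take an arbitrary preimage $\bm{\lambda}$ with $\bm{\Sigma}\bm{\lambda}=\bar{\bm{p}}$ where the paper fixes $\bm{\lambda}=\bm{\Sigma}^{+}\bar{\bm{p}}$ via the Moore--Penrose pseudoinverse.
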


\begin{proof}
To prove the forward implication, suppose a feasible filter $w \in L^2(\Omega, \mathcal{F}, P)$ satisfies Eqs.~\eqref{eq:app_cs_trace_preservation}--\eqref{eq:app_cs_pauli_cancellation}. Take any null vector $\bm{v} \in \operatorname{ker}(\bm{\Sigma})$, so that $\bm{v}^\top \bm{\Sigma} \bm{v} = 0$. Because $\bm{\Sigma} = \EE_{\bm{s}}[(\bm{p}(\bm{s}) - \bar{\bm{p}})(\bm{p}(\bm{s}) - \bar{\bm{p}})^\top]$ is positive semi-definite, the projected scalar random variable has zero variance,
\begin{equation}
\bm{v}^\top \bm{\Sigma} \bm{v} = \Var_{\bm{s}}\big[ \bm{v}^\top \bm{p}(\bm{s}) \big] = 0.
\label{eq:app_feasibility_zero_var}
\end{equation}
A real random variable with zero variance equals its mean almost surely under $P$, giving $\bm{v}^\top \bm{p}(\bm{s}) = \bm{v}^\top \bar{\bm{p}}$ almost surely. Taking the inner product of $\bm{v}$ with Eq.~\eqref{eq:app_cs_pauli_cancellation} yields
\begin{equation}
0 = \bm{v}^\top \EE_{\bm{s}}\big[ w(\bm{s}) \bm{p}(\bm{s}) \big] = \EE_{\bm{s}}\big[ w(\bm{s}) \big( \bm{v}^\top \bm{p}(\bm{s}) \big) \big] = \big( \bm{v}^\top \bar{\bm{p}} \big) \EE_{\bm{s}}\big[ w(\bm{s}) \big].
\label{eq:app_feasibility_inner_prod}
\end{equation}
Substituting $\EE_{\bm{s}}[w(\bm{s})] = 1$ from Eq.~\eqref{eq:app_cs_trace_preservation} gives
\begin{equation}
\bm{v}^\top \bar{\bm{p}} = 0.
\label{eq:app_feasibility_orthogonality}
\end{equation}
Because Eq.~\eqref{eq:app_feasibility_orthogonality} holds for every $\bm{v} \in \operatorname{ker}(\bm{\Sigma})$ and $\bm{\Sigma}$ is symmetric, $\bar{\bm{p}} \in \operatorname{ker}(\bm{\Sigma})^\perp = \operatorname{range}(\bm{\Sigma})$.

To prove the converse implication, suppose $\bar{\bm{p}} \in \operatorname{range}(\bm{\Sigma})$, and let $\bm{\Sigma}^+$ denote the Moore--Penrose pseudoinverse of $\bm{\Sigma}$. Because $\bar{\bm{p}}$ lies in the column space of $\bm{\Sigma}$, the projection identity $\bm{\Sigma} \bm{\Sigma}^+ \bar{\bm{p}} = \bar{\bm{p}}$ holds. Define the affine filter
\begin{equation}
w_0(\bm{s}) \equiv 1 - \bar{\bm{p}}^\top \bm{\Sigma}^+ \big( \bm{p}(\bm{s}) - \bar{\bm{p}} \big).
\label{eq:app_feasibility_pseudoinverse_filter}
\end{equation}
Taking the expectation over $\bm{s}$ confirms trace preservation, $\EE_{\bm{s}}[w_0(\bm{s})] = 1 - \bar{\bm{p}}^\top \bm{\Sigma}^+ (\EE_{\bm{s}}[\bm{p}(\bm{s})] - \bar{\bm{p}}) = 1$. Evaluating the error-cancellation expectation with $\EE_{\bm{s}}[\bm{p}(\bm{s})(\bm{p}(\bm{s}) - \bar{\bm{p}})^\top] = \bm{\Sigma}$ yields
\begin{align}
\EE_{\bm{s}}\big[ w_0(\bm{s}) \bm{p}(\bm{s}) \big] &= \bar{\bm{p}} - \EE_{\bm{s}}\big[ \bm{p}(\bm{s}) (\bm{p}(\bm{s}) - \bar{\bm{p}})^\top \big] \bm{\Sigma}^+ \bar{\bm{p}} \nonumber \\
&= \bar{\bm{p}} - \bm{\Sigma} \bm{\Sigma}^+ \bar{\bm{p}} = \bar{\bm{p}} - \bar{\bm{p}} = \bm{0}.
\label{eq:app_feasibility_pseudoinverse_verify}
\end{align}
Thus $w_0 \in \mathcal{W}$ satisfies both constraints in Eqs.~\eqref{eq:app_cs_trace_preservation}--\eqref{eq:app_cs_pauli_cancellation}.
\end{proof}

\subsection{\texorpdfstring{Statement and Proofs of Theorem~\ref{thm:variational_cs_bound} (Generalized Variational Matrix Bound)}{Statement and Proofs of Generalized Variational Matrix Bound}}
\label{subsec:proof_variational_cs_bound}
\label{subsec:cs_covariance_proof}
\label{subsec:euler_lagrange_derivation}

We now establish the universal lower bound on the variance and sampling overhead of any unbiased VEC filter. We give two independent proofs: a direct completion of squares in Subsection~\ref{subsubsec:app_proof_completion_of_squares} and a constrained functional optimization in Subsection~\ref{subsubsec:app_proof_kkt_schur}.

\begin{theorem}[Generalized Variational Matrix Bound]
\label{thm:variational_cs_bound}
Let $(\Omega, \mathcal{F}, P)$ be the syndrome probability space with conditional error vector $\bm{p}(\bm{s})$, mean $\bar{\bm{p}} \equiv \EE_{\bm{s}}[\bm{p}(\bm{s})] \in \mathcal{P}$, and positive-definite covariance matrix $\bm{\Sigma} \equiv \Cov_{\bm{s}}(\bm{p}(\bm{s})) \succ 0$. Any filter $w \in L^2(\Omega, \mathcal{F}, P)$ satisfying the channel-inversion conditions in Eq.~\eqref{eq:pauli_cancellation_constraint} obeys the variance lower bound
\begin{equation}
\Var_{\bm{s}}\big[ w(\bm{s}) \big] \ge \bar{\bm{p}}^\top \bm{\Sigma}^{-1} \bar{\bm{p}}.
\label{eq:app_thm_cs_variance_lower_bound}
\end{equation}
The sampling overhead multiplier therefore satisfies
\begin{equation}
C(w) \equiv \EE_{\bm{s}}\big[ w(\bm{s})^2 \big] \ge C_{\mathrm{syn}}^* \equiv 1 + \bar{\bm{p}}^\top \bm{\Sigma}^{-1} \bar{\bm{p}}.
\label{eq:cs_overhead_lower_bound}
\end{equation}
Equality holds in Eqs.~\eqref{eq:app_thm_cs_variance_lower_bound}--\eqref{eq:cs_overhead_lower_bound} if and only if $w(\bm{s}) = w^*(\bm{s})$ almost surely under $P$. Here $w^*(\bm{s}) = 1 - \bar{\bm{p}}^\top \bm{\Sigma}^{-1}(\bm{p}(\bm{s}) - \bar{\bm{p}})$ is the affine filter of Eq.~\eqref{eq:optimal_weights}.
\end{theorem}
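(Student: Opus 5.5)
The proof rests on Lemma~\ref{lem:invariant_cross_covariance}. Every feasible filter $w \in \mathcal{W}$ has the same cross-covariance $\Cov_{\bm{s}}(w(\bm{s}),\bm{p}(\bm{s})) = -\bar{\bm{p}}$ with the error vector. The variance bound is then a multivariate Cauchy--Schwarz inequality in the centered Hilbert space $L^2(\Omega,\mathcal{F},P)$. The plan is to decompose $w$ orthogonally into its projection onto the span of the centered error components plus a remainder, and to read off both the bound and the equality case.

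First I center the variables: set $\tilde w \equiv w - 1$ and $\tilde{\bm{p}} \equiv \bm{p} - \bar{\bm{p}}$. Trace preservation gives $\EE_{\bm{s}}[\tilde w] = 0$, and $\Var_{\bm{s}}[w] = \|\tilde w\|_{L^2}^2$. Lemma~\ref{lem:invariant_cross_covariance} gives $\EE_{\bm{s}}[\tilde w\,\tilde{\bm{p}}] = -\bar{\bm{p}}$. Next I define the candidate projection $u(\bm{s}) \equiv -\bar{\bm{p}}^\top\bm{\Sigma}^{-1}\tilde{\bm{p}}(\bm{s})$, which is exactly $w^*(\bm{s}) - 1$. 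Using $\EE_{\bm{s}}[\tilde{\bm{p}}\tilde{\bm{p}}^\top] = \bm{\Sigma}$, I compute
\begin{equation*}
\|u\|_{L^2}^2 = \bar{\bm{p}}^\top\bm{\Sigma}^{-1}\bar{\bm{p}}, \qquad \langle \tilde w, u\rangle = -\bar{\bm{p}}^\top\bm{\Sigma}^{-1}\EE_{\bm{s}}[\tilde{\bm{p}}\,\tilde w] = \bar{\bm{p}}^\top\bm{\Sigma}^{-1}\bar{\bm{p}}.
\end{equation*}
It follows that $\langle \tilde w - u, u\rangle = 0$ for every feasible $w$. This orthogonality is the key step, and it uses only the invariant cross-covariance.

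The Pythagorean identity then gives
\begin{equation*}
\Var_{\bm{s}}[w] = \|u\|^2 + \|\tilde w - u\|^2 \ge \bar{\bm{p}}^\top\bm{\Sigma}^{-1}\bar{\bm{p}},
\end{equation*}
which is Eq.~\eqref{eq:app_thm_cs_variance_lower_bound}. Adding $1$ via Eq.~\eqref{eq:app_cs_overhead_variance_relation} yields Eq.~\eqref{eq:cs_overhead_lower_bound}. Equality holds iff $\|\tilde w - u\|_{L^2} = 0$, that is, iff $w = 1 + u = w^*$ almost surely.

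For the converse I must confirm that $w^*$ is itself feasible. This follows from the converse part of Lemma~\ref{lem:feasibility_condition_formal}: since $\bm{\Sigma} \succ 0$, the pseudoinverse $\bm{\Sigma}^+$ equals $\bm{\Sigma}^{-1}$, so $w_0 = w^*$ satisfies both constraints and attains the bound.

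A second, independent route is a KKT/Lagrangian argument. Minimizing $\EE_{\bm{s}}[w^2]$ subject to the $M+1$ linear constraints forces $w = \lambda_0 + \bm{\lambda}^\top\bm{p}$. Solving the resulting $(M+1)\times(M+1)$ system by a Schur complement reproduces $w^*$, and convexity of the objective gives global optimality and uniqueness.

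No step is a serious obstacle. The only delicate points are measure-theoretic: equality holds only "almost surely", and $\tilde{\bm{p}} \in L^2$ is needed, which holds because $\bm{p}$ is bounded in the simplex. The one computation to check carefully is the sign bookkeeping in $\langle\tilde w,u\rangle$.
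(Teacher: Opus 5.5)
Your proof is correct and is essentially the paper's Proof~1: your remainder $\tilde w - u$ is exactly the paper's residual $\Delta_{\bm{a}^*}$ at the optimal test vector $\bm{a}^* = -\bm{\Sigma}^{-1}\bar{\bm{p}}$, so your Pythagorean identity is the paper's completion of squares with the minimization over $\bm{a}$ already carried out. Your KKT sketch matches the paper's Proof~2, and your explicit check via Lemma~\ref{lem:feasibility_condition_formal} that $w^*$ is feasible (so the bound is actually attained) is a small point the paper's Proof~1 leaves implicit.
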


\subsubsection{Proof 1: Matrix Cauchy--Schwarz Completion of Squares}
\label{subsubsec:app_proof_completion_of_squares}

\begin{proof}[Proof 1 of Theorem~\ref{thm:variational_cs_bound}]
Let $w \in \mathcal{W}$ be any filter satisfying $\EE_{\bm{s}}[w(\bm{s})] = 1$ and $\EE_{\bm{s}}[w(\bm{s})\bm{p}(\bm{s})] = \bm{0}$ in Eqs.~\eqref{eq:app_cs_trace_preservation}--\eqref{eq:app_cs_pauli_cancellation}. For an arbitrary vector $\bm{a} \in \RR^M$, define the centered scalar random variable
\begin{equation}
\Delta_{\bm{a}}(\bm{s}) \equiv \big( w(\bm{s}) - 1 \big) - \bm{a}^\top \big( \bm{p}(\bm{s}) - \bar{\bm{p}} \big).
\label{eq:app_delta_def}
\end{equation}
Because the second moment of a real random variable is non-negative, $\EE_{\bm{s}}[\Delta_{\bm{a}}(\bm{s})^2] \ge 0$. Expanding the square gives
\begin{align}
\EE_{\bm{s}}\big[ \Delta_{\bm{a}}(\bm{s})^2 \big] &= \EE_{\bm{s}}\Big[ \big(w(\bm{s}) - 1\big)^2 \Big] - 2 \bm{a}^\top \EE_{\bm{s}}\Big[ \big(w(\bm{s}) - 1\big) \big(\bm{p}(\bm{s}) - \bar{\bm{p}}\big) \Big] \nonumber \\
&\quad + \bm{a}^\top \EE_{\bm{s}}\Big[ \big(\bm{p}(\bm{s}) - \bar{\bm{p}}\big)\big(\bm{p}(\bm{s}) - \bar{\bm{p}}\big)^\top \Big] \bm{a} \nonumber \\
&= \Var_{\bm{s}}\big[ w(\bm{s}) \big] - 2 \bm{a}^\top \Cov_{\bm{s}}\big( w(\bm{s}), \bm{p}(\bm{s}) \big) + \bm{a}^\top \bm{\Sigma} \bm{a}.
\label{eq:app_expanded_delta}
\end{align}
By Lemma~\ref{lem:invariant_cross_covariance}, every feasible filter satisfies $\Cov_{\bm{s}}(w(\bm{s}), \bm{p}(\bm{s})) = -\bar{\bm{p}}$. Substituting Eq.~\eqref{eq:app_cs_covariance_evaluation} into Eq.~\eqref{eq:app_expanded_delta} yields a quadratic inequality in $\bm{a}$,
\begin{equation}
\Var_{\bm{s}}\big[ w(\bm{s}) \big] + 2 \bm{a}^\top \bar{\bm{p}} + \bm{a}^\top \bm{\Sigma} \bm{a} \ge 0.
\label{eq:app_quadratic_in_a}
\end{equation}
Because $\bm{\Sigma} \succ 0$, the quadratic potential $g(\bm{a}) \equiv \bm{a}^\top \bm{\Sigma} \bm{a} + 2 \bm{a}^\top \bar{\bm{p}}$ is strictly convex on $\RR^M$. It therefore attains its unique global minimum where its gradient vanishes,
\begin{equation}
\nabla_{\bm{a}} g(\bm{a}) = 2 \bm{\Sigma} \bm{a} + 2 \bar{\bm{p}} = \bm{0}, \qquad \bm{a}^* = -\bm{\Sigma}^{-1} \bar{\bm{p}}.
\label{eq:app_optimal_a}
\end{equation}
Evaluating $g(\bm{a})$ at $\bm{a}^*$ gives
\begin{align}
g(\bm{a}^*) &= (-\bm{\Sigma}^{-1} \bar{\bm{p}})^\top \bm{\Sigma} (-\bm{\Sigma}^{-1} \bar{\bm{p}}) + 2 (-\bm{\Sigma}^{-1} \bar{\bm{p}})^\top \bar{\bm{p}} \nonumber \\
&= \bar{\bm{p}}^\top \bm{\Sigma}^{-1} \bar{\bm{p}} - 2 \bar{\bm{p}}^\top \bm{\Sigma}^{-1} \bar{\bm{p}} = -\bar{\bm{p}}^\top \bm{\Sigma}^{-1} \bar{\bm{p}}.
\label{eq:app_min_g_eval}
\end{align}
Substituting $g(\bm{a}^*)$ into Eq.~\eqref{eq:app_quadratic_in_a} proves Eq.~\eqref{eq:app_thm_cs_variance_lower_bound},
\begin{equation}
\Var_{\bm{s}}\big[ w(\bm{s}) \big] - \bar{\bm{p}}^\top \bm{\Sigma}^{-1} \bar{\bm{p}} \ge 0, \qquad \Var_{\bm{s}}\big[ w(\bm{s}) \big] \ge \bar{\bm{p}}^\top \bm{\Sigma}^{-1} \bar{\bm{p}}.
\label{eq:app_cs_variance_bound}
\end{equation}
Adding unity to both sides of Eq.~\eqref{eq:app_cs_variance_bound} via Eq.~\eqref{eq:app_cs_overhead_variance_relation} establishes Eq.~\eqref{eq:cs_overhead_lower_bound},
\begin{equation}
C(w) = 1 + \Var_{\bm{s}}\big[ w(\bm{s}) \big] \ge 1 + \bar{\bm{p}}^\top \bm{\Sigma}^{-1} \bar{\bm{p}} = C_{\mathrm{syn}}^*.
\label{eq:app_cs_overhead_bound}
\end{equation}

Equality holds in Eq.~\eqref{eq:app_cs_variance_bound} if and only if $\EE_{\bm{s}}[\Delta_{\bm{a}^*}(\bm{s})^2] = 0$, which requires $\Delta_{\bm{a}^*}(\bm{s}) = 0$ almost surely under $P$. Setting $\Delta_{\bm{a}^*}(\bm{s}) = 0$ in Eq.~\eqref{eq:app_delta_def} yields
\begin{equation}
w(\bm{s}) - 1 = \bm{a}^{*\top} \big( \bm{p}(\bm{s}) - \bar{\bm{p}} \big) = -\bar{\bm{p}}^\top \bm{\Sigma}^{-1} \big( \bm{p}(\bm{s}) - \bar{\bm{p}} \big),
\label{eq:app_cs_equality_condition}
\end{equation}
recovering the affine filter $w^*(\bm{s})$ in Eq.~\eqref{eq:optimal_weights}.
\end{proof}

\subsubsection{Proof 2: Karush--Kuhn--Tucker Functional Optimization and Schur Complement Inversion}
\label{subsubsec:app_proof_kkt_schur}

\begin{proof}[Proof 2 of Theorem~\ref{thm:variational_cs_bound}]
We cast the search for the minimum-overhead filter as a convex program in $L^2(\Omega, \mathcal{F}, P)$,
\begin{equation}
\min_{w \in L^2} \frac{1}{2} \EE_{\bm{s}}\big[ w(\bm{s})^2 \big] \quad \text{subject to} \quad \EE_{\bm{s}}[w(\bm{s})] = 1, \quad \EE_{\bm{s}}[w(\bm{s}) \bm{p}(\bm{s})] = \bm{0}.
\label{eq:app_kkt_opt_prob}
\end{equation}
We introduce a scalar Lagrange multiplier $\lambda_0 \in \RR$ for trace preservation and a vector $\bm{\lambda} \in \RR^M$ for error cancellation. These dual variables define the Lagrangian functional
\begin{equation}
\cL(w, \lambda_0, \bm{\lambda}) \equiv \frac{1}{2} \EE_{\bm{s}}\big[ w(\bm{s})^2 \big] - \lambda_0 \Big( \EE_{\bm{s}}[w(\bm{s})] - 1 \Big) - \bm{\lambda}^\top \EE_{\bm{s}}\big[ w(\bm{s}) \bm{p}(\bm{s}) \big].
\label{eq:app_kkt_lagrangian}
\end{equation}
The first Gateaux variation of $\cL$ along an arbitrary test function $h \in L^2(\Omega, \mathcal{F}, P)$ is
\begin{equation}
\delta \cL(w, \lambda_0, \bm{\lambda}; h) = \EE_{\bm{s}}\Big[ \big( w(\bm{s}) - \lambda_0 - \bm{\lambda}^\top \bm{p}(\bm{s}) \big) h(\bm{s}) \Big].
\label{eq:app_kkt_gateaux_first}
\end{equation}
Requiring $\delta \cL = 0$ for all $h \in L^2(\Omega, \mathcal{F}, P)$ yields the stationarity condition
\begin{equation}
w^*(\bm{s}) = \lambda_0 + \bm{\lambda}^\top \bm{p}(\bm{s}),
\label{eq:app_kkt_stationarity}
\end{equation}
almost surely under $P$. Taking the expectation of Eq.~\eqref{eq:app_kkt_stationarity} against $1$ and $\bm{p}(\bm{s})$ produces the KKT linear system
\begin{align}
\EE_{\bm{s}}\big[ w^*(\bm{s}) \big] &= \lambda_0 + \bar{\bm{p}}^\top \bm{\lambda} = 1, \label{eq:app_kkt_row1} \\
\EE_{\bm{s}}\big[ w^*(\bm{s}) \bm{p}(\bm{s}) \big] &= \lambda_0 \bar{\bm{p}} + \bm{M}_2 \bm{\lambda} = \bm{0},
\label{eq:app_kkt_row2}
\end{align}
where $\bm{M}_2 \equiv \EE_{\bm{s}}[\bm{p}(\bm{s}) \bm{p}(\bm{s})^\top] = \bm{\Sigma} + \bar{\bm{p}}\bar{\bm{p}}^\top$ is the uncentered second-moment matrix of $\bm{p}(\bm{s})$. In partitioned matrix form, Eqs.~\eqref{eq:app_kkt_row1}--\eqref{eq:app_kkt_row2} read
\begin{equation}
\begin{pmatrix} 1 & \bar{\bm{p}}^\top \\ \bar{\bm{p}} & \bm{M}_2 \end{pmatrix} \begin{pmatrix} \lambda_0 \\ \bm{\lambda} \end{pmatrix} = \begin{pmatrix} 1 \\ \bm{0} \end{pmatrix}.
\label{eq:app_kkt_block_system}
\end{equation}
The Schur complement of the top-left scalar entry $1$ in this $(M+1) \times (M+1)$ matrix is
\begin{equation}
\bm{S} = \bm{M}_2 - \bar{\bm{p}} (1)^{-1} \bar{\bm{p}}^\top = \big( \bm{\Sigma} + \bar{\bm{p}}\bar{\bm{p}}^\top \big) - \bar{\bm{p}}\bar{\bm{p}}^\top = \bm{\Sigma}.
\label{eq:app_kkt_schur_eval}
\end{equation}
Because $\bm{S} = \bm{\Sigma} \succ 0$ is invertible, the block-matrix inversion formula gives
\begin{equation}
\begin{pmatrix} 1 & \bar{\bm{p}}^\top \\ \bar{\bm{p}} & \bm{M}_2 \end{pmatrix}^{-1} = \begin{pmatrix} 1 + \bar{\bm{p}}^\top \bm{\Sigma}^{-1} \bar{\bm{p}} & -\bar{\bm{p}}^\top \bm{\Sigma}^{-1} \\ -\bm{\Sigma}^{-1} \bar{\bm{p}} & \bm{\Sigma}^{-1} \end{pmatrix}.
\label{eq:app_kkt_block_inverse}
\end{equation}
Multiplying Eq.~\eqref{eq:app_kkt_block_inverse} by the right-hand vector $(1, \bm{0}^\top)^\top$ determines the dual variables,
\begin{equation}
\begin{pmatrix} \lambda_0 \\ \bm{\lambda} \end{pmatrix} = \begin{pmatrix} 1 + \bar{\bm{p}}^\top \bm{\Sigma}^{-1} \bar{\bm{p}} \\ -\bm{\Sigma}^{-1} \bar{\bm{p}} \end{pmatrix}.
\label{eq:app_kkt_multipliers_solved}
\end{equation}
Substituting Eq.~\eqref{eq:app_kkt_multipliers_solved} into Eq.~\eqref{eq:app_kkt_stationarity} yields the optimal filter
\begin{align}
w^*(\bm{s}) &= \big( 1 + \bar{\bm{p}}^\top \bm{\Sigma}^{-1} \bar{\bm{p}} \big) - \big( \bm{\Sigma}^{-1} \bar{\bm{p}} \big)^\top \bm{p}(\bm{s}) \nonumber \\
&= 1 - \bar{\bm{p}}^\top \bm{\Sigma}^{-1} \big( \bm{p}(\bm{s}) - \bar{\bm{p}} \big).
\label{eq:app_cs_w_star_recovered}
\end{align}
This expression matches Eq.~\eqref{eq:optimal_weights} of the main text. We next verify global uniqueness of this stationary point. For any non-zero perturbation $h \in L^2(\Omega, \mathcal{F}, P)$, the second Gateaux variation of $\cL$ is
\begin{equation}
\delta^2 \cL(w, \lambda_0, \bm{\lambda}; h, h) = \EE_{\bm{s}}\big[ h(\bm{s})^2 \big] = \|h\|_{L^2}^2 > 0.
\label{eq:app_kkt_gateaux_second}
\end{equation}
Strict positivity of $\delta^2 \cL$ confirms that the objective is strictly convex on $\mathcal{W}$. Because $\mathcal{W}$ is a closed affine subspace of $L^2(\Omega, \mathcal{F}, P)$, the filter $w^*(\bm{s})$ is the unique global minimizer.
\end{proof}

\subsection{Physical Corollaries and Postselection Limits}
\label{subsec:bhatia_davis_ceiling}
\label{sec:bimodal_distribution}

For a single logical error channel ($M = 1$) decoded by a single maximum-likelihood (ML) decoder that selects the most probable coset on every syndrome $\bm{s} \in \Omega$, the conditional failure probability satisfies $p_L(\bm{s}) \in [0, 1/2]$ almost surely. This upper bound of $1/2$ is a property of the single-decoder decision rule: it holds whenever the decoder that sets the real-time Pauli recovery frame is the same decoder that evaluates $p_L(\bm{s})$. In the dual-decoder architecture of the main text and Supplementary Section~\ref{sec:appendix_proofs_discrete_protocol}, a fast real-time decoder $\Drt$ fixes the recovery frame while an accurate offline decoder $\Doff$ evaluates the posterior failure probability of that committed frame. On contested syndromes where $\Drt$ selects a suboptimal coset, the syndrome failure rate $p_L(\bm{s})$ exceeds $1/2$ and ranges over $[0, 1]$. Corollary~\ref{cor:bhatia_davis_ceiling} below therefore constrains mitigation with a single decoder, establishing an upper bound on the decoder variance $D \equiv \Var_{\bm{s}}[p_L(\bm{s})]$ and a floor $\eta \ge 2$ on the renormalized overhead slope $\eta \equiv (C_{\mathrm{syn}}^* - 1)/p_L$.

\begin{corollary}[Physical Bhatia--Davis Variance Ceiling and Postselection Limit]
\label{cor:bhatia_davis_ceiling}
\label{prop:bhatia_davis_ceiling}
Let the conditional logical error rate $p_L(\bm{s})$ of a single-channel code satisfy $p_L(\bm{s}) \in [0, 1/2]$ almost surely with mean $p_L \equiv \EE_{\bm{s}}[p_L(\bm{s})] \in (0, 1/2)$ and variance $D \equiv \Var_{\bm{s}}[p_L(\bm{s})] > 0$. Then the following three properties hold:
\begin{enumerate}
\item \emph{Variance ceiling:} The posterior error variance obeys
\begin{equation}
D \le p_L \left(\frac{1}{2} - p_L\right).
\label{eq:app_bd_bhatia_davis_bound}
\end{equation}
\item \emph{Overhead floor:} The optimal sampling overhead $C_{\mathrm{syn}}^* = 1 + p_L^2/D$ of Theorem~\ref{thm:variational_cs_bound} satisfies
\begin{equation}
C_{\mathrm{syn}}^* \ge \frac{1}{1 - 2 p_L} = 1 + \frac{2 p_L}{1 - 2 p_L}, \qquad \eta \equiv \frac{C_{\mathrm{syn}}^* - 1}{p_L} \ge \frac{2}{1 - 2 p_L} \ge 2.
\label{eq:app_bd_c_opt_bhatia_davis_bound}
\end{equation}
\item \emph{Postselection limit:} Equality holds in Eqs.~\eqref{eq:app_bd_bhatia_davis_bound}--\eqref{eq:app_bd_c_opt_bhatia_davis_bound} if and only if $p_L(\bm{s}) \in \{0, 1/2\}$ almost surely. On this two-point support, the optimal filter $w^*(\bm{s})$ reduces to postselection,
\begin{equation}
w^*(0) = \frac{1}{1 - 2 p_L}, \qquad w^*\!\left(\frac{1}{2}\right) = 0.
\label{eq:app_w_star_postselection_extremal}
\end{equation}
\end{enumerate}
\end{corollary}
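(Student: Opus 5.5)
The plan is to reduce all three claims to the elementary Bhatia--Davis inequality for a bounded random variable, and then feed the result into Theorem~\ref{thm:variational_cs_bound} with $M=1$.

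\textbf{Variance ceiling.} Since $p_L(\bm{s})\in[0,1/2]$ almost surely, the pointwise product $p_L(\bm{s})\bigl(\tfrac12-p_L(\bm{s})\bigr)$ is non-negative almost surely. Taking the expectation gives
\begin{equation*}
0\le \EE_{\bm{s}}\Bigl[p_L(\bm{s})\bigl(\tfrac12-p_L(\bm{s})\bigr)\Bigr] = \tfrac12 p_L - \EE_{\bm{s}}[p_L(\bm{s})^2] = \tfrac12 p_L - p_L^2 - D .
\end{equation*}
This is exactly Eq.~\eqref{eq:app_bd_bhatia_davis_bound}. Equality holds if and only if the non-negative integrand vanishes almost surely, that is, if and only if $p_L(\bm{s})\in\{0,1/2\}$ almost surely.

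\textbf{Overhead floor.} For $M=1$, $\bm{\Sigma}=D>0$ is invertible, so Theorem~\ref{thm:variational_cs_bound} gives $C_{\mathrm{syn}}^*=1+p_L^2/D$. Since $D\mapsto p_L^2/D$ is decreasing, inserting the ceiling yields
\begin{equation*}
C_{\mathrm{syn}}^*\ge 1+\frac{p_L}{1/2-p_L} = \frac{1}{1-2p_L}.
\end{equation*}
Subtracting one and dividing by $p_L>0$ gives $\eta\ge 2/(1-2p_L)$. The last inequality $2/(1-2p_L)\ge 2$ is immediate from $p_L\in(0,1/2)$. Because the map $D\mapsto p_L^2/D$ is strictly monotone, equality in the overhead bound is equivalent to equality in the variance ceiling. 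This transfers the characterization from the first step.

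\textbf{Postselection limit.} Assume the two-point support $\{0,1/2\}$. Its weights are forced by the mean: $P(p_L(\bm{s})=1/2)=2p_L$, and $D=p_L(1/2-p_L)$. I would then evaluate the affine filter $w^*(\bm{s})=1-(p_L/D)(p_L(\bm{s})-p_L)$ at both support points. At $p_L(\bm{s})=0$,
\begin{equation*}
w^*(0)=1+\frac{p_L^2}{D}=1+\frac{p_L}{1/2-p_L}=\frac{1}{1-2p_L}.
\end{equation*}
At $p_L(\bm{s})=1/2$, the correction term is $(p_L/D)(1/2-p_L)=1$, so $w^*(1/2)=0$. This matches Eq.~\eqref{eq:app_w_star_postselection_extremal}.

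As a consistency check, discarding the fraction $2p_L$ of shots and renormalizing by $1/(1-2p_L)$ reproduces the second moment $C=1/(1-2p_L)$. It also satisfies the cancellation constraint, since the only error-bearing shots receive zero weight.

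No step here is a real obstacle. The only point needing care is the ``only if'' direction of the equality statement. It rests on a non-negative random variable with zero expectation vanishing almost surely, together with strict monotonicity of $D\mapsto p_L^2/D$. Both of these need $D>0$ and $p_L<1/2$, which the hypotheses provide.
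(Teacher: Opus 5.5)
Your proposal is correct and follows essentially the same route as the paper. The paper cites the Bhatia--Davis inequality with $a=0$, $b=1/2$ and its endpoint equality case; you instead prove that inequality inline via $\EE_{\bm{s}}\bigl[p_L(\bm{s})\bigl(\tfrac12-p_L(\bm{s})\bigr)\bigr]\ge 0$, which also gives you the equality characterization directly, and you then obtain the overhead floor and the weights $w^*(0)=1/(1-2p_L)$, $w^*(1/2)=0$ exactly as the paper does (your appeal to strict monotonicity of $D\mapsto p_L^2/D$ to transfer the equality case is slightly more careful than the paper, though note that the final inequality $2/(1-2p_L)\ge 2$ is always strict for $p_L>0$, so the equality clause can only refer to the preceding bounds).
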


\begin{proof}
We apply the Bhatia--Davis inequality~\cite{bhatia2000better} to bound the variance of $p_L(\bm{s})$. For any real random variable $X$ supported on $[a, b]$ with mean $\mu = \EE[X]$, the variance obeys
\begin{equation}
\Var[X] \le (\mu - a)(b - \mu).
\label{eq:app_bd_bhatia_davis_general_ineq}
\end{equation}
Setting $X = p_L(\bm{s})$ with endpoints $a = 0$ and $b = 1/2$ and mean $\mu = p_L$ in Eq.~\eqref{eq:app_bd_bhatia_davis_general_ineq} gives
\begin{equation}
D = \Var_{\bm{s}}\big[ p_L(\bm{s}) \big] \le (p_L - 0)\left(\frac{1}{2} - p_L\right) = p_L \left(\frac{1}{2} - p_L\right).
\label{eq:app_bd_bhatia_davis_variance_step}
\end{equation}
This bound proves Eq.~\eqref{eq:app_bd_bhatia_davis_bound}. Substituting Eq.~\eqref{eq:app_bd_bhatia_davis_variance_step} into $C_{\mathrm{syn}}^* = 1 + p_L^2/D$ from Eq.~\eqref{eq:optimal_sampling_overhead} yields
\begin{align}
C_{\mathrm{syn}}^* &\ge 1 + \frac{p_L^2}{p_L(1/2 - p_L)} \nonumber \\
&= 1 + \frac{p_L}{1/2 - p_L} = \frac{1}{1 - 2 p_L} = 1 + \frac{2 p_L}{1 - 2 p_L},
\label{eq:app_bd_bhatia_davis_overhead_derivation}
\end{align}
and dividing $C_{\mathrm{syn}}^* - 1$ by $p_L$ gives $\eta \ge 2/(1 - 2p_L) \ge 2$, proving Eq.~\eqref{eq:app_bd_c_opt_bhatia_davis_bound}.

Equality holds in Eq.~\eqref{eq:app_bd_bhatia_davis_general_ineq} if and only if $X$ is supported on the endpoints $\{a, b\}$, which requires $P(p_L(\bm{s}) \in \{0, 1/2\}) = 1$. Let $\pi \equiv P(p_L(\bm{s}) = 1/2)$ denote the probability of the ambiguous outcome. The mean error rate is then
\begin{equation}
p_L = 0 \cdot (1 - \pi) + \frac{1}{2} \cdot \pi = \frac{\pi}{2},
\label{eq:app_bd_two_point_mean}
\end{equation}
so $\pi = 2 p_L$ and $1 - \pi = 1 - 2 p_L$. At the maximal variance $D = p_L(1/2 - p_L)$, the filter slope in Eq.~\eqref{eq:app_cs_w_star_recovered} equals $p_L/D = 2/(1 - 2p_L)$. Evaluating $w^*(p_L(\bm{s})) = 1 - (p_L/D)(p_L(\bm{s}) - p_L)$ at the two endpoints yields
\begin{align}
w^*(0) &= 1 - \frac{2}{1 - 2p_L}(0 - p_L) = 1 + \frac{2p_L}{1 - 2p_L} = \frac{1}{1 - 2p_L}, \label{eq:app_w_star_noiseless} \\
w^*\!\left(\frac{1}{2}\right) &= 1 - \frac{2}{1 - 2p_L}\left(\frac{1}{2} - p_L\right) = 1 - 1 = 0.
\label{eq:app_w_star_depolarized}
\end{align}
These two values establish Eq.~\eqref{eq:app_w_star_postselection_extremal}.
\end{proof}

The bound $C_{\mathrm{syn}}^* \ge 1/(1 - 2 p_L)$ in Eq.~\eqref{eq:app_bd_c_opt_bhatia_davis_bound} coincides with the sampling overhead of ideal erasure postselection at logical error rate $p_L$, where a fraction $\pi = 2p_L$ of ambiguous shots is discarded and the surviving fraction $1 - 2p_L$ is rescaled by $1/(1 - 2p_L)$. By contrast, unconditioned probabilistic error cancellation (PEC) without syndrome conditioning incurs the larger overhead $C_{\mathrm{PEC}} = 1/(1 - 2p_L)^2 = 1 + 4p_L + \mathcal{O}(p_L^2)$ with slope $\eta = 4$. Below threshold ($p_L \ll 1$), the posterior variance scales linearly with $p_L$ as $D = \kappa p_L + \mathcal{O}(p_L^2)$ for a dispersion ratio $\kappa \equiv \lim_{p_L \to 0} D/p_L$, so the syndrome-resolved overhead expands as
\begin{equation}
C_{\mathrm{syn}}^* = 1 + \frac{p_L}{\kappa} + \mathcal{O}(p_L^2), \qquad \eta = \frac{1}{\kappa} + \mathcal{O}(p_L).
\label{eq:app_bd_overhead_scaling}
\end{equation}
For any single ML decoder ($p_L(\bm{s}) \le 1/2$), Eq.~\eqref{eq:app_bd_bhatia_davis_bound} restricts $\kappa \in (0, 1/2]$ and forces $\eta \ge 2$. When an offline decoder $\Doff$ evaluates the error probability of a fast real-time decoder $\Drt$, contested syndromes with $p_L(\bm{s}) \in (1/2, 1]$ lift the upper endpoint of Eq.~\eqref{eq:app_bd_bhatia_davis_general_ineq} from $b = 1/2$ to $b = 1$, allowing $\kappa \in (1/2, 1]$ and driving the overhead slope down to $\eta = 1/\kappa \in [1, 2)$.

\subsection{Derivation of the Calibration Error-Transfer Matrix and Residual Bias}
\label{subsec:app_residual_error_matrix}
\label{sec:optimal_vpec_perturbation_theory}

In practice, the conditional error vector $\bm{p}(\bm{s})$, mean $\bar{\bm{p}}$, and covariance matrix $\bm{\Sigma}$ are estimated from a finite sample of shots. Here we derive the $M \times M$ calibration error-transfer matrix $\bm{E}$ governing the residual bias of the calibrated filter $\hat{w}^*(\bm{s})$, prove that the zeroth-order logical error cancels, and establish the single-channel Taylor remainder bound (Proposition~\ref{thm:taylor_expansion_full}).

\subsubsection{Calibrated Estimators and Shot-Averaged Residual Error}
\label{subsubsec:app_empirical_estimators}

Consider an $M$-channel logical error process with true conditional error vector $\bm{p}(\bm{s}) \in \mathcal{P} \subset \RR^M$, mean $\bar{\bm{p}} \equiv \EE_{\bm{s}}[\bm{p}(\bm{s})]$, and covariance matrix $\bm{\Sigma} \equiv \Cov_{\bm{s}}(\bm{p}(\bm{s})) \succ 0$. We write the calibrated estimators as
\begin{align}
\hat{\bm{p}}(\bm{s}) &= \bm{p}(\bm{s}) + \bm{\varepsilon}(\bm{s}), \label{eq:app_emp_p_s} \\
\hat{\bar{\bm{p}}} &= \bar{\bm{p}} + \bm{\delta}\bar{\bm{p}} = (\bm{I} + \bm{\Delta}_p) \bar{\bm{p}}, \label{eq:app_emp_p_bar} \\
\hat{\bm{\Sigma}} &= \bm{\Sigma} + \bm{E}_\Sigma,
\label{eq:app_emp_sigma}
\end{align}
where $\bm{\varepsilon}(\bm{s}) \in \RR^M$ is the syndrome-dependent posterior perturbation with mean $\bar{\bm{\varepsilon}} \equiv \EE_{\bm{s}}[\bm{\varepsilon}(\bm{s})]$, while $\bm{\delta}\bar{\bm{p}} = \bm{\Delta}_p \bar{\bm{p}} \in \RR^M$ is the mean shift parameterized by the relative error matrix $\bm{\Delta}_p \in \RR^{M \times M}$ and $\bm{E}_\Sigma = \bm{E}_\Sigma^\top \in \RR^{M \times M}$ is the symmetric second-moment error matrix. We denote the overall dimensionless calibration error scale by $\epsilon \equiv \max\big( \|\bm{\Sigma}^{-1} \bm{E}_\Sigma\|, \|\bm{\Delta}_p\|, \sup_{\bm{s}} \|\bm{\Sigma}^{-1/2} \bm{\varepsilon}(\bm{s})\| \big) \ll 1$.

During post-processing, the estimator applies the calibrated filter
\begin{equation}
\hat{w}^*(\bm{s}) = 1 - \hat{\bar{\bm{p}}}^\top \hat{\bm{\Sigma}}^{-1} \big( \hat{\bm{p}}(\bm{s}) - \hat{\bar{\bm{p}}} \big),
\label{eq:app_empirical_filter_def}
\end{equation}
to shots drawn from the true physical error profile $\bm{p}(\bm{s})$. The resulting residual logical error vector is
\begin{equation}
\bm{p}_{\mathrm{res}} \equiv \EE_{\bm{s}}\big[ \hat{w}^*(\bm{s}) \bm{p}(\bm{s}) \big].
\label{eq:app_pres_def_integral}
\end{equation}
Because $\hat{\bm{\Sigma}}$ is symmetric, the scalar factor in Eq.~\eqref{eq:app_empirical_filter_def} satisfies $\hat{\bar{\bm{p}}}^\top \hat{\bm{\Sigma}}^{-1} (\hat{\bm{p}}(\bm{s}) - \hat{\bar{\bm{p}}}) = (\hat{\bm{p}}(\bm{s}) - \hat{\bar{\bm{p}}})^\top \hat{\bm{\Sigma}}^{-1} \hat{\bar{\bm{p}}}$. Substituting Eq.~\eqref{eq:app_empirical_filter_def} into Eq.~\eqref{eq:app_pres_def_integral} yields
\begin{align}
\bm{p}_{\mathrm{res}} &= \EE_{\bm{s}}\big[ \bm{p}(\bm{s}) \big] - \EE_{\bm{s}}\Big[ \bm{p}(\bm{s}) \big( \hat{\bm{p}}(\bm{s}) - \hat{\bar{\bm{p}}} \big)^\top \hat{\bm{\Sigma}}^{-1} \hat{\bar{\bm{p}}} \Big] \nonumber \\
&= \bar{\bm{p}} - \bm{K} \hat{\bm{\Sigma}}^{-1} \hat{\bar{\bm{p}}},
\label{eq:app_pres_compact_K}
\end{align}
where $\bm{K} \in \RR^{M \times M}$ is the cross-moment matrix between true and estimated posteriors,
\begin{equation}
\bm{K} \equiv \EE_{\bm{s}}\big[ \bm{p}(\bm{s}) \big( \hat{\bm{p}}(\bm{s}) - \hat{\bar{\bm{p}}} \big)^\top \big].
\label{eq:app_K_matrix_def}
\end{equation}

\subsubsection{Cross-Moment Decomposition}
\label{subsubsec:app_cross_moment_decomp}

To expand $\bm{K}$, we center $\bm{p}(\bm{s})$ and $\hat{\bm{p}}(\bm{s}) - \hat{\bar{\bm{p}}}$ around the true mean $\bar{\bm{p}}$,
\begin{align}
\bm{p}(\bm{s}) &= \bar{\bm{p}} + \big( \bm{p}(\bm{s}) - \bar{\bm{p}} \big), \label{eq:app_p_centered_decomp} \\
\hat{\bm{p}}(\bm{s}) - \hat{\bar{\bm{p}}} &= \big( \bm{p}(\bm{s}) - \bar{\bm{p}} \big) + \big( \bm{\varepsilon}(\bm{s}) - \bm{\delta}\bar{\bm{p}} \big).
\label{eq:app_phat_centered_decomp}
\end{align}
Substituting Eqs.~\eqref{eq:app_p_centered_decomp}--\eqref{eq:app_phat_centered_decomp} into Eq.~\eqref{eq:app_K_matrix_def} splits $\bm{K}$ into four expectations,
\begin{align}
\bm{K} &= \EE_{\bm{s}}\big[ (\bm{p}(\bm{s}) - \bar{\bm{p}})(\bm{p}(\bm{s}) - \bar{\bm{p}})^\top \big] + \EE_{\bm{s}}\big[ (\bm{p}(\bm{s}) - \bar{\bm{p}})(\bm{\varepsilon}(\bm{s}) - \bm{\delta}\bar{\bm{p}})^\top \big] \nonumber \\
&\quad + \bar{\bm{p}} \EE_{\bm{s}}\big[ (\bm{p}(\bm{s}) - \bar{\bm{p}})^\top \big] + \bar{\bm{p}} \EE_{\bm{s}}\big[ (\bm{\varepsilon}(\bm{s}) - \bm{\delta}\bar{\bm{p}})^\top \big].
\label{eq:app_K_expansion_four_terms}
\end{align}
We evaluate the four terms in Eq.~\eqref{eq:app_K_expansion_four_terms} in order:
\begin{enumerate}
\item The first term is the true covariance matrix $\bm{\Sigma} = \EE_{\bm{s}}[(\bm{p}(\bm{s}) - \bar{\bm{p}})(\bm{p}(\bm{s}) - \bar{\bm{p}})^\top]$.
\item In the second term, $\bm{\delta}\bar{\bm{p}}$ is constant across syndromes, so $\EE_{\bm{s}}[(\bm{p}(\bm{s}) - \bar{\bm{p}})\bm{\delta}\bar{\bm{p}}^\top] = \bm{0}$. This leaves the cross-covariance matrix between true posteriors and calibration errors,
\begin{equation}
\bm{\Sigma}_{\bm{p}\bm{\varepsilon}} \equiv \EE_{\bm{s}}\big[ (\bm{p}(\bm{s}) - \bar{\bm{p}}) \bm{\varepsilon}(\bm{s})^\top \big].
\label{eq:app_sigma_peps_def}
\end{equation}
\item The third term vanishes because $\EE_{\bm{s}}[\bm{p}(\bm{s}) - \bar{\bm{p}}] = \bm{0}$.
\item The fourth term equals $\bar{\bm{p}} \bm{\delta}\bar{\bm{\varepsilon}}^\top$, where the mean estimation discrepancy vector is
\begin{equation}
\bm{\delta}\bar{\bm{\varepsilon}} \equiv \bar{\bm{\varepsilon}} - \bm{\delta}\bar{\bm{p}} = \EE_{\bm{s}}[\bm{\varepsilon}(\bm{s})] - \bm{\delta}\bar{\bm{p}}.
\label{eq:app_delta_bar_eps_def}
\end{equation}
\end{enumerate}
Combining these four contributions yields the exact decomposition
\begin{equation}
\bm{K} = \bm{\Sigma} + \bm{\Sigma}_{\bm{p}\bm{\varepsilon}} + \bar{\bm{p}}\bm{\delta}\bar{\bm{\varepsilon}}^\top.
\label{eq:app_K_exact_decomp}
\end{equation}
In practice, the sample mean $\hat{\bar{\bm{p}}}$ is evaluated directly as the sample average of $\hat{\bm{p}}(\bm{s})$. That choice enforces $\bm{\delta}\bar{\bm{p}} = \bar{\bm{\varepsilon}}$ by construction, so the discrepancy vector vanishes ($\bm{\delta}\bar{\bm{\varepsilon}} = \bm{0}$).

\subsubsection{Neumann Series Inversion and Cancellation of Leading-Order Bias}
\label{subsubsec:app_neumann_inversion}

Because $\bm{\Sigma} \succ 0$ and $\|\bm{\Sigma}^{-1} \bm{E}_\Sigma\| \le \epsilon < 1$, the inverse estimated covariance matrix admits the convergent Neumann expansion
\begin{equation}
\hat{\bm{\Sigma}}^{-1} = (\bm{\Sigma} + \bm{E}_\Sigma)^{-1} = \bm{\Sigma}^{-1} - \bm{\Sigma}^{-1} \bm{E}_\Sigma \bm{\Sigma}^{-1} + \mathcal{O}(\epsilon^2).
\label{eq:app_sigma_neumann}
\end{equation}
Multiplying $\bm{K}$ in Eq.~\eqref{eq:app_K_exact_decomp} on the right by Eq.~\eqref{eq:app_sigma_neumann} gives
\begin{align}
\bm{K} \hat{\bm{\Sigma}}^{-1} &= \big( \bm{\Sigma} + \bm{\Sigma}_{\bm{p}\bm{\varepsilon}} + \bar{\bm{p}}\bm{\delta}\bar{\bm{\varepsilon}}^\top \big) \big( \bm{\Sigma}^{-1} - \bm{\Sigma}^{-1} \bm{E}_\Sigma \bm{\Sigma}^{-1} + \mathcal{O}(\epsilon^2) \big) \nonumber \\
&= \bm{I} - (\bm{E}_\Sigma - \bm{\Sigma}_{\bm{p}\bm{\varepsilon}}) \bm{\Sigma}^{-1} + \bar{\bm{p}}\bm{\delta}\bar{\bm{\varepsilon}}^\top \bm{\Sigma}^{-1} + \mathcal{O}(\epsilon^2).
\label{eq:app_K_sigma_inv_product}
\end{align}
Applying Eq.~\eqref{eq:app_K_sigma_inv_product} to the estimated mean $\hat{\bar{\bm{p}}} = (\bm{I} + \bm{\Delta}_p) \bar{\bm{p}}$ yields
\begin{align}
\bm{K} \hat{\bm{\Sigma}}^{-1} \hat{\bar{\bm{p}}} &= \Big[ \bm{I} - (\bm{E}_\Sigma - \bm{\Sigma}_{\bm{p}\bm{\varepsilon}}) \bm{\Sigma}^{-1} + \bar{\bm{p}}\bm{\delta}\bar{\bm{\varepsilon}}^\top \bm{\Sigma}^{-1} \Big] (\bar{\bm{p}} + \bm{\Delta}_p \bar{\bm{p}}) + \mathcal{O}(\epsilon^2) \nonumber \\
&= \bar{\bm{p}} - (\bm{E}_\Sigma - \bm{\Sigma}_{\bm{p}\bm{\varepsilon}}) \bm{\Sigma}^{-1} \bar{\bm{p}} + \bm{\Delta}_p \bar{\bm{p}} + \bar{\bm{p}} \big( \bm{\delta}\bar{\bm{\varepsilon}}^\top \bm{\Sigma}^{-1} \bar{\bm{p}} \big) + \mathcal{O}(\epsilon^2).
\label{eq:app_K_sigma_p_expanded}
\end{align}
Substituting Eq.~\eqref{eq:app_K_sigma_p_expanded} into Eq.~\eqref{eq:app_pres_compact_K} gives the residual error vector
\begin{align}
\bm{p}_{\mathrm{res}} &= \bar{\bm{p}} - \bm{K} \hat{\bm{\Sigma}}^{-1} \hat{\bar{\bm{p}}} \nonumber \\
&= (\bar{\bm{p}} - \bar{\bm{p}}) + (\bm{E}_\Sigma - \bm{\Sigma}_{\bm{p}\bm{\varepsilon}}) \bm{\Sigma}^{-1} \bar{\bm{p}} - \bm{\Delta}_p \bar{\bm{p}} - \bar{\bm{p}} \big( \bm{\delta}\bar{\bm{\varepsilon}}^\top \bm{\Sigma}^{-1} \bar{\bm{p}} \big) + \mathcal{O}(\epsilon^2).
\label{eq:app_pres_cancellation_step}
\end{align}
The zeroth-order baseline error vector cancels identically,
\begin{equation}
\bar{\bm{p}} - \bar{\bm{p}} = \bm{0}.
\label{eq:app_leading_cancellation_zero}
\end{equation}
For self-consistent sample moments ($\bm{\delta}\bar{\bm{\varepsilon}} = \bm{0}$), the rank-one term $\bar{\bm{p}}(\bm{\delta}\bar{\bm{\varepsilon}}^\top \bm{\Sigma}^{-1} \bar{\bm{p}})$ also vanishes. Factoring out $\bar{\bm{p}}$ defines the $M \times M$ calibration error-transfer matrix
\begin{equation}
\bm{E} \equiv (\bm{E}_\Sigma - \bm{\Sigma}_{\bm{p}\bm{\varepsilon}}) \bm{\Sigma}^{-1} - \bm{\Delta}_p.
\label{eq:app_E_matrix_def}
\end{equation}
This definition yields the leading-order residual error relation
\begin{equation}
\bm{p}_{\mathrm{res}} = \bm{E} \bar{\bm{p}} + \mathcal{O}(\epsilon^2).
\label{eq:app_optimal_taylor_leading_order_recovered}
\end{equation}
\eqalias{eq:optimal_taylor_leading_order}%
Relative estimation errors of scale $\epsilon$ therefore leave a first-order bias proportional to $\bar{\bm{p}}$ with prefactor $\|\bm{E}\| = \mathcal{O}(\epsilon)$. When all estimated moments match the true distribution ($\epsilon = 0$), $\bm{E}$ vanishes and $\bm{p}_{\mathrm{res}} = \bm{0}$.

\subsubsection{Scalar Reduction for Single-Error Channels (\texorpdfstring{$M = 1$}{M = 1}) and Multivariate Taylor Remainder Bounds}
\label{subsec:taylor_expansion_derivations}

For a single logical error channel ($M = 1$), the mean vector reduces to the scalar error rate $p_L \equiv \EE_{\bm{s}}[p_L(\bm{s})]$ and the covariance matrix reduces to the variance $D \equiv \Var_{\bm{s}}[p_L(\bm{s})]$. We parameterize the perturbed estimators $(\hat{p}_L(\bm{s}), \hat{p}_L, \hat{D})$ by relative shifts around their true values,
\begin{align}
\hat{p}_L(\bm{s}) &= p_L(\bm{s})\big(1 + \varepsilon(\bm{s})\big), \label{eq:pert_pLs} \\
\hat{p}_L &= p_L (1 + \varepsilon_L), \label{eq:pert_pL} \\
\hat{D} &= D (1 + \varepsilon_D).
\label{eq:pert_D}
\end{align}
Here $\varepsilon(\bm{s})$ is the relative posterior shift (corresponding to the additive error $\varepsilon_{\mathrm{add}}(\bm{s}) = p_L(\bm{s})\varepsilon(\bm{s})$), while $\varepsilon_L, \varepsilon_D \in \RR$ are the relative errors in the mean and variance ($E_\Sigma = \varepsilon_D D$ and $\Delta_p = \varepsilon_L$). We also define the error-weighted second-moment perturbation
\begin{equation}
\bar{\varepsilon}^{(2)} \equiv \frac{1}{D} \EE_{\bm{s}}\big[ p_L(\bm{s})^2 \varepsilon(\bm{s}) \big].
\label{eq:eps_bar_2_def_sec}
\end{equation}
In the fault-tolerant regime ($p_L \ll 1$), the scalar cross-covariance $\Sigma_{p\varepsilon} \equiv \EE_{\bm{s}}[(p_L(\bm{s}) - p_L) p_L(\bm{s})\varepsilon(\bm{s})]$ expands as $\Sigma_{p\varepsilon} = D \bar{\varepsilon}^{(2)} + \mathcal{O}(\epsilon p_L^2)$. Substituting the relations $E_\Sigma = \varepsilon_D D$ and $\Sigma_{p\varepsilon} = D \bar{\varepsilon}^{(2)}$ along with $\Sigma^{-1} = D^{-1}$ and $\Delta_p = \varepsilon_L$ into Eq.~\eqref{eq:app_E_matrix_def} reduces the error-transfer matrix to the scalar factor
\begin{equation}
E = (E_\Sigma - \Sigma_{p\varepsilon}) \Sigma^{-1} - \Delta_p = (\varepsilon_D D - D \bar{\varepsilon}^{(2)}) D^{-1} - \varepsilon_L = \varepsilon_D - \varepsilon_L - \bar{\varepsilon}^{(2)}.
\label{eq:app_scalar_E_evaluation}
\end{equation}

To bound the finite-perturbation Taylor remainder for $M = 1$, we write the calibrated filter as
\begin{equation}
\hat{w}^*(\bm{s}) = 1 - \frac{\hat{p}_L}{\hat{D}}\big(\hat{p}_L(\bm{s}) - \hat{p}_L\big) = 1 - \hat{\beta}\big(\hat{p}_L(\bm{s}) - \hat{p}_L\big),
\label{eq:w_hat_star_def}
\end{equation}
where $\hat{\beta} \equiv \hat{p}_L / \hat{D}$ is the estimated slope. Averaging $\hat{w}^*(\bm{s}) p_L(\bm{s})$ over $\Omega$ expresses the residual logical error rate as
\begin{equation}
p_{\mathrm{res}} \equiv \EE_{\bm{s}}\big[ \hat{w}^*(\bm{s}) p_L(\bm{s}) \big] = p_L - \hat{\beta} \, \widehat{\Cov},
\label{eq:p_res_cov_rep}
\end{equation}
where the estimated-to-true cross-covariance is
\begin{equation}
\widehat{\Cov} \equiv \EE_{\bm{s}}\big[ (\hat{p}_L(\bm{s}) - \hat{p}_L) p_L(\bm{s}) \big] = \EE_{\bm{s}}\big[ \hat{p}_L(\bm{s}) p_L(\bm{s}) \big] - \hat{p}_L p_L.
\label{eq:cross_cov_def_sec}
\end{equation}

\begin{proposition}[Multivariate Taylor Expansion of Residual Logical Error]
\label{thm:taylor_expansion_full}
Let the posterior variance satisfy $D \ge x p_L$ with variance-to-mean ratio $x \in (0, 1]$, and let the relative calibration errors obey $\varepsilon_{\max} \equiv \max(\|\varepsilon\|_\infty, |\varepsilon_L|, |\varepsilon_D|) \le 1/2$. Then the residual logical error under $\hat{w}^*(\bm{s})$ admits the expansion
\begin{equation}
p_{\mathrm{res}} = p_L (\varepsilon_D - \varepsilon_L - \bar{\varepsilon}^{(2)}) + \varepsilon_L \frac{p_L^3}{D} + R_2,
\label{eq:taylor_exact_expansion_sec}
\end{equation}
where the quadratic remainder $R_2$ obeys the bound
\begin{equation}
|R_2| \le C p_L \varepsilon_{\max}^2,
\label{eq:R2_bound_sec}
\end{equation}
for a numerical constant $C > 0$. In the below-threshold regime where $D = x p_L$, the subleading term scales as $\varepsilon_L p_L^3 / D = \varepsilon_L p_L^2 / x = \mathcal{O}(\varepsilon_{\max} p_L^2)$, giving
\begin{equation}
p_{\mathrm{res}} = p_L (\varepsilon_D - \varepsilon_L - \bar{\varepsilon}^{(2)}) + \mathcal{O}\!\left(p_L \varepsilon_{\max}^2, \varepsilon_{\max} \frac{p_L^3}{D}\right).
\label{eq:taylor_leading_order_sec}
\end{equation}
\end{proposition}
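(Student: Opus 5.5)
The plan is to avoid a perturbative Neumann expansion and instead compute $p_{\mathrm{res}}$ in closed form from the representation in Eq.~\eqref{eq:p_res_cov_rep}. Once we have that closed form, we Taylor-expand a single explicit rational function of the three small parameters. The key observation is that in the scalar case the estimated-to-true cross-covariance $\widehat{\Cov}$ is \emph{exactly} affine in the perturbations. Substituting Eqs.~\eqref{eq:pert_pLs}--\eqref{eq:pert_pL} into Eq.~\eqref{eq:cross_cov_def_sec} and using $\EE_{\bm{s}}[p_L(\bm{s})^2] = D + p_L^2$ together with the definition of $\bar{\varepsilon}^{(2)}$ in Eq.~\eqref{eq:eps_bar_2_def_sec}, we will verify
\begin{equation*}
\widehat{\Cov} = D\big(1 + \bar{\varepsilon}^{(2)}\big) - \varepsilon_L\, p_L^2 .
\end{equation*}
With $\hat{\beta} = p_L(1+\varepsilon_L)/[D(1+\varepsilon_D)]$, this gives the exact identity
\begin{equation*}
p_{\mathrm{res}} = p_L\Big[1 - f\big(\varepsilon_L,\bar{\varepsilon}^{(2)},\varepsilon_D\big)\Big] + \frac{1+\varepsilon_L}{1+\varepsilon_D}\,\varepsilon_L\,\frac{p_L^3}{D},
\qquad f(a,b,c) \equiv \frac{(1+a)(1+b)}{1+c}.
\end{equation*}

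Next, we expand $f$ to first order around the origin: $f = 1 + a + b - c + R_f(a,b,c)$. The first bracket then produces the leading term $p_L(\varepsilon_D - \varepsilon_L - \bar{\varepsilon}^{(2)})$. The second term equals $\varepsilon_L p_L^3/D$ plus the correction $\varepsilon_L p_L^3 D^{-1}\big[(1+\varepsilon_L)/(1+\varepsilon_D) - 1\big]$, which is $\mathcal{O}(\varepsilon_{\max}^2 p_L^3/D)$. We therefore define
\begin{equation*}
R_2 \equiv -p_L R_f + \varepsilon_L \frac{p_L^3}{D}\left[\frac{1+\varepsilon_L}{1+\varepsilon_D} - 1\right],
\end{equation*}
which reproduces Eq.~\eqref{eq:taylor_exact_expansion_sec} exactly.

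The main obstacle is bounding $R_2$ uniformly, and the first issue is that $\bar{\varepsilon}^{(2)}$ is not directly one of the controlled quantities. I would bound it by
\begin{equation*}
|\bar{\varepsilon}^{(2)}| \le \|\varepsilon\|_\infty \,\EE_{\bm{s}}[p_L(\bm{s})^2]/D = \|\varepsilon\|_\infty\big(1 + p_L^2/D\big) \le \varepsilon_{\max}(1 + p_L/x),
\end{equation*}
using $D \ge x p_L$. Since $p_L \le 1$, this is at most $\varepsilon_{\max}(1+1/x)$. All arguments of $f$ therefore lie in a box of radius $\mathcal{O}(\varepsilon_{\max}/x)$, on which $1 + c \ge 1/2$ because $|\varepsilon_D| \le 1/2$. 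On such a box the second derivatives of $f$ are bounded by an absolute constant; explicitly, $\partial_c^2 f = 2(1+a)(1+b)(1+c)^{-3}$ and similar. The Lagrange form of the remainder then gives $|R_f| \le C' \varepsilon_{\max}^2$.

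Similarly, $|(1+\varepsilon_L)/(1+\varepsilon_D) - 1| \le 2|\varepsilon_L - \varepsilon_D| \le 4\varepsilon_{\max}$. Combined with $p_L^3/D \le p_L^2/x \le p_L/x$, the second piece of $R_2$ is at most $4 p_L \varepsilon_{\max}^2/x$. This yields $|R_2| \le C p_L \varepsilon_{\max}^2$, with $C$ depending only on the fixed ratio $x$, which is treated as a numerical constant. Finally, substituting $D = x p_L$ gives $\varepsilon_L p_L^3/D = \varepsilon_L p_L^2/x$, and hence Eq.~\eqref{eq:taylor_leading_order_sec}. As a consistency check, the leading coefficient should match the scalar error-transfer factor $E$ of Eq.~\eqref{eq:app_scalar_E_evaluation}.
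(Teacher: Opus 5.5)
Your proposal is correct and is essentially the paper's argument: both rest on the exact affine identity $\widehat{\Cov} = D(1+\bar\varepsilon^{(2)}) - \varepsilon_L p_L^2$. Your multivariate Lagrange remainder for the closed-form rational expression is the same as the paper's one-parameter interpolation $t \mapsto p_{\mathrm{res}}(t)$ with the bound $\tfrac12 \sup_t |p_{\mathrm{res}}''(t)|$.

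Your explicit bound $|\bar\varepsilon^{(2)}| \le \varepsilon_{\max}(1 + p_L/x)$ supplies a step the paper leaves implicit, and it correctly shows that $C$ must depend on $x$. The paper's phrase ``numerical constant'' glosses over this dependence. One small correction: the second derivatives of $f$ carry a factor $1+\bar\varepsilon^{(2)}$, so they are bounded by an $x$-dependent constant, not an absolute one. Indeed, for $\varepsilon(\bm{s}) = \varepsilon_D = 0$ the remainder is exactly $\varepsilon_L^2 p_L^3/D$.
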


\begin{proof}
We introduce an interpolation parameter $t \in [0, 1]$ and define the one-parameter family of perturbed quantities
\begin{equation}
\hat{p}_L(t) = p_L (1 + t \varepsilon_L), \qquad \hat{D}(t) = D (1 + t \varepsilon_D), \qquad \hat{\beta}(t) = \frac{p_L (1 + t \varepsilon_L)}{D (1 + t \varepsilon_D)}.
\label{eq:perturbed_family_t}
\end{equation}
Expanding the slope $\hat{\beta}(t)$ in powers of $t$ around $t = 0$ gives
\begin{equation}
\hat{\beta}(t) = \frac{p_L}{D} (1 + t \varepsilon_L) \big(1 - t \varepsilon_D + t^2 \varepsilon_D^2 + \mathcal{O}(t^3)\big) = \frac{p_L}{D} \big[ 1 + t (\varepsilon_L - \varepsilon_D) - t^2 \varepsilon_D (\varepsilon_L - \varepsilon_D) + \mathcal{O}(t^3) \big].
\label{eq:beta_expansion_t}
\end{equation}
For $\hat{p}_L(\bm{s}, t) \equiv p_L(\bm{s})(1 + t\varepsilon(\bm{s}))$, the second-moment expectation with $p_L(\bm{s})$ is
\begin{align}
\EE_{\bm{s}}[\hat{p}_L(\bm{s}, t) p_L(\bm{s})] &= \EE_{\bm{s}}\big[ p_L(\bm{s})\big(1 + t \varepsilon(\bm{s})\big) p_L(\bm{s}) \big] = \EE_{\bm{s}}[p_L(\bm{s})^2] + t \EE_{\bm{s}}[p_L(\bm{s})^2 \varepsilon(\bm{s})] \nonumber \\
&= (D + p_L^2) + t D \bar{\varepsilon}^{(2)}.
\label{eq:cross_moment_eval}
\end{align}
Meanwhile, the product of the estimated and true means is
\begin{equation}
\hat{p}_L(t) p_L = p_L^2 (1 + t \varepsilon_L) = p_L^2 + t p_L^2 \varepsilon_L.
\label{eq:mean_prod_eval}
\end{equation}
Subtracting Eq.~\eqref{eq:mean_prod_eval} from Eq.~\eqref{eq:cross_moment_eval} gives the interpolated cross-covariance
\begin{equation}
\widehat{\Cov}(t) = (D + p_L^2 + t D \bar{\varepsilon}^{(2)}) - (p_L^2 + t p_L^2 \varepsilon_L) = D (1 + t \bar{\varepsilon}^{(2)}) - t p_L^2 \varepsilon_L.
\label{eq:cov_t_exact_sec}
\end{equation}
Substituting $\hat{\beta}(t)$ and $\widehat{\Cov}(t)$ into $p_{\mathrm{res}}(t) = p_L - \hat{\beta}(t)\widehat{\Cov}(t)$ yields
\begin{align}
p_{\mathrm{res}}(t) &= p_L - \frac{p_L(1 + t \varepsilon_L)}{D (1 + t \varepsilon_D)} \big[ D (1 + t \bar{\varepsilon}^{(2)}) - t p_L^2 \varepsilon_L \big] \nonumber \\
&= p_L - \frac{p_L}{1 + t \varepsilon_D} (1 + t \varepsilon_L) \left[ 1 + t \bar{\varepsilon}^{(2)} - t \frac{p_L^2}{D} \varepsilon_L \right] \nonumber \\
&= p_L - \frac{p_L}{1 + t \varepsilon_D} \left[ 1 + t \left(\varepsilon_L + \bar{\varepsilon}^{(2)} - \frac{p_L^2}{D} \varepsilon_L\right) + t^2 \varepsilon_L \left(\bar{\varepsilon}^{(2)} - \frac{p_L^2}{D} \varepsilon_L\right) \right].
\label{eq:pres_t_derivation}
\end{align}
Expanding $(1 + t \varepsilon_D)^{-1} = 1 - t \varepsilon_D + t^2 \varepsilon_D^2 + \mathcal{O}(t^3)$ and collecting terms linear in $t$ gives
\begin{align}
p_{\mathrm{res}}(t) &= p_L - p_L (1 - t \varepsilon_D) \left[ 1 + t \left(\varepsilon_L + \bar{\varepsilon}^{(2)} - \frac{p_L^2}{D} \varepsilon_L\right) \right] + R_2(t) \nonumber \\
&= t \left[ p_L (\varepsilon_D - \varepsilon_L - \bar{\varepsilon}^{(2)}) + \varepsilon_L \frac{p_L^3}{D} \right] + R_2(t).
\label{eq:taylor_assembly_sec}
\end{align}
Evaluating Eq.~\eqref{eq:taylor_assembly_sec} at $t = 1$ yields Eq.~\eqref{eq:taylor_exact_expansion_sec}. Taylor's theorem bounds the remainder $R_2 = R_2(1)$ by $\frac{1}{2}\sup_{t \in [0, 1]} |p_{\mathrm{res}}''(t)|$. Because $|\varepsilon_D| \le 1/2$ keeps the denominator $1 + t\varepsilon_D \ge 1/2$ bounded away from zero on $[0, 1]$, the remainder obeys $|R_2| \le C p_L \varepsilon_{\max}^2$, proving Eqs.~\eqref{eq:R2_bound_sec}--\eqref{eq:taylor_leading_order_sec}.
\end{proof}

Under ML tracking with a single decoder, Corollary~\ref{cor:bhatia_davis_ceiling} enforces the Bhatia--Davis ceiling
\begin{equation}
D \le p_L \left(\frac{1}{2} - p_L\right),
\label{eq:bhatia_davis_sec}
\end{equation}
so $x = D / p_L \in (0, 1/2]$. In both the single-decoder ($x \le 1/2$) and dual-decoder ($x \le 1$) regimes, the term $\varepsilon_L p_L^3 / D = \varepsilon_L p_L^2 / x$ is quadratic in $p_L$, leaving $p_L(\varepsilon_D - \varepsilon_L - \bar{\varepsilon}^{(2)})$ as the sole first-order contribution in $p_L$.

\subsection{Scale Invariance Under Global Multiplicative Miscalibration}
\label{subsec:scale_invariance_derivations}

Offline decoders can experience global scale drift in which all estimated error probabilities are multiplied by a common factor $c > 0$. We now show that when the rescaled mean $\hat{p}_L$ and variance $\hat{D}$ are computed self-consistently from the rescaled posteriors $\hat{p}_L(\bm{s}) = c\,p_L(\bm{s})$, the optimal filter $\hat{w}^*(\bm{s})$ is invariant for all $c > 0$ and cancels logical bias without perturbative remainder.

\begin{proposition}[Scale Invariance of Optimal VEC]
\label{thm:scale_invariance_theorem}
Let $w^*(\bm{s})$ be the optimal affine filter for true moments $(p_L(\bm{s}), p_L, D)$. Suppose an offline decoder outputs uniformly rescaled posteriors with scale factor $c > 0$,
\begin{equation}
\hat{p}_L(\bm{s}) = c \cdot p_L(\bm{s}) \qquad \text{for all } \bm{s} \in \Omega.
\label{eq:scale_transformation_sec}
\end{equation}
Suppose the rescaled mean $\hat{p}_L \equiv \EE_{\bm{s}}[\hat{p}_L(\bm{s})]$ and variance $\hat{D} \equiv \Var_{\bm{s}}[\hat{p}_L(\bm{s})]$ are evaluated from $\hat{p}_L(\bm{s})$. The resulting weights then match the exact weights pointwise,
\begin{equation}
\hat{w}^*(\bm{s}) = w^*(\bm{s}) \qquad \text{for all } \bm{s} \in \Omega \text{ and all } c > 0.
\label{eq:weights_identical_sec}
\end{equation}
The residual logical error under global multiplicative miscalibration therefore vanishes to all orders in $c - 1$,
\begin{equation}
p_{\mathrm{res}} = 0.
\label{eq:pres_identically_zero_sec}
\end{equation}
\end{proposition}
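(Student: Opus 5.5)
The plan is to compute the rescaled moments directly and check that the scale factor $c$ cancels in the filter of Eq.~\eqref{eq:w_hat_star_def}. No perturbative argument is needed.

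\textbf{Step 1: rescaled moments.} By linearity of expectation, the self-consistently estimated mean is $\hat{p}_L = \EE_{\bm{s}}[c\,p_L(\bm{s})] = c\,p_L$. By quadratic homogeneity of the variance, $\hat{D} = \Var_{\bm{s}}[c\,p_L(\bm{s})] = c^2 D$. Hence the estimated slope is
\begin{equation*}
\hat{\beta} = \frac{\hat{p}_L}{\hat{D}} = \frac{c\,p_L}{c^2 D} = \frac{\beta}{c},
\end{equation*}
where $\beta \equiv p_L/D$ is the true slope. This step needs $D > 0$, which is assumed throughout Section~\ref{subsec:taylor_expansion_derivations}, and $c > 0$, so that nothing degenerates.

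\textbf{Step 2: pointwise equality of weights.} The centered estimated posterior is $\hat{p}_L(\bm{s}) - \hat{p}_L = c\,(p_L(\bm{s}) - p_L)$. Multiplying by $\hat{\beta} = \beta/c$, the factors of $c$ cancel exactly, so
\begin{equation*}
\hat{w}^*(\bm{s}) = 1 - \beta\,(p_L(\bm{s}) - p_L) = w^*(\bm{s})
\end{equation*}
for every $\bm{s}$ and every $c > 0$. This establishes Eq.~\eqref{eq:weights_identical_sec}.

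\textbf{Step 3: vanishing residual.} Since the weights coincide pointwise with the exact ones, $p_{\mathrm{res}} = \EE_{\bm{s}}[w^*(\bm{s})\,p_L(\bm{s})]$. Theorem~\ref{thm:variational_cs_bound} (via its KKT derivation) shows that $w^*$ satisfies the error-cancellation constraint, so this expectation is zero. As a consistency check, the perturbative formula of Proposition~\ref{thm:taylor_expansion_full} should give zero at first order. Here $\varepsilon(\bm{s}) = \varepsilon_L = c-1$ and $\varepsilon_D = c^2 - 1$, and
\begin{equation*}
\bar{\varepsilon}^{(2)} = (c-1)(D + p_L^2)/D .
\end{equation*}
Substituting into $p_L(\varepsilon_D - \varepsilon_L - \bar{\varepsilon}^{(2)}) + \varepsilon_L p_L^3/D$, the linear-in-$(c-1)$ terms cancel, as they must. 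The exact argument of Steps 1--3, however, is what gives vanishing to all orders.

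The main obstacle is conceptual rather than computational: the invariance relies on $\hat{p}_L$ and $\hat{D}$ being computed from the same rescaled posteriors. Under independently estimated moments the cancellation in Step 2 fails, and Proposition~\ref{thm:taylor_expansion_full} governs instead. I would state this hypothesis explicitly. I would also remark that the same argument extends to $M>1$ with $\hat{\bm{p}}(\bm{s}) = c\,\bm{p}(\bm{s})$, since then $\hat{\bm{\Sigma}}^{-1} = c^{-2}\bm{\Sigma}^{-1}$.
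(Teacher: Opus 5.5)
Your proof is correct and matches the paper's argument step for step: compute $\hat{p}_L = c\,p_L$ and $\hat{D} = c^2 D$, cancel $c$ in the filter to get $\hat{w}^* = w^*$ pointwise, then invoke the error-cancellation identity $\EE_{\bm{s}}[w^*(\bm{s})\,p_L(\bm{s})] = 0$. Your first-order consistency check reproduces the paper's Corollary~\ref{cor:perturbative_consistency_sec}, and your $M>1$ extension is also valid, since $\hat{\bm{\Sigma}}^{-1} = c^{-2}\bm{\Sigma}^{-1}$ cancels the two factors of $c$.
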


\begin{proof}
Under the rescaling in Eq.~\eqref{eq:scale_transformation_sec}, linearity of the expectation gives the rescaled mean
\begin{equation}
\hat{p}_L = \EE_{\bm{s}}[\hat{p}_L(\bm{s})] = \EE_{\bm{s}}[c \cdot p_L(\bm{s})] = c \cdot \EE_{\bm{s}}[p_L(\bm{s})] = c \cdot p_L,
\label{eq:scale_mean_sec}
\end{equation}
while quadratic homogeneity of the variance gives
\begin{equation}
\hat{D} = \Var_{\bm{s}}[\hat{p}_L(\bm{s})] = \Var_{\bm{s}}[c \cdot p_L(\bm{s})] = c^2 \cdot \Var_{\bm{s}}[p_L(\bm{s})] = c^2 \cdot D.
\label{eq:scale_var_sec}
\end{equation}
Substituting Eqs.~\eqref{eq:scale_mean_sec}--\eqref{eq:scale_var_sec} into Eq.~\eqref{eq:w_hat_star_def} cancels the factor $c^2$ between numerator and denominator,
\begin{align}
\hat{w}^*(\bm{s}) &= 1 - \frac{\hat{p}_L}{\hat{D}}\big(\hat{p}_L(\bm{s}) - \hat{p}_L\big) \nonumber \\
&= 1 - \frac{c \cdot p_L}{c^2 \cdot D}\big(c \cdot p_L(\bm{s}) - c \cdot p_L\big) \nonumber \\
&= 1 - \frac{p_L}{D}\big(p_L(\bm{s}) - p_L\big) = w^*(\bm{s}).
\label{eq:scale_cancellation_algebra}
\end{align}
Because $\hat{w}^*(\bm{s}) = w^*(\bm{s})$ for every $\bm{s} \in \Omega$, the error-cancellation identity $\EE_{\bm{s}}[w^*(\bm{s}) p_L(\bm{s})] = 0$ yields
\begin{equation}
p_{\mathrm{res}} = \EE_{\bm{s}}[\hat{w}^*(\bm{s}) p_L(\bm{s})] = \EE_{\bm{s}}[w^*(\bm{s}) p_L(\bm{s})] = 0,
\label{eq:pres_zero_exact_sec}
\end{equation}
establishing Eqs.~\eqref{eq:weights_identical_sec}--\eqref{eq:pres_identically_zero_sec}.
\end{proof}

\begin{corollary}[Perturbative Consistency with Taylor Expansion]
\label{cor:perturbative_consistency_sec}
Let $c = 1 + \varepsilon$ with $|\varepsilon| \ll 1$ in Eq.~\eqref{eq:scale_transformation_sec}. Then the first-order terms in Eq.~\eqref{eq:taylor_exact_expansion_sec} sum to zero.
\end{corollary}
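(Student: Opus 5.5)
We identify the relative perturbations induced by the rescaling $c = 1+\varepsilon$ and substitute them into the first-order part of Eq.~\eqref{eq:taylor_exact_expansion_sec}, namely $p_L(\varepsilon_D - \varepsilon_L - \bar{\varepsilon}^{(2)}) + \varepsilon_L p_L^3/D$. Comparing Eq.~\eqref{eq:scale_transformation_sec} with the parameterization in Eqs.~\eqref{eq:pert_pLs}--\eqref{eq:pert_D}, the posterior shift is uniform, $\varepsilon(\bm{s}) \equiv \varepsilon$. By Eq.~\eqref{eq:scale_mean_sec} the mean shift is $\varepsilon_L = \varepsilon$. By Eq.~\eqref{eq:scale_var_sec} the variance shift is $1+\varepsilon_D = (1+\varepsilon)^2$, so $\varepsilon_D = 2\varepsilon + \varepsilon^2$, that is, $2\varepsilon$ at first order. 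All three relative errors are $\mathcal{O}(\varepsilon)$, so the hypothesis $\varepsilon_{\max}\le 1/2$ of Proposition~\ref{thm:taylor_expansion_full} holds for small $\varepsilon$.

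The key step is evaluating $\bar{\varepsilon}^{(2)}$ from Eq.~\eqref{eq:eps_bar_2_def_sec} with constant $\varepsilon(\bm{s})$. This gives $\bar{\varepsilon}^{(2)} = \varepsilon\,\EE_{\bm{s}}[p_L(\bm{s})^2]/D = \varepsilon(D+p_L^2)/D = \varepsilon(1 + p_L^2/D)$. Substituting everything into the first-order part yields
\begin{equation*}
p_L\Big(2\varepsilon - \varepsilon - \varepsilon - \varepsilon\tfrac{p_L^2}{D}\Big) + \varepsilon\frac{p_L^3}{D} = -\varepsilon\frac{p_L^3}{D} + \varepsilon\frac{p_L^3}{D} = 0 .
\end{equation*}

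The main subtlety is that the term $\varepsilon_L p_L^3/D$ is labeled subleading in Eq.~\eqref{eq:taylor_leading_order_sec}. Exact cancellation requires keeping it together with the $p_L^2/D$ piece of $\bar{\varepsilon}^{(2)}$. The approximation $\Sigma_{p\varepsilon}\approx D\bar{\varepsilon}^{(2)}$ used in the scalar reduction should not be invoked. Instead we work with the exact expansion of Eq.~\eqref{eq:taylor_exact_expansion_sec}, whose linear coefficient in $t$ was derived exactly from Eqs.~\eqref{eq:cross_moment_eval}--\eqref{eq:cov_t_exact_sec}.

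As a consistency check, the vanishing of the first-order terms agrees with Proposition~\ref{thm:scale_invariance_theorem}. There $p_{\mathrm{res}}(c)\equiv 0$ for all $c$, so its derivative at $c=1$ is zero. The remaining $\varepsilon^2$ pieces, such as the $\varepsilon^2$ in $\varepsilon_D$, fall into $R_2$ and must also cancel, but the corollary only asserts the first-order statement.
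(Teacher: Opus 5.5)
Your proposal is correct and matches the paper's proof essentially step for step. You set $\varepsilon_L=\varepsilon$, $\varepsilon_D=2\varepsilon+\varepsilon^2$ and $\varepsilon(\bm{s})\equiv\varepsilon$, which give $\bar{\varepsilon}^{(2)}=\varepsilon(1+p_L^2/D)$, and then show that the bracketed first-order term $-\varepsilon p_L^3/D$ is cancelled exactly by the retained term $\varepsilon_L p_L^3/D$; your added remarks on keeping that nominally subleading term and on the cross-check against Proposition~\ref{thm:scale_invariance_theorem} are consistent with the paper.
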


\begin{proof}
For $c = 1 + \varepsilon$, Eqs.~\eqref{eq:scale_mean_sec}--\eqref{eq:scale_var_sec} give the relative shifts
\begin{equation}
\varepsilon_L = \varepsilon, \qquad \varepsilon_D = (1 + \varepsilon)^2 - 1 = 2\varepsilon + \varepsilon^2.
\label{eq:pert_eval_c}
\end{equation}
Because the syndrome perturbation is constant ($\varepsilon(\bm{s}) = \varepsilon$), the second-moment parameter in Eq.~\eqref{eq:eps_bar_2_def_sec} equals
\begin{equation}
\bar{\varepsilon}^{(2)} = \frac{1}{D} \EE_{\bm{s}}[p_L(\bm{s})^2 \cdot \varepsilon] = \varepsilon \frac{\EE_{\bm{s}}[p_L(\bm{s})^2]}{D} = \varepsilon \frac{D + p_L^2}{D} = \varepsilon \left(1 + \frac{p_L^2}{D}\right).
\label{eq:epsbar_eval_c}
\end{equation}
Combining Eqs.~\eqref{eq:pert_eval_c}--\eqref{eq:epsbar_eval_c} to first order in $\varepsilon$ gives
\begin{equation}
\varepsilon_D - \varepsilon_L - \bar{\varepsilon}^{(2)} = 2\varepsilon - \varepsilon - \varepsilon \left(1 + \frac{p_L^2}{D}\right) + \mathcal{O}(\varepsilon^2) = -\varepsilon \frac{p_L^2}{D}.
\label{eq:bracket_cancel_eval}
\end{equation}
Multiplying Eq.~\eqref{eq:bracket_cancel_eval} by $p_L$ and adding the subleading term $\varepsilon_L p_L^3 / D = \varepsilon p_L^3 / D$ in Eq.~\eqref{eq:taylor_exact_expansion_sec} yields
\begin{equation}
p_L (\varepsilon_D - \varepsilon_L - \bar{\varepsilon}^{(2)}) + \varepsilon_L \frac{p_L^3}{D} = -\varepsilon \frac{p_L^3}{D} + \varepsilon \frac{p_L^3}{D} = 0,
\label{eq:exact_zero_sum_sec}
\end{equation}
confirming consistency with Proposition~\ref{thm:scale_invariance_theorem} at first order in $\varepsilon$.
\end{proof}

\section{Topological and Spectral Scaling of the Sampling Overhead across Decoders}
\label{sec:appendix_spectral_scaling_mahalanobis}

This section establishes the statistical mechanics and spectral theory governing the classical sampling overhead in virtual error cancellation (VEC).
Subsection~\ref{subsec:app_partition_function_universal_gap} derives the Rayleigh quotient bounds on the syndrome covariance matrix $\bm{\Sigma}$, formulates Boltzmann coset partition functions, and proves the Universal Homological Gap Scaling Theorem (Theorem~\ref{thm:universal_gap_scaling_main}), together with its extension to approximate decoders.
Subsection~\ref{subsec:app_algebraic_topology_syndrome_spaces} analyzes the algebraic topology of linear, symplectic, subsystem gauge, and spacetime hypergraph syndrome spaces, proving why linear parity invariance enforces $\Delta W_{\min} = 1$ whereas Pauli cross-basis overlaps, gauge symmetries, and spacetime hook hyperedges enforce $\Delta W_{\min} = 0$.

\subsection{Spectral Bounds, Boltzmann Partition Functions, and the Universal Gap Scaling Theorem}
\label{subsec:app_partition_function_universal_gap}
\label{subsec:app_variational_rayleigh_eves_law}

As established in Theorem~\ref{thm:variational_cs_bound} (Supplementary Section~\ref{sec:appendix_proof_variational_cs_bound}), an $M$-channel Pauli error model with conditional error vector $\bm{p}(\bm{s}) \in [0, 1]^M$, mean $\bar{\bm{p}} \equiv \EE_{\bm{s}}[\bm{p}(\bm{s})]$, and positive-definite syndrome covariance matrix $\bm{\Sigma} \equiv \Cov_{\bm{s}}(\bm{p}(\bm{s})) \succ 0$ has minimal syndrome-resolved sampling overhead $C_{\mathrm{syn}}^* = 1 + \bar{\bm{p}}^\top \bm{\Sigma}^{-1} \bar{\bm{p}}$.

\begin{proposition}[Rayleigh Quotient Spectral Bounds]
\label{prop:app_rayleigh_bounds}
Let $\sigma_{\max}(\bm{\Sigma}) \equiv \sigma_1 \ge \sigma_2 \ge \dots \ge \sigma_M \equiv \sigma_{\min}(\bm{\Sigma}) > 0$ denote the ordered eigenvalues of $\bm{\Sigma}$ on the active subspace, with orthonormal eigenvectors $\{\bm{v}_j\}_{j=1}^M$ and projections $c_j \equiv \bm{v}_j^\top \bar{\bm{p}}$.
The excess syndrome-resolved sampling overhead satisfies the two-sided spectral inequality
\begin{equation}
\frac{\|\bar{\bm{p}}\|^2}{\sigma_{\max}(\bm{\Sigma})} \le C_{\mathrm{syn}}^* - 1 \le \frac{\|\bar{\bm{p}}\|^2}{\sigma_{\min}(\bm{\Sigma})}.
\label{eq:app_rayleigh_sandwich}
\end{equation}
More generally, suppose that the active logical channels partition into a dominant subspace $\mathcal{L}_{\mathrm{dom}} \subseteq \{1, \dots, M\}$ and a remaining set of higher-weight channels $L \notin \mathcal{L}_{\mathrm{dom}}$.
Let $\bm{\Sigma}_{\mathrm{dom}}$ denote the principal submatrix of $\bm{\Sigma}$ on $\mathcal{L}_{\mathrm{dom}}$, and let $\bar{\bm{p}}_{\mathrm{dom}}$ denote the restricted mean vector satisfying $\|\bar{\bm{p}}_{\mathrm{dom}}\| = \Theta(\|\bar{\bm{p}}\|)$.
Assume that the complementary eigenmode ratios satisfy $c_j^2 / \sigma_j = \mathcal{O}(\|\bar{\bm{p}}\|^2 / \sigma_{\min}(\bm{\Sigma}_{\mathrm{dom}}))$.
When $\bar{\bm{p}}_{\mathrm{dom}}$ has a non-vanishing component along the minimal-variance eigenspace of $\bm{\Sigma}_{\mathrm{dom}}$, the dominant minimum eigenvalue governs the asymptotic scaling
\begin{equation}
C_{\mathrm{syn}}^* - 1 = \Theta\left(\frac{\|\bar{\bm{p}}\|^2}{\sigma_{\min}(\bm{\Sigma}_{\mathrm{dom}})}\right).
\label{eq:app_rayleigh_asymptotic}
\end{equation}
\end{proposition}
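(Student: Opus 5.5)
The plan is to diagonalize $\bm{\Sigma}$ and write the excess overhead from Theorem~\ref{thm:variational_cs_bound} as a weighted sum over eigenmodes. With the spectral decomposition $\bm{\Sigma}^{-1} = \sum_j \sigma_j^{-1}\bm{v}_j\bm{v}_j^\top$ we get
\begin{equation*}
C_{\mathrm{syn}}^* - 1 = \bar{\bm{p}}^\top\bm{\Sigma}^{-1}\bar{\bm{p}} = \sum_{j=1}^M \frac{c_j^2}{\sigma_j}, \qquad \sum_{j=1}^M c_j^2 = \|\bar{\bm{p}}\|^2,
\end{equation*}
where the second identity is Parseval for the orthonormal basis $\{\bm{v}_j\}$. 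The sandwich~\eqref{eq:app_rayleigh_sandwich} then follows from $\sigma_{\min}\le\sigma_j\le\sigma_{\max}$ applied termwise. Equivalently, it is the Rayleigh quotient bound $\lambda_{\min}(\bm{\Sigma}^{-1}) \le \bm{x}^\top\bm{\Sigma}^{-1}\bm{x}/\|\bm{x}\|^2 \le \lambda_{\max}(\bm{\Sigma}^{-1})$ evaluated at $\bm{x}=\bar{\bm{p}}$. This part is routine.

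For the asymptotic statement~\eqref{eq:app_rayleigh_asymptotic}, I would prove the upper and lower $\Theta$-bounds separately.

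\emph{Upper bound.} Split the modal sum into modes living (up to leading order) in the dominant block and the complementary modes. The complementary terms are $\mathcal{O}(\|\bar{\bm{p}}\|^2/\sigma_{\min}(\bm{\Sigma}_{\mathrm{dom}}))$ directly by hypothesis. For the dominant part, I would use the block-inverse (Schur complement) formula: the dominant diagonal block of $\bm{\Sigma}^{-1}$ equals $(\bm{\Sigma}_{\mathrm{dom}} - \bm{B}\bm{\Sigma}_{\mathrm{rest}}^{-1}\bm{B}^\top)^{-1}$. Under the assumed mode separation its norm is $\mathcal{O}(1/\sigma_{\min}(\bm{\Sigma}_{\mathrm{dom}}))$.

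\emph{Lower bound.} Take $\bm{u}$ to be the unit minimal-variance eigenvector of $\bm{\Sigma}_{\mathrm{dom}}$, embedded by zeros in $\RR^M$. By Cauchy--Schwarz in the $\bm{\Sigma}$-geometry,
\begin{equation*}
\bar{\bm{p}}^\top\bm{\Sigma}^{-1}\bar{\bm{p}} \;\ge\; \frac{(\bm{u}^\top\bar{\bm{p}})^2}{\bm{u}^\top\bm{\Sigma}\bm{u}} \;=\; \frac{(\bm{u}^\top\bar{\bm{p}}_{\mathrm{dom}})^2}{\sigma_{\min}(\bm{\Sigma}_{\mathrm{dom}})}.
\end{equation*}
This is exactly the variational form from Proof~1 of Theorem~\ref{thm:variational_cs_bound}, with the test direction $\bm{a}$ restricted to a line. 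The non-vanishing-component hypothesis gives $(\bm{u}^\top\bar{\bm{p}}_{\mathrm{dom}})^2=\Theta(\|\bar{\bm{p}}_{\mathrm{dom}}\|^2)$, and combined with $\|\bar{\bm{p}}_{\mathrm{dom}}\|=\Theta(\|\bar{\bm{p}}\|)$ this yields the lower bound.

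The main obstacle is the upper bound on the dominant block. Small eigenvalues of $\bm{\Sigma}$ could arise from mixing between dominant and higher-weight channels, so $\sigma_{\min}(\bm{\Sigma})$ need not be comparable to $\sigma_{\min}(\bm{\Sigma}_{\mathrm{dom}})$. Those potentially dangerous mixed modes are precisely what the complementary-ratio hypothesis is designed to absorb. I would therefore phrase that step so that every eigenmode not captured by the Schur estimate is charged to the hypothesis $c_j^2/\sigma_j=\mathcal{O}(\|\bar{\bm{p}}\|^2/\sigma_{\min}(\bm{\Sigma}_{\mathrm{dom}}))$. Because $M$ is fixed, summing these finitely many terms preserves the order.
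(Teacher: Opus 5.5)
Your two-sided bound is the paper's own modal-sum argument, and your lower bound is sound. Cauchy--Schwarz with the zero-padded minimal eigenvector $\bm{u}$ of $\bm{\Sigma}_{\mathrm{dom}}$ gives $c_{\mathrm{dom},\min}^2/\sigma_{\min}(\bm{\Sigma}_{\mathrm{dom}})$ in one step. The paper reaches the same bound by passing through $\bar{\bm{p}}_{\mathrm{dom}}^\top \bm{\Sigma}_{\mathrm{dom}}^{-1} \bar{\bm{p}}_{\mathrm{dom}}$ via Schur-complement positivity. If that eigenspace is degenerate, take $\bm{u}$ to be the normalized projection of $\bar{\bm{p}}_{\mathrm{dom}}$ onto it.

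The upper bound has a genuine gap. You claim the dominant diagonal block of $\bm{\Sigma}^{-1}$, namely $(\bm{\Sigma}_{\mathrm{dom}} - \bm{B}\bm{\Sigma}_{\mathrm{rest}}^{-1}\bm{B}^\top)^{-1}$, has norm $\mathcal{O}(1/\sigma_{\min}(\bm{\Sigma}_{\mathrm{dom}}))$, but this relies on a ``mode separation'' the proposition does not assume. The Schur complement is $\preceq \bm{\Sigma}_{\mathrm{dom}}$, so this block is always $\succeq \bm{\Sigma}_{\mathrm{dom}}^{-1}$. It becomes arbitrarily larger when dominant and higher-weight channels are strongly correlated. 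Moreover, adding a coordinate block of $\bm{\Sigma}^{-1}$ to a subset of eigenmode terms is not a decomposition of $\bar{\bm{p}}^\top\bm{\Sigma}^{-1}\bar{\bm{p}}$. The off-diagonal blocks of $\bm{\Sigma}^{-1}$ are dropped, and they can cancel a huge diagonal-block contribution.

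For instance, take $M=2$, $\mathcal{L}_{\mathrm{dom}}=\{1\}$, $\bm{\Sigma} = \bigl(\begin{smallmatrix} 1 & 1-\delta \\ 1-\delta & 1\end{smallmatrix}\bigr)$ and $\bar{\bm{p}} = t\,(1, 1-\delta)^\top = t\,\bm{\Sigma}\bm{e}_1$. Then $\bar{\bm{p}}^\top\bm{\Sigma}^{-1}\bar{\bm{p}} = t^2$ and $c_2^2/\sigma_2 = t^2\delta/2$, so every hypothesis and the conclusion hold. Yet $(\bm{\Sigma}^{-1})_{11} = 1/[\delta(2-\delta)] \to \infty$ as $\delta \to 0$, so your Schur estimate fails.

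Your closing remark about charging dangerous mixed modes to the hypothesis is the right instinct. What is missing is a criterion that certifies the remaining modes as safe, and that criterion is Cauchy interlacing, not a Schur estimate. Set $m \equiv |\mathcal{L}_{\mathrm{dom}}|$. Interlacing for the principal submatrix gives $\sigma_j(\bm{\Sigma}) \ge \sigma_j(\bm{\Sigma}_{\mathrm{dom}}) \ge \sigma_{\min}(\bm{\Sigma}_{\mathrm{dom}})$ for $j \le m$. Hence $\sum_{j \le m} c_j^2/\sigma_j \le \|\bar{\bm{p}}\|^2/\sigma_{\min}(\bm{\Sigma}_{\mathrm{dom}})$ by Parseval. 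The remaining $M-m$ modes are exactly the complementary ones, each controlled by hypothesis. Bounding the modal sum this way closes the argument with no coupling assumption; this is the step the paper's brief appeal to the mode-sum upper bound leaves implicit.
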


\begin{proof}
Let $\bm{\Sigma} = \sum_{j=1}^M \sigma_j \bm{v}_j \bm{v}_j^\top$ be the spectral resolution of $\bm{\Sigma}$ in the orthonormal eigenbasis $\{\bm{v}_j\}_{j=1}^M$, with inverse operator $\bm{\Sigma}^{-1} = \sum_{j=1}^M \sigma_j^{-1} \bm{v}_j \bm{v}_j^\top$.
Expanding the baseline error vector as $\bar{\bm{p}} = \sum_{j=1}^M c_j \bm{v}_j$ with coordinates $c_j \equiv \bm{v}_j^\top \bar{\bm{p}}$ reduces the Mahalanobis quadratic form to the mode sum
\begin{equation}
\bar{\bm{p}}^\top \bm{\Sigma}^{-1} \bar{\bm{p}} = \sum_{j=1}^M \frac{c_j^2}{\sigma_j}.
\label{eq:app_rayleigh_mode_sum}
\end{equation}
Because $\sigma_{\max}^{-1} \le \sigma_j^{-1} \le \sigma_{\min}^{-1}$ for each index $j \in \{1, \dots, M\}$ and $\sum_{j=1}^M c_j^2 = \|\bar{\bm{p}}\|^2$ by Parseval's identity, summing over $j$ gives Eq.~\eqref{eq:app_rayleigh_sandwich}.

In Eq.~\eqref{eq:app_rayleigh_mode_sum}, each eigenmode $j$ is weighted by its squared coordinate $c_j^2$.
Let $c_{\mathrm{dom},\min}$ denote the component of $\bar{\bm{p}}_{\mathrm{dom}}$ along the minimal-variance eigenspace of $\bm{\Sigma}_{\mathrm{dom}}$.
Positivity of the Schur complement of $\bm{\Sigma}_{\mathrm{dom}}$ within $\bm{\Sigma}$ implies the lower bound
\begin{equation}
\bar{\bm{p}}^\top \bm{\Sigma}^{-1} \bar{\bm{p}} \ge \bar{\bm{p}}_{\mathrm{dom}}^\top \bm{\Sigma}_{\mathrm{dom}}^{-1} \bar{\bm{p}}_{\mathrm{dom}} \ge \frac{c_{\mathrm{dom},\min}^2}{\sigma_{\min}(\bm{\Sigma}_{\mathrm{dom}})}.
\label{eq:app_schur_complement_lower_bound}
\end{equation}
Whenever $|c_{\mathrm{dom},\min}| = \Theta(\|\bar{\bm{p}}\|)$ and the complementary modes satisfy $c_j^2 / \sigma_j = \mathcal{O}(\|\bar{\bm{p}}\|^2 / \sigma_{\min}(\bm{\Sigma}_{\mathrm{dom}}))$, the lower bound in Eq.~\eqref{eq:app_schur_complement_lower_bound} and the mode-sum upper bound in Eq.~\eqref{eq:app_rayleigh_mode_sum} coincide in order of magnitude, which proves Eq.~\eqref{eq:app_rayleigh_asymptotic}.
\end{proof}

Consider a general stabilizer code $\mathcal{C}$ encoding $k$ logical qubits into $n$ physical qubits with odd code distance $d$.
Under a memoryless physical fault channel with error rate $p \in (0, 1/2)$, an error configuration $E$ of fault weight $|E| \in \{0, 1, \dots, n\}$ occurs with prior probability
\begin{equation}
P(E) = \tilde{c}(E) p^{|E|} (1-p)^{n - |E|}.
\label{eq:app_physical_fault_prior}
\end{equation}
Here $\tilde{c}(E) = \Theta(1)$ is a channel-specific degeneracy factor equal to $1$ for single-component noise and $3^{-|E|}$ for depolarizing noise.

The syndrome extraction map $\sigma(E) = \bm{s} \in \Omega$ partitions the space of physical configurations compatible with $\bm{s}$ into $M + 1 = 4^k$ disjoint logical homology cosets $\mathcal{H}_L(\bm{s})$, indexed by logical Pauli classes $L \in \{0, 1, \dots, M\}$.
Without loss of generality, let $L = 0$ label the identity logical class assigned by the maximum-likelihood decoder on syndrome $\bm{s}$.
For each coset $L$, we define the minimal coset weight $W_L(\bm{s})$ and the entropic degeneracy $N_L(\bm{s})$ by
\begin{equation}
W_L(\bm{s}) \equiv \min_{E \in \mathcal{H}_L(\bm{s})} |E|, \qquad N_L(\bm{s}) \equiv \sum_{\substack{E \in \mathcal{H}_L(\bm{s}) \\ |E| = W_L(\bm{s})}} \tilde{c}(E) > 0.
\label{eq:app_coset_weight_degeneracy}
\end{equation}
The Boltzmann partition function of coset $\mathcal{H}_L(\bm{s})$ is the total probability weight of all fault chains in that coset.
Factoring out the leading probability weight $p^{W_L(\bm{s})} (1-p)^{n - W_L(\bm{s})}$ from the sum over $\mathcal{H}_L(\bm{s})$ gives
\begin{equation}
Z_L(\bm{s}) \equiv \sum_{E \in \mathcal{H}_L(\bm{s})} P(E) = N_L(\bm{s}) p^{W_L(\bm{s})} (1-p)^{n - W_L(\bm{s})} \big[ 1 + \mathcal{O}(p) \big].
\label{eq:app_boltzmann_partition_function}
\end{equation}
Summing the partition functions over all logical classes yields the marginal probability of observing syndrome $\bm{s}$,
\begin{equation}
P(\bm{s}) = \sum_{L=0}^M Z_L(\bm{s}) = Z_0(\bm{s}) \big[ 1 + \mathcal{O}(p^0) \big] = \Theta\Big( p^{W_0(\bm{s})} \Big).
\label{eq:app_total_syndrome_prob}
\end{equation}
Applying Bayes' rule gives the conditional posterior probability that the decoded state suffers a logical error in class $L \ne 0$ given syndrome $\bm{s}$,
\begin{equation}
p_L(\bm{s}) = \frac{Z_L(\bm{s})}{\sum_{L'=0}^M Z_{L'}(\bm{s})}.
\label{eq:app_posterior_bayes}
\end{equation}
Define the local homological weight gap $\Delta W_L(\bm{s}) \equiv W_L(\bm{s}) - W_0(\bm{s}) \ge 0$ between logical coset $L$ and the minimal coset $0$.
Substituting Eq.~\eqref{eq:app_boltzmann_partition_function} into Eq.~\eqref{eq:app_posterior_bayes} yields the asymptotic low-noise expansion as $p \to 0$,
\begin{equation}
p_L(\bm{s}) = \frac{N_L(\bm{s})}{N_{\mathrm{deg}}(\bm{s})} p^{\Delta W_L(\bm{s})} \big[ 1 + \mathcal{O}(p) \big] = \Theta\Big( p^{\Delta W_L(\bm{s})} \Big),
\label{eq:app_posterior_gap_expansion}
\end{equation}
where $N_{\mathrm{deg}}(\bm{s}) \equiv \sum_{L' : W_{L'}(\bm{s}) = W_0(\bm{s})} N_{L'}(\bm{s}) \ge N_0(\bm{s})$ sums the ground-state degeneracies of all cosets that tie at the minimal weight $W_0(\bm{s})$.

To characterize the global failure spectrum across the syndrome space $\Omega$, we define for each non-trivial logical coset $L \in \{1, \dots, M\}$ its effective coset distance $d_L$ and minimal uncorrectable fault weight $W_{\mathrm{fail}, L}$ by
\begin{equation}
d_L \equiv \min_{\bm{s} \in \Omega}\bigl(W_0(\bm{s}) + W_L(\bm{s})\bigr) \ge d, \qquad W_{\mathrm{fail}, L} \equiv \min_{\bm{s} \in \Omega} W_L(\bm{s}) = \left\lceil \frac{d_L + 1}{2} \right\rceil \ge \frac{d+1}{2},
\label{eq:app_coset_distance_fail_weight}
\end{equation}
where $d = \min_{L \ne 0} d_L$ is the odd code distance.
The minimal weight $W_{\mathrm{fail}, L}$ is achieved on the coset failure manifold $\Omega_{\mathrm{fail}, L}$ with minimal coset parity gap $\Delta W_{L, \min}$, defined respectively as
\begin{equation}
\Omega_{\mathrm{fail}, L} \equiv \big\{ \bm{s} \in \Omega : W_L(\bm{s}) = W_{\mathrm{fail}, L} \big\}, \qquad \Delta W_{L, \min} \equiv \min_{\bm{s} \in \Omega_{\mathrm{fail}, L}} \Delta W_L(\bm{s}) \in \{0, 1\}.
\label{eq:app_coset_fail_manifold_gap}
\end{equation}
The logical channels attaining the minimum code distance $d_L = d$ form the non-empty dominant logical failure subspace $\mathcal{L}_{\mathrm{dom}}$, and the union of their failure manifolds defines the dominant failure syndrome manifold $\Omega_{\mathrm{fail}}$,
\begin{align}
\mathcal{L}_{\mathrm{dom}} &\equiv \big\{ L \in \{1, \dots, M\} : d_L = d \big\} = \left\{ L \in \{1, \dots, M\} : W_{\mathrm{fail}, L} = \frac{d+1}{2} \right\} \ne \emptyset, \nonumber \\
\Omega_{\mathrm{fail}} &\equiv \bigcup_{L \in \mathcal{L}_{\mathrm{dom}}} \Omega_{\mathrm{fail}, L} = \left\{ \bm{s} \in \Omega : W_L(\bm{s}) = \frac{d+1}{2} \text{ for some } L \in \mathcal{L}_{\mathrm{dom}} \right\}.
\label{eq:app_dominant_subspace_manifold}
\end{align}

In anisotropic codes such as the two-dimensional planar rotated surface code $\mathcal{C} = [[d^2, 1, d]]$, logical $X_L$ and $Z_L$ strings connect opposite sides of the lattice, whereas logical $Y_L = i X_L Z_L$ must span both pairs of orthogonal boundaries simultaneously.
Their effective coset distances and minimal failure weights therefore separate into two tiers,
\begin{align}
d_{X_L} &= d_{Z_L} = d, & W_{\mathrm{fail}, X_L} &= W_{\mathrm{fail}, Z_L} = \frac{d+1}{2}, \nonumber \\
d_{Y_L} &= 2d - 1 > d, & W_{\mathrm{fail}, Y_L} &= d > \frac{d+1}{2}.
\label{eq:app_surface_code_anisotropy}
\end{align}
Thus $\mathcal{L}_{\mathrm{dom}} = \{X_L, Z_L\}$ while $Y_L \notin \mathcal{L}_{\mathrm{dom}}$.

On the dominant failure manifold $\Omega_{\mathrm{fail}}$, we define the minimal homological weight gap by
\begin{equation}
\Delta W_{\min} \equiv \min_{\bm{s} \in \Omega_{\mathrm{fail}}} \min_{L \in \mathcal{L}_{\mathrm{dom}}} \Delta W_L(\bm{s}) = \min_{L \in \mathcal{L}_{\mathrm{dom}}} \Delta W_{L, \min} \in \{0, 1\}.
\label{eq:app_delta_w_min_def}
\end{equation}
Because any logical coset $L \in \mathcal{L}_{\mathrm{dom}}$ obeys the triangle inequality $W_0(\bm{s}) + W_L(\bm{s}) \ge d$, every dominant failure syndrome with $W_L(\bm{s}) = (d+1)/2$ restricts the identity-coset weight and the resulting weight gap to the narrow interval
\begin{equation}
\frac{d-1}{2} \le W_0(\bm{s}) \le \frac{d+1}{2}, \qquad \Delta W_L(\bm{s}) = W_L(\bm{s}) - W_0(\bm{s}) \in \{0, 1\}.
\label{eq:app_triangle_ineq_gap_bound}
\end{equation}
Equation~\eqref{eq:app_triangle_ineq_gap_bound} rules out $\Delta W_{\min} \ge 2$ for all quantum codes of odd distance $d \ge 3$.
The integer $\Delta W_{\min} \in \{0, 1\}$ thus indexes the lowest-order coset degeneracy encountered across dominant failures.
Theorem~\ref{thm:universal_gap_scaling_main} establishes the resulting asymptotic scaling laws.

\begin{theorem}[Universal Homological Gap Scaling Theorem]
\label{thm:universal_gap_scaling_main}
\label{thm:app_universal_gap_scaling}
Let $(\Omega, \mathcal{F}, P)$ be the syndrome probability space induced by a memoryless physical fault channel of error rate $p \in (0, 1/2)$ acting on a stabilizer code $\mathcal{C}$ of odd distance $d$.
Assume that the active syndrome covariance matrix $\bm{\Sigma} \equiv \Cov_{\bm{s}}(\bm{p}(\bm{s})) \in \RR^{M \times M}$ is strictly positive definite ($\bm{\Sigma} \succ 0$).
Assume also that on the dominant logical subspace $\mathcal{L}_{\mathrm{dom}} \equiv \{ L \in \{1, \dots, M\} : d_L = d \}$ with principal submatrix $\bm{\Sigma}_{\mathrm{dom}}$, the restricted mean vector $\bar{\bm{p}}_{\mathrm{dom}}$ has a non-vanishing component along the minimal-variance eigenspace of $\bm{\Sigma}_{\mathrm{dom}}$.
Suppose further that the Boltzmann coset partition functions obey the low-noise asymptotic expansion of Eq.~\eqref{eq:app_boltzmann_partition_function} as $p \to 0$, with minimal homological gap $\Delta W_{\min} \in \{0, 1\}$ on the dominant failure manifold $\Omega_{\mathrm{fail}}$.

Under these conditions, four universal scaling relations hold as $p \to 0$.
First, each syndrome in $\Omega_{\mathrm{fail}}^* \equiv \{ \bm{s} \in \Omega_{\mathrm{fail}} : \Delta W(\bm{s}) = \Delta W_{\min} \}$ carries probability mass $P(\bm{s}) = \Theta(p^{(d+1)/2 - \Delta W_{\min}})$, and the total probability measure of $\Omega_{\mathrm{fail}}^*$ scales as
\begin{equation}
P(\Omega_{\mathrm{fail}}^*) = \sum_{\bm{s} \in \Omega_{\mathrm{fail}}^*} P(\bm{s}) = \Theta\left( p^{(d+1)/2 - \Delta W_{\min}} \right).
\label{eq:app_measure_scaling}
\end{equation}
Second, the unconditional first moment, which represents the total baseline logical error rate, scales universally as
\begin{equation}
p_L \equiv \|\bar{\bm{p}}\|_1 = \sum_{L=1}^M \EE_{\bm{s}}\big[ p_L(\bm{s}) \big] = \Theta\left( p^{(d+1)/2} \right).
\label{eq:app_first_moment_universal}
\end{equation}
This baseline scaling is invariant with respect to $\Delta W_{\min}$, with individual channel means obeying
\begin{equation}
\bar{p}_L = \Theta\left( p^{(d+1)/2} \right) \quad \text{for } L \in \mathcal{L}_{\mathrm{dom}}, \qquad \bar{p}_L = \Theta\left( p^{W_{\mathrm{fail}, L}} \right) = \mathcal{O}\left( p^{(d+3)/2} \right) \quad \text{for } L \notin \mathcal{L}_{\mathrm{dom}}.
\label{eq:app_first_moment_components}
\end{equation}
Third, the second moment of the conditional error vector scales as
\begin{equation}
\EE_{\bm{s}}\big[ \|\bm{p}(\bm{s})\|^2 \big] = \Theta\left( p^{(d+1)/2 + \Delta W_{\min}} \right) = \Theta\left( p_L \cdot p^{\Delta W_{\min}} \right).
\label{eq:app_second_moment_gap}
\end{equation}
Fourth, the dominant syndrome covariance spectrum and the optimal syndrome-resolved sampling overhead obey the universal scaling laws
\begin{align}
\sigma_{\min}(\bm{\Sigma}_{\mathrm{dom}}) &= \Theta\left( p_L \cdot p^{\Delta W_{\min}} \right) = \Theta\left( p^{(d+1)/2 + \Delta W_{\min}} \right), \label{eq:app_sigma_min_universal} \\
C_{\mathrm{syn}}^* - 1 &= \Theta\left( \frac{p_L}{p^{\Delta W_{\min}}} \right) = \Theta\left( p^{(d+1)/2 - \Delta W_{\min}} \right).
\label{eq:app_overhead_universal_gap}
\end{align}
By contrast, any channel $L \notin \mathcal{L}_{\mathrm{dom}}$ of larger coset distance $d_L > d$ and failure weight $W_{\mathrm{fail}, L} = \lceil (d_L + 1)/2 \rceil > (d+1)/2$ has eigenvalue and Mahalanobis mode ratio
\begin{equation}
\sigma_L(\bm{\Sigma}) = \Theta\left( p^{W_{\mathrm{fail}, L} + \Delta W_{L, \min}} \right), \qquad \frac{c_L^2}{\sigma_L(\bm{\Sigma})} = \mathcal{O}\left( p^{W_{\mathrm{fail}, L} - \Delta W_{L, \min}} \right),
\label{eq:app_higher_distance_channel_scaling}
\end{equation}
contributing at most a subleading term to $\bar{\bm{p}}^\top \bm{\Sigma}^{-1} \bar{\bm{p}}$.
\end{theorem}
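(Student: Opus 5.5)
The plan is to work entirely with the low-noise expansions of Eqs.~\eqref{eq:app_boltzmann_partition_function}--\eqref{eq:app_posterior_gap_expansion}, treating the code as fixed and finite. The syndrome space $\Omega$ is finite and its cardinality is independent of $p$. Hence every sum over syndromes is a finite sum of terms of the form $\Theta(p^{a})$, and its order is the minimal exponent occurring. This turns each of the four claims into a combinatorial minimization over syndromes.

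\textbf{First claim.} On a syndrome $\bm{s}\in\Omega_{\mathrm{fail}}^*$ we have $W_L(\bm{s})=(d+1)/2$ and $\Delta W_L(\bm{s})=\Delta W_{\min}$. It follows that $W_0(\bm{s})=(d+1)/2-\Delta W_{\min}$, and Eq.~\eqref{eq:app_total_syndrome_prob} gives $P(\bm{s})=\Theta(p^{(d+1)/2-\Delta W_{\min}})$. Summing finitely many such terms yields Eq.~\eqref{eq:app_measure_scaling}.

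\textbf{Second claim.} Write $\bar p_L=\sum_{\bm{s}}P(\bm{s})p_L(\bm{s})=\sum_{\bm{s}}Z_L(\bm{s})$. This is exact, since $P(\bm{s})p_L(\bm{s})=Z_L(\bm{s})$. Therefore $\bar p_L=\Theta(p^{\min_{\bm{s}}W_L(\bm{s})})=\Theta(p^{W_{\mathrm{fail},L}})$. Using Eq.~\eqref{eq:app_coset_distance_fail_weight}, this gives Eqs.~\eqref{eq:app_first_moment_universal}--\eqref{eq:app_first_moment_components}. The identity is the key simplification: the first moment does not see the gap at all.

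\textbf{Third claim.} Write $\EE[p_L(\bm{s})^2]=\sum_{\bm{s}}Z_L(\bm{s})^2/P(\bm{s})=\sum_{\bm{s}}\Theta(p^{2W_L(\bm{s})-W_0(\bm{s})})$. The exponent is $W_L+\Delta W_L$, and I must minimize it over all $\bm{s}$. For the upper bound on the order (the lower bound on the moment), evaluate at $\bm{s}\in\Omega^*_{\mathrm{fail}}$, which gives the exponent $(d+1)/2+\Delta W_{\min}$. For the matching lower bound on the exponent, take any $\bm{s}$. If $W_L(\bm{s})\ge(d+3)/2$, the exponent is already at least $(d+3)/2\ge(d+1)/2+\Delta W_{\min}$. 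Otherwise $W_L(\bm{s})=(d+1)/2$, so $\bm{s}\in\Omega_{\mathrm{fail}}$ and $\Delta W_L(\bm{s})\ge\Delta W_{\min}$ by definition. Channels outside $\mathcal{L}_{\mathrm{dom}}$ contribute higher orders in the same manner. This yields Eq.~\eqref{eq:app_second_moment_gap}, and in particular $\Sigma_{LL}=\EE[p_L^2]-\bar p_L^2=\Theta(p^{(d+1)/2+\Delta W_{\min}})$, because $\bar p_L^2=\mathcal{O}(p^{d+1})$ is subleading for $d\ge 1$.

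\textbf{Fourth claim (main obstacle).} The upper bound $\sigma_{\min}(\bm{\Sigma}_{\mathrm{dom}})\le\min_L\Sigma_{LL}=\mathcal{O}(p_L p^{\Delta W_{\min}})$ is immediate. The matching lower bound on $\sigma_{\min}$ is the delicate step, since the diagonal order alone does not rule out near-cancellation of $\bm{v}^\top\bm{\Sigma}_{\mathrm{dom}}\bm{v}$ along some direction. I would first try to show that the dominant channels have essentially disjoint leading-order supports. For the surface code, $X_L$ and $Z_L$ failures at minimal weight live on distinct syndromes, since $Y$-type ties are subleading. This makes the leading part of $\bm{\Sigma}_{\mathrm{dom}}$ diagonal-dominant, so $\sigma_{\min}=\Theta(\min_L\Sigma_{LL})$. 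In general I would fall back on the stated positive-definiteness assumption, read as holding uniformly at leading order, and make that explicit.

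\textbf{Combining the bounds.} With the eigenvalue estimate in hand, Proposition~\ref{prop:app_rayleigh_bounds} applies. Its hypothesis of a non-vanishing component of $\bar{\bm{p}}_{\mathrm{dom}}$ on the minimal eigenspace is assumed. Together with $\|\bar{\bm{p}}\|^2=\Theta(p^{d+1})$, the proposition gives Eq.~\eqref{eq:app_overhead_universal_gap}.

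\textbf{Higher-distance channels.} For a channel $L\notin\mathcal{L}_{\mathrm{dom}}$, the same minimization with $W_{\mathrm{fail},L}$ in place of $(d+1)/2$ gives $\sigma_L=\Theta(p^{W_{\mathrm{fail},L}+\Delta W_{L,\min}})$. Combined with $c_L=\mathcal{O}(\bar p_L)$, this gives Eq.~\eqref{eq:app_higher_distance_channel_scaling}. That ratio is $\mathcal{O}(p^{(d+3)/2-1})$, so the channel is subleading relative to the dominant term.
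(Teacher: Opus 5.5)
Your first three steps follow the paper: finite sums over $\Omega$, the exact identity $P(\bm s)p_L(\bm s)=Z_L(\bm s)$ for the first moment, and exponent minimization for the second. Your case split ($W_L(\bm s)\ge (d+3)/2$ versus $\bm s\in\Omega_{\mathrm{fail}}$) in the third step is more complete than the paper's, which only sums over $\Omega_{\mathrm{fail}}$.

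The genuine gap is the one you flagged in the fourth step, and neither fallback closes it. Positive definiteness of $\bm\Sigma$ at each fixed $p$ says nothing about the order in $p$ of its smallest eigenvalue. Reading it ``uniformly at leading order'' is therefore an extra hypothesis, not a consequence of the stated ones. The paper does not close it either. It removes $\bar{\bm p}_{\mathrm{dom}}\bar{\bm p}_{\mathrm{dom}}^\top=\mathcal O(p^{d+1})$ by Weyl's inequality, which is fine. It then reads $\sigma_{\min}(\bm M_{2,\mathrm{dom}})$ off the diagonal second moments, which is exactly the near-cancellation you worried about.

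A condition that does suffice: on $\Omega^*_{\mathrm{fail}}$ write $\bm p_{\mathrm{dom}}(\bm s)=p^{\Delta W_{\min}}(\bm q(\bm s)+\mathcal O(p))$, and require $\{\bm q(\bm s)\}_{\bm s\in\Omega^*_{\mathrm{fail}}}$ to span $\RR^{|\mathcal L_{\mathrm{dom}}|}$. Then $\bm v^\top\bm M_{2,\mathrm{dom}}\bm v\ge\sum_{\bm s\in\Omega^*_{\mathrm{fail}}}P(\bm s)(\bm v^\top\bm p_{\mathrm{dom}}(\bm s))^2\gtrsim p^{(d+1)/2+\Delta W_{\min}}\|\bm v\|^2$, and Weyl finishes. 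Your disjoint-support check for the surface code is an instance of this. Only the upper bound on $C_{\mathrm{syn}}^*-1$ needs it. The lower bound $C_{\mathrm{syn}}^*-1\ge\bar{\bm p}_{\mathrm{dom}}^\top\bm\Sigma_{\mathrm{dom}}^{-1}\bar{\bm p}_{\mathrm{dom}}\ge\|\bar{\bm p}_{\mathrm{dom}}\|^2/\operatorname{tr}\bm\Sigma_{\mathrm{dom}}\gtrsim p^{(d+1)/2-\Delta W_{\min}}$ is unconditional.

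There is a second, related slip. Per dominant channel, your minimization gives $\Sigma_{LL}=\Theta(p^{(d+1)/2+\Delta W_{L,\min}})$, not $\Theta(p^{(d+1)/2+\Delta W_{\min}})$; the two agree only for channels attaining the minimum. Suppose the dominant gaps differ, say $\Delta W_{X_L,\min}=0$ but $\Delta W_{Z_L,\min}=1$, and write $Z$ for the index of $Z_L$. Then $\sigma_{\min}(\bm\Sigma_{\mathrm{dom}})\le\Sigma_{ZZ}=\Theta(p^{(d+3)/2})$. Also $\bar p_Z^2=(\bm e_Z^\top\bar{\bm p})^2\le\Sigma_{ZZ}\,\bar{\bm p}^\top\bm\Sigma^{-1}\bar{\bm p}$ by Cauchy--Schwarz, which gives $C_{\mathrm{syn}}^*-1\gtrsim p^{(d-1)/2}$. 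Both contradict the fourth claim with $\Delta W_{\min}=0$. The statement therefore needs a uniform gap on $\mathcal L_{\mathrm{dom}}$, or $\max_{L\in\mathcal L_{\mathrm{dom}}}\Delta W_{L,\min}$ in place of the minimum. The spanning condition above implies this automatically. The paper's proof makes the same slip (``every dominant channel \ldots has second moment of order $\Theta(p^{(d+1)/2+\Delta W_{\min}})$'').

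Finally, for $L\notin\mathcal L_{\mathrm{dom}}$ you identify the eigenvalue $\sigma_L$ with the diagonal entry $\Sigma_{LL}$ and assert $c_L=\mathcal O(\bar p_L)$. Both require controlling $\Sigma_{L,L'}$ against $\sigma_{\min}(\bm\Sigma_{\mathrm{dom}})$; the paper does this with an eigenvector-mixing estimate. They therefore rest again on the dominant-block lower bound. A cleaner way to isolate what must be bounded there is the block identity $\bar{\bm p}^\top\bm\Sigma^{-1}\bar{\bm p}=\bar{\bm p}_{\mathrm{dom}}^\top\bm\Sigma_{\mathrm{dom}}^{-1}\bar{\bm p}_{\mathrm{dom}}+\bm r^\top\bm S^{-1}\bm r$. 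Here $\bm S$ is the Schur complement of $\bm\Sigma_{\mathrm{dom}}$ in $\bm\Sigma$, and $\bm r=\bar{\bm p}_{\perp}-\bm\Sigma_{\perp,\mathrm{dom}}\bm\Sigma_{\mathrm{dom}}^{-1}\bar{\bm p}_{\mathrm{dom}}$, with $\perp$ labeling the complementary channels.
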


\begin{proof}
We establish the four asymptotic relations in sequence.

First, consider the dominant syndrome probability measure.
For any syndrome $\bm{s} \in \Omega_{\mathrm{fail}}^*$, the failing dominant logical coset $L \in \mathcal{L}_{\mathrm{dom}}$ has minimal weight $W_{\mathrm{fail}, L} = (d+1)/2$ and the identity coset has minimal weight $W_0(\bm{s}) = (d+1)/2 - \Delta W_{\min}$.
From the Boltzmann partition function expansion in Eq.~\eqref{eq:app_boltzmann_partition_function}, the individual syndrome probability $P(\bm{s}) = \sum_{L=0}^M Z_L(\bm{s})$ is governed as $p \to 0$ by the leading coset partition function
\begin{equation}
Z_0(\bm{s}) = N_0(\bm{s}) p^{W_0(\bm{s})} (1-p)^{n - W_0(\bm{s})} \big[ 1 + \mathcal{O}(p) \big] = \Theta\Big( p^{W_0(\bm{s})} \Big) = \Theta\left( p^{(d+1)/2 - \Delta W_{\min}} \right).
\label{eq:app_leading_coset_partition_scaling}
\end{equation}
Because the cardinality $|\Omega_{\mathrm{fail}}^*|$ is a positive integer bounded by code geometry and independent of $p$, summing $P(\bm{s})$ over all $\bm{s} \in \Omega_{\mathrm{fail}}^*$ yields $P(\Omega_{\mathrm{fail}}^*) = \Theta(p^{(d+1)/2 - \Delta W_{\min}})$, which proves Eq.~\eqref{eq:app_measure_scaling}.

Second, consider the invariance of the unconditional first moment.
We evaluate the ensemble expectation of each component $p_L(\bm{s})$ over $(\Omega, \mathcal{F}, P)$.
For any syndrome $\bm{s} \in \Omega_{\mathrm{fail}, L}$, Bayes' rule in Eq.~\eqref{eq:app_posterior_bayes} implies that the product of the syndrome probability measure and the conditional error probability satisfies the exact cancellation identity
\begin{equation}
P(\bm{s}) p_L(\bm{s}) = P(\bm{s}) \frac{Z_L(\bm{s})}{P(\bm{s})} = Z_L(\bm{s}) = \Theta\Big( p^{W_L(\bm{s})} \Big) = \Theta\left( p^{W_{\mathrm{fail}, L}} \right).
\label{eq:app_bayes_measure_cancellation}
\end{equation}
Because the factor $p^{W_0(\bm{s})}$ in $P(\bm{s})$ cancels the denominator of $p_L(\bm{s})$ identically, summing $Z_L(\bm{s})$ over $\Omega_{\mathrm{fail}, L}$ gives $\bar{p}_L = \Theta(p^{W_{\mathrm{fail}, L}})$, which establishes Eq.~\eqref{eq:app_first_moment_components}.
Summing $\bar{p}_L$ over $L \in \{1, \dots, M\}$ yields Eq.~\eqref{eq:app_first_moment_universal} and confirms that $p_L = \Theta(p^{(d+1)/2})$ is invariant with respect to $\Delta W_{\min}$.

Third, consider the asymptotic scaling of the second moment.
For each syndrome $\bm{s} \in \Omega_{\mathrm{fail}}$ with local homological gap $\Delta W(\bm{s}) \equiv \min_{L \in \mathcal{L}_{\mathrm{dom}}} \Delta W_L(\bm{s}) \ge \Delta W_{\min}$, multiplying the syndrome measure $P(\bm{s}) = \Theta(p^{(d+1)/2 - \Delta W(\bm{s})})$ by the squared conditional likelihood $\|\bm{p}(\bm{s})\|^2 = \sum_{L=1}^M [p_L(\bm{s})]^2 = \Theta(p^{2 \Delta W(\bm{s})})$ gives the weighted contribution
\begin{equation}
P(\bm{s}) \|\bm{p}(\bm{s})\|^2 = \Theta\left( p^{(d+1)/2 - \Delta W(\bm{s}) + 2 \Delta W(\bm{s})} \right) = \Theta\left( p^{(d+1)/2 + \Delta W(\bm{s})} \right).
\label{eq:app_second_moment_syndrome_weight}
\end{equation}
In the limit $p \to 0$, the expectation $\EE_{\bm{s}}[\|\bm{p}(\bm{s})\|^2] = \sum_{\bm{s} \in \Omega} P(\bm{s}) \|\bm{p}(\bm{s})\|^2$ is dominated by the non-empty submanifold $\Omega_{\mathrm{fail}}^* \subset \Omega_{\mathrm{fail}}$ that minimizes the exponent $(d+1)/2 + \Delta W(\bm{s})$ at $\Delta W(\bm{s}) = \Delta W_{\min}$.
Summing Eq.~\eqref{eq:app_second_moment_syndrome_weight} over $\Omega_{\mathrm{fail}}^*$ yields $\EE_{\bm{s}}[\|\bm{p}(\bm{s})\|^2] = \Theta(p^{(d+1)/2 + \Delta W_{\min}}) = \Theta(p_L \cdot p^{\Delta W_{\min}})$, proving Eq.~\eqref{eq:app_second_moment_gap}.

Fourth, we analyze the eigenvalue spectrum on $\mathcal{L}_{\mathrm{dom}}$ and bound the coupling to complementary channels $L \notin \mathcal{L}_{\mathrm{dom}}$.
On the dominant logical subspace $\mathcal{L}_{\mathrm{dom}}$, the principal covariance submatrix decomposes as
\begin{equation}
\bm{\Sigma}_{\mathrm{dom}} = \bm{M}_{2,\mathrm{dom}} - \bar{\bm{p}}_{\mathrm{dom}} \bar{\bm{p}}_{\mathrm{dom}}^\top, \qquad \bm{M}_{2,\mathrm{dom}} \equiv \EE_{\bm{s}}\big[ \bm{p}_{\mathrm{dom}}(\bm{s}) \bm{p}_{\mathrm{dom}}(\bm{s})^\top \big].
\label{eq:app_dominant_cov_decomposition}
\end{equation}
Every dominant channel in $\bm{M}_{2,\mathrm{dom}}$ has second moment of order $\Theta(p^{(d+1)/2 + \Delta W_{\min}})$, whereas the rank-one mean outer product has norm $\|\bar{\bm{p}}_{\mathrm{dom}} \bar{\bm{p}}_{\mathrm{dom}}^\top\| = \Theta(p^{d+1})$, which is strictly smaller in order whenever $\Delta W_{\min} \le (d-1)/2$.
Weyl's perturbation inequality therefore yields
\begin{equation}
\sigma_{\min}(\bm{\Sigma}_{\mathrm{dom}}) = \sigma_{\min}(\bm{M}_{2,\mathrm{dom}}) + \mathcal{O}\left( p^{d+1} \right) = \Theta\left( p_L \cdot p^{\Delta W_{\min}} \right).
\label{eq:app_weyl_perturbation_sigma_min}
\end{equation}
Equation~\eqref{eq:app_weyl_perturbation_sigma_min} directly proves Eq.~\eqref{eq:app_sigma_min_universal}.

For any complementary channel $L \notin \mathcal{L}_{\mathrm{dom}}$ with $W_{\mathrm{fail}, L} = \lceil (d_L + 1)/2 \rceil \ge (d+3)/2$ (such as $Y_L$ on the planar rotated surface code, where $d_{Y_L} = 2d - 1$ and $W_{\mathrm{fail}, Y_L} = d$), repeating the third step on $\Omega_{\mathrm{fail}, L}$ determines its diagonal variance, cross-covariance with any $L' \in \mathcal{L}_{\mathrm{dom}}$, and eigenvalue scaling,
\begin{align}
\Sigma_{LL} &= \Theta\left( p^{W_{\mathrm{fail}, L} + \Delta W_{L, \min}} \right), \nonumber \\
|\Sigma_{L, L'}| &\le \sum_{\bm{s} \in \Omega} Z_L(\bm{s}) p_{L'}(\bm{s}) + \bar{p}_L \bar{p}_{L'} = \mathcal{O}\left( p^{W_{\mathrm{fail}, L} + \Delta W_{\min}} \right), \nonumber \\
\sigma_L(\bm{\Sigma}) &= \Theta\left( p^{W_{\mathrm{fail}, L} + \Delta W_{L, \min}} \right).
\label{eq:app_higher_dist_cov_bounds}
\end{align}
Consequently, the eigenvector mixing angle between channel $L$ and $\mathcal{L}_{\mathrm{dom}}$ is of order $\mathcal{O}(\|\bm{\Sigma}_{\mathrm{dom}, L}\| / \sigma_{\min}(\bm{\Sigma}_{\mathrm{dom}})) = \mathcal{O}(p^{W_{\mathrm{fail}, L} - (d+1)/2})$.
The corresponding mode projection $c_L$ and the resulting Rayleigh-sum term in Eq.~\eqref{eq:app_rayleigh_mode_sum} satisfy
\begin{align}
|c_L| &\le |\bar{p}_L| + \mathcal{O}\left( p^{W_{\mathrm{fail}, L} - (d+1)/2} \|\bar{\bm{p}}_{\mathrm{dom}}\| \right) = \mathcal{O}\left( p^{W_{\mathrm{fail}, L}} \right), \nonumber \\
\frac{c_L^2}{\sigma_L(\bm{\Sigma})} &= \mathcal{O}\left( p^{W_{\mathrm{fail}, L} - \Delta W_{L, \min}} \right) = \mathcal{O}\left( p^{(d+1)/2 - \Delta W_{\min}} \right).
\label{eq:app_higher_dist_rayleigh_bound}
\end{align}
Equation~\eqref{eq:app_higher_dist_rayleigh_bound} establishes Eq.~\eqref{eq:app_higher_distance_channel_scaling} (and yields a strictly smaller $\Theta(p^{d - \Delta W_{Y_L, \min}}) \ll \Theta(p^{(d+1)/2 - \Delta W_{\min}})$ contribution for $Y_L$ on the rotated surface code).
Applying Proposition~\ref{prop:app_rayleigh_bounds} on $\mathcal{L}_{\mathrm{dom}}$ then proves Eq.~\eqref{eq:app_overhead_universal_gap},
\begin{equation}
C_{\mathrm{syn}}^* - 1 = \Theta\left( \frac{\|\bar{\bm{p}}_{\mathrm{dom}}\|^2}{\sigma_{\min}(\bm{\Sigma}_{\mathrm{dom}})} \right) = \Theta\left( \frac{p_L^2}{p_L \cdot p^{\Delta W_{\min}}} \right) = \Theta\left( \frac{p_L}{p^{\Delta W_{\min}}} \right) = \Theta\left( p^{(d+1)/2 - \Delta W_{\min}} \right).
\label{eq:app_overhead_proof_chain}
\end{equation}
\end{proof}

\label{subsec:app_decoder_universality}%
\aliaslabel{rem:app_decoder_universality}{\ref{subsec:app_partition_function_universal_gap}}%
Theorem~\ref{thm:universal_gap_scaling_main} extends directly to approximate and heuristic decoders through an effective weight gap.
Consider an arbitrary decoder $\mathcal{D}: \Omega \to \{0, 1, \dots, M\}$, such as Union-Find, belief propagation with ordered statistics decoding, neural decoders, or minimum-weight perfect matching (MWPM).
Let $R(\bm{s}) \equiv \mathcal{D}(\bm{s})$ denote the chosen recovery coset with minimal fault weight $W_{\mathrm{chosen}}^{\mathcal{D}}(\bm{s}) \equiv W_{R(\bm{s})}(\bm{s})$, and let $W_{\mathrm{fail}, L}^{\mathcal{D}}(\bm{s}) \equiv W_L(\bm{s})$ denote the minimal weight of each unchosen coset $L \ne R(\bm{s})$.
The decoder-specific effective homological weight gap is
\begin{equation}
\Delta W_L^{\mathcal{D}}(\bm{s}) \equiv W_{\mathrm{fail}, L}^{\mathcal{D}}(\bm{s}) - W_{\mathrm{chosen}}^{\mathcal{D}}(\bm{s}) = W_L(\bm{s}) - W_{R(\bm{s})}(\bm{s}).
\label{eq:app_effective_decoder_gap}
\end{equation}
Applying Bayes' rule shows that the residual failure probability scales as $p_L^{\mathcal{D}}(\bm{s}) \asymp p^{\max(0,\, \Delta W_L^{\mathcal{D}}(\bm{s}))}$.
Whenever $\mathcal{D}$ errs on dominant fault chains of weight $(d+1)/2$, the chosen recovery coset has weight at least that of the failing coset,
\begin{equation}
W_{\mathrm{chosen}}^{\mathcal{D}}(\bm{s}) \ge W_{\mathrm{fail}, L}^{\mathcal{D}}(\bm{s}), \qquad \Delta W_L^{\mathcal{D}}(\bm{s}) \le 0, \qquad p_L^{\mathcal{D}}(\bm{s}) = \Theta(1),
\label{eq:app_decoder_dominant_error_gap}
\end{equation}
so that the conditional error rate is an order-unity constant on that syndrome.
Because these failure events form a manifold of total probability measure $\Theta(p_L^{\mathcal{D}})$, Theorem~\ref{thm:universal_gap_scaling_main} implies that any decoder satisfying $\Delta W_{\min}^{\mathcal{D}} \le 0$ obeys
\begin{equation}
\sigma_{\min}\big(\bm{\Sigma}_{\mathrm{dom}}^{\mathcal{D}}\big) = \Theta\big(p_L^{\mathcal{D}}\big), \qquad C_{\mathrm{syn}}^*(\mathcal{D}) - 1 = \Theta\big(p_L^{\mathcal{D}}\big),
\label{eq:app_decoder_universal_overhead}
\end{equation}
achieving the optimal linear sampling overhead scaling.

\subsection{Algebraic Topology of Gapped and Gapless Syndrome Spaces}
\label{subsec:app_algebraic_topology_syndrome_spaces}

Theorem~\ref{thm:universal_gap_scaling_main} reduces the asymptotic scaling of VEC to determining the minimal homological weight gap $\Delta W_{\min}$ for a given quantum code, noise channel, and decoding architecture.
We now prove three propositions establishing that linear vector spaces over $\FF^n$ enforce $\Delta W_{\min} = 1$, whereas binary symplectic spaces over $\FF^{2n}$, subsystem gauge groups, and spacetime circuit hypergraphs enforce $\Delta W_{\min} = 0$.

\begin{proposition}[Linear Homological Parity Invariance on $\mathbb{F}_2^n$]
\label{thm:app_linear_parity_invariance}
Let $\mathcal{C}$ be a Calderbank--Shor--Steane (CSS) stabilizer code of odd code distance $d \equiv 1 \pmod 2$ on $n$ physical qubits, in which every check generator $S \in \mathcal{S}$ has even Hamming weight $|S| \equiv 0 \pmod 2$.
Consider single-component Pauli noise with binary error vectors $e \in \FF^n$, or decoupled CSS decoding in which $X$ and $Z$ syndromes are decoded independently over $\FF^n$.
The minimal coset weights $W_0(\bm{s})$ and $W_1(\bm{s})$ of opposing logical homology cosets satisfy the universal parity invariant
\begin{equation}
W_0(\bm{s}) + W_1(\bm{s}) \equiv d \equiv 1 \pmod 2,
\label{eq:app_linear_parity_invariant}
\end{equation}
for every syndrome $\bm{s} \in \Omega$.
Consequently, the local gap $\Delta W(\bm{s}) \equiv |W_1(\bm{s}) - W_0(\bm{s})|$ is strictly odd everywhere on $\Omega$.
This parity constraint enforces $\Delta W_{\min} = 1$ and yields the gapped asymptotic scaling relations
\begin{equation}
\sigma_{\min}(\bm{\Sigma}_{\mathrm{dom}}) = \Theta(p_L \cdot p), \qquad C_{\mathrm{syn}}^* - 1 = \Theta\left(\frac{p_L}{p}\right) = \Theta\left(p^{(d-1)/2}\right).
\label{eq:app_linear_gapped_scaling}
\end{equation}
\end{proposition}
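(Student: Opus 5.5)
The plan is a parity argument over $\FF^n$, followed by an appeal to Theorem~\ref{thm:universal_gap_scaling_main}. Fix a syndrome $\bm{s}$ and pick minimum-weight representatives $e_0 \in \mathcal{H}_0(\bm{s})$ and $e_1 \in \mathcal{H}_1(\bm{s})$, so that $|e_0| = W_0(\bm{s})$ and $|e_1| = W_1(\bm{s})$. Because both have syndrome $\bm{s}$ but lie in opposite logical cosets, their sum $c \equiv e_0 \oplus e_1$ is a nontrivial logical operator: $c \in \ker H \setminus \mathrm{rowspace}(\text{stabilizers})$. Over $\FF^n$ we have the identity $|e_0 \oplus e_1| = |e_0| + |e_1| - 2|e_0 \wedge e_1|$, which gives
\[
W_0(\bm{s}) + W_1(\bm{s}) \equiv |c| \pmod 2 .
\]
It therefore suffices to show that every representative of the nontrivial logical class has odd weight.

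For that step I will use the even-weight hypothesis on the generators. Any representative has the form $c = \ell \oplus g$, where $\ell$ is a fixed minimum-weight logical of weight $d$ and $g$ is a sum of generators. Each generator has even weight, and sums of even-weight vectors stay even by the same overlap identity, so $|g|$ is even. Applying the identity once more gives $|c| \equiv |\ell| + |g| \equiv d \equiv 1 \pmod 2$. This yields Eq.~\eqref{eq:app_linear_parity_invariant}.

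The main obstacle is making sure this matches the paper's setting. Under decoupled CSS decoding, each of the $X$ and $Z$ problems is a binary linear problem with exactly two cosets (identity and the corresponding logical), and the argument applies to each separately. One subtlety is that "every representative of the logical class has weight $\equiv d$" also needs a representative of weight exactly $d$ to exist, and it does by the definition of $d$. A second point to check is that the relevant generators, for example the $X$-type generators that fix the $Z$-error coset ambiguity, are the ones assumed to have even weight. For the rotated surface code these are weight 4 and weight 2, so the hypothesis holds.

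Since $W_0 + W_1$ is odd, $W_1 - W_0$ is odd, so $\Delta W(\bm{s}) \ge 1$ everywhere. Equation~\eqref{eq:app_triangle_ineq_gap_bound} already restricts $\Delta W_{\min}$ to $\{0,1\}$ on $\Omega_{\mathrm{fail}}$, so we get $\Delta W_{\min} = 1$. To show that the value $1$ is actually attained, take a minimum-weight logical of weight $d$ and split it into pieces of weights $(d+1)/2$ and $(d-1)/2$. The syndrome of the larger piece lies in $\Omega_{\mathrm{fail}}$, with $W_L = (d+1)/2$ and $W_0 = (d-1)/2$. Substituting $\Delta W_{\min} = 1$ into Eqs.~\eqref{eq:app_sigma_min_universal}--\eqref{eq:app_overhead_universal_gap} of Theorem~\ref{thm:universal_gap_scaling_main}, under that theorem's standing positivity and non-degeneracy assumptions, gives $\sigma_{\min}(\bm{\Sigma}_{\mathrm{dom}}) = \Theta(p_L p)$ and $C_{\mathrm{syn}}^* - 1 = \Theta(p^{(d-1)/2})$, which is Eq.~\eqref{eq:app_linear_gapped_scaling}.
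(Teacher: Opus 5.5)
Your proposal is correct and follows essentially the same route as the paper: you take minimal-weight coset representatives, use the overlap identity $|a\oplus b| = |a| + |b| - 2|a\wedge b|$ to get $W_0(\bm{s}) + W_1(\bm{s}) \equiv |e_0 \oplus e_1| \pmod 2$, fix that parity to $d$ via even-weight stabilizers plus an odd-weight logical representative, and then apply Theorem~\ref{thm:universal_gap_scaling_main} with $\Delta W_{\min} = 1$. Your appeal to Eq.~\eqref{eq:app_triangle_ineq_gap_bound} and your attainment construction (splitting a weight-$d$ logical into pieces of weight $(d\pm 1)/2$) make explicit what the paper leaves implicit; your one slip, saying the coset ambiguity of $Z$ errors is governed by the $X$-type generators when it is the $Z$-type stabilizers, is harmless because the hypothesis makes every generator even.
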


\begin{proof}
Let $e_0 \in \mathcal{H}_0(\bm{s})$ and $e_1 \in \mathcal{H}_1(\bm{s})$ be minimal-weight binary error vectors achieving the coset Hamming weights $|e_0| = W_0(\bm{s})$ and $|e_1| = W_1(\bm{s})$ over $\FF^n$.
Both configurations produce the identical syndrome $\bm{s} = H e_0 = H e_1$.
Their modulo-2 sum $c \equiv e_0 \oplus e_1 \in \FF^n$ therefore lies in the kernel of the parity-check matrix and satisfies $H c = \bm{0}$.
Since $e_0$ and $e_1$ belong to opposing logical homology cosets, the cycle $c$ belongs to the non-trivial logical coset $L \oplus \mathcal{S}$, so that
\begin{equation}
c = e_0 \oplus e_1 = L \oplus S,
\label{eq:app_linear_cycle_coset_decomp}
\end{equation}
for some stabilizer element $S \in \mathcal{S}$ and a logical operator representative $L \in \FF^n$ of odd Hamming weight $|L| \equiv d \equiv 1 \pmod 2$.

We determine the parity of this logical cycle by examining binary vector addition on $\FF^n$.
For any pair of binary vectors $a \in \FF^n$ and $b \in \FF^n$, the Hamming weight of their modulo-2 sum obeys the exact linear-algebraic identity
\begin{equation}
|a \oplus b| = |a| + |b| - 2 |a \land b|,
\label{eq:app_hamming_sum_identity}
\end{equation}
where $a \land b \in \FF^n$ denotes componentwise bitwise conjunction.
Reducing Eq.~\eqref{eq:app_hamming_sum_identity} modulo 2 eliminates the even overlap contribution $2|a \land b|$ and yields the linear parity homomorphism
\begin{equation}
|a \oplus b| \equiv |a| + |b| \pmod 2.
\label{eq:app_linear_parity_homomorphism}
\end{equation}
Every check generator has even Hamming weight by assumption.
Repeated application of Eq.~\eqref{eq:app_linear_parity_homomorphism} across products of generators therefore shows that every stabilizer element $S \in \mathcal{S}$ has even weight, $|S| \equiv 0 \pmod 2$.
Applying the parity relation in Eq.~\eqref{eq:app_linear_parity_homomorphism} to the cycle decomposition in Eq.~\eqref{eq:app_linear_cycle_coset_decomp} gives
\begin{align}
W_0(\bm{s}) + W_1(\bm{s}) &= |e_0| + |e_1| \nonumber \\
&\equiv |e_0 \oplus e_1| = |L \oplus S| \nonumber \\
&\equiv |L| + |S| \equiv d + 0 \equiv 1 \pmod 2.
\label{eq:app_linear_parity_chain}
\end{align}
Because the sum and difference of any two integers share identical parity, Eq.~\eqref{eq:app_linear_parity_chain} requires $W_1(\bm{s}) - W_0(\bm{s}) \equiv 1 \pmod 2$ for every syndrome $\bm{s} \in \Omega$.
An odd integer cannot vanish, so the local gap satisfies $\Delta W(\bm{s}) \ge 1$ universally across $\Omega$ and enforces $\Delta W_{\min} = 1$.
Substituting $\Delta W_{\min} = 1$ into Theorem~\ref{thm:universal_gap_scaling_main} establishes Eq.~\eqref{eq:app_linear_gapped_scaling}.
\end{proof}

\begin{proposition}[Symplectic Weight Non-Linearity and Gap Closing on $\mathbb{F}_2^{2n}$]
\label{thm:app_symplectic_gap_closing}
Let $\mathcal{C}$ be a stabilizer code of odd distance $d$ subjected to depolarizing noise and decoded jointly over the $2n$-dimensional binary symplectic space $\FF^{2n}$.
In this representation, a Pauli error $E = (e_X, e_Z) \in \FF^n \times \FF^n$ has physical weight
\begin{equation}
|E|_{\mathrm{symp}} \equiv \sum_{i=1}^n \big( e_{X,i} \lor e_{Z,i} \big) = |e_X| + |e_Z| - |e_X \land e_Z|.
\label{eq:app_symplectic_weight_def}
\end{equation}
This quantity counts the physical qubits supporting a non-trivial Pauli operator in $\{X, Y, Z\}$.
Consider any two minimal-weight configurations $E_0 \in \mathcal{H}_0(\bm{s})$ and $E_1 \in \mathcal{H}_1(\bm{s})$ in a dominant logical coset $L \in \mathcal{L}_{\mathrm{dom}}$.
Their Pauli product $C \equiv E_0 \oplus E_1$ forms a logical operator of odd weight $|C|_{\mathrm{symp}} \equiv d \equiv 1 \pmod 2$, and the physical weight sum satisfies
\begin{equation}
|E_0|_{\mathrm{symp}} + |E_1|_{\mathrm{symp}} = |E_0 \oplus E_1|_{\mathrm{symp}} + k_Y + 2 k_{\mathrm{canc}}.
\label{eq:app_symplectic_sum_identity}
\end{equation}
Here $k_Y$ counts the cross-basis overlap sites where $E_0$ and $E_1$ carry distinct non-identity Pauli operators that multiply to the third active operator in $C$.
Similarly, $k_{\mathrm{canc}}$ counts the sites where identical operators cancel to the identity $I_i$.
Because the cancellation term $2 k_{\mathrm{canc}}$ is strictly even, reducing Eq.~\eqref{eq:app_symplectic_sum_identity} modulo 2 yields the shifted parity relation
\begin{equation}
W_0(\bm{s}) + W_1(\bm{s}) \equiv d + k_Y \pmod 2.
\label{eq:app_symplectic_parity_shift}
\end{equation}
Whenever dominant error chains overlap on an odd number of sites across bases so that $k_Y \equiv 1 \pmod 2$, the coset weight sum satisfies $W_0(\bm{s}) + W_1(\bm{s}) \equiv 0 \pmod 2$.
This even parity sum admits exact coset weight ties $W_0(\bm{s}) = W_1(\bm{s}) = (d+1)/2$, closing the homological weight gap ($\Delta W_{\min} = 0$) and yielding the gapless asymptotic scaling relations
\begin{equation}
\sigma_{\min}(\bm{\Sigma}_{\mathrm{dom}}) = \Theta(p_L), \qquad C_{\mathrm{syn}}^* - 1 = \Theta(p_L) = \Theta\left(p^{(d+1)/2}\right).
\label{eq:app_symplectic_gapless_scaling}
\end{equation}
\end{proposition}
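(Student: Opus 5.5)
The plan mirrors the proof of Proposition~\ref{thm:app_linear_parity_invariance}, replacing the linear Hamming identity by a site-by-site count of symplectic weights. We then show that the resulting parity shift admits tied minimal cosets, and finally feed $\Delta W_{\min}=0$ into Theorem~\ref{thm:universal_gap_scaling_main}.

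\emph{Step 1: the weight identity.} Fix minimal representatives $E_0\in\mathcal{H}_0(\bm{s})$ and $E_1\in\mathcal{H}_1(\bm{s})$. Their product $C=E_0\oplus E_1$ has trivial syndrome and nontrivial logical class, so $C$ is a logical operator. We prove Eq.~\eqref{eq:app_symplectic_sum_identity} by classifying each qubit $i$ according to the pair of single-site Paulis carried by $E_0$ and $E_1$. There are four cases.
\begin{enumerate}
\item Exactly one of the two operators is nontrivial. The site contributes $1$ to the left-hand side and $1$ to $|C|_{\mathrm{symp}}$.
\item Both operators are nontrivial and equal. The site contributes $2$ to the left-hand side and $0$ to $|C|$; we count it in $k_{\mathrm{canc}}$.
\item Both operators are nontrivial and distinct, e.g.\ $X\cdot Z\propto Y$. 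The site contributes $2$ to the left-hand side and $1$ to $|C|$; we count it in $k_Y$.
\item Both operators are trivial. The site contributes $0$ everywhere.
\end{enumerate}
Summing over sites gives $|E_0|+|E_1|=|C|+k_Y+2k_{\mathrm{canc}}$. Reducing modulo~$2$ yields Eq.~\eqref{eq:app_symplectic_parity_shift}, provided $|C|\equiv d \pmod 2$. This uses the hypothesis of the statement that the relevant minimal logical cycle has odd weight; for the surface-code setting of interest we take $C$ to be a minimal $X_L$ or $Z_L$ representative of weight~$d$. In contrast to Eq.~\eqref{eq:app_linear_parity_homomorphism}, the cross-basis sites contribute an odd amount, which is exactly what breaks linear parity invariance.

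\emph{Step 2: exhibiting a tie.} This is the main obstacle. The parity relation alone only \emph{permits} $W_0=W_1$; to conclude $\Delta W_{\min}=0$ we must actually produce a syndrome in $\Omega_{\mathrm{fail}}$ realizing the tie. The plan is an explicit construction under depolarizing noise. Take a minimal logical string $L$ of weight $d$, split it into halves $A$ and $B$ with $|A|=(d-1)/2$ and $|B|=(d+1)/2$, and then convert one site of the shorter half $A$ into a $Y$ that also overlaps the complementary basis. This makes both competing chains have symplectic weight $(d+1)/2$ while producing the same syndrome, consistent with $k_Y=1$. We must check that no lighter representative exists in either coset; the triangle inequality Eq.~\eqref{eq:app_triangle_ineq_gap_bound} gives $W_0\ge (d-1)/2$, and the explicit geometry rules out the lower value. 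If the general-code statement resists this, we fall back on stating the tie as the condition ``$k_Y$ odd for some dominant pair,'' which is what the proposition hypothesizes. We would then verify that condition concretely for the rotated surface code, where a $Y$ at a corner of crossing $X_L$ and $Z_L$ strings realizes it.

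\emph{Step 3: scaling.} With a tied syndrome in $\Omega_{\mathrm{fail}}^*$, Eq.~\eqref{eq:app_posterior_gap_expansion} gives $p_L(\bm{s})=N_L/N_{\mathrm{deg}}=\Theta(1)$ there. Hence $\Delta W_{\min}=0$, and Theorem~\ref{thm:universal_gap_scaling_main} (its positivity and eigenspace-overlap assumptions inherited) yields $\sigma_{\min}(\bm{\Sigma}_{\mathrm{dom}})=\Theta(p_L)$ and $C_{\mathrm{syn}}^*-1=\Theta(p_L)=\Theta(p^{(d+1)/2})$, which is Eq.~\eqref{eq:app_symplectic_gapless_scaling}.
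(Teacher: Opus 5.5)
Your Steps 1 and 3 follow the paper's argument: classify sites into disjoint, cancelling and cross-basis sites, reduce mod 2, then apply Theorem~\ref{thm:universal_gap_scaling_main} with $\Delta W_{\min}=0$. The explicit tie in Step 2, however, fails as stated.

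Converting a site $a$ of the \emph{shorter} half from $X_a$ to $Y_a$ leaves $E_0\propto A\,Z_a$ at weight $(d-1)/2$. Its natural partner $E_0L\propto Z_aB$ then has weight $(d+3)/2$. The identity $|E_0|+|E_1|=d+k_Y$ with $k_Y=1$ still holds, but the extra unit of weight lands on the wrong chain. Worse, $W_0(\bm{s})\le|E_0|=(d-1)/2$ together with the triangle inequality forces $W_1(\bm{s})\ge(d+1)/2$. So this syndrome is gapped whatever the true minima are, and your own $E_0$ already attains the ``lower value'' you planned to rule out by geometry.

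The shared cross-basis site must come from the longer half. For $b\in B$, take $E_1\propto B\,Z_b$ (that is, $X_b\to Y_b$ inside $B$) and $E_0=A\,Z_b$. Both then have weight $(d+1)/2$, and $E_0E_1\propto L$. Equivalently, as in the paper, fix the shared site first and give each chain $(d-1)/2$ of the remaining $d-1$ sites. Your $X\propto YZ$ split is actually better suited than the paper's ``factor a $Y_i$ within $C$'', since a minimal pure-$X$ string contains no $Y$.

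Even after this correction, two representatives of weight $(d+1)/2$ only give $W_0+W_1\in\{d,d+1\}$. The parity relation applies only to pairs already known to be minimal, so the $k_Y=1$ of the constructed pair settles nothing. The paper's proof asserts the tie at exactly this point, and the missing check has real content.

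Suppose $b$ and some $a\in A$ support a weight-two boundary $Z$-check, as happens for consecutive pairs along the boundary column carrying the paper's $X_L$. Then $A\,Z_b\cdot Z_aZ_b=A\,Z_a$ is your original weight-$(d-1)/2$ chain. The pair $(A\,Z_a,\,B\,Z_b)$ has disjoint supports, $k_Y=0$ and $W_0+W_1=d$, so the syndrome is gapped.

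Hence $b$ must avoid such pairs, and all other weight reductions must be excluded. For $d=3$ this is immediate. Every stabilizer of a CSS code with even-weight checks has even symplectic weight. So a weight-one representative of a two-site chain requires a weight-two stabilizer supported on exactly those two sites.

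Your fallback also does not supply the tie as phrased. Crossing $X_L$ and $Z_L$ strings multiply to $Y_L$, which has $d_{Y_L}=2d-1$ and lies outside $\mathcal{L}_{\mathrm{dom}}$. The odd-$k_Y$ site must sit inside a single weight-$d$ dominant string, as in the corrected construction.
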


\begin{proof}
For each qubit site $i \in \{1, \dots, n\}$, let $W(E_{0,i})$ and $W(E_{1,i})$ denote the local binary weights of $E_0$ and $E_1$.
Similarly, let $W(C_i) \in \{0, 1\}$ denote the weight of their Pauli product $C_i = E_{0,i} \oplus E_{1,i}$ modulo global phase.
To quantify the departure from linear parity additivity, we define the single-qubit weight discrepancy
\begin{equation}
\delta_i \equiv W(E_{0,i}) + W(E_{1,i}) - W(C_i).
\label{eq:app_local_symplectic_discrepancy}
\end{equation}
Exhaustive evaluation of Eq.~\eqref{eq:app_local_symplectic_discrepancy} across all sixteen local pairs $(E_{0,i}, E_{1,i}) \in \{I, X, Y, Z\}^2$ partitions the $n$ qubit sites into three mutually exclusive regimes.
In the disjoint-support regime, where $E_{0,i} = I$ or $E_{1,i} = I$, at least one operator acts trivially.
The local weight is then strictly additive, $W(C_i) = W(E_{0,i}) + W(E_{1,i})$, which gives $\delta_i = 0$.
In the identical-cancellation regime, where $E_{0,i} = E_{1,i} \ne I$, two matching Pauli operators square to the identity ($X^2 = Y^2 = Z^2 = I$), so that $W(E_{0,i}) = W(E_{1,i}) = 1$ whereas $W(C_i) = 0$.
Each such site yields $\delta_i = 1 + 1 - 0 = 2$, and we denote the total number of cancellation sites by $k_{\mathrm{canc}}$.
In the cross-basis overlap regime, where $E_{0,i} \in \{X, Y, Z\}$ and $E_{1,i} \in \{X, Y, Z\}$ with $E_{0,i} \ne E_{1,i}$, two distinct non-identity operators multiply to the third Pauli operator up to phase (such as $X_i Z_i = -i Y_i$).
All three local weights equal unity, $W(E_{0,i}) = W(E_{1,i}) = W(C_i) = 1$, so each overlap site contributes $\delta_i = 1 + 1 - 1 = 1$, and we denote the number of such sites by $k_Y$.

Summing the local discrepancy in Eq.~\eqref{eq:app_local_symplectic_discrepancy} over all $n$ qubits yields
\begin{equation}
\sum_{i=1}^n \big[ W(E_{0,i}) + W(E_{1,i}) - W(C_i) \big] = k_Y + 2 k_{\mathrm{canc}}.
\label{eq:app_symplectic_discrepancy_sum}
\end{equation}
Equation~\eqref{eq:app_symplectic_discrepancy_sum} directly establishes Eq.~\eqref{eq:app_symplectic_sum_identity}.
Evaluating Eq.~\eqref{eq:app_symplectic_sum_identity} on minimal-weight coset representatives with $|E_0|_{\mathrm{symp}} = W_0(\bm{s})$ and $|E_1|_{\mathrm{symp}} = W_1(\bm{s})$, substituting $|C|_{\mathrm{symp}} \equiv d \pmod 2$, and reducing modulo 2 yields Eq.~\eqref{eq:app_symplectic_parity_shift}.

For any odd code distance $d \equiv 1 \pmod 2$, a dominant logical cycle $C$ of physical weight $d$ can be decomposed into two error chains $E_0$ and $E_1$ that share an odd number of cross-basis overlaps $k_Y \equiv 1 \pmod 2$.
For example, factoring a single $Y_i$ operator within $C$ into an $X_i$ fault in $E_0$ and a $Z_i$ fault in $E_1$ produces $k_Y = 1$ and shifts the weight sum to the even integer $W_0(\bm{s}) + W_1(\bm{s}) = d + 1$.
Partitioning the remaining $d - 1$ non-overlapping faults equally between $E_0$ and $E_1$ yields equal coset weights
\begin{equation}
W_0(\bm{s}) = W_1(\bm{s}) = \frac{d + 1}{2}.
\label{eq:app_symplectic_tie_weights}
\end{equation}
This partition simultaneously satisfies the syndrome constraints and saturates the triangle inequality.
These configurations produce exact coset weight ties $\Delta W(\bm{s}) = |W_1(\bm{s}) - W_0(\bm{s})| = 0$ on a non-empty tie submanifold $\Omega_{\mathrm{tie}} \subset \Omega_{\mathrm{fail}}$ of probability measure $P(\Omega_{\mathrm{tie}}) = \Theta(p^{(d+1)/2})$.
Consequently, the minimal gap vanishes ($\Delta W_{\min} = 0$), and Theorem~\ref{thm:universal_gap_scaling_main} establishes Eq.~\eqref{eq:app_symplectic_gapless_scaling}.
\end{proof}

\begin{proposition}[Subsystem Gauge Groups and Spacetime Circuit Hypergraphs]
\label{thm:app_gauge_and_hypergraphs}
Homological gap closing ($\Delta W_{\min} = 0$) and the resulting linear sampling overhead scaling $C_{\mathrm{syn}}^* - 1 = \Theta(p_L) = \Theta(p^{(d+1)/2})$ extend beyond depolarizing stabilizer codes to subsystem codes and fault-tolerant circuit spacetime hypergraphs.

Consider first a subsystem code $\mathcal{C}_{\mathrm{sub}}$ defined by a non-abelian gauge group $\mathcal{G} \subset \mathcal{P}_n$ with abelian stabilizer center $\mathcal{S} = Z(\mathcal{G})$ and bare logical operators $L \in C(\mathcal{G}) \setminus \mathcal{G}$ of odd distance $d$.
Any two minimal-weight error configurations $E_0 \in \mathcal{H}_0(\bm{s})$ and $E_1 \in \mathcal{H}_1(\bm{s})$ sharing syndrome $\bm{s}$ differ by a dressed logical operator $E_0 \oplus E_1 = L \oplus S \oplus G$ for some check element $S \in \mathcal{S}$ of even weight $|S| \equiv 0 \pmod 2$ and some gauge operator $G \in \mathcal{G}$.
The resulting coset weight sum obeys the parity relation
\begin{equation}
W_0(\bm{s}) + W_1(\bm{s}) \equiv d + |G| + k_Y \pmod 2.
\label{eq:app_gauge_parity_shift}
\end{equation}
Whenever the gauge group $\mathcal{G}$ contains odd-weight generators satisfying $|G| \equiv 1 \pmod 2$ (as in three-dimensional single-shot gauge color codes~\cite{bombin2015gauge}), even single-component noise with $k_Y = 0$ admits an even weight sum $W_0(\bm{s}) + W_1(\bm{s}) \equiv d + 1 \equiv 0 \pmod 2$, closing the gap ($\Delta W_{\min} = 0$).

Consider next a circuit-level spacetime detector hypergraph $\mathcal{H}_{\mathrm{st}} = (V_D, \mathcal{E}_{\mathrm{st}})$ for fault-tolerant syndrome extraction.
Two-qubit entangling gates propagate single ancilla faults into multi-detector hook hyperedges $h \in \mathcal{E}_{\mathrm{st}}$ that carry unit circuit fault weight $|h|_{\mathrm{circ}} = 1$ while spanning $\ell_h \ge 2$ elementary graph edges.
Decomposing a logical spacetime cycle $C$ across $m_{\mathrm{hook}}$ hook faults shifts the circuit weight sum to
\begin{equation}
W_0(\bm{s}) + W_1(\bm{s}) \equiv d - \sum_{r=1}^{m_{\mathrm{hook}}} (\ell_r - 1) \pmod 2.
\label{eq:app_hypergraph_parity_shift}
\end{equation}
Whenever the total edge-reduction index $\sum_{r=1}^{m_{\mathrm{hook}}} (\ell_r - 1)$ is odd, spacetime parity invariance breaks, allowing equal minimal circuit weights $W_0(\bm{s}) = W_1(\bm{s})$ and enforcing $\Delta W_{\min} = 0$.
\end{proposition}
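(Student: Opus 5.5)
The proof splits into the subsystem and spacetime cases. Each follows the template of Propositions~\ref{thm:app_linear_parity_invariance} and~\ref{thm:app_symplectic_gap_closing}. First derive a mod-2 identity for $W_0(\bm{s})+W_1(\bm{s})$. Then exhibit an explicit tie configuration $W_0=W_1=(d+1)/2$ on a syndrome set of measure $\Theta(p^{(d+1)/2})$. Finally invoke Theorem~\ref{thm:universal_gap_scaling_main} with $\Delta W_{\min}=0$ to obtain $C_{\mathrm{syn}}^*-1=\Theta(p_L)=\Theta(p^{(d+1)/2})$.

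\textbf{Subsystem case.} Take minimal representatives $E_0,E_1$ with equal syndrome. Their product commutes with $\mathcal{S}$. Because the two lie in opposing bare-logical classes modulo gauge, the product decomposes as $E_0\oplus E_1=L\oplus S\oplus G$.
\begin{itemize}
\item Apply the sitewise discrepancy count of Eq.~\eqref{eq:app_symplectic_sum_identity}. This gives $W_0+W_1\equiv |E_0\oplus E_1|+k_Y \pmod 2$.
\item Expand $|L\oplus S\oplus G|$ using the parity homomorphism~\eqref{eq:app_linear_parity_homomorphism}. This is legitimate within a single Pauli component. For mixed components, apply~\eqref{eq:app_symplectic_sum_identity} again and absorb the extra cross-basis sites into $k_Y$.
\item Use $|S|\equiv0$ (the even-weight hypothesis) and $|L|\equiv d$. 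This yields Eq.~\eqref{eq:app_gauge_parity_shift}.
\end{itemize}
For gap closing, choose a minimum-weight dressed logical $C$ of weight $d$ that includes one odd-weight gauge factor. Split its support into $E_0$ and $E_1$ so that the gauge factor accounts for the parity shift, giving weights $(d+1)/2$ each. Two checks remain: both halves have the same syndrome, which holds because $G$ is undetectable by the checks in $\mathcal{S}$; and neither coset admits a lighter representative, which follows from the triangle bound~\eqref{eq:app_triangle_ineq_gap_bound}.

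\textbf{Spacetime case.} Work on the hypergraph, where the circuit weight of a fault set is its number of hyperedges. The elementary-graph weight is $\sum_r \ell_r$ over hyperedges, which differs from the circuit weight by $\sum_r(\ell_r-1)$.
\begin{itemize}
\item In the elementary-edge representation the decoding graph is linear over $\FF$. The argument of Proposition~\ref{thm:app_linear_parity_invariance} therefore gives parity $\equiv d_{\mathrm{graph}}$ for the edge-weight sum.
\item Subtract the reduction index of the hooks used. This gives Eq.~\eqref{eq:app_hypergraph_parity_shift}, after identifying $d$ with the circuit distance so that the bookkeeping is consistent.
\end{itemize}
For gap closing, take a minimal logical spacetime cycle containing a single hook with $\ell-1$ odd, for example a weight-2 hook. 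Place the hook entirely in $E_0$ and distribute the remaining edges to balance the weights. Measure $\Theta(p^{(d+1)/2})$ follows from counting the finitely many placements of such cycles, as in the first step of Theorem~\ref{thm:universal_gap_scaling_main}.

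\textbf{Main obstacle.} The hard step is rigor in the parity bookkeeping. In the gauge case, $|L\oplus S\oplus G|$ need not reduce additively mod 2 when the factors overlap across bases. In the hypergraph case, a hook may be split between $E_0$ and $E_1$, or a decomposition may be non-unique. My approach is to fix one canonical elementary-edge decomposition per hyperedge and push all overlaps into the $k_Y$ and $k_{\mathrm{canc}}$ counters via~\eqref{eq:app_symplectic_discrepancy_sum}. I would then state the theorem as conditional on the existence of the odd-index tie configuration, which is how the proposition is phrased ("whenever"). This reduces the claim to the explicit construction plus Theorem~\ref{thm:universal_gap_scaling_main}, and I would also verify the theorem's hypothesis $\bm{\Sigma}\succ0$ for that construction.
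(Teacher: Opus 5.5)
Your route is the paper's. You apply the sitewise count \eqref{eq:app_symplectic_sum_identity} to $E_0\oplus E_1=L\oplus S\oplus G$, use the hook-replacement weight count, and then invoke Theorem~\ref{thm:universal_gap_scaling_main}. The paper treats $|L\oplus S\oplus G|$ as additive mod 2 without comment; your absorption of the extra cross-basis sites into $k_Y$ makes that step explicit. The paper stops at saying an even weight sum ``permits'' a tie. Your explicit tie constructions go beyond it, and as written they fail.

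\emph{Subsystem case.} With single-component noise, $k_Y=0$, so your own identity gives $|E_0|+|E_1|=|C|+2k_{\mathrm{canc}}$ for any split $C=E_0\oplus E_1$. For a weight-$d$ dressed logical this sum is odd, so two halves of weight $(d+1)/2$ are impossible, whatever gauge factor $C$ contains. An even sum requires $E_0\oplus E_1$ itself to be an even-weight dressed logical. The odd-weight gauge element is what lets a representative $L\cdot G$ of a dominant class have even weight, and you must split one of weight exactly $d+1$. Second, the triangle bound \eqref{eq:app_triangle_ineq_gap_bound} does not give minimality. With weight-$(d+1)/2$ representatives in both cosets, $W_0+W_1\ge d$ still allows $(W_0,W_1)=((d-1)/2,(d+1)/2)$, which is exactly the gapped case. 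In the linear setting parity excluded that case; here nothing does. A weight-$(d-1)/2$ gauge-equivalent $E_0'$ of $E_0$ would simply make $E_0'\oplus E_1$ a weight-$d$ dressed logical, which is allowed. So $\Delta W_{\min}=0$ needs a code-specific argument that some syndrome has both coset minima equal to $(d+1)/2$. Assuming the tie outright, as your closing remark suggests, reproduces the paper's assertion-level argument but not a proof.

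\emph{Spacetime case.} Take a minimal cycle of graph length $d$ containing one hook with $\ell=2$; its circuit weight is $d-1$.
\begin{itemize}
\item Your balanced split gives $W_0=W_1=(d-1)/2$, a tie set of measure $\Theta(p^{(d-1)/2})$, and $p_L$ lowered to the same order. You obtain $C_{\mathrm{syn}}^*-1=\Theta(p_L)$ with a hook-reduced exponent, not $\Theta(p^{(d+1)/2})$.
\item Re-identifying $d$ with the circuit distance does not fix this. That distance is then $d-1$, which is even, so the odd-$d$ hypothesis of Theorem~\ref{thm:universal_gap_scaling_main} fails, and \eqref{eq:app_hypergraph_parity_shift} subtracts the hook reduction twice.
\item If instead $d$ is an odd circuit distance and the hook lies in a circuit-minimal cycle, a disjoint split sums to $d$ and there is no tie at all.
\end{itemize}
The consistent construction keeps the circuit distance at odd $d$. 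It splits a cycle of graph length $d+2$ containing one $\ell=2$ hook (circuit weight $d+1$) into halves of circuit weight $(d+1)/2$. The same minimality argument is still owed. Finally, deriving parity $\equiv d_{\mathrm{graph}}$ from Proposition~\ref{thm:app_linear_parity_invariance} requires every homologically trivial cycle of your elementary graph to have even length, the analogue of even-weight checks. Linearity over $\FF$ alone does not give this.
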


\begin{proof}
We establish the subsystem gauge parity shift in Eq.~\eqref{eq:app_gauge_parity_shift} and the spacetime hypergraph parity shift in Eq.~\eqref{eq:app_hypergraph_parity_shift} in turn.
In a subsystem code $\mathcal{C}_{\mathrm{sub}}$, syndrome extraction measures only the abelian stabilizer generators $S_j \in \mathcal{S}$ while leaving gauge degrees of freedom $G \in \mathcal{G} \setminus \mathcal{S}$ unmeasured.
Because the difference $E_0 \oplus E_1$ between two recovery candidates with identical syndrome $\bm{s}$ commutes with every check generator in $\mathcal{S}$ and implements a non-trivial logical transformation, it belongs to the dressed logical coset
\begin{equation}
L \cdot \mathcal{G} = \big\{ L \oplus S \oplus G : S \in \mathcal{S} \text{ and } G \in \mathcal{G} \big\}.
\label{eq:app_dressed_logical_coset}
\end{equation}
Applying the symplectic weight identity in Eq.~\eqref{eq:app_symplectic_sum_identity} to the decomposition $E_0 \oplus E_1 = L \oplus S \oplus G$ with $|L| \equiv d \pmod 2$ and $|S| \equiv 0 \pmod 2$ yields $|E_0| + |E_1| \equiv d + |G| + k_Y \pmod 2$, which proves Eq.~\eqref{eq:app_gauge_parity_shift}.
When $\mathcal{G}$ contains an odd-weight generator $|G| \equiv 1 \pmod 2$, absorption of $G$ into $E_1$ shifts $W_0(\bm{s}) + W_1(\bm{s})$ to an even integer even at $k_Y = 0$, permitting exact weight ties $W_0(\bm{s}) = W_1(\bm{s}) = (d+1)/2$.

In a circuit-level spacetime detector hypergraph $\mathcal{H}_{\mathrm{st}} = (V_D, \mathcal{E}_{\mathrm{st}})$, an uncorrupted logical cycle without hook faults requires at least $d$ unit-weight graph edges.
During a two-qubit entangling gate, a single fault on a syndrome-extraction ancilla can propagate onto data qubits.
This correlated event excites a hook hyperedge $h \in \mathcal{E}_{\mathrm{st}}$ spanning $\ell_h \ge 2$ elementary edges at unit circuit weight $|h|_{\mathrm{circ}} = 1$ and prior probability $\Theta(p^1)$.
Replacing $\ell_r$ ordinary edges along the error chain $E_1$ by a single hook fault $h_r$ reduces the circuit fault weight of $E_1$ by $\ell_r - 1$.
Summing this reduction over $m_{\mathrm{hook}}$ hook hyperedges yields the net circuit weight sum
\begin{equation}
W_0(\bm{s}) + W_1(\bm{s}) = d - \sum_{r=1}^{m_{\mathrm{hook}}} (\ell_r - 1) + 2 k_{\mathrm{canc}}.
\label{eq:app_hypergraph_weight_balance}
\end{equation}
Reducing Eq.~\eqref{eq:app_hypergraph_weight_balance} modulo 2 yields Eq.~\eqref{eq:app_hypergraph_parity_shift}.
Whenever $\sum_{r=1}^{m_{\mathrm{hook}}} (\ell_r - 1) \equiv 1 \pmod 2$, the right-hand side of Eq.~\eqref{eq:app_hypergraph_parity_shift} is even, admitting equal minimal circuit weights $W_0(\bm{s}) = W_1(\bm{s})$ on dominant failure syndromes.
In both algebraic structures, the vanishing gap $\Delta W_{\min} = 0$ implies $C_{\mathrm{syn}}^* - 1 = \Theta(p_L) = \Theta(p^{(d+1)/2})$ by Theorem~\ref{thm:universal_gap_scaling_main}.
\end{proof}

Propositions~\ref{thm:app_linear_parity_invariance}--\ref{thm:app_gauge_and_hypergraphs} establish that whenever symplectic overlaps, odd-weight gauge generators, or spacetime hook hyperedges break linear modulo-2 parity invariance, the homological weight gap closes ($\Delta W_{\min} = 0$).
In all three gapless settings, the optimal syndrome-resolved sampling overhead scales linearly with the unmitigated logical error rate as $C_{\mathrm{syn}}^* - 1 = \Theta(p_L) = \Theta(p^{(d+1)/2})$.
Evaluating the exact coset partition functions $Z_L(\bm{s})$ across all $2^{N_D}$ syndromes, however, is $\#\mathrm{P}$-hard on general spacetime decoding graphs.
Supplementary Section~\ref{sec:appendix_proofs_discrete_protocol} develops the coarse-grained sector protocol to bypass per-syndrome partition function evaluation while retaining $\mathcal{O}(p_L)$ overhead scaling.

\section{Coarse-Grained Error Cancellation: Covariance Hierarchy and Resolving Sectors}
\label{sec:appendix_proofs_discrete_protocol}
\label{sec:generalized_vpec_framework}

In this section, we present the complete mathematical proofs governing the general coarse-grained virtual error cancellation (VEC) protocol.
Although syndrome-resolved mitigation achieves the theoretical minimum sampling overhead $C_{\mathrm{syn}}^*$, computing exact posteriors $\bm{p}(\bm{s})$ requires evaluating partition functions over all fault configurations consistent with the syndrome $\bm{s}$, an operation that is $\#\mathrm{P}$-hard on general space-time decoding graphs.
Furthermore, calibrating syndrome-resolved likelihoods across all $2^{N_D}$ syndrome configurations is statistically intractable on large quantum processors.
The coarse-grained protocol resolves both obstacles by partitioning the syndrome space into a finite number $K$ of sectors and replacing microscopic Bayesian likelihoods with coarse-grained sector averages.

The mathematical development proceeds across three subsections:
In Subsection~\ref{subsec:app_discrete_filter_derivation}, we formulate the discrete optimization problem and derive the unique optimal sector weights and sampling overhead via Pythagorean expansion of residuals and strictly convex quadratic minimization.
In Subsection~\ref{subsec:app_proof_eves_law_sector_bound}, we establish Eve's law of total covariance, prove the Loewner ordering of inverse covariances on the positive-definite cone, and prove Theorem~\ref{thm:discrete_sector_bound}.
In Subsection~\ref{subsec:app_proof_minimal_resolving_basis}, we prove the Minimal Resolving Sector Basis Proposition (Proposition~\ref{thm:minimal_resolving_basis}), demonstrating that exact cancellation of an $M$-component logical error channel requires at least $K \ge M + 1$ sectors, and showing that at minimal cardinality $K = M + 1$ the unique algebraic solution coincides with the variational weights.
The dual-decoder realization of this framework, including L\'evy--Desplanques invertibility, condition number bounds, universal overhead scaling, and calibration suppression, is established in Supplementary Section~\ref{sec:supp_dual_decoder_protocol}.

\subsection{Coarse-Grained Variational Derivation of Sector Weights}
\label{subsec:app_discrete_filter_derivation}
\label{subsec:app_proof_discrete_variational_weights}

Let $(\Omega, \mathcal{F}, P)$ denote the discrete syndrome probability space, where each syndrome $\bm{s} \in \Omega$ occurs with probability $P(\bm{s}) > 0$ and produces a conditional logical error rate vector $\bm{p}(\bm{s}) \in [0, 1]^M$.
Let the syndrome space $\Omega$ be partitioned into $K$ mutually disjoint, non-empty sectors $\mathcal{S}_0, \mathcal{S}_1, \dots, \mathcal{S}_{K-1}$ such that $\Omega = \bigsqcup_{k=0}^{K-1} \mathcal{S}_k$.
Each sector $\mathcal{S}_k$ occurs with non-zero marginal probability
\begin{equation}
\pi_k \equiv P(\bm{s} \in \mathcal{S}_k) = \sum_{\bm{s} \in \mathcal{S}_k} P(\bm{s}) > 0, \qquad \sum_{k=0}^{K-1} \pi_k = 1.
\label{eq:app_sector_probabilities_def}
\end{equation}
Within each sector $\mathcal{S}_k$, the conditional error centroid is defined by the expectation
\begin{equation}
\bm{p}_k \equiv \EE_{\bm{s}}\big[ \bm{p}(\bm{s}) \mid \bm{s} \in \mathcal{S}_k \big] = \frac{1}{\pi_k} \sum_{\bm{s} \in \mathcal{S}_k} P(\bm{s}) \bm{p}(\bm{s}) \in [0, 1]^M.
\label{eq:app_discrete_pk_def}
\end{equation}
By the law of total expectation, the unconditional baseline error vector satisfies
\begin{equation}
\bar{\bm{p}} \equiv \EE_{\bm{s}}\big[ \bm{p}(\bm{s}) \big] = \sum_{k=0}^{K-1} \pi_k \bm{p}_k.
\label{eq:app_pbar_total_expectation}
\end{equation}
The between-sector covariance matrix is the weighted outer-product sum
\begin{equation}
\bm{\Sigma}_S \equiv \sum_{k=0}^{K-1} \pi_k \big( \bm{p}_k - \bar{\bm{p}} \big) \big( \bm{p}_k - \bar{\bm{p}} \big)^\top \in \RR^{M \times M}.
\label{eq:app_discrete_Sigma_S_def}
\end{equation}

A discrete sector filter is specified by a vector of real weights $\bm{w} = (w_0, w_1, \dots, w_{K-1})^\top \in \RR^K$, assigning weight $w_k$ to every syndrome in sector $\mathcal{S}_k$.
To ensure unbiased quantum error cancellation at the logical level, the filter weights must satisfy the two linear channel-inversion constraints,
\begin{align}
\sum_{k=0}^{K-1} \pi_k w_k &= 1 \qquad \text{(Trace Preservation)}, \label{eq:app_discrete_trace_preservation} \\
\sum_{k=0}^{K-1} \pi_k w_k \bm{p}_k &= \bm{0} \qquad \text{(Vector Error Cancellation)}.
\label{eq:app_discrete_error_cancellation}
\end{align}
Under the sector probability measure, the variance of the filter is defined by
\begin{equation}
\Var_S[w] \equiv \sum_{k=0}^{K-1} \pi_k (w_k - 1)^2 = \sum_{k=0}^{K-1} \pi_k w_k^2 - 1.
\label{eq:app_filter_variance_def}
\end{equation}
The cross-covariance between the sector weights and the error centroids evaluates to an invariant quantity governed entirely by the unmitigated baseline error vector $\bar{\bm{p}}$.
Specifically, expanding the cross-covariance definition gives
\begin{align}
\Cov_S(w, \bm{p}) &\equiv \sum_{k=0}^{K-1} \pi_k (w_k - 1)(\bm{p}_k - \bar{\bm{p}}) \nonumber \\
&= \sum_{k=0}^{K-1} \pi_k w_k \bm{p}_k - \bar{\bm{p}} \sum_{k=0}^{K-1} \pi_k w_k - \sum_{k=0}^{K-1} \pi_k \bm{p}_k + \bar{\bm{p}} \sum_{k=0}^{K-1} \pi_k \nonumber \\
&= \bm{0} - \bar{\bm{p}}(1) - \bar{\bm{p}} + \bar{\bm{p}}(1) = -\bar{\bm{p}},
\label{eq:app_discrete_cross_covariance}
\end{align}
where the third line follows by substituting the constraints in Eqs.~\eqref{eq:app_discrete_trace_preservation} and~\eqref{eq:app_discrete_error_cancellation}.

For an arbitrary test vector $\bm{a} \in \RR^M$, we introduce the affine residuals
\begin{equation}
\Delta_{\bm{a}, k} \equiv (w_k - 1) - \bm{a}^\top (\bm{p}_k - \bar{\bm{p}}), \qquad k \in \{0, \dots, K-1\}.
\label{eq:app_residual_def}
\end{equation}
Because the squared terms $\Delta_{\bm{a}, k}^2$ are non-negative and the sector probabilities satisfy $\pi_k > 0$, their weighted sum is non-negative, $\sum_{k=0}^{K-1} \pi_k \Delta_{\bm{a}, k}^2 \ge 0$.
Expanding this quadratic sum yields the identity
\begin{align}
\sum_{k=0}^{K-1} \pi_k \Delta_{\bm{a}, k}^2 &= \sum_{k=0}^{K-1} \pi_k (w_k - 1)^2 - 2 \bm{a}^\top \sum_{k=0}^{K-1} \pi_k (w_k - 1)(\bm{p}_k - \bar{\bm{p}}) + \bm{a}^\top \bigg( \sum_{k=0}^{K-1} \pi_k (\bm{p}_k - \bar{\bm{p}})(\bm{p}_k - \bar{\bm{p}})^\top \bigg) \bm{a} \nonumber \\
&= \Var_S[w] - 2 \bm{a}^\top \Cov_S(w, \bm{p}) + \bm{a}^\top \bm{\Sigma}_S \bm{a}.
\label{eq:app_residual_expansion}
\end{align}
Substituting the cross-covariance identity $\Cov_S(w, \bm{p}) = -\bar{\bm{p}}$ into Eq.~\eqref{eq:app_residual_expansion} gives the universal quadratic lower bound
\begin{equation}
\Var_S[w] + 2 \bm{a}^\top \bar{\bm{p}} + \bm{a}^\top \bm{\Sigma}_S \bm{a} \ge 0, \qquad \forall \bm{a} \in \RR^M.
\label{eq:app_discrete_quadratic_form}
\end{equation}

\begin{proposition}[Discrete Variational Filter Minimizer]
\label{prop:app_discrete_variational_derivation}
Let the between-sector covariance matrix satisfy $\bm{\Sigma}_S \succ 0$.
The strictly convex quadratic objective function $h(\bm{a}) \equiv \bm{a}^\top \bm{\Sigma}_S \bm{a} + 2 \bm{a}^\top \bar{\bm{p}}$ achieves its unique global minimum at
\begin{equation}
\bm{a}^* = -\bm{\Sigma}_S^{-1} \bar{\bm{p}}.
\label{eq:app_optimal_a_vector}
\end{equation}
Consequently, the discrete filter variance satisfies the sharp lower bound
\begin{equation}
\Var_S[w] \ge \bar{\bm{p}}^\top \bm{\Sigma}_S^{-1} \bar{\bm{p}},
\label{eq:app_discrete_variance_lower_bound}
\end{equation}
which is saturated if and only if the sector filter weights are chosen as
\begin{equation}
w_k^* = 1 - \bar{\bm{p}}^\top \bm{\Sigma}_S^{-1} (\bm{p}_k - \bar{\bm{p}}), \qquad k \in \{0, \dots, K-1\}.
\label{eq:app_discrete_optimal_weights_formula}
\end{equation}
The minimal coarse-grained sampling overhead evaluates to
\begin{equation}
C_S^* \equiv 1 + \Var_S[w^*] = 1 + \bar{\bm{p}}^\top \bm{\Sigma}_S^{-1} \bar{\bm{p}}.
\label{eq:app_discrete_cs_star_formula}
\end{equation}
\end{proposition}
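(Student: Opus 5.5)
The plan mirrors Proof~1 of Theorem~\ref{thm:variational_cs_bound}, now on the finite sector probability space $\{0,\dots,K-1\}$ with measure $\pi_k$. All the preparatory algebra is already in place. The cross-covariance identity $\Cov_S(w,\bm{p})=-\bar{\bm{p}}$ of Eq.~\eqref{eq:app_discrete_cross_covariance} holds for every feasible $\bm{w}$, and the nonnegativity of $\sum_k \pi_k\Delta_{\bm{a},k}^2$ gives the quadratic inequality~\eqref{eq:app_discrete_quadratic_form} for all $\bm{a}\in\RR^M$. The proof therefore reduces to three steps: optimize over $\bm{a}$, characterize equality, and confirm that the saturating weights are actually feasible.

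\emph{First step: optimize over $\bm{a}$.} Since $\bm{\Sigma}_S\succ0$, the function $h(\bm{a})=\bm{a}^\top\bm{\Sigma}_S\bm{a}+2\bm{a}^\top\bar{\bm{p}}$ is strictly convex. Its gradient $2\bm{\Sigma}_S\bm{a}+2\bar{\bm{p}}$ vanishes only at $\bm{a}^*=-\bm{\Sigma}_S^{-1}\bar{\bm{p}}$, where $h(\bm{a}^*)=-\bar{\bm{p}}^\top\bm{\Sigma}_S^{-1}\bar{\bm{p}}$. Substituting $\bm{a}^*$ into Eq.~\eqref{eq:app_discrete_quadratic_form} gives the bound~\eqref{eq:app_discrete_variance_lower_bound}.

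\emph{Second step: characterize equality.} Equality forces $\sum_k\pi_k\Delta_{\bm{a}^*,k}^2=0$. Because every $\pi_k>0$, this means $\Delta_{\bm{a}^*,k}=0$ for each $k$. Setting each residual to zero gives exactly $w_k=1-\bar{\bm{p}}^\top\bm{\Sigma}_S^{-1}(\bm{p}_k-\bar{\bm{p}})$, which proves the ``only if'' direction. The positivity $\pi_k>0$ matters here: it makes the conclusion hold sector by sector, not merely almost surely.

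\emph{Third step: feasibility of $w^*$ (the main obstacle).} The bound so far was derived only for feasible $\bm{w}$, so the ``if'' direction needs $w^*$ to satisfy Eqs.~\eqref{eq:app_discrete_trace_preservation}--\eqref{eq:app_discrete_error_cancellation}. Trace preservation follows from $\sum_k\pi_k(\bm{p}_k-\bar{\bm{p}})=\bm{0}$. For error cancellation, write $\bm{p}_k=\bar{\bm{p}}+(\bm{p}_k-\bar{\bm{p}})$. This gives
\[
\sum_k\pi_k w_k^*\bm{p}_k=\bar{\bm{p}}-\sum_k\pi_k(\bm{p}_k-\bar{\bm{p}})(\bm{p}_k-\bar{\bm{p}})^\top\bm{\Sigma}_S^{-1}\bar{\bm{p}}=\bar{\bm{p}}-\bar{\bm{p}}=\bm{0},
\]
in parallel with Eq.~\eqref{eq:app_feasibility_pseudoinverse_verify}. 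Once $w^*$ is feasible and $\Delta_{\bm{a}^*,k}=0$, Eq.~\eqref{eq:app_residual_expansion} gives $\Var_S[w^*]=-h(\bm{a}^*)=\bar{\bm{p}}^\top\bm{\Sigma}_S^{-1}\bar{\bm{p}}$, so the bound is attained.

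Finally, the overhead formula~\eqref{eq:app_discrete_cs_star_formula} follows from $C_S^*=\sum_k\pi_k(w_k^*)^2=1+\Var_S[w^*]$, using trace preservation as in Eq.~\eqref{eq:app_filter_variance_def}.
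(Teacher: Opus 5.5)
Your proof is correct and follows the paper's argument: minimize the strictly convex $h$ at $\bm{a}^*=-\bm{\Sigma}_S^{-1}\bar{\bm{p}}$, substitute into the quadratic inequality, and use $\pi_k>0$ to force $\Delta_{\bm{a}^*,k}=0$ in every sector. Your explicit check that $w^*$ satisfies trace preservation and error cancellation is a worthwhile addition, because the paper's proof of this proposition assumes feasibility for the ``if'' direction and the attained minimum; the same computation appears only later, in item~4 of Proposition~\ref{thm:minimal_resolving_basis}, where it in fact holds for any $K$.
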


\begin{proof}
Because $\bm{\Sigma}_S \succ 0$, the Hessian matrix $\nabla^2 h(\bm{a}) = 2 \bm{\Sigma}_S$ is positive definite on $\RR^M$, certifying strict convexity.
Setting the gradient to zero yields
\begin{equation}
\nabla_{\bm{a}} h(\bm{a}) = 2 \bm{\Sigma}_S \bm{a} + 2 \bar{\bm{p}} = \bm{0} \implies \bm{a}^* = -\bm{\Sigma}_S^{-1} \bar{\bm{p}}.
\end{equation}
Evaluating $h(\bm{a})$ at this critical point gives
\begin{equation}
h(\bm{a}^*) = (-\bm{\Sigma}_S^{-1}\bar{\bm{p}})^\top \bm{\Sigma}_S (-\bm{\Sigma}_S^{-1}\bar{\bm{p}}) + 2 (-\bm{\Sigma}_S^{-1}\bar{\bm{p}})^\top \bar{\bm{p}} = \bar{\bm{p}}^\top \bm{\Sigma}_S^{-1}\bar{\bm{p}} - 2 \bar{\bm{p}}^\top \bm{\Sigma}_S^{-1}\bar{\bm{p}} = -\bar{\bm{p}}^\top \bm{\Sigma}_S^{-1}\bar{\bm{p}}.
\end{equation}
Substituting $h(\bm{a}^*)$ into Eq.~\eqref{eq:app_discrete_quadratic_form} yields
\begin{equation}
\Var_S[w] - \bar{\bm{p}}^\top \bm{\Sigma}_S^{-1} \bar{\bm{p}} \ge 0 \implies \Var_S[w] \ge \bar{\bm{p}}^\top \bm{\Sigma}_S^{-1} \bar{\bm{p}}.
\end{equation}
Because the probabilities $\pi_k$ are strictly positive, the lower bound is saturated if and only if $\sum_k \pi_k \Delta_{\bm{a}^*, k}^2 = 0$, which holds if and only if $\Delta_{\bm{a}^*, k} = 0$ for all $k \in \{0, \dots, K-1\}$.
Setting $\Delta_{\bm{a}^*, k} = 0$ uniquely determines the optimal weights,
\begin{equation}
w_k - 1 = \bm{a}^{*\top} (\bm{p}_k - \bar{\bm{p}}) = -\bar{\bm{p}}^\top \bm{\Sigma}_S^{-1}(\bm{p}_k - \bar{\bm{p}}) \implies w_k^* = 1 - \bar{\bm{p}}^\top \bm{\Sigma}_S^{-1}(\bm{p}_k - \bar{\bm{p}}).
\end{equation}
Adding unity to the minimal variance establishes $C_S^* = 1 + \bar{\bm{p}}^\top \bm{\Sigma}_S^{-1}\bar{\bm{p}}$, completing the proof.
\end{proof}

\subsection{Eve's Law Covariance Hierarchy and Discrete Variational Sector Bound}
\label{subsec:app_proof_eves_law_sector_bound}

We now relate the coarse-grained between-sector covariance $\bm{\Sigma}_S$ to the full decoder covariance $\bm{\Sigma}$ through the law of total covariance, and prove Theorem~\ref{thm:discrete_sector_bound}.
Recall that the full decoder covariance matrix is defined by
\begin{equation}
\bm{\Sigma} \equiv \Cov_{\bm{s}}(\bm{p}(\bm{s})) = \sum_{\bm{s} \in \Omega} P(\bm{s})\big(\bm{p}(\bm{s}) - \bar{\bm{p}}\big)\big(\bm{p}(\bm{s}) - \bar{\bm{p}}\big)^\top \in \RR^{M \times M}.
\label{eq:app_full_covariance_def}
\end{equation}
Within each sector $\mathcal{S}_k$, the conditional covariance matrix is
\begin{equation}
\Cov_{\bm{s}}\big(\bm{p}(\bm{s}) \mid \bm{s} \in \mathcal{S}_k\big) \equiv \frac{1}{\pi_k} \sum_{\bm{s} \in \mathcal{S}_k} P(\bm{s})\big(\bm{p}(\bm{s}) - \bm{p}_k\big)\big(\bm{p}(\bm{s}) - \bm{p}_k\big)^\top \succeq 0.
\label{eq:app_conditional_covariance_def}
\end{equation}
The intra-sector covariance matrix is the convex combination of these conditional covariances,
\begin{equation}
\bm{\Sigma}_{\mathrm{within}} \equiv \sum_{k=0}^{K-1} \pi_k \Cov_{\bm{s}}\big(\bm{p}(\bm{s}) \mid \bm{s} \in \mathcal{S}_k\big) = \sum_{k=0}^{K-1} \sum_{\bm{s} \in \mathcal{S}_k} P(\bm{s})\big(\bm{p}(\bm{s}) - \bm{p}_k\big)\big(\bm{p}(\bm{s}) - \bm{p}_k\big)^\top.
\label{eq:app_discrete_Sigma_within_def}
\end{equation}

\begin{lemma}[Eve's Law of Total Covariance]
\label{lem:app_eves_law}
The full decoder covariance matrix decomposes into the sum of between-sector and intra-sector covariance matrices,
\begin{equation}
\bm{\Sigma} = \bm{\Sigma}_S + \bm{\Sigma}_{\mathrm{within}}.
\label{eq:app_eves_law_equality}
\end{equation}
Consequently, the full covariance dominates the between-sector covariance in the Loewner partial order,
\begin{equation}
\bm{\Sigma} \succeq \bm{\Sigma}_S.
\label{eq:app_loewner_covariance_ordering}
\end{equation}
Furthermore, if $\bm{\Sigma}_S \succ 0$, the inverse covariance matrices satisfy the reversed Loewner ordering
\begin{equation}
\bm{\Sigma}_S^{-1} \succeq \bm{\Sigma}^{-1},
\label{eq:app_loewner_inverse_ordering}
\end{equation}
which implies the sampling overhead hierarchy $C_S^* \ge C_{\mathrm{syn}}^*$.
\end{lemma}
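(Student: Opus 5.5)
The plan is to verify the decomposition directly from the definitions in Eqs.~\eqref{eq:app_full_covariance_def}--\eqref{eq:app_discrete_Sigma_within_def}, then deduce the two Loewner orderings and the overhead comparison. For the equality, split the sum over $\Omega$ in Eq.~\eqref{eq:app_full_covariance_def} into sectors and, inside each sector $\mathcal{S}_k$, write $\bm{p}(\bm{s}) - \bar{\bm{p}} = (\bm{p}(\bm{s}) - \bm{p}_k) + (\bm{p}_k - \bar{\bm{p}})$. Expanding the outer product gives four terms. The pure within-sector term summed over $k$ is exactly $\bm{\Sigma}_{\mathrm{within}}$. The pure between-sector term is $\sum_k \pi_k(\bm{p}_k-\bar{\bm{p}})(\bm{p}_k-\bar{\bm{p}})^\top = \bm{\Sigma}_S$, using Eq.~\eqref{eq:app_sector_probabilities_def}. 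The two cross terms vanish, because $\sum_{\bm{s}\in\mathcal{S}_k} P(\bm{s})(\bm{p}(\bm{s})-\bm{p}_k) = \pi_k\bm{p}_k - \pi_k\bm{p}_k = \bm{0}$ by the centroid definition in Eq.~\eqref{eq:app_discrete_pk_def}. This establishes Eq.~\eqref{eq:app_eves_law_equality}.

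The ordering $\bm{\Sigma}\succeq\bm{\Sigma}_S$ then follows because $\bm{\Sigma}_{\mathrm{within}}$ is a nonnegative combination of outer products $\bm{u}\bm{u}^\top$, hence positive semidefinite: $\bm{v}^\top\bm{\Sigma}_{\mathrm{within}}\bm{v} = \sum P(\bm{s})(\bm{v}^\top(\bm{p}(\bm{s})-\bm{p}_k))^2 \ge 0$. If $\bm{\Sigma}_S\succ 0$, then $\bm{\Sigma}\succeq\bm{\Sigma}_S\succ0$, so $\bm{\Sigma}$ is also invertible.

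The reversed inverse ordering is the only step needing an argument beyond expansion. I would use the congruence trick. Set $\bm{T} \equiv \bm{\Sigma}_S^{-1/2}\bm{\Sigma}\bm{\Sigma}_S^{-1/2} \succeq \bm{I}$. The eigenvalues of $\bm{T}$ are then all at least $1$, so $\bm{T}^{-1}\preceq\bm{I}$. Conjugating back with $\bm{\Sigma}_S^{-1/2}$ gives $\bm{\Sigma}^{-1} = \bm{\Sigma}_S^{-1/2}\bm{T}^{-1}\bm{\Sigma}_S^{-1/2} \preceq \bm{\Sigma}_S^{-1}$. As an alternative, one could invoke operator monotonicity of $x\mapsto -x^{-1}$, but the congruence argument is self-contained.

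Finally, evaluating the quadratic form at $\bar{\bm{p}}$ gives $\bar{\bm{p}}^\top\bm{\Sigma}_S^{-1}\bar{\bm{p}} \ge \bar{\bm{p}}^\top\bm{\Sigma}^{-1}\bar{\bm{p}}$. Combining this with Proposition~\ref{prop:app_discrete_variational_derivation} and Theorem~\ref{thm:variational_cs_bound} yields $C_S^* \ge C_{\mathrm{syn}}^*$.
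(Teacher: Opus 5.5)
Your proposal is correct and follows the paper's proof essentially step for step: the same sector-wise split $\bm{p}(\bm{s})-\bar{\bm{p}} = (\bm{p}(\bm{s})-\bm{p}_k)+(\bm{p}_k-\bar{\bm{p}})$ with vanishing cross terms, positive semidefiniteness of $\bm{\Sigma}_{\mathrm{within}}$, and evaluation of the inverse ordering at $\bar{\bm{p}}$ to get $C_S^*\ge C_{\mathrm{syn}}^*$. The one difference is in deriving $\bm{\Sigma}_S^{-1}\succeq\bm{\Sigma}^{-1}$: you prove it directly via the congruence $\bm{\Sigma}_S^{-1/2}\bm{\Sigma}\bm{\Sigma}_S^{-1/2}\succeq\bm{I}$, whereas the paper simply cites operator monotonicity of matrix inversion (attributed to Loewner--Heinz), so your version is slightly more self-contained.
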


\begin{proof}
For any syndrome $\bm{s} \in \mathcal{S}_k$, we decompose the centered error vector by inserting the sector centroid $\bm{p}_k$,
\begin{equation}
\bm{p}(\bm{s}) - \bar{\bm{p}} = \big(\bm{p}(\bm{s}) - \bm{p}_k\big) + \big(\bm{p}_k - \bar{\bm{p}}\big).
\end{equation}
Taking the outer product gives
\begin{align}
\big(\bm{p}(\bm{s}) - \bar{\bm{p}}\big)\big(\bm{p}(\bm{s}) - \bar{\bm{p}}\big)^\top &= \big(\bm{p}(\bm{s}) - \bm{p}_k\big)\big(\bm{p}(\bm{s}) - \bm{p}_k\big)^\top + \big(\bm{p}_k - \bar{\bm{p}}\big)\big(\bm{p}_k - \bar{\bm{p}}\big)^\top \nonumber \\
&\quad + \big(\bm{p}(\bm{s}) - \bm{p}_k\big)\big(\bm{p}_k - \bar{\bm{p}}\big)^\top + \big(\bm{p}_k - \bar{\bm{p}}\big)\big(\bm{p}(\bm{s}) - \bm{p}_k\big)^\top.
\end{align}
Summing over all syndromes $\bm{s} \in \mathcal{S}_k$ weighted by $P(\bm{s})$, the cross terms vanish identically because
\begin{equation}
\sum_{\bm{s} \in \mathcal{S}_k} P(\bm{s})\big(\bm{p}(\bm{s}) - \bm{p}_k\big) = \pi_k \big(\bm{p}_k - \bm{p}_k\big) = \bm{0}.
\end{equation}
Summing the remaining terms over all $K$ sectors $k \in \{0, \dots, K-1\}$ yields
\begin{align}
\bm{\Sigma} &= \sum_{k=0}^{K-1} \sum_{\bm{s} \in \mathcal{S}_k} P(\bm{s})\big(\bm{p}(\bm{s}) - \bar{\bm{p}}\big)\big(\bm{p}(\bm{s}) - \bar{\bm{p}}\big)^\top \nonumber \\
&= \sum_{k=0}^{K-1} \sum_{\bm{s} \in \mathcal{S}_k} P(\bm{s})\big(\bm{p}(\bm{s}) - \bm{p}_k\big)\big(\bm{p}(\bm{s}) - \bm{p}_k\big)^\top + \sum_{k=0}^{K-1} \pi_k \big(\bm{p}_k - \bar{\bm{p}}\big)\big(\bm{p}_k - \bar{\bm{p}}\big)^\top \nonumber \\
&= \bm{\Sigma}_{\mathrm{within}} + \bm{\Sigma}_S,
\end{align}
which establishes Eq.~\eqref{eq:app_eves_law_equality}.
Because each conditional covariance matrix in Eq.~\eqref{eq:app_conditional_covariance_def} is positive semidefinite and the sector probabilities satisfy $\pi_k > 0$, the intra-sector covariance matrix is positive semidefinite,
\begin{equation}
\bm{\Sigma}_{\mathrm{within}} \succeq 0 \implies \bm{\Sigma} - \bm{\Sigma}_S = \bm{\Sigma}_{\mathrm{within}} \succeq 0 \implies \bm{\Sigma} \succeq \bm{\Sigma}_S.
\end{equation}

When $\bm{\Sigma}_S \succ 0$, both $\bm{\Sigma}_S$ and $\bm{\Sigma}$ are strictly positive definite.
By the Loewner--Heinz theorem on the operator monotonicity of the inverse function on the positive-definite cone, the Loewner partial order is reversed under matrix inversion,
\begin{equation}
\bm{\Sigma} \succeq \bm{\Sigma}_S \succ 0 \implies \bm{\Sigma}_S^{-1} \succeq \bm{\Sigma}^{-1}.
\end{equation}
Multiplying by $\bar{\bm{p}}$ on the left and right yields the scalar inequality
\begin{equation}
\bar{\bm{p}}^\top \bm{\Sigma}_S^{-1} \bar{\bm{p}} \ge \bar{\bm{p}}^\top \bm{\Sigma}^{-1} \bar{\bm{p}}.
\end{equation}
Adding unity to both sides establishes the overhead hierarchy
\begin{equation}
C_S^* \equiv 1 + \bar{\bm{p}}^\top \bm{\Sigma}_S^{-1}\bar{\bm{p}} \ge 1 + \bar{\bm{p}}^\top \bm{\Sigma}^{-1}\bar{\bm{p}} = C_{\mathrm{syn}}^*,
\end{equation}
concluding the proof of Lemma~\ref{lem:app_eves_law}.
\end{proof}

We are now ready to state and prove Theorem~\ref{thm:discrete_sector_bound}.

\begin{theorem}[Discrete Variational Sector Bound and Covariance Hierarchy]
\label{thm:discrete_sector_bound}
Let $(\Omega, \mathcal{F}, P)$ be the syndrome probability space with conditional error vector $\bm{p}(\bm{s}) \in [0, 1]^M$ and baseline mean $\bar{\bm{p}} \in [0, 1]^M$.
For any partition $\Omega = \bigsqcup_{k=0}^{K-1} \mathcal{S}_k$ into $K$ disjoint sectors with probabilities $\pi_k > 0$, conditional error centroids $\bm{p}_k$, and between-sector covariance $\bm{\Sigma}_S \succ 0$:
\begin{enumerate}
\item \emph{Total covariance decomposition:} The full decoder covariance matrix splits into between-sector and intra-sector contributions satisfying
\begin{equation}
\bm{\Sigma} = \bm{\Sigma}_S + \bm{\Sigma}_{\mathrm{within}} \succeq \bm{\Sigma}_S, \qquad \bm{\Sigma}_{\mathrm{within}} \succeq 0.
\label{eq:app_thm3_decomp}
\end{equation}
\item \emph{Loewner inverse ordering:} On the cone of positive-definite matrices, matrix inversion is strictly operator monotone decreasing,
\begin{equation}
\bm{\Sigma} \succeq \bm{\Sigma}_S \succ 0 \implies \bm{\Sigma}_S^{-1} \succeq \bm{\Sigma}^{-1}.
\label{eq:app_thm3_loewner}
\end{equation}
\item \emph{Universal sampling overhead hierarchy:} For any discrete filter $\bm{w} \in \RR^K$ satisfying the channel-inversion constraints $\sum_k \pi_k w_k = 1$ and $\sum_k \pi_k w_k \bm{p}_k = \bm{0}$, the sampling overhead satisfies
\begin{equation}
C(w) \ge C_S^* \equiv 1 + \bar{\bm{p}}^\top \bm{\Sigma}_S^{-1} \bar{\bm{p}} \ge 1 + \bar{\bm{p}}^\top \bm{\Sigma}^{-1} \bar{\bm{p}} \equiv C_{\mathrm{syn}}^*.
\label{eq:app_thm3_hierarchy}
\end{equation}
Equality $C(w) = C_S^*$ holds if and only if $w_k = w_k^* = 1 - \bar{\bm{p}}^\top \bm{\Sigma}_S^{-1}(\bm{p}_k - \bar{\bm{p}})$ for all $k \in \{0, \dots, K-1\}$.
Furthermore, equality between coarse-grained and syndrome-resolved overheads, $C_S^* = C_{\mathrm{syn}}^*$, holds if and only if the intra-sector covariance vanishes on the baseline error vector, $\bm{\Sigma}_{\mathrm{within}}\bm{\Sigma}^{-1}\bar{\bm{p}} = \bm{0}$, which is satisfied identically whenever the partition is fine enough that intra-sector fluctuations vanish ($\bm{\Sigma}_{\mathrm{within}} = \bm{0}$).
\end{enumerate}
\end{theorem}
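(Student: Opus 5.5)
Most of Theorem~\ref{thm:discrete_sector_bound} is already contained in the two preceding results, so the plan is to assemble them and then treat the equality condition between $C_S^*$ and $C_{\mathrm{syn}}^*$ separately. For items 1 and 2, I will cite Lemma~\ref{lem:app_eves_law} directly. That lemma gives the decomposition $\bm{\Sigma}=\bm{\Sigma}_S+\bm{\Sigma}_{\mathrm{within}}$ with $\bm{\Sigma}_{\mathrm{within}}\succeq 0$. Since $\bm{\Sigma}\succeq\bm{\Sigma}_S\succ0$ forces $\bm{\Sigma}\succ 0$, the lemma also yields the reversed ordering $\bm{\Sigma}_S^{-1}\succeq\bm{\Sigma}^{-1}$. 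If a self-contained argument for the inversion is preferred, I will write $\bm{\Sigma}_S^{-1/2}\bm{\Sigma}\bm{\Sigma}_S^{-1/2}\succeq I$, invert it, and conjugate back.

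For item 3, I will first check that any sector filter $\bm{w}$ satisfying the discrete constraints, lifted to $w(\bm{s})=w_k$ on $\mathcal{S}_k$, satisfies Eq.~\eqref{eq:pauli_cancellation_constraint}. This holds because $\EE_{\bm{s}}[w\,\bm{p}]=\sum_k\pi_k w_k\bm{p}_k$. Consequently $C(w)=\sum_k\pi_kw_k^2=1+\Var_S[w]$. Proposition~\ref{prop:app_discrete_variational_derivation} then gives $C(w)\ge C_S^*$, with equality if and only if $w_k=w_k^*$. The second inequality, $C_S^*\ge C_{\mathrm{syn}}^*$, follows by sandwiching $\bar{\bm p}$ in the Loewner inequality of item 2. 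This is consistent with Theorem~\ref{thm:variational_cs_bound}, since a lifted sector filter is itself feasible for the syndrome-resolved problem.

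The main obstacle is the equality characterization $C_S^*=C_{\mathrm{syn}}^*\iff\bm{\Sigma}_{\mathrm{within}}\bm{\Sigma}^{-1}\bar{\bm p}=\bm 0$. The plan uses $\bm{u}\equiv\bm{\Sigma}^{-1}\bar{\bm p}$ and the identity
\begin{equation*}
\bm{\Sigma}_S^{-1}-\bm{\Sigma}^{-1}=\bm{\Sigma}_S^{-1}\bm{\Sigma}_{\mathrm{within}}\bm{\Sigma}^{-1},
\end{equation*}
which follows from $\bm{\Sigma}-\bm{\Sigma}_S=\bm{\Sigma}_{\mathrm{within}}$. The difference $\bar{\bm p}^\top(\bm{\Sigma}_S^{-1}-\bm{\Sigma}^{-1})\bar{\bm p}$ is a quadratic form in a PSD matrix. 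Because $\bm{\Sigma}_S^{-1}-\bm{\Sigma}^{-1}\succeq0$, it vanishes if and only if $(\bm{\Sigma}_S^{-1}-\bm{\Sigma}^{-1})\bar{\bm p}=\bm 0$, i.e.\ if and only if $\bm{\Sigma}_S^{-1}\bm{\Sigma}_{\mathrm{within}}\bm u=\bm 0$. Since $\bm{\Sigma}_S^{-1}$ is invertible, this is equivalent to $\bm{\Sigma}_{\mathrm{within}}\bm u=\bm 0$. The step to watch is justifying that a PSD quadratic form vanishing forces the vector into the kernel; I will handle it via $\|A^{1/2}\bar{\bm p}\|^2=0$.

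As a cross-check, the syndrome-resolved optimum $w^*(\bm s)=1-\bm u^\top(\bm p(\bm s)-\bar{\bm p})$ should be constant on sectors exactly when $\bm u^\top(\bm p(\bm s)-\bm p_k)=0$ almost surely. That is the condition $\bm u^\top\bm{\Sigma}_{\mathrm{within}}\bm u=0$, which matches the criterion above. Finally, the special case $\bm{\Sigma}_{\mathrm{within}}=\bm 0$ satisfies the criterion trivially.
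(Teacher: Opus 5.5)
Your proposal is correct and follows the paper's proof essentially step for step: items 1--2 come from Lemma~\ref{lem:app_eves_law}, the bound $C(w)\ge C_S^*$ and its equality case from Proposition~\ref{prop:app_discrete_variational_derivation}, $C_S^*\ge C_{\mathrm{syn}}^*$ from the Loewner ordering, and the equality criterion from the identity $\bm{\Sigma}_S^{-1}-\bm{\Sigma}^{-1}=\bm{\Sigma}_S^{-1}\bm{\Sigma}_{\mathrm{within}}\bm{\Sigma}^{-1}$. Your explicit argument that a vanishing PSD quadratic form forces $(\bm{\Sigma}_S^{-1}-\bm{\Sigma}^{-1})\bar{\bm p}=\bm 0$ usefully fills in the ``only if'' step, which the paper justifies only by a brief appeal to positive definiteness.
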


\begin{proof}
Item 1 and Item 2 follow directly from Lemma~\ref{lem:app_eves_law}.
To prove Item 3, let $\bm{w} = (w_0, \dots, w_{K-1})^\top \in \RR^K$ be an arbitrary discrete filter satisfying $\sum_k \pi_k w_k = 1$ and $\sum_k \pi_k w_k \bm{p}_k = \bm{0}$.
By Proposition~\ref{prop:app_discrete_variational_derivation}, expanding the weighted sum of squared residuals $\sum_k \pi_k \Delta_{\bm{a}, k}^2 \ge 0$ for test vector $\bm{a} \in \RR^M$ and substituting the invariant cross-covariance $\Cov_S(w, \bm{p}) = -\bar{\bm{p}}$ establishes
\begin{equation}
\Var_S[w] \ge \bar{\bm{p}}^\top \bm{\Sigma}_S^{-1}\bar{\bm{p}}.
\end{equation}
Adding unity to both sides yields $C(w) \equiv 1 + \Var_S[w] \ge 1 + \bar{\bm{p}}^\top \bm{\Sigma}_S^{-1}\bar{\bm{p}} = C_S^*$.
By Lemma~\ref{lem:app_eves_law}, the Loewner ordering $\bm{\Sigma}_S^{-1} \succeq \bm{\Sigma}^{-1}$ guarantees
\begin{equation}
C_S^* = 1 + \bar{\bm{p}}^\top \bm{\Sigma}_S^{-1}\bar{\bm{p}} \ge 1 + \bar{\bm{p}}^\top \bm{\Sigma}^{-1}\bar{\bm{p}} = C_{\mathrm{syn}}^*,
\end{equation}
establishing the complete chain of inequalities in Eq.~\eqref{eq:app_thm3_hierarchy}.

Uniqueness of the minimizer follows because $\pi_k > 0$, which requires $\Delta_{\bm{a}^*, k} = 0$ for all $k \in \{0, \dots, K-1\}$, uniquely fixing $w_k = w_k^*$.
Finally, the difference between the two optimal overheads is
\begin{equation}
C_S^* - C_{\mathrm{syn}}^* = \bar{\bm{p}}^\top \big( \bm{\Sigma}_S^{-1} - \bm{\Sigma}^{-1} \big) \bar{\bm{p}}.
\end{equation}
Using the algebraic identity $\bm{A}^{-1} - \bm{B}^{-1} = \bm{A}^{-1}(\bm{B} - \bm{A})\bm{B}^{-1}$ with $\bm{A} = \bm{\Sigma}_S$ and $\bm{B} = \bm{\Sigma} = \bm{\Sigma}_S + \bm{\Sigma}_{\mathrm{within}}$, this difference evaluates to
\begin{equation}
C_S^* - C_{\mathrm{syn}}^* = \bar{\bm{p}}^\top \bm{\Sigma}_S^{-1} \bm{\Sigma}_{\mathrm{within}} \bm{\Sigma}^{-1} \bar{\bm{p}}.
\end{equation}
Because $\bm{\Sigma}_S \succ 0$ and $\bm{\Sigma} \succ 0$, this non-negative quadratic form vanishes if and only if $\bm{\Sigma}_{\mathrm{within}}\bm{\Sigma}^{-1}\bar{\bm{p}} = \bm{0}$.
When intra-sector fluctuations vanish identically ($\bm{\Sigma}_{\mathrm{within}} = \bm{0}$), each sector contains syndromes with identical error rates, so coarse-graining loses no statistical information and achieves $C_S^* = C_{\mathrm{syn}}^*$.
This completes the proof of Theorem~\ref{thm:discrete_sector_bound}.
\end{proof}

\subsection{Minimal Resolving Sector Basis Theorem}
\label{subsec:app_proof_minimal_resolving_basis}

We now establish the geometric and algebraic conditions on the syndrome sectors required for universal channel unbiasing across arbitrary $\bar{\bm{p}} \in \RR^M$.
Specifically, Proposition~\ref{thm:minimal_resolving_basis} shows that achieving a full-rank covariance matrix $\bm{\Sigma}_S \succ 0$ for an $M$-channel logical noise process requires at least $K \ge M + 1$ sectors, where the unique minimal-cardinality ($K = M + 1$) solution coincides with the variational weights $w_k^*$.

\begin{proposition}[Minimal Resolving Sector Basis Theorem]
\label{thm:minimal_resolving_basis}
\label{prop:minimal_resolving_basis}
Let an $M$-channel logical error process be mitigated by a coarse-grained sector filter $\bm{w} \in \RR^K$ over $K$ disjoint sectors with probabilities $\pi_k > 0$ and conditional error centroids $\bm{p}_k \in \RR^M$.
\begin{enumerate}
\item \emph{Linear system representation:} The discrete channel-inversion constraints $\sum_k \pi_k w_k = 1$ and $\sum_k \pi_k w_k \bm{p}_k = \bm{0}$ constitute an $(M+1) \times K$ real linear system for the reweighted sector probabilities $\tilde{w}_k \equiv \pi_k w_k$,
\begin{equation}
\begin{pmatrix} 1 & 1 & \dots & 1 \\ \bm{p}_0 & \bm{p}_1 & \dots & \bm{p}_{K-1} \end{pmatrix} \begin{pmatrix} \tilde{w}_0 \\ \tilde{w}_1 \\ \vdots \\ \tilde{w}_{K-1} \end{pmatrix} = \begin{pmatrix} 1 \\ \bm{0} \end{pmatrix}.
\label{eq:app_linear_system_K}
\end{equation}
\item \emph{Sylvester rank bound:} The centered displacement matrix $\bm{D} \equiv (\bm{p}_0 - \bar{\bm{p}}, \dots, \bm{p}_{K-1} - \bar{\bm{p}}) \in \RR^{M \times K}$ satisfies the exact linear dependence $\sum_{k=0}^{K-1} \pi_k (\bm{p}_k - \bar{\bm{p}}) = \bm{0}$.
Consequently, the between-sector covariance matrix $\bm{\Sigma}_S = \bm{D}\operatorname{diag}(\pi_0, \dots, \pi_{K-1})\bm{D}^\top$ obeys Sylvester's rank inequality,
\begin{equation}
\operatorname{rank}(\bm{\Sigma}_S) = \operatorname{rank}(\bm{D}) \le K - 1.
\label{eq:app_rank_bound_Sigma_S}
\end{equation}
\item \emph{Minimal sector cardinality condition:} Universal cancellation across all $M$ linearly independent error directions for arbitrary $\bar{\bm{p}} \in \RR^M$ requires $\operatorname{rank}(\bm{\Sigma}_S) = M$, establishing the strict necessity of
\begin{equation}
K \ge M + 1.
\label{eq:app_minimal_K_inequality}
\end{equation}
If $K \le M$, then $\operatorname{rank}(\bm{\Sigma}_S) \le M - 1 < M$, and the between-sector covariance has a non-trivial kernel $\operatorname{ker}(\bm{\Sigma}_S) \ne \{\bm{0}\}$.
Any logical error component with a non-zero projection onto $\operatorname{ker}(\bm{\Sigma}_S)$ cannot be mitigated by post-processing reweighting.
\item \emph{Affine simplex basis at minimal cardinality:} At minimal cardinality $K = M + 1$, the system matrix $\bm{A}_{\mathrm{geom}} \in \RR^{(M+1) \times (M+1)}$ in Eq.~\eqref{eq:app_linear_system_K} satisfies the determinant factorization
\begin{equation}
\det(\bm{A}_{\mathrm{geom}}) = \det \big( \bm{p}_1 - \bm{p}_0, \, \bm{p}_2 - \bm{p}_0, \, \dots, \, \bm{p}_M - \bm{p}_0 \big).
\label{eq:app_det_A_geom}
\end{equation}
A unique unbiased filter $\bm{w}^*$ exists if and only if the $M + 1$ centroids $\{\bm{p}_0, \dots, \bm{p}_M\}$ are affinely independent in $\RR^M$, forming a non-degenerate $M$-simplex enclosing $\bar{\bm{p}}$ with $\det(\bm{A}_{\mathrm{geom}}) \ne 0$.
In this case, the unique algebraic solution coincides identically with the variational weights,
\begin{equation}
w_k^* = 1 - \bar{\bm{p}}^\top \bm{\Sigma}_S^{-1}(\bm{p}_k - \bar{\bm{p}}), \qquad k \in \{0, \dots, M\}.
\label{eq:app_thm4_variational_equivalence}
\end{equation}
\end{enumerate}
\end{proposition}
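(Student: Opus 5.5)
I will prove the four items in order, treating the sector constraints as a finite linear-algebra problem over $\RR^K$.

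\emph{Items 1 and 2.} Item~1 is a rewriting. Substituting $\tilde w_k=\pi_k w_k$ turns $\sum_k\pi_k w_k=1$ into the first row of Eq.~\eqref{eq:app_linear_system_K}, and turns $\sum_k\pi_k w_k\bm p_k=\bm 0$ into the remaining $M$ rows. For Item~2 I will write $\bm\Sigma_S=\bm D\bm\Pi\bm D^\top$ with $\bm\Pi=\mathrm{diag}(\pi_k)\succ0$. Setting $\bm B=\bm D\bm\Pi^{1/2}$ gives $\operatorname{rank}(\bm\Sigma_S)=\operatorname{rank}(\bm B\bm B^\top)=\operatorname{rank}(\bm B)=\operatorname{rank}(\bm D)$. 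The law of total expectation, Eq.~\eqref{eq:app_pbar_total_expectation}, gives $\bm D\bm\pi=\bm 0$ with $\bm\pi\neq\bm 0$. The kernel of $\bm D$ is therefore nontrivial, and rank--nullity yields $\operatorname{rank}(\bm D)\le K-1$.

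\emph{Item 3.} If $K\le M$, Item~2 forces $\operatorname{rank}\bm\Sigma_S\le M-1$, so there is some $\bm v\neq\bm 0$ in $\ker\bm\Sigma_S$. I will then rerun the argument of Lemma~\ref{lem:feasibility_condition_formal} with the sector measure in place of $P$. From $\bm v^\top\bm\Sigma_S\bm v=\sum_k\pi_k(\bm v^\top(\bm p_k-\bar{\bm p}))^2=0$ and $\pi_k>0$, we get $\bm v^\top\bm p_k=\bm v^\top\bar{\bm p}$ for every $k$. Contracting the cancellation constraint with $\bm v$ then gives $0=(\bm v^\top\bar{\bm p})\sum_k\pi_kw_k=\bm v^\top\bar{\bm p}$. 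So any component of $\bar{\bm p}$ along $\ker\bm\Sigma_S$ makes the constraints infeasible. Since $\bar{\bm p}$ is arbitrary, universality requires $\bm\Sigma_S\succ0$, and hence $K\ge M+1$.

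\emph{Item 4.} For $K=M+1$ I will take $\bm A_{\mathrm{geom}}$ and subtract the first column from each of the others. The determinant is unchanged, and the first row becomes $(1,0,\dots,0)$. Cofactor expansion along that row gives Eq.~\eqref{eq:app_det_A_geom}. So a unique solution $\tilde{\bm w}$ exists if and only if the centroids are affinely independent.

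This condition must then be connected to $\bm\Sigma_S\succ0$. Affine independence means the vectors $\bm p_k-\bm p_0$ span $\RR^M$. The vectors $\bm p_k-\bar{\bm p}$ span the same space, because $\bar{\bm p}$ lies in the affine hull of the centroids. Hence $\operatorname{rank}\bm D=M$ and $\bm\Sigma_S\succ0$.

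The ``enclosing'' clause follows because $\bar{\bm p}=\sum_k\pi_k\bm p_k$ is a strict convex combination, which places $\bar{\bm p}$ in the interior of the simplex.

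Finally, Proposition~\ref{prop:app_discrete_variational_derivation} shows that $w_k^*$ satisfies both constraints. Since the feasible set is a single point when $\det\bm A_{\mathrm{geom}}\neq0$, the algebraic solution must equal $w^*$. This gives Eq.~\eqref{eq:app_thm4_variational_equivalence}.

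The main obstacle is precision in Item~3 and Item~4. Item~3 must be phrased as ``for arbitrary $\bar{\bm p}$'', because a particular $\bar{\bm p}$ orthogonal to the kernel could still be cancelled. Item~4 must show that $\bar{\bm p}$, being a constrained function of the same centroids, does not spoil the spanning argument; the affine-hull observation above handles this.
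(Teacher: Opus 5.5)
Your proposal is correct and follows the paper's proof essentially step for step: direct verification for Item~1, the weighted dependence $\bm{D}\bm{\pi}=\bm{0}$ for Item~2, rank counting for Item~3, and column reduction of $\bm{A}_{\mathrm{geom}}$ plus uniqueness of the square system for Item~4. It is also tighter than the paper in three places: your contraction with $\bm{v}\in\ker\bm{\Sigma}_S$ proves the infeasibility claim the paper only asserts, your affine-hull argument supplies the invertibility of $\bm{\Sigma}_S$ that the paper uses without proof when it writes $\bm{\Sigma}_S^{-1}$ in Item~4, and your strict-convex-combination remark justifies the enclosing clause.

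The one fix needed is the final citation. Proposition~\ref{prop:app_discrete_variational_derivation} only shows that feasible filters obey the variance bound with equality exactly at $w^*$; it does not show that $w^*$ is itself feasible. You should therefore verify the constraints directly, $\sum_k\pi_k w_k^*=1-\bar{\bm{p}}^\top\bm{\Sigma}_S^{-1}\sum_k\pi_k(\bm{p}_k-\bar{\bm{p}})=1$ and $\sum_k\pi_k w_k^*\bm{p}_k=\bar{\bm{p}}-\bm{\Sigma}_S\bm{\Sigma}_S^{-1}\bar{\bm{p}}=\bm{0}$, as the paper's proof does.
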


\begin{proof}
Item 1 is verified directly by multiplying the block matrix $\bm{A}_{\mathrm{geom}}$ by $\tilde{\bm{w}} = (\pi_0 w_0, \dots, \pi_{K-1} w_{K-1})^\top$: the top row yields $\sum_k \pi_k w_k = 1$ and the lower $M$ rows yield $\sum_k \pi_k w_k \bm{p}_k = \bm{0}$.

To prove Item 2, observe that the set of $K$ points $\{\bm{p}_0, \dots, \bm{p}_{K-1}\} \subset \RR^M$ spans an affine subspace of dimension at most $K - 1$.
The centered displacement vectors $\bm{d}_k \equiv \bm{p}_k - \bar{\bm{p}}$ lie in the linear subspace parallel to this affine hull.
Furthermore, by definition of the baseline mean $\bar{\bm{p}} = \sum_k \pi_k \bm{p}_k$, these vectors satisfy the constraint
\begin{equation}
\sum_{k=0}^{K-1} \pi_k \bm{d}_k = \sum_{k=0}^{K-1} \pi_k (\bm{p}_k - \bar{\bm{p}}) = \bar{\bm{p}} - \bar{\bm{p}} = \bm{0}.
\end{equation}
Because the coefficients $\pi_k > 0$ are non-zero, at most $K - 1$ of the displacement vectors $\{\bm{d}_0, \dots, \bm{d}_{K-1}\}$ can be linearly independent.
Expressing the between-sector covariance matrix as $\bm{\Sigma}_S = \bm{D}\bm{Q}\bm{D}^\top$ with $\bm{D} = (\bm{d}_0, \dots, \bm{d}_{K-1}) \in \RR^{M \times K}$ and diagonal matrix $\bm{Q} = \operatorname{diag}(\pi_0, \dots, \pi_{K-1}) \succ 0$, Sylvester's rank inequality implies
\begin{equation}
\operatorname{rank}(\bm{\Sigma}_S) = \operatorname{rank}(\bm{D}\bm{Q}\bm{D}^\top) = \operatorname{rank}(\bm{D}) \le K - 1.
\end{equation}
This establishes Item 2.

To prove Item 3, universal cancellation across all $M$ linearly independent error directions for arbitrary $\bar{\bm{p}} \in \RR^M$ requires the between-sector covariance matrix $\bm{\Sigma}_S$ to have full rank $M$ on $\RR^M$ so that $\operatorname{range}(\bm{\Sigma}_S) = \RR^M$.
Combining this rank requirement with the bound of Item 2 yields
\begin{equation}
M = \operatorname{rank}(\bm{\Sigma}_S) \le K - 1 \implies K \ge M + 1.
\end{equation}
If $K \le M$, then $\operatorname{rank}(\bm{\Sigma}_S) \le M - 1 < M$, so $\bm{\Sigma}_S$ is singular with a non-trivial null space $\operatorname{ker}(\bm{\Sigma}_S) \ne \{\bm{0}\}$.
A rank-deficient partition with $K < M + 1$ can still cancel a fixed $\bar{\bm{p}}$ in the special case where the sector centroids are collinear with $\bar{\bm{p}}$ ($\bm{p}_k = c_k \bar{\bm{p}}$) so that $\bar{\bm{p}} \in \operatorname{range}(\bm{\Sigma}_S)$ (Lemma~\ref{lem:feasibility_condition_formal}).
However, any logical error vector with a non-zero projection onto $\operatorname{ker}(\bm{\Sigma}_S)$ cannot be mitigated by post-processing reweighting, so $K \ge M + 1$ is strictly necessary for full-rank channel inversion.

To prove Item 4, consider the minimal cardinality case $K = M + 1$.
The linear system in Eq.~\eqref{eq:app_linear_system_K} becomes a square $(M+1) \times (M+1)$ system $\bm{A}_{\mathrm{geom}}\tilde{\bm{w}} = (1, \bm{0})^\top$.
Subtracting the first column from each subsequent column $k \in \{1, \dots, M\}$ via elementary column operations preserves the determinant,
\begin{equation}
\det(\bm{A}_{\mathrm{geom}}) = \det \begin{pmatrix} 1 & 0 & 0 & \dots & 0 \\ \bm{p}_0 & \bm{p}_1 - \bm{p}_0 & \bm{p}_2 - \bm{p}_0 & \dots & \bm{p}_M - \bm{p}_0 \end{pmatrix} = \det \big( \bm{p}_1 - \bm{p}_0, \, \bm{p}_2 - \bm{p}_0, \, \dots, \, \bm{p}_M - \bm{p}_0 \big).
\end{equation}
By definition, $M + 1$ points in $\RR^M$ are affinely independent if and only if the $M$ difference vectors $\{\bm{p}_k - \bm{p}_0\}_{k=1}^M$ are linearly independent, which is equivalent to $\det(\bm{A}_{\mathrm{geom}}) \ne 0$.
Under this condition, the coefficient matrix $\bm{A}_{\mathrm{sys}} \equiv \bm{A}_{\mathrm{geom}}\operatorname{diag}(\pi_0, \dots, \pi_M)$ satisfies
\begin{equation}
\det(\bm{A}_{\mathrm{sys}}) = \bigg( \prod_{k=0}^M \pi_k \bigg) \det(\bm{A}_{\mathrm{geom}}) \ne 0,
\end{equation}
because all sector probabilities are strictly positive ($\pi_k > 0$).
Hence $\bm{A}_{\mathrm{sys}}$ is non-singular, guaranteeing that the linear system has a unique solution $\bm{w} \in \RR^{M+1}$.

To verify that this unique solution coincides with the variational weights $w_k^* = 1 - \bar{\bm{p}}^\top \bm{\Sigma}_S^{-1}(\bm{p}_k - \bar{\bm{p}})$, we evaluate the constraints directly on $w_k^*$.
For trace preservation, we have
\begin{equation}
\sum_{k=0}^M \pi_k w_k^* = \sum_{k=0}^M \pi_k \Big( 1 - \bar{\bm{p}}^\top \bm{\Sigma}_S^{-1}(\bm{p}_k - \bar{\bm{p}}) \Big) = 1 - \bar{\bm{p}}^\top \bm{\Sigma}_S^{-1} \sum_{k=0}^M \pi_k (\bm{p}_k - \bar{\bm{p}}) = 1 - \bm{0} = 1.
\end{equation}
For vector error cancellation, we obtain
\begin{align}
\sum_{k=0}^M \pi_k w_k^* \bm{p}_k &= \sum_{k=0}^M \pi_k \Big( 1 - \bar{\bm{p}}^\top \bm{\Sigma}_S^{-1}(\bm{p}_k - \bar{\bm{p}}) \Big) \bm{p}_k \nonumber \\
&= \bar{\bm{p}} - \sum_{k=0}^M \pi_k \bm{p}_k (\bm{p}_k - \bar{\bm{p}})^\top \bm{\Sigma}_S^{-1} \bar{\bm{p}} \nonumber \\
&= \bar{\bm{p}} - \bigg( \sum_{k=0}^M \pi_k (\bm{p}_k - \bar{\bm{p}})(\bm{p}_k - \bar{\bm{p}})^\top \bigg) \bm{\Sigma}_S^{-1} \bar{\bm{p}} \nonumber \\
&= \bar{\bm{p}} - \bm{\Sigma}_S \bm{\Sigma}_S^{-1} \bar{\bm{p}} = \bar{\bm{p}} - \bar{\bm{p}} = \bm{0}.
\end{align}
Because the square linear system $\bm{A}_{\mathrm{sys}}\bm{w} = (1, \bm{0})^\top$ possesses a unique solution, the algebraic solution must coincide identically with $w_k^*$, completing the proof of Proposition~\ref{thm:minimal_resolving_basis}.
\end{proof}

\section{Analytical Foundations of Dual-Decoder Error Cancellation}
\label{sec:supp_dual_decoder_protocol}
\label{sec:appendix_derivations}

In this section, we establish the analytical foundations of the dual-decoder virtual error cancellation architecture.
Fault-tolerant quantum computing architectures require low-latency real-time decoding to execute feedforward operations within coherence deadlines, typically employing fast minimum-weight perfect matching (MWPM) or union-find algorithms.
Asynchronous offline processing with more accurate tensor-network, belief-propagation, or neural-network decoders provides a rich stream of correlation information without delaying execution.
For an $n$-qubit logical block with Pauli group $\mathcal{P}_n$, comparing the real-time correction $\mathcal{D}_{\mathrm{rt}}(\bm{s}) \in \mathcal{P}_n$ against an offline reference correction $\mathcal{D}_{\mathrm{off}}(\bm{s}) \in \mathcal{P}_n$ partitions the syndrome space $\Omega$ into $K = 4^n$ disjoint sectors,
\begin{equation}
\mathcal{S}_Q \equiv \big\{\bm{s} \in \Omega :\; \mathcal{D}_{\mathrm{rt}}(\bm{s})\cdot \mathcal{D}_{\mathrm{off}}(\bm{s}) \sim Q\big\}, \qquad Q \in \mathcal{P}_n,
\label{eq:app_dd_sector_definition}
\end{equation}
where $\sim$ denotes equality up to an overall phase factor.
Agreement between the decoders defines the consensus sector $\mathcal{S}_I$, which captures the vast majority of syndrome measurements with dominant probability share $\pi_I = 1 - \mathcal{O}(p_L)$.
Conversely, disagreement along logical Pauli direction $Q \in \mathcal{P}_n \setminus \{I\}$ partitions the remaining shots into $4^n - 1$ contested directional sectors $\mathcal{S}_Q$, each characterized by an elevated conditional logical error centroid $\bm{p}_Q \equiv \EE_{\bm{s}}[\bm{p}(\bm{s}) \mid \bm{s} \in \mathcal{S}_Q]$.
Supplementary Fig.~\ref{fig:supp_dual_decoder_cartoon} illustrates this partition and the resulting reweighting in its simplest form, with all contested sectors merged into a single disagreement sector.

The mathematical analysis proceeds across three subsections:
In Subsection~\ref{subsec:app_proof_pauli_4sector_invertibility}, we establish affine independence and L\'evy--Desplanques invertibility for the minimal four-sector Pauli noise model ($n = 1$ and $M = 3$), covering Hypotheses~H1--H3, simplex geometry, the parameterized CSS model, the L\'evy--Desplanques theorem and Varah bound (Lemma~\ref{lem:app_levy_desplanques}), Theorem~\ref{thm:app_levy_desplanques_pauli}, congruent covariance factorization (Proposition~\ref{thm:app_covariance_factorization}) with Sherman--Morrison inversion (Lemma~\ref{lem:app_sherman_morrison_categorical}), variational collapse to closed-form sector weights (Proposition~\ref{thm:app_variational_collapse}), fault-tolerant overhead scaling in matrix quadratic form (Theorem~\ref{thm:app_asymptotic_overhead}), and uncoupled check basis reduction (Corollary~\ref{cor:app_overhead_uncoupled}).
In Subsection~\ref{subsec:app_proof_distance_suppression}, we analyze residual logical error under calibration perturbations and establish distance suppression, covering perturbed parameters $\hat{\pi}_k, \hat{\bm{p}}_k$, trace-preservation normalization $\bar{w}$ and the residual error vector $\bm{p}_{\mathrm{res}}$, the exact first-order differential and binary reduction (Proposition~\ref{thm:app_exact_differential}), uniform shift invariance (Lemma~\ref{lem:app_uniform_shift_invariance}), Theorem~\ref{thm:app_distance_suppression}, and physical comparisons with postselection and physical probabilistic error cancellation.
Finally, in Subsection~\ref{subsec:app_decoder_asymmetry_derivation}, we provide the microscopic derivation of the dual-decoder error asymmetry ratio and conditional centroids, covering joint error indicators $E_{\mathrm{rt},j}, E_{\mathrm{off},j}$, asymmetry ratio $r_j = (p_{L,j}' - p_{I,j})/(p_{L,j} - p_{I,j})$, contested failure probability $p_{X,j} = 1/(1+r_j)$, overhead slope $\eta_j^{\mathrm{dd}} = (1+r_j)/(1-\alpha_j)$ [Eq.~\eqref{eq:dd_overhead_eta_formula}], and interpolation between ideal oracle and symmetric tie-break limits.

\subsection{\texorpdfstring{Affine Independence and L\'evy--Desplanques Invertibility for the 4-Sector Pauli Model}{Affine Independence and Levy-Desplanques Invertibility for the 4-Sector Pauli Model}}
\label{subsec:app_proof_pauli_4sector_invertibility}

We now provide the complete proof suite for the minimal four-sector logical Pauli mitigation architecture.
Consider arbitrary single-qubit logical Pauli noise ($M = 3$, with alphabet $\mathcal{Q} \equiv \{Q_1, Q_2, Q_3\} = \{X, Z, Y\}$).
By Proposition~\ref{thm:minimal_resolving_basis}, resolving $M = 3$ independent error processes requires $K = M + 1 = 4$ syndrome sectors:
\begin{enumerate}
\item \emph{Reference sector $\mathcal{S}_0$ ($k = 0$):} Measurement shots assigned to the consensus sector ($\chi_X(\bm{s}) = \chi_Z(\bm{s}) = 0$, giving $\chi(\bm{s}) = 0$), where residual logical errors are suppressed by consensus decoding to $\bm{p}_0 = \alpha \bar{\bm{p}} = \mathcal{O}(p_L) = \mathcal{O}((p/p_{\mathrm{th}})^{(d+1)/2})$, with agreement error suppression ratio $\alpha \equiv \|\bm{p}_0\| / \|\bar{\bm{p}}\| \in (0, 1)$ and dominant occurrence probability $\pi_0 = 1 - \sum_{k=1}^3 \pi_k = 1 - \mathcal{O}((p/p_{\mathrm{th}})^{(d+1)/2})$.
\item \emph{Directional sectors $\mathcal{S}_1, \mathcal{S}_2, \mathcal{S}_3$ ($k \in \{1, 2, 3\}$):} Measurement shots classified by the directional syndrome partitioning map $\chi(\bm{s}) \equiv \chi_X(\bm{s}) + 2\chi_Z(\bm{s})$, isolating directional $X$ faults ($\mathcal{S}_1$, with $\chi=1$), directional $Z$ faults ($\mathcal{S}_2$, with $\chi=2$), and joint faults ($\mathcal{S}_3$, with $\chi=3$, isolating $Y = XZ$).
Directional identification isolates fault configurations producing conditional failure probabilities $p_{1,1} = p_{c,X} - \mathcal{O}(p_L)$ and $p_{2,2} = p_{c,Z} - \mathcal{O}(p_L)$ with classifier confidences $p_{c,X}, p_{c,Z} \in (0, 1]$, and joint centroid $\bm{p}_3$ governed by physical cross-basis correlation $\rho \in [0, 1]$.
The directional sectors occur with probabilities $\pi_1 = \Theta(p_{L,X})$ and $\pi_2 = \Theta(p_{L,Z})$, each of order $(p/p_{\mathrm{th}})^{(d+1)/2}$, and $\pi_3 = \rho_L p_L + (1-\rho_L)p_L^2 = \mathcal{O}(p_L)$.
\end{enumerate}

\subsubsection{Foundational Sector Hypotheses for the Four-Sector Protocol}
\label{subsubsec:app_foundational_hypotheses}

The exact algebraic channel inversion, variance scaling, and noise suppression of the four-sector protocol hold generally for any discrete syndrome partition satisfying three foundational sector hypotheses:
\begin{enumerate}
\item[\emph{H1.}] \emph{Non-degenerate resolving basis and quantitative conditioning:} The smallest singular value of the coordinate displacement matrix $\bm{V} \in \RR^{3 \times 3}$ is bounded below by a code-distance-independent positive constant, $\sigma_{\min}(\bm{V}) \ge v_{\min} > 0$.
This condition guarantees full algebraic rank $\operatorname{rank}(\bm{V}) = 3$, non-zero simplex volume $\operatorname{Vol}(\Delta) \ge v_{\min}^3 / 6 > 0$, and stable channel inversion.
\item[\emph{H2.}] \emph{Per-channel sampling adequacy:} Each directional sector probability is positive and satisfies $0 < \pi_k = \mathcal{O}(p_L)$, and the reference-sector leakage along that channel is suppressed relative to that sector probability,
\begin{equation}
u_{0,k} \equiv (\bm{V}^{-1}\bm{p}_0)_k = \mathcal{O}(\alpha \pi_k),
\label{eq:app_H2_per_channel}
\end{equation}
which implies $\pi_k \gg [(\bm{V}^{-1}\bm{p}_0)_k]^2$ and makes the directional variance penalty $\mathcal{O}(\pi_k)$ rather than $\mathcal{O}(u_{0,k}^2/\pi_k)$.
\item[\emph{H3.}] \emph{Dominant reference sector and error suppression:} The reference sector occurs with dominant probability $\pi_0 = 1 - \sum_{k \ge 1} \pi_k = 1 - \mathcal{O}((p/p_{\mathrm{th}})^{(d+1)/2})$ and suppressed conditional error rate $\|\bm{p}_0\| = \alpha \|\bar{\bm{p}}\| = \mathcal{O}((p/p_{\mathrm{th}})^{(d+1)/2})$, where $\alpha \equiv \|\bm{p}_0\| / \|\bar{\bm{p}}\| \in (0, 1)$ is the agreement error suppression ratio.
\end{enumerate}

\subsubsection{Simplex Geometry and Differential Sensitivity}
\label{subsubsec:app_simplex_geometry}

Geometrically, the four conditional error centroid vectors $\{\bm{p}_0, \bm{p}_1, \bm{p}_2, \bm{p}_3\}$ form the vertices of a 3-simplex $\Delta \subset [0, 1]^3$ in the three-dimensional logical Pauli error space.
The coordinate displacement matrix $\bm{V} \equiv (\bm{v}_1, \bm{v}_2, \bm{v}_3) = (\bm{p}_1 - \bm{p}_0, \bm{p}_2 - \bm{p}_0, \bm{p}_3 - \bm{p}_0) \in \RR^{3 \times 3}$ is given explicitly by
\begin{equation}
\bm{V} = \begin{pmatrix} p_{1,1} - p_{1,0} & p_{1,2} - p_{1,0} & p_{1,3} - p_{1,0} \\ p_{2,1} - p_{2,0} & p_{2,2} - p_{2,0} & p_{2,3} - p_{2,0} \\ p_{3,1} - p_{3,0} & p_{3,2} - p_{3,0} & p_{3,3} - p_{3,0} \end{pmatrix}.
\label{eq:app_displacement_matrix_V}
\end{equation}
The columns of $\bm{V}$ form the edge basis of this simplex with origin anchored at the reference vertex $\bm{p}_0$, with Euclidean volume $\operatorname{Vol}(\Delta) = \frac{1}{6} |\det(\bm{V})|$.
By Proposition~\ref{thm:minimal_resolving_basis}, non-singularity $\det(\bm{V}) \ne 0$ certifies that the four centroids are affinely independent and that discrete channel unbiasing expresses the ideal target $\bm{0}$ in exterior barycentric coordinates with $w_0^* > 1$ and $w_k^* < 0$.
Physically, $\bm{V}$ represents the differential sensitivity matrix of the syndrome coarse-graining: each diagonal entry $V_{k,k} \equiv p_{k,k} - p_{k,0}$ measures primary directional contrast along Pauli axis $k$, while each off-diagonal entry $V_{i,k} \equiv p_{i,k} - p_{i,0}$ ($i \ne k$) measures cross-talk and channel leakage between non-commuting Pauli error mechanisms.
Channel unbiasing is well-conditioned whenever directional contrast strictly dominates channel cross-talk.

\subsubsection{Concrete Parameterized Model: Symmetric CSS Codes under Correlated Pauli Noise}
\label{subsubsec:app_concrete_parameterized_model}

To illustrate the four-sector framework concretely, consider a symmetric CSS surface code subject to phenomenological or circuit-level Pauli noise parameterized by single-axis error rates and cross-basis correlation $\rho \in [0, 1]$.
Let directional sectors herald elevated single-axis failure probabilities $p_{c,X}, p_{c,Z} \in (0, 1]$ along $X$ and $Z$, respectively, with residual reference errors suppressed as $\epsilon_Q = \alpha_Q \bar{p}_{L,Q} = \Theta((p/p_{\mathrm{th}})^{(d+1)/2})$.

We derive the microscopic structure of the joint directional sector $\mathcal{S}_3$ ($\chi_X = \chi_Z = 1$) from the physical composition of logical errors.
Syndromes assigned to $\mathcal{S}_3$ herald the simultaneous presence of elevated failure mechanisms across both check bases, decomposing into two disjoint microscopic branches:
\begin{enumerate}
\item \emph{Correlated $Y$ faults (fraction $\rho \in [0, 1]$):} A cross-basis correlated error mechanism (such as circuit-level $Y$ faults or simultaneous $X$ and $Z$ errors from two-qubit gate cross-talk) occurs with fraction $\rho$, directly generating a logical $Y$ error with conditional probability $p_{c,Y} \in (0, 1]$.
\item \emph{Independent joint faults (fraction $1 - \rho$):} Independent fault mechanisms occur across the $X$ and $Z$ decoding graphs.
Under the product distribution, conditional outcomes evaluate to: (i) logical $X$ only with probability $p_{c,X}(1 - p_{c,Z})$; (ii) logical $Z$ only with probability $(1 - p_{c,X})p_{c,Z}$; and (iii) joint logical $X$ and $Z$ errors with probability $p_{c,X} p_{c,Z}$, which manifest in a CSS code as logical $Y_L = X_L Z_L$.
\end{enumerate}
Combining both branches, the conditional error rates of sector $\mathcal{S}_3$ along the three logical Pauli axes evaluate to $p_{1,3} = (1 - \rho) p_{c,X}(1 - p_{c,Z}) + \epsilon_X''$, $p_{2,3} = (1 - \rho)(1 - p_{c,X}) p_{c,Z} + \epsilon_Z''$, and $p_{3,3} = (1 - \rho) p_{c,X} p_{c,Z} + \rho p_{c,Y} + \epsilon_Y'''$, where $\epsilon_X'', \epsilon_Z'', \epsilon_Y''' = \mathcal{O}(p_L)$ denote higher-weight background corrections.
Subtracting the reference centroid $\bm{p}_0 = \mathcal{O}(p_L)$ yields the coordinate displacement matrix $\bm{V} = \bm{V}_{\mathrm{LO}} + \Delta\bm{V}$, where the leading-order displacement matrix is
\begin{equation}
\bm{V}_{\mathrm{LO}} \equiv \begin{pmatrix} p_{c,X} & 0 & (1 - \rho) p_{c,X}(1 - p_{c,Z}) \\ 0 & p_{c,Z} & (1 - \rho)(1 - p_{c,X}) p_{c,Z} \\ 0 & 0 & (1 - \rho) p_{c,X} p_{c,Z} + \rho p_{c,Y} \end{pmatrix},
\label{eq:app_V_LO_def}
\end{equation}
and $\Delta\bm{V} = \mathcal{O}(p_L)$.
Because $\bm{V}_{\mathrm{LO}}$ is upper triangular, its determinant is given directly by the product of its diagonal elements,
\begin{equation}
\det(\bm{V}_{\mathrm{LO}}) = p_{c,X} p_{c,Z} \big[ (1 - \rho) p_{c,X} p_{c,Z} + \rho p_{c,Y} \big].
\label{eq:app_det_V_LO}
\end{equation}
For all physical classifier confidences $p_{c,X}, p_{c,Z}, p_{c,Y} > 0$ and all correlation fractions $\rho \in [0, 1]$, the diagonal entries are strictly positive, guaranteeing $\det(\bm{V}_{\mathrm{LO}}) > 0$.
At the canonical graph-decoder tie-break baseline ($p_{c,X} = p_{c,Z} = p_{c,Y} = 1/2$), the entries evaluate to $V_{13} = V_{23} = (1-\rho)/4$ and $V_{33} = (1+\rho)/4$, where the denominator $4$ arises from $(p_c)^2 = (1/2)^2 = 1/4$.
Conversely, in the idealized oracle limit where classifier confidences approach certainty ($p_{c,Q} \to 1$), the off-diagonal entries vanish identically and $\lim_{p_{c,Q} \to 1} \bm{V}_{\mathrm{LO}} = \bm{I}_{3 \times 3}$ for all $\rho \in [0, 1]$.

Two distinct small parameters control the structure of $\bm{V}$.
The perturbation $\Delta\bm{V} = \bm{V} - \bm{V}_{\mathrm{LO}} = \mathcal{O}(p_L)$ measures departure from the leading-order form, whereas channel decoupling is governed by the off-diagonal component
\begin{equation}
\Delta\bm{V}_{\mathrm{off}} \equiv \bm{V} - \operatorname{diag}(V_{1,1}, V_{2,2}, V_{3,3}).
\label{eq:app_delta_V_off_def}
\end{equation}
The two parameters satisfy the norm bound
\begin{equation}
\|\Delta\bm{V}_{\mathrm{off}}\|_\infty \le \|\Delta\bm{V}\|_\infty + (1 - \rho) \max_{Q} p_{c,Q},
\label{eq:app_delta_V_off_bound}
\end{equation}
confirming that off-diagonal channel leakage is suppressed when either $\rho \to 1$ (correlated double errors) or $p_{c,Q} \to 1$ (oracle-quality classification).

\subsubsection{Affine Independence, L\'evy--Desplanques Diagonal Dominance, and Condition Number Bounds}
\label{subsubsec:app_levy_desplanques}
\label{subsec:app_levy_desplanques_stability}

By the L\'evy--Desplanques theorem and Varah's inverse norm bound, strict row-wise diagonal dominance of the displacement matrix $\bm{V}$ in Eq.~\eqref{eq:app_displacement_matrix_V} guarantees both non-singularity of the system and $\mathcal{O}(1)$ spectral conditioning.

\begin{lemma}[L\'evy--Desplanques Theorem and Varah Bound]
\label{lem:app_levy_desplanques}
\label{lem:levy_desplanques_dominance}
Let $\bm{M} = (M_{ij}) \in \RR^{n \times n}$ be a strictly diagonally dominant matrix whose row margins satisfy
\begin{equation}
d_i \equiv |M_{ii}| - \sum_{j \neq i} |M_{ij}| > 0 \qquad \text{for all } i \in \{1, \dots, n\}.
\label{eq:app_levy_desplanques_condition}
\end{equation}
Then $\bm{M}$ is non-singular ($\det(\bm{M}) \neq 0$), and its inverse satisfies Varah's infinity-norm bound
\begin{equation}
\|\bm{M}^{-1}\|_\infty \le \frac{1}{\min_{1 \le i \le n} d_i}.
\label{eq:app_varah_bound}
\end{equation}
\end{lemma}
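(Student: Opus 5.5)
The argument runs in two steps. First I show that no nonzero vector lies in the kernel of $\bm{M}$, which gives non-singularity. Then I bound $\|\bm{M}^{-1}\bm{y}\|_\infty$ in terms of $\|\bm{y}\|_\infty$ for an arbitrary right-hand side $\bm{y}$, which gives Varah's bound. Both steps use the same device: look at the row where the solution vector has its largest entry in absolute value. This is the standard Gershgorin-type argument, so no earlier result of the paper is needed beyond basic norm definitions.

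I would do the quantitative step first, since it implies the qualitative one. Fix any $\bm{x} \in \RR^n$ and set $\bm{y} \equiv \bm{M}\bm{x}$. Choose an index $i$ with $|x_i| = \|\bm{x}\|_\infty$. Row $i$ reads $M_{ii}x_i = y_i - \sum_{j\ne i} M_{ij}x_j$. Apply the triangle inequality, together with $|x_j| \le |x_i|$, to get
\[
|M_{ii}|\,|x_i| \le |y_i| + \sum_{j\ne i}|M_{ij}|\,|x_i| .
\]
Rearranging with the margin definition in Eq.~\eqref{eq:app_levy_desplanques_condition} gives $d_i \|\bm{x}\|_\infty \le |y_i| \le \|\bm{y}\|_\infty$. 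Hence
\[
\|\bm{M}\bm{x}\|_\infty \ge \Bigl(\min_k d_k\Bigr)\|\bm{x}\|_\infty \quad\text{for all } \bm{x}.
\]

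Non-singularity follows at once. If $\bm{M}\bm{x} = \bm{0}$, the displayed inequality forces $\|\bm{x}\|_\infty = 0$, because $\min_k d_k > 0$. So $\ker \bm{M} = \{\bm{0}\}$, and since $\bm{M}$ is square, $\det\bm{M} \ne 0$.

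For Varah's bound, I substitute $\bm{x} = \bm{M}^{-1}\bm{y}$ for arbitrary $\bm{y}$. This gives $\|\bm{M}^{-1}\bm{y}\|_\infty \le \|\bm{y}\|_\infty / \min_k d_k$. Taking the supremum over $\|\bm{y}\|_\infty = 1$ identifies the induced operator norm and yields Eq.~\eqref{eq:app_varah_bound}.

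The only delicate point is the choice of norm. The maximizing-row argument is tailored to the $\infty$-norm and to row-wise dominance, so I must be careful to use row margins $d_i$, not column sums, and the induced $\|\cdot\|_\infty$, the maximum absolute row sum. Everything else is routine.
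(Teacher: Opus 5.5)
Your proof is correct and takes the same approach as the paper. Both pick an index where $|x_i| = \|\bm{x}\|_\infty$ and apply the (reverse) triangle inequality on that row to get $\|\bm{M}\bm{x}\|_\infty \ge (\min_i d_i)\|\bm{x}\|_\infty$; this gives a trivial kernel, and substituting $\bm{x} = \bm{M}^{-1}\bm{y}$ yields Varah's bound, with your final supremum step spelled out more explicitly than in the paper.
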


\begin{proof}
For any non-zero vector $\bm{x} \in \RR^n \setminus \{\bm{0}\}$, choose an index $k \in \{1, \dots, n\}$ attaining the maximum component magnitude $|x_k| = \|\bm{x}\|_\infty > 0$.
By the reverse triangle inequality on row $k$ of $\bm{M}\bm{x}$,
\begin{equation}
\|\bm{M}\bm{x}\|_\infty \ge |(\bm{M}\bm{x})_k| = \left| M_{kk} x_k + \sum_{j \neq k} M_{kj} x_j \right| \ge |M_{kk}| |x_k| - \sum_{j \neq k} |M_{kj}| |x_j| \ge d_k \|\bm{x}\|_\infty \ge \left(\min_{1 \le i \le n} d_i\right) \|\bm{x}\|_\infty > 0.
\end{equation}
Hence $\bm{M}\bm{x} \neq \bm{0}$ for all $\bm{x} \neq \bm{0}$, certifying $\det(\bm{M}) \neq 0$.
Substituting $\bm{x} = \bm{M}^{-1}\bm{y}$ proves Eq.~\eqref{eq:app_varah_bound}.
\end{proof}

\begin{theorem}[L\'evy--Desplanques Invertibility and Closed-Form Pauli Sector Weights]
\label{thm:app_levy_desplanques_pauli}
\label{thm:levy_desplanques_margin}
For all physical classifier confidences $p_{c,X}, p_{c,Z}, p_{c,Y} \in (0, 1]$ and cross-basis correlation parameters $\rho \in [0, 1]$, the leading-order coordinate displacement matrix $\bm{V}_{\mathrm{LO}}$ in Eq.~\eqref{eq:app_V_LO_def} satisfies universal strict diagonal dominance with positive row margins
\begin{equation}
|V_{ii}| - \sum_{j \neq i} |V_{ij}| \ge p_{c,X} p_{c,Z} > 0 \qquad \text{for all } i \in \{1, 2, 3\}.
\label{eq:app_strict_diagonal_dominance}
\end{equation}
Consequently, below the fault-tolerant threshold ($p < p_{\mathrm{th}}$), the physical displacement matrix $\bm{V}$ is non-singular ($\det(\bm{V}) \neq 0$) with bounded spectral condition number $\sigma_{\min}(\bm{V}) \ge v_{\min} > 0$ and $\kappa(\bm{V}) \equiv \|\bm{V}\|_2 \|\bm{V}^{-1}\|_2 = \kappa(\bm{V}_{\mathrm{LO}}) + \mathcal{O}(p_L) = \mathcal{O}(1)$ (approaching $1 + \mathcal{O}(p_L)$ in the ideal oracle limit $p_{c,Q} \to 1$), and the four sector centroids $\{\bm{p}_0, \bm{p}_1, \bm{p}_2, \bm{p}_3\}$ are affinely independent in $\RR^3$, spanning a non-degenerate $3$-simplex.
\end{theorem}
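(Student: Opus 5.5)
The plan is to verify strict row-wise diagonal dominance of $\bm{V}_{\mathrm{LO}}$ in Eq.~\eqref{eq:app_V_LO_def} by direct computation, carry it over to $\bm{V} = \bm{V}_{\mathrm{LO}} + \Delta\bm{V}$ by a perturbation argument, and then read off non-singularity and conditioning from Lemma~\ref{lem:app_levy_desplanques}. Affine independence then follows from the determinant factorization in Proposition~\ref{thm:minimal_resolving_basis}.

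\emph{Step 1 (row margins of $\bm{V}_{\mathrm{LO}}$).} Since $\bm{V}_{\mathrm{LO}}$ is upper triangular, row 1 gives the margin
\begin{equation}
d_1 = p_{c,X}\big[1 - (1-\rho)(1-p_{c,Z})\big].
\end{equation}
The bracket equals $p_{c,Z} + \rho(1-p_{c,Z}) \ge p_{c,Z}$, so $d_1 \ge p_{c,X}p_{c,Z}$. Row 2 gives, symmetrically, $d_2 = p_{c,Z}\big[1-(1-\rho)(1-p_{c,X})\big] \ge p_{c,X}p_{c,Z}$. Row 3 has no off-diagonal entries, so
\begin{equation}
d_3 = (1-\rho)p_{c,X}p_{c,Z} + \rho\, p_{c,Y}.
\end{equation}
This is a convex combination of $p_{c,X}p_{c,Z}$ and $p_{c,Y}$, hence it is at least $\min(p_{c,X}p_{c,Z},\, p_{c,Y})$.

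This third row is where I expect the stated bound to be delicate. The inequality $d_3 \ge p_{c,X}p_{c,Z}$ holds for all $\rho$ only if $p_{c,Y} \ge p_{c,X}p_{c,Z}$. That is true at the tie-break baseline ($1/2 \ge 1/4$) and in the oracle limit. For general confidences, the uniform margin should be stated as $\delta_{\mathrm{LO}} \equiv \min(p_{c,X}p_{c,Z},\, p_{c,Y})$, with the displayed bound recovered under $p_{c,Y} \ge p_{c,X}p_{c,Z}$. Either way the margin is strictly positive, which is all the subsequent steps need.

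\emph{Step 2 (transfer to $\bm{V}$).} Each row margin is a $1$-Lipschitz function of the entries in the $\infty$-norm sense: $|d_i(\bm{V}) - d_i(\bm{V}_{\mathrm{LO}})| \le \sum_j |\Delta V_{ij}| \le \|\Delta\bm{V}\|_\infty$. Since $\Delta\bm{V} = \mathcal{O}(p_L)$, and $p_L = \Theta((p/p_{\mathrm{th}})^{(d+1)/2})$ becomes small below threshold, we get $\min_i d_i(\bm{V}) \ge \delta_{\mathrm{LO}} - \mathcal{O}(p_L) > 0$.

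\emph{Step 3 (invertibility and conditioning).} Lemma~\ref{lem:app_levy_desplanques} then gives $\det\bm{V} \ne 0$ and the Varah bound $\|\bm{V}^{-1}\|_\infty \le 1/(\delta_{\mathrm{LO}} - \mathcal{O}(p_L))$.
\begin{itemize}
\item \emph{Smallest singular value.} Using $\|\cdot\|_2 \le \sqrt{3}\,\|\cdot\|_\infty$ in dimension three, we obtain $\sigma_{\min}(\bm{V}) = \|\bm{V}^{-1}\|_2^{-1} \ge \delta_{\mathrm{LO}}/\sqrt{3} - \mathcal{O}(p_L)$. Taking $v_{\min} \equiv \delta_{\mathrm{LO}}/(2\sqrt3)$ makes the bound independent of $d$.
\item \emph{Condition number.} The entries lie in $[0,1]$, so $\|\bm{V}\|_2 \le 3$ and $\kappa(\bm{V}) = \mathcal{O}(1)$. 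Continuity of singular values (Weyl) under the $\mathcal{O}(p_L)$ perturbation gives $\kappa(\bm{V}) = \kappa(\bm{V}_{\mathrm{LO}}) + \mathcal{O}(p_L)$.
\item \emph{Oracle limit.} As $p_{c,Q} \to 1$ we have $\bm{V}_{\mathrm{LO}} \to \bm{I}$, so $\kappa \to 1 + \mathcal{O}(p_L)$.
\end{itemize}

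\emph{Step 4 (affine independence).} With $K = M+1 = 4$, Eq.~\eqref{eq:app_det_A_geom} identifies $\det(\bm{A}_{\mathrm{geom}})$ with $\det(\bm{p}_1-\bm{p}_0,\bm{p}_2-\bm{p}_0,\bm{p}_3-\bm{p}_0) = \det\bm{V} \ne 0$. Item 4 of Proposition~\ref{thm:minimal_resolving_basis} therefore shows the four centroids are affinely independent and span a non-degenerate $3$-simplex, of volume $|\det\bm{V}|/6 \ge v_{\min}^3/6$.
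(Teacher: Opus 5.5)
Your proposal is correct and follows essentially the same route as the paper: compute the row margins of $\bm{V}_{\mathrm{LO}}$, carry them to $\bm{V}=\bm{V}_{\mathrm{LO}}+\Delta\bm{V}$ under the $\mathcal{O}(p_L)$ perturbation, apply the L\'evy--Desplanques lemma to get $\det\bm{V}\neq 0$, and use the identity $\det(\bm{A}_{\mathrm{geom}})=\det\bm{V}$ to get affine independence. Your row-3 caveat is justified: the paper's own proof only establishes a margin of $\min\{p_{c,X}p_{c,Z},\,p_{c,Y}\}$ there and then overstates it, so the displayed universal bound fails whenever $\rho>0$ and $p_{c,Y}<p_{c,X}p_{c,Z}$. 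Your Varah, $\sqrt{3}$ norm-conversion and Weyl estimates also supply the $\sigma_{\min}$ and $\kappa$ claims that the paper asserts without derivation (one small slip: the entries of $\bm{V}$ lie in $[-1,1]$ rather than $[0,1]$, which still gives $\|\bm{V}\|_2\le 3$).
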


\begin{proof}
We evaluate the diagonal dominance margin $|V_{ii}| - \sum_{j \neq i} |V_{ij}|$ for each row of $\bm{V}_{\mathrm{LO}}$:
\begin{enumerate}
\item \emph{Row 1 ($X$-axis):} The diagonal entry is $V_{11} = p_{c,X}$, and the only non-zero off-diagonal entry is $V_{13} = (1-\rho) p_{c,X}(1 - p_{c,Z}) \ge 0$.
The row margin evaluates to
\begin{equation}
|V_{11}| - \sum_{j \neq 1} |V_{1j}| = p_{c,X} - (1 - \rho) p_{c,X}(1 - p_{c,Z}) = p_{c,X} \big[ p_{c,Z} + \rho(1 - p_{c,Z}) \big] \ge p_{c,X} p_{c,Z} > 0.
\end{equation}
\item \emph{Row 2 ($Z$-axis):} The diagonal entry is $V_{22} = p_{c,Z}$, and the only non-zero off-diagonal entry is $V_{23} = (1-\rho)(1 - p_{c,X}) p_{c,Z} \ge 0$.
By exact symmetry under exchanging $X$ and $Z$, the row margin evaluates to
\begin{equation}
|V_{22}| - \sum_{j \neq 2} |V_{2j}| = p_{c,Z} \big[ p_{c,X} + \rho(1 - p_{c,X}) \big] \ge p_{c,X} p_{c,Z} > 0.
\end{equation}
\item \emph{Row 3 ($Y$-axis):} The diagonal entry is $V_{33} = (1-\rho) p_{c,X} p_{c,Z} + \rho p_{c,Y}$, and the off-diagonal entries vanish identically ($V_{31} = V_{32} = 0$).
For physical error mechanisms with $p_{c,Y} \ge p_{c,X} p_{c,Z}$ (and for any $p_{c,Y} > 0$ when $\rho > 0$), the margin satisfies
\begin{equation}
|V_{33}| - \sum_{j \neq 3} |V_{3j}| = (1 - \rho) p_{c,X} p_{c,Z} + \rho p_{c,Y} \ge \min\{p_{c,X} p_{c,Z}, p_{c,Y}\} > 0.
\end{equation}
In particular, when $p_{c,X} = p_{c,Z} = p_{c,Y} = p_c$, the third row margin is $(1-\rho)p_c^2 + \rho p_c \ge p_c^2 \ge p_{c,X} p_{c,Z} > 0$.
\end{enumerate}
Therefore, universal strict diagonal dominance $|V_{ii}| - \sum_{j \neq i} |V_{ij}| \ge p_{c,X} p_{c,Z} > 0$ holds across all rows for all $p_{c,Q} \in (0, 1]$ and all $\rho \in [0, 1]$.
At the MWPM tie-break baseline ($p_{c,Q} = 1/2$), the margin is at least $(1/2)(1/2) = 1/4 > 0$.
In the oracle limit ($p_{c,Q} \to 1$), the margin reaches $1$.

Below the fault-tolerant threshold ($p < p_{\mathrm{th}}$), the physical perturbation satisfies $\|\Delta\bm{V}\|_\infty = \mathcal{O}(p_L) \ll p_{c,X} p_{c,Z}$, so that the perturbed physical matrix $\bm{V} = \bm{V}_{\mathrm{LO}} + \Delta\bm{V}$ remains strictly diagonally dominant with margin at least $p_{c,X} p_{c,Z} - \mathcal{O}(p_L) > 0$.
By Lemma~\ref{lem:app_levy_desplanques}, every strictly diagonally dominant matrix is non-singular, certifying $\det(\bm{V}) \neq 0$.
Direct evaluation confirms that the perturbed determinant satisfies
\begin{equation}
\det(\bm{V}) = \det(\bm{V}_{\mathrm{LO}}) + \mathcal{O}\big(\|\Delta\bm{V}\|_\infty\big) = p_{c,X} p_{c,Z} \big[ (1 - \rho) p_{c,X} p_{c,Z} + \rho p_{c,Y} \big] + \mathcal{O}(p_L) > 0.
\label{eq:app_det_V_bound}
\end{equation}
By Proposition~\ref{thm:minimal_resolving_basis} (Item 4), the geometric and system matrices satisfy the determinant identities $\det(\bm{A}_{\mathrm{geom}}) = \det(\bm{V}) \ne 0$ and $\det(\bm{A}_{\mathrm{sys}}) = (\prod_{k=0}^3 \pi_k)\det(\bm{V}) \ne 0$, certifying that the four centroids $\{\bm{p}_0, \bm{p}_1, \bm{p}_2, \bm{p}_3\}$ are affinely independent in $\RR^3$ and that the channel-inversion system $\bm{A}_{\mathrm{sys}}\bm{w} = \bm{e}_1$ admits a unique solution.
\end{proof}

\subsubsection{Between-Sector Covariance Factorization and Closed-Form Sector Weights}
\label{subsubsec:app_covariance_factorization_weights}

Let $\bar{\bm{p}} \equiv \sum_{k=0}^M \pi_k \bm{p}_k$ denote the unconditional centroid.
Using $\sum_{k=0}^M \pi_k = 1$, the consensus displacement satisfies
\begin{equation}
\bm{p}_0 - \bar{\bm{p}} = \bm{p}_0 - \sum_{k=0}^M \pi_k \bm{p}_k = (1 - \pi_0)\bm{p}_0 - \sum_{k=1}^M \pi_k \bm{p}_k = -\sum_{k=1}^M \pi_k (\bm{p}_k - \bm{p}_0) = -\bm{V} \bm{\pi}_{1:M},
\label{eq:app_p0_minus_pbar}
\end{equation}
where $\bm{\pi}_{1:M} \equiv (\pi_1, \dots, \pi_M)^\top \in \RR^M$.
For each directional sector $k \in \{1, \dots, M\}$, the centroid displacement evaluates to
\begin{equation}
\bm{p}_k - \bar{\bm{p}} = (\bm{p}_k - \bm{p}_0) + (\bm{p}_0 - \bar{\bm{p}}) = \bm{v}_k - \bm{V}\bm{\pi}_{1:M} = \bm{V}(\bm{e}_k - \bm{\pi}_{1:M}),
\label{eq:app_pk_minus_pbar}
\end{equation}
where $\bm{e}_k$ is the $k$-th standard basis vector in $\RR^M$.

\begin{proposition}[Between-Sector Covariance Factorization]
\label{thm:app_covariance_factorization}
When $K = M+1$, the between-sector covariance matrix $\bm{\Sigma}_S \equiv \sum_{k=0}^M \pi_k (\bm{p}_k - \bar{\bm{p}})(\bm{p}_k - \bar{\bm{p}})^\top \in \RR^{M \times M}$ admits an exact congruent factorization,
\begin{equation}
\bm{\Sigma}_S = \bm{V} \bm{\Sigma}_\pi \bm{V}^\top,
\label{eq:app_sigma_S_factorization}
\end{equation}
where $\bm{\Sigma}_\pi \in \RR^{M \times M}$ is the categorical covariance matrix of the directional sector indicators,
\begin{equation}
\bm{\Sigma}_\pi \equiv \pi_0 \bm{\pi}_{1:M}\bm{\pi}_{1:M}^\top + \sum_{k=1}^M \pi_k (\bm{e}_k - \bm{\pi}_{1:M})(\bm{e}_k - \bm{\pi}_{1:M})^\top.
\end{equation}
Furthermore, the inverse between-sector covariance matrix evaluates to
\begin{equation}
\bm{\Sigma}_S^{-1} = \bm{V}^{-\top} \bm{\Sigma}_\pi^{-1} \bm{V}^{-1} = \bm{V}^{-\top} \left( \bm{D}_\pi^{-1} + \frac{\bm{1}_M \bm{1}_M^\top}{\pi_0} \right) \bm{V}^{-1},
\label{eq:app_sigma_S_inverse_exact}
\end{equation}
where $\bm{D}_\pi \equiv \operatorname{diag}(\pi_1, \dots, \pi_M)$ and $\bm{1}_M \equiv (1, \dots, 1)^\top \in \RR^M$.
\end{proposition}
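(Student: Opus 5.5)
The plan is to substitute the displacement identities \eqref{eq:app_p0_minus_pbar}--\eqref{eq:app_pk_minus_pbar} into the definition of $\bm{\Sigma}_S$, and then invert the resulting categorical covariance by Sherman--Morrison. Write $\bm{u}_0 \equiv -\bm{\pi}_{1:M}$ and $\bm{u}_k \equiv \bm{e}_k - \bm{\pi}_{1:M}$ for $k \ge 1$. Then every centered centroid has the form $\bm{p}_k - \bar{\bm{p}} = \bm{V}\bm{u}_k$. Pulling $\bm{V}$ out on the left and $\bm{V}^\top$ out on the right of each outer product gives
\[
\bm{\Sigma}_S = \bm{V}\Big(\sum_{k=0}^M \pi_k \bm{u}_k\bm{u}_k^\top\Big)\bm{V}^\top .
\]
The bracket is exactly $\bm{\Sigma}_\pi$ as defined in the statement, since $\pi_0 \bm{u}_0\bm{u}_0^\top = \pi_0\bm{\pi}_{1:M}\bm{\pi}_{1:M}^\top$. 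This proves \eqref{eq:app_sigma_S_factorization}; no step uses more than linearity of the sum.

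Next, I would simplify $\bm{\Sigma}_\pi$ by expanding the directional terms:
\[
\sum_{k\ge1}\pi_k(\bm{e}_k\bm{e}_k^\top - \bm{e}_k\bm{\pi}_{1:M}^\top - \bm{\pi}_{1:M}\bm{e}_k^\top + \bm{\pi}_{1:M}\bm{\pi}_{1:M}^\top) = \bm{D}_\pi - (2 - (1-\pi_0))\,\bm{\pi}_{1:M}\bm{\pi}_{1:M}^\top .
\]
This uses $\sum_{k\ge1}\pi_k\bm{e}_k = \bm{\pi}_{1:M}$ and $\sum_{k\ge1}\pi_k = 1-\pi_0$. Adding the $\pi_0$ term collapses everything to
\[
\bm{\Sigma}_\pi = \bm{D}_\pi - \bm{\pi}_{1:M}\bm{\pi}_{1:M}^\top ,
\]
which is the familiar multinomial covariance of the $M$ directional indicators.

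Then I would apply the Sherman--Morrison formula (the forthcoming Lemma~\ref{lem:app_sherman_morrison_categorical}, or the standard identity directly). With $\bm{D}_\pi^{-1}\bm{\pi}_{1:M} = \bm{1}_M$ and $1 - \bm{\pi}_{1:M}^\top\bm{D}_\pi^{-1}\bm{\pi}_{1:M} = 1 - \sum_{k\ge1}\pi_k = \pi_0 > 0$, it yields
\[
\bm{\Sigma}_\pi^{-1} = \bm{D}_\pi^{-1} + \bm{1}_M\bm{1}_M^\top/\pi_0 .
\]
As a sanity check, I would verify directly that $(\bm{D}_\pi - \bm{\pi}\bm{\pi}^\top)(\bm{D}_\pi^{-1} + \bm{1}\bm{1}^\top/\pi_0) = \bm{I}$, which reduces to $-\bm{\pi}\bm{1}^\top + \bm{\pi}\bm{1}^\top(1-(1-\pi_0))/\pi_0 = 0$.

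Finally, $\bm{\Sigma}_S^{-1} = \bm{V}^{-\top}\bm{\Sigma}_\pi^{-1}\bm{V}^{-1}$ follows once $\bm{V}$ is invertible. I expect this invertibility, rather than any of the algebra, to be the only real hypothesis issue. At $K=M+1$ it is equivalent to affine independence of the centroids by Proposition~\ref{thm:minimal_resolving_basis}, Item 4, and it is guaranteed in the four-sector setting by Theorem~\ref{thm:app_levy_desplanques_pauli}. I would state it explicitly as an assumption, equivalently $\bm{\Sigma}_S \succ 0$. Positivity of all $\pi_k$, including $\pi_0$, is what makes $\bm{D}_\pi^{-1}$ and the Sherman--Morrison denominator well defined.
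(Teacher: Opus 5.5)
Your proposal is correct and follows the paper's proof essentially step for step: substitute the displacement identities to factor out $\bm{V}$, collapse $\bm{\Sigma}_\pi$ to $\bm{D}_\pi - \bm{\pi}_{1:M}\bm{\pi}_{1:M}^\top$, and invert by Sherman--Morrison (the paper's Lemma~\ref{lem:app_sherman_morrison_categorical}). You state invertibility of $\bm{V}$ (equivalently $\bm{\Sigma}_S \succ 0$, i.e.\ affine independence of the centroids) as an explicit hypothesis, which is slightly more careful than the paper: it cites the $M=3$ L\'evy--Desplanques theorem even though the proposition is stated for general $K = M+1$.
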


\begin{proof}
Substituting Eqs.~\eqref{eq:app_p0_minus_pbar} and~\eqref{eq:app_pk_minus_pbar} into the between-sector covariance definition yields
\begin{align}
\bm{\Sigma}_S &\equiv \sum_{k=0}^M \pi_k (\bm{p}_k - \bar{\bm{p}})(\bm{p}_k - \bar{\bm{p}})^\top \nonumber \\
&= \pi_0 \big(-\bm{V}\bm{\pi}_{1:M}\big)\big(-\bm{V}\bm{\pi}_{1:M}\big)^\top + \sum_{k=1}^M \pi_k \big[\bm{V}(\bm{e}_k - \bm{\pi}_{1:M})\big]\big[\bm{V}(\bm{e}_k - \bm{\pi}_{1:M})\big]^\top \nonumber \\
&= \bm{V} \left[ \pi_0 \bm{\pi}_{1:M}\bm{\pi}_{1:M}^\top + \sum_{k=1}^M \pi_k (\bm{e}_k - \bm{\pi}_{1:M})(\bm{e}_k - \bm{\pi}_{1:M})^\top \right] \bm{V}^\top = \bm{V} \bm{\Sigma}_\pi \bm{V}^\top.
\end{align}
Since $\bm{V}$ is non-singular by Theorem~\ref{thm:app_levy_desplanques_pauli} and $\bm{\Sigma}_\pi$ is invertible by Lemma~\ref{lem:app_sherman_morrison_categorical}, inverting both sides gives $\bm{\Sigma}_S^{-1} = \bm{V}^{-\top} \bm{\Sigma}_\pi^{-1} \bm{V}^{-1}$.
Substituting Eq.~\eqref{eq:app_sigma_q_inverse} completes the proof.
\end{proof}

\begin{lemma}[Sherman--Morrison Inversion of the Categorical Covariance Matrix]
\label{lem:app_sherman_morrison_categorical}
If $K = M+1$, the categorical covariance matrix simplifies to the canonical diagonal-minus-rank-one form,
\begin{equation}
\bm{\Sigma}_\pi = \bm{D}_\pi - \bm{\pi}_{1:M}\bm{\pi}_{1:M}^\top,
\label{eq:app_sigma_q_canonical}
\end{equation}
where $\bm{D}_\pi \equiv \operatorname{diag}(\pi_1, \dots, \pi_M)$, and its exact inverse is given by
\begin{equation}
\bm{\Sigma}_\pi^{-1} = \bm{D}_\pi^{-1} + \frac{\bm{1}_M \bm{1}_M^\top}{\pi_0}.
\label{eq:app_sigma_q_inverse}
\end{equation}
\end{lemma}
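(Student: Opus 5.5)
The plan is to reduce the categorical covariance to its canonical form by direct expansion. After that, the inverse is certified through the Sherman--Morrison formula, and equivalently by multiplying the claimed inverse against $\bm{\Sigma}_\pi$.

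First, expand the sum in the definition of $\bm{\Sigma}_\pi$. Write $\bm{q} \equiv \bm{\pi}_{1:M}$ and $s \equiv \sum_{k=1}^M \pi_k = 1 - \pi_0$. Each term expands as
\[
(\bm{e}_k - \bm{q})(\bm{e}_k - \bm{q})^\top = \bm{e}_k\bm{e}_k^\top - \bm{e}_k\bm{q}^\top - \bm{q}\bm{e}_k^\top + \bm{q}\bm{q}^\top .
\]
Weight these terms by $\pi_k$ and sum over $k$, using $\sum_k \pi_k \bm{e}_k = \bm{q}$. This gives
\[
\bm{D}_\pi - 2\bm{q}\bm{q}^\top + s\,\bm{q}\bm{q}^\top .
\]
Adding the consensus contribution $\pi_0\bm{q}\bm{q}^\top$ and using $\pi_0 + s = 1$, the $\bm{q}\bm{q}^\top$ coefficients collapse to $-1$. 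This yields $\bm{\Sigma}_\pi = \bm{D}_\pi - \bm{q}\bm{q}^\top$, which is Eq.~\eqref{eq:app_sigma_q_canonical}. This is the familiar multinomial covariance of the $M$ non-reference sector indicators.

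Second, invert. Since every $\pi_k > 0$, the matrix $\bm{D}_\pi$ is invertible. Sherman--Morrison gives
\[
(\bm{D}_\pi - \bm{q}\bm{q}^\top)^{-1} = \bm{D}_\pi^{-1} + \frac{\bm{D}_\pi^{-1}\bm{q}\bm{q}^\top\bm{D}_\pi^{-1}}{1 - \bm{q}^\top\bm{D}_\pi^{-1}\bm{q}} ,
\]
valid whenever the denominator is nonzero. Here $\bm{D}_\pi^{-1}\bm{q} = \bm{1}_M$, so $\bm{q}^\top\bm{D}_\pi^{-1}\bm{q} = \bm{1}_M^\top\bm{q} = s$. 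The denominator is therefore $1 - s = \pi_0$, which is strictly positive by the standing assumption $\pi_0 > 0$. Substituting these values produces Eq.~\eqref{eq:app_sigma_q_inverse}.

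As an independent check, multiply $(\bm{D}_\pi - \bm{q}\bm{q}^\top)(\bm{D}_\pi^{-1} + \bm{1}_M\bm{1}_M^\top/\pi_0)$ directly. Using $\bm{D}_\pi\bm{1}_M = \bm{q}$ and $\bm{q}^\top\bm{1}_M = s$, the product equals
\[
\bm{I} + \frac{\bm{q}\bm{1}_M^\top}{\pi_0} - \bm{q}\bm{1}_M^\top - \frac{s\,\bm{q}\bm{1}_M^\top}{\pi_0} = \bm{I} + \bm{q}\bm{1}_M^\top\,\frac{1 - \pi_0 - s}{\pi_0} = \bm{I} .
\]

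There is no real obstacle. The only point requiring care is the nonvanishing of the Sherman--Morrison denominator, which is exactly the strict positivity $\pi_0 > 0$ of the reference-sector probability. As a consequence, $\bm{\Sigma}_\pi$ is invertible; this is what Proposition~\ref{thm:app_covariance_factorization} invokes.
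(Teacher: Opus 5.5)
Your proposal is correct and follows the paper's proof essentially step for step. You expand the weighted outer products: the paper records the intermediate form $\operatorname{diag}(\bm{\pi}_{1:M}) - (1+\pi_0)\bm{\pi}_{1:M}\bm{\pi}_{1:M}^\top$, which is your $\bm{D}_\pi - (2-s)\bm{q}\bm{q}^\top$. You then absorb the consensus term and apply Sherman--Morrison with the same identities $\bm{D}_\pi^{-1}\bm{\pi}_{1:M} = \bm{1}_M$ and $1 - \sum_{k\ge 1}\pi_k = \pi_0$. Your direct multiplication check, and your explicit remark that $\pi_0 > 0$ keeps the denominator nonzero, are small additions the paper leaves implicit.
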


\begin{proof}
Expanding the sum over $k \in \{1, \dots, M\}$ gives
\begin{align}
\sum_{k=1}^M \pi_k (\bm{e}_k - \bm{\pi}_{1:M})(\bm{e}_k - \bm{\pi}_{1:M})^\top &= \sum_{k=1}^M \pi_k \bm{e}_k \bm{e}_k^\top - \Big(\sum_{k=1}^M \pi_k \bm{e}_k\Big)\bm{\pi}_{1:M}^\top - \bm{\pi}_{1:M}\Big(\sum_{k=1}^M \pi_k \bm{e}_k\Big)^\top + \Big(\sum_{k=1}^M \pi_k\Big)\bm{\pi}_{1:M}\bm{\pi}_{1:M}^\top \nonumber \\
&= \operatorname{diag}(\bm{\pi}_{1:M}) - (1 + \pi_0)\bm{\pi}_{1:M}\bm{\pi}_{1:M}^\top.
\end{align}
Adding $\pi_0 \bm{\pi}_{1:M}\bm{\pi}_{1:M}^\top$, the categorical covariance simplifies to Eq.~\eqref{eq:app_sigma_q_canonical}.
Applying the Sherman--Morrison rank-1 matrix inversion formula,
\begin{equation}
\big(\bm{D}_\pi - \bm{\pi}_{1:M}\bm{\pi}_{1:M}^\top\big)^{-1} = \bm{D}_\pi^{-1} + \frac{\bm{D}_\pi^{-1}\bm{\pi}_{1:M}\bm{\pi}_{1:M}^\top \bm{D}_\pi^{-1}}{1 - \bm{\pi}_{1:M}^\top \bm{D}_\pi^{-1}\bm{\pi}_{1:M}}.
\end{equation}
Observing that $\bm{D}_\pi^{-1}\bm{\pi}_{1:M} = \bm{1}_M \equiv (1, \dots, 1)^\top \in \RR^M$ and $1 - \bm{\pi}_{1:M}^\top \bm{D}_\pi^{-1}\bm{\pi}_{1:M} = 1 - \sum_{k=1}^M \pi_k = \pi_0$, the inverse categorical covariance evaluates exactly to Eq.~\eqref{eq:app_sigma_q_inverse}.
\end{proof}

\begin{proposition}[Variational Collapse to Closed-Form Sector Weights and Overhead]
\label{thm:app_variational_collapse}
For $K = M+1$ syndrome sectors, the variational filter weights $w_k^* = 1 - \bar{\bm{p}}^\top \bm{\Sigma}_S^{-1} (\bm{p}_k - \bar{\bm{p}})$ collapse identically to the unique algebraic solution of $\bm{A}_{\mathrm{sys}}\bm{w} = \bm{e}_1$,
\begin{equation}
w_0^* = \frac{1 + \bm{1}_M^\top \bm{V}^{-1}\bm{p}_0}{\pi_0}, \qquad w_k^* = -\frac{(\bm{V}^{-1}\bm{p}_0)_k}{\pi_k} \qquad (k \in \{1, \dots, M\}),
\label{eq:app_weights_closed_form}
\end{equation}
and the minimal sampling overhead evaluates to the exact rational expression
\begin{equation}
C_S^* = \frac{\big(1 + \bm{1}_M^\top \bm{V}^{-1}\bm{p}_0\big)^2}{\pi_0} + \sum_{k=1}^M \frac{\big[(\bm{V}^{-1}\bm{p}_0)_k\big]^2}{\pi_k}.
\label{eq:app_overhead_exact}
\end{equation}
\end{proposition}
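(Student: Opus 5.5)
The plan is to avoid inverting $\bm{\Sigma}_S$ directly and instead use uniqueness. By Proposition~\ref{thm:minimal_resolving_basis} (Item 4) together with Theorem~\ref{thm:app_levy_desplanques_pauli}, the square system $\bm{A}_{\mathrm{sys}}\bm{w} = \bm{e}_1$ has a unique solution at $K = M+1$, and that solution coincides with the variational weights $w_k^*$. So it is enough to show that the candidate weights in Eq.~\eqref{eq:app_weights_closed_form} satisfy the two channel-inversion constraints.

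\textbf{Step 1 (verify the constraints).} Write $\bm{u} \equiv \bm{V}^{-1}\bm{p}_0$ and $\tilde w_k \equiv \pi_k w_k$. The candidate weights give $\tilde w_k = -u_k$ for $k \ge 1$ and $\tilde w_0 = 1 + \bm{1}_M^\top \bm{u}$.
\begin{itemize}
\item Trace preservation is immediate: $\sum_k \tilde w_k = 1 + \bm{1}_M^\top\bm{u} - \bm{1}_M^\top\bm{u} = 1$.
\item For error cancellation, decompose $\bm{p}_k = \bm{p}_0 + \bm{v}_k$. Then
\[
\sum_k \tilde w_k \bm{p}_k = \bm{p}_0 \sum_k \tilde w_k + \sum_{k\ge1} \tilde w_k \bm{v}_k = \bm{p}_0 - \bm{V}\bm{u} = \bm{0}.
\]
\end{itemize}
Uniqueness then identifies these weights with $w_k^*$. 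As a cross-check, one could also substitute Eqs.~\eqref{eq:app_p0_minus_pbar}--\eqref{eq:app_pk_minus_pbar} and the Sherman--Morrison inverse Eq.~\eqref{eq:app_sigma_S_inverse_exact} into the variational formula. That route needs $\bm{V}^{-1}\bar{\bm{p}} = \bm{u} + \bm{\pi}_{1:M}$, which follows from $\bar{\bm{p}} = \bm{p}_0 + \bm{V}\bm{\pi}_{1:M}$.

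\textbf{Step 2 (overhead).} For a feasible filter the overhead is $C(w) = \sum_k \pi_k w_k^2 = \sum_k \tilde w_k^2/\pi_k$. Inserting $\tilde w_0 = 1 + \bm{1}_M^\top\bm{u}$ and $\tilde w_k = -u_k$ gives Eq.~\eqref{eq:app_overhead_exact} directly. By Proposition~\ref{prop:app_discrete_variational_derivation}, this value equals $C_S^* = 1 + \bar{\bm{p}}^\top\bm{\Sigma}_S^{-1}\bar{\bm{p}}$.

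\textbf{Main obstacle.} The only delicate point is making sure the uniqueness invoked in Step 1 applies. This requires $\det\bm{V} \neq 0$, which is Theorem~\ref{thm:app_levy_desplanques_pauli}, and $\pi_k > 0$ for every sector, so that $\bm{A}_{\mathrm{sys}}$ is nonsingular. If a direct matrix verification is preferred instead, the hardest algebra is the identity
\[
\bar{\bm{p}}^\top \bm{V}^{-\top}\Big(\bm{D}_\pi^{-1} + \bm{1}\bm{1}^\top/\pi_0\Big)(\bm{e}_k - \bm{\pi}_{1:M}) = -(1 + u_k/\pi_k) + \text{(const)}.
\]
Carrying this out requires careful bookkeeping of the $\pi_0$ terms, which is why I would use the uniqueness argument.
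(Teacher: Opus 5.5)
Your argument is correct, but it runs in the opposite direction from the paper's. The paper works forward from the variational formula. It substitutes the congruent factorization $\bm{\Sigma}_S = \bm{V}\bm{\Sigma}_\pi\bm{V}^\top$ (Proposition~\ref{thm:app_covariance_factorization}) together with the Sherman--Morrison inverse $\bm{\Sigma}_\pi^{-1} = \bm{D}_\pi^{-1} + \bm{1}_M\bm{1}_M^\top/\pi_0$ (Lemma~\ref{lem:app_sherman_morrison_categorical}). Using $\bm{V}^{-1}\bar{\bm{p}} = \bm{u} + \bm{\pi}_{1:M}$ (with $\bm{u} \equiv \bm{V}^{-1}\bm{p}_0$, the paper's $\bm{u}_0$), it computes $\bm{\Sigma}_\pi^{-1}\bm{V}^{-1}\bar{\bm{p}} = \bm{D}_\pi^{-1}\bm{u} + \pi_0^{-1}(1 + \bm{1}_M^\top\bm{u})\bm{1}_M$. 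It then reads off $w_0^*$ and $w_k^*$ by contracting with $\bm{\pi}_{1:M}$ and $\bm{e}_k - \bm{\pi}_{1:M}$, and checks the constraints and overhead by substitution.

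You instead verify a candidate in the edge coordinates $\bm{p}_k = \bm{p}_0 + \bm{v}_k$. You then identify it with $w^*$ through uniqueness of the nonsingular square system (Item~4 of Proposition~\ref{thm:minimal_resolving_basis}). Your route never touches $\bm{\Sigma}_S^{-1}$ and needs only $\det\bm{V}\neq 0$ and $\pi_k > 0$. It also makes the structural point explicit: at $K = M+1$ the feasible set is a single point, so optimality is automatic and the closed form is simply the barycentric solution of the constraints. The paper's route is constructive rather than verificational. It yields the explicit inverse $\bm{\Sigma}_S^{-1} = \bm{V}^{-\top}(\bm{D}_\pi^{-1} + \bm{1}_M\bm{1}_M^\top/\pi_0)\bm{V}^{-1}$ as a by-product, which exposes the $1/\pi_k$-weighted structure of $C_S^* - 1$.

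Two minor remarks.
\begin{itemize}
\item The direct computation you wanted to avoid is short. The $\pi_0$ terms collapse via $\pi_0 + \bm{1}_M^\top\bm{\pi}_{1:M} = 1$, giving exactly $\bar{\bm{p}}^\top\bm{\Sigma}_S^{-1}(\bm{p}_k - \bar{\bm{p}}) = 1 + u_k/\pi_k$. This is not $-(1 + u_k/\pi_k)$ plus a constant, as written in your aside.
\item Theorem~\ref{thm:app_levy_desplanques_pauli} certifies $\det\bm{V}\neq 0$ only for the $M = 3$ parameterized model. For general $M$, nonsingularity is Hypothesis~H1, and it is already presupposed by the statement, which contains $\bm{V}^{-1}$ and $\bm{\Sigma}_S^{-1}$. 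The paper's proof carries the same dependence.
\end{itemize}
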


\begin{proof}
By Theorem~\ref{thm:discrete_sector_bound}, the optimal discrete sector filter is governed by the variational expression
\begin{equation}
w_k^* = 1 - \bar{\bm{p}}^\top \bm{\Sigma}_S^{-1} (\bm{p}_k - \bar{\bm{p}}).
\label{eq:app_variational_weights_start}
\end{equation}
Because $\bar{\bm{p}} = \bm{p}_0 + \bm{V}\bm{\pi}_{1:M}$, we define the normalized consensus coordinate vector
\begin{equation}
\bm{u}_0 \equiv \bm{V}^{-1}\bm{p}_0 \in \RR^M, \qquad \bm{V}^{-1}\bar{\bm{p}} = \bm{u}_0 + \bm{\pi}_{1:M}.
\end{equation}
Evaluating the product $\bm{\Sigma}_\pi^{-1} \bm{V}^{-1}\bar{\bm{p}}$ using Eq.~\eqref{eq:app_sigma_q_inverse} gives
\begin{equation}
\bm{\Sigma}_\pi^{-1}\bm{V}^{-1}\bar{\bm{p}} = \left( \bm{D}_\pi^{-1} + \frac{\bm{1}_M \bm{1}_M^\top}{\pi_0} \right) (\bm{u}_0 + \bm{\pi}_{1:M}) = \bm{D}_\pi^{-1}\bm{u}_0 + \frac{1 + \bm{1}_M^\top \bm{u}_0}{\pi_0} \bm{1}_M.
\label{eq:app_sigma_q_inv_V_inv_pbar}
\end{equation}
For the consensus sector ($k = 0$), the displacement is $\bm{p}_0 - \bar{\bm{p}} = -\bm{V}\bm{\pi}_{1:M}$, so Eq.~\eqref{eq:app_variational_weights_start} becomes
\begin{equation}
w_0^* = 1 + \bm{\pi}_{1:M}^\top \bm{\Sigma}_\pi^{-1} \bm{V}^{-1}\bar{\bm{p}} = 1 + \bm{1}_M^\top \bm{u}_0 + \frac{1 - \pi_0}{\pi_0} \big(1 + \bm{1}_M^\top \bm{u}_0\big) = \frac{1 + \bm{1}_M^\top \bm{V}^{-1}\bm{p}_0}{\pi_0}.
\label{eq:app_w0_exact}
\end{equation}
For directional sectors ($k \in \{1, \dots, M\}$), the displacement is $\bm{p}_k - \bar{\bm{p}} = \bm{V}(\bm{e}_k - \bm{\pi}_{1:M})$.
Using Eq.~\eqref{eq:app_w0_exact}, we obtain
\begin{equation}
w_k^* = w_0^* - \bm{e}_k^\top \left( \bm{D}_\pi^{-1}\bm{u}_0 + \frac{1 + \bm{1}_M^\top \bm{u}_0}{\pi_0} \bm{1}_M \right) = w_0^* - \left( \frac{u_{0,k}}{\pi_k} + w_0^* \right) = -\frac{(\bm{V}^{-1}\bm{p}_0)_k}{\pi_k}.
\label{eq:app_wk_exact}
\end{equation}
Direct substitution confirms $\sum_{k=0}^M \pi_k w_k^* = 1$ and $\sum_{k=0}^M \pi_k w_k^* \bm{p}_k = \bm{0}$.
Evaluating the minimal sampling overhead $C_S^* = \sum_{k=0}^M \pi_k (w_k^*)^2$ by substituting Eq.~\eqref{eq:app_weights_closed_form} yields Eq.~\eqref{eq:app_overhead_exact}, completing the proof.
\end{proof}

The closed-form weights in Eq.~\eqref{eq:app_weights_closed_form} reveal the physical mechanism of virtual error cancellation:
the consensus weight $w_0^* = w_I > 1$ amplifies the agreement sector, whereas the directional weights $w_k^* = w_Q < 0$ subtract error contributions from contested shots.
This mechanism contrasts with physical probabilistic error cancellation (PEC).
In physical PEC, auxiliary physical circuits with artificially amplified noise are executed to sample quasi-probability distributions over quantum operations, inflating runtime and requiring explicit physical gate modifications.
In the dual-decoder architecture, virtual error cancellation requires no auxiliary circuits, instead assigning negative weights in classical software post-processing to shots identified as error-prone by decoder disagreement (Fig.~\ref{fig:dual_decoder}d).
In the ideal postselection limit where consensus errors vanish identically ($\bm{p}_0 \to \bm{0}$), the weights reduce to $w_0^* \to 1/\pi_0$ and $w_k^* \to 0$, recovering the shot-rejection overhead $C_S^* \to 1/\pi_0 = 1 + \sum_{k=1}^M \pi_k / \pi_0$ of discarding contested shots; when $\bm{p}_0 \ne \bm{0}$, non-zero negative weights $w_k^* < 0$ cancel the consensus bias without discarding data.

\subsubsection{Fault-Tolerant Logical Overhead Scaling}

\begin{theorem}[Universal Fault-Tolerant Overhead Scaling in Matrix Quadratic Form]
\label{thm:app_asymptotic_overhead}
In a fault-tolerant surface code below threshold ($p < p_{\mathrm{th}}$) with $K = M+1$ sectors resolving $M = 4^n - 1$ channel modes, and sector parameters satisfying Hypotheses~H1--H3, the sampling overhead of the minimal resolving protocol satisfies a universal matrix quadratic scaling law,
\begin{equation}
C_S^* = 1 + (\bm{V}^{-1}\bar{\bm{p}}_L)^\top \operatorname{diag}(\bm{\pi})^{-1} (\bm{V}^{-1}\bar{\bm{p}}_L) + \mathcal{O}(p_L^2) = 1 + \mathcal{O}\left(\frac{p_L}{\sigma_{\min}(\bm{V})}\right) = 1 + \mathcal{O}(p_L),
\label{eq:app_overhead_linear_scaling}
\end{equation}
where $\bm{\pi} \equiv (\pi_1, \dots, \pi_M)^\top \in \RR^M$ is the directional sector probability vector and $\bar{\bm{p}}_L \equiv \sum_{k=0}^M \pi_k \bm{p}_k \in \RR^M$ is the unmitigated logical error vector.
\end{theorem}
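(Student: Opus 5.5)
The plan is to start from the exact rational expression of Proposition~\ref{thm:app_variational_collapse}, Eq.~\eqref{eq:app_overhead_exact}, and expand it in the small parameters $\pi_k = \mathcal{O}(p_L)$ and $\bm{u}_0 \equiv \bm{V}^{-1}\bm{p}_0$. By Theorem~\ref{thm:app_levy_desplanques_pauli} (or directly Hypothesis~H1) $\bm{V}$ is invertible with $\|\bm{V}^{-1}\|_2 \le 1/\sigma_{\min}(\bm{V}) \le 1/v_{\min}$, and by H3 $\|\bm{p}_0\| = \mathcal{O}(p_L)$, so $\|\bm{u}_0\| = \mathcal{O}(p_L/\sigma_{\min}(\bm{V}))$. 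The key identity is $\bm{V}^{-1}\bar{\bm{p}}_L = \bm{u}_0 + \bm{\pi}$, which follows from $\bar{\bm{p}} = \bm{p}_0 + \bm{V}\bm{\pi}_{1:M}$ (Eq.~\eqref{eq:app_p0_minus_pbar}); this lets us rewrite everything in terms of $\bm{y} \equiv \bm{V}^{-1}\bar{\bm{p}}_L$.

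Next we treat the consensus term. Write $\pi_0 = 1 - \bm{1}_M^\top\bm{\pi}$ and $1 + \bm{1}_M^\top\bm{u}_0 = 1 + \bm{1}_M^\top\bm{y} - \bm{1}_M^\top\bm{\pi}$. Then
\begin{equation}
\frac{(1+\bm{1}_M^\top\bm{u}_0)^2}{\pi_0} = \frac{(\pi_0 + \bm{1}_M^\top\bm{y})^2}{\pi_0} = \pi_0 + 2\,\bm{1}_M^\top\bm{y} + \frac{(\bm{1}_M^\top\bm{y})^2}{\pi_0}.
\end{equation}
For the directional terms, $\sum_k u_{0,k}^2/\pi_k = \sum_k (y_k - \pi_k)^2/\pi_k = \sum_k y_k^2/\pi_k - 2\,\bm{1}_M^\top\bm{y} + \bm{1}_M^\top\bm{\pi}$. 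Adding the two, the linear terms $\pm 2\bm{1}_M^\top\bm{y}$ cancel and $\pi_0 + \bm{1}_M^\top\bm{\pi} = 1$. This gives the exact identity $C_S^* = 1 + \bm{y}^\top\operatorname{diag}(\bm{\pi})^{-1}\bm{y} + (\bm{1}_M^\top\bm{y})^2/\pi_0$. Since $\|\bm{y}\| \le \|\bm{V}^{-1}\|\,\|\bar{\bm{p}}_L\| = \mathcal{O}(p_L)$ and $\pi_0 = 1 - \mathcal{O}(p_L)$, the last term is $\mathcal{O}(p_L^2)$. That yields the first equality of Eq.~\eqref{eq:app_overhead_linear_scaling}, which is consistent with the closed form $C_S^* = 1 + \bar{\bm{p}}^\top\bm{\Sigma}_S^{-1}\bar{\bm{p}}$ together with Eq.~\eqref{eq:app_sigma_S_inverse_exact}, a useful cross-check.

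The main obstacle is the second equality. We need $y_k^2/\pi_k = \mathcal{O}(p_L)$, even though $\pi_k$ sits in the denominator and could be much smaller than $p_L$. Bounding $|y_k| \le \|\bm{V}^{-1}\|\,\|\bar{\bm{p}}\|$ alone gives only $p_L^2/(\sigma_{\min}^2\pi_k)$, which is not enough. Instead we use $y_k = \pi_k + u_{0,k}$ together with H2, $u_{0,k} = \mathcal{O}(\alpha\pi_k)$. This gives $y_k^2/\pi_k = \pi_k(1+\mathcal{O}(\alpha))^2 = \mathcal{O}(\pi_k)$, and summing over $k$ yields $\mathcal{O}(\sum_k\pi_k) = \mathcal{O}(p_L)$. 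To exhibit the stated $1/\sigma_{\min}(\bm{V})$ dependence, we would track the H2 constant through $\|\bm{V}^{-1}\|$: $|u_{0,k}| \le \|\bm{V}^{-1}\|\,\|\bm{p}_0\|$, and H2 asserts this is comparable to $\alpha\pi_k$. The bound then reads $\sum_k \pi_k(1 + \mathcal{O}(\alpha/\sigma_{\min}(\bm{V})))^2 \lesssim p_L/\sigma_{\min}(\bm{V})$ for $\sigma_{\min} \le 1$, with the $\sigma_{\min}\le1$ normalization justified by $\bm{V}$ having entries in $[-1,1]$ with $\mathcal{O}(1)$ diagonal. Finally, H1 ($\sigma_{\min}(\bm{V}) \ge v_{\min}$, independent of $d$) collapses this to $1 + \mathcal{O}(p_L)$.
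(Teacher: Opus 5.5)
Your argument is correct and follows the paper's route: you start from the exact rational overhead of Proposition~\ref{thm:app_variational_collapse}, use $\bm{V}^{-1}\bar{\bm{p}}_L = \bm{u}_0 + \bm{\pi}$, and invoke H2 channel by channel. Your algebra is a small improvement, since it gives the remainder exactly as $(\bm{1}_M^\top\bm{y})^2/\pi_0$ (matching $\bar{\bm{p}}^\top\bm{\Sigma}_S^{-1}\bar{\bm{p}}$ via Eq.~\eqref{eq:app_sigma_S_inverse_exact}) instead of through a Taylor expansion of the consensus term. Your side attempt to track an explicit $1/\sigma_{\min}(\bm{V})$ factor is loose, because H2 bounds $u_{0,k}$ directly rather than $\|\bm{V}^{-1}\|\,\|\bm{p}_0\|$, and $(1+\mathcal{O}(\alpha/\sigma_{\min}))^2 \lesssim 1/\sigma_{\min}$ fails when $\alpha^2 \gg \sigma_{\min}$; it is unnecessary, though, because the middle bound follows from $C_S^*-1=\mathcal{O}(p_L)$ together with $\sigma_{\min}(\bm{V}) \le \|\bm{V}\|_2 = \mathcal{O}(1)$, the entries of $\bm{V}$ lying in $[-1,1]$.
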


\begin{proof}
In a fault-tolerant quantum error-correcting code below threshold ($p < p_{\mathrm{th}}$), directional sectors isolate syndrome configurations that herald elevated logical failure susceptibility along channel modes $k \in \{1, \dots, M\}$, with conditional error rates $p_{k,k} \sim \mathcal{O}(1)$.
Because logical faults require an error chain of weight at least $\lceil d/2 \rceil = (d+1)/2$, directional sector probabilities scale with the unmitigated logical error rates as $\pi_k = \Theta(p_{L,k}) = \mathcal{O}((p/p_{\mathrm{th}})^{(d+1)/2})$ for single-mode errors and $\pi_k = \mathcal{O}(p_L)$ for correlated multi-fault configurations, giving total directional probability mass $\pi_{\mathrm{dir}} \equiv \sum_{k=1}^M \pi_k = \Theta(p_L)$ and reference probability $\pi_0 = 1 - \pi_{\mathrm{dir}} = 1 - \mathcal{O}(p_L)$.
The coordinate vector $\bm{u}_0 \equiv \bm{V}^{-1}\bm{p}_0 \in \RR^M$ satisfies componentwise the per-channel bound $u_{0,k} = \mathcal{O}(\alpha \pi_k)$ of Hypothesis~H2.

Expanding the first term of the exact rational overhead in Eq.~\eqref{eq:app_overhead_exact} in powers of $\pi_{\mathrm{dir}}$ and $\bm{u}_0$ gives
\begin{align}
\frac{(1 + \bm{1}_M^\top \bm{u}_0)^2}{\pi_0} &= \frac{1 + 2\,\bm{1}_M^\top \bm{u}_0 + \mathcal{O}(\|\bm{u}_0\|^2)}{1 - \pi_{\mathrm{dir}}} \nonumber \\
&= \big(1 + \pi_{\mathrm{dir}} + \mathcal{O}(\pi_{\mathrm{dir}}^2)\big)\big(1 + 2\,\bm{1}_M^\top \bm{u}_0 + \mathcal{O}(\|\bm{u}_0\|^2)\big) \nonumber \\
&= 1 + \sum_{k=1}^M \big(\pi_k + 2 u_{0,k}\big) + \mathcal{O}(p_L^2).
\end{align}
Combining this expansion with the directional sector variance penalty $\sum_{k=1}^M u_{0,k}^2 / \pi_k$, the total sampling overhead evaluates to
\begin{align}
C_S^* &= 1 + \sum_{k=1}^M \left( \pi_k + 2 u_{0,k} + \frac{u_{0,k}^2}{\pi_k} \right) + \mathcal{O}(p_L^2) \nonumber \\
&= 1 + \sum_{k=1}^M \frac{(\pi_k + u_{0,k})^2}{\pi_k} + \mathcal{O}(p_L^2) \nonumber \\
&= 1 + (\bm{\pi} + \bm{u}_0)^\top \operatorname{diag}(\bm{\pi})^{-1} (\bm{\pi} + \bm{u}_0) + \mathcal{O}(p_L^2).
\label{eq:app_overhead_sum_completed_squares}
\end{align}
We now prove the exact vector sum identity connecting $\bm{\pi} + \bm{u}_0$ to the unmitigated error vector $\bar{\bm{p}}_L$.
By definition of the unconditional mean error vector in $\RR^M$,
\begin{equation}
\bar{\bm{p}}_L = \sum_{k=0}^M \pi_k \bm{p}_k = \pi_0 \bm{p}_0 + \sum_{k=1}^M \pi_k \bm{p}_k = (1 - \pi_{\mathrm{dir}})\bm{p}_0 + \sum_{k=1}^M \pi_k (\bm{v}_k + \bm{p}_0) = \bm{p}_0 + \bm{V}\bm{\pi}.
\end{equation}
Multiplying both sides by the inverse displacement matrix $\bm{V}^{-1}$ yields the exact vector sum relation
\begin{equation}
\bm{V}^{-1}\bar{\bm{p}}_L = \bm{V}^{-1}\bm{p}_0 + \bm{\pi} = \bm{u}_0 + \bm{\pi}.
\label{eq:app_vector_sum_identity}
\end{equation}
Substituting Eq.~\eqref{eq:app_vector_sum_identity} into Eq.~\eqref{eq:app_overhead_sum_completed_squares} establishes the universal matrix quadratic scaling law
\begin{equation}
C_S^* = 1 + (\bm{V}^{-1}\bar{\bm{p}}_L)^\top \operatorname{diag}(\bm{\pi})^{-1} (\bm{V}^{-1}\bar{\bm{p}}_L) + \mathcal{O}(p_L^2).
\end{equation}
Evaluating the quadratic form coordinate by coordinate using the exact completed square of Eq.~\eqref{eq:app_vector_sum_identity} together with the per-channel leakage bound of Hypothesis~H2 gives
\begin{equation}
(\bm{V}^{-1}\bar{\bm{p}}_L)^\top \operatorname{diag}(\bm{\pi})^{-1} (\bm{V}^{-1}\bar{\bm{p}}_L) = \sum_{k=1}^M \frac{(\pi_k + u_{0,k})^2}{\pi_k} = \sum_{k=1}^M \pi_k \left(1 + \frac{u_{0,k}}{\pi_k}\right)^2 = \sum_{k=1}^M \mathcal{O}(\pi_k) = \mathcal{O}(p_L),
\end{equation}
where the last equality uses $\sum_{k=1}^M \pi_k = 1 - \pi_0 = \mathcal{O}(p_L)$ from Hypothesis~H3.
Each summand is controlled by its own sector probability, so the bound is independent of how the total directional weight is distributed among the $M$ channels and remains valid when $\pi_k$ spans several distinct powers of $p$.
This proves that $C_S^* = 1 + \mathcal{O}(p_L)$ holds universally without requiring coordinate decoupling.
\end{proof}

\begin{corollary}[Uncoupled Check Basis Reduction and Scalar Completed Squares]
\label{cor:app_overhead_uncoupled}
When cross-axis channel leakage is small ($\|\Delta\bm{V}_{\mathrm{off}}\|_\infty \ll 1$), the coordinate displacement matrix $\bm{V} \in \RR^{M \times M}$ is diagonally dominant with diagonal entries $V_{k,k} = p_{k,k} + \mathcal{O}(p_L)$, where $\alpha_k \equiv p_{k,0}/\bar{p}_{L,k} \in (0, 1)$ denotes the agreement error suppression ratio along channel mode $k$.
Along each channel mode $k \in \{1, \dots, M\}$, the directional probability, reference coordinate, and logical error rate satisfy the scalar completed-square identity
\begin{equation}
\pi_k + 2 u_{0,k} + \frac{u_{0,k}^2}{\pi_k} = \frac{(\pi_k + u_{0,k})^2}{\pi_k} = \frac{\bar{p}_{L,k}}{p_{k,k}(1 - \alpha_k)} + \mathcal{O}\big(\|\Delta\bm{V}_{\mathrm{off}}\|_\infty p_L + p_L^2\big) = \eta_k \bar{p}_{L,k} + \mathcal{O}\big(\|\Delta\bm{V}_{\mathrm{off}}\|_\infty p_L + p_L^2\big),
\label{eq:app_overhead_completed_square}
\end{equation}
with $\eta_k \equiv 1 / [p_{k,k}(1 - \alpha_k)]$.
Consequently, the sampling overhead reduces to the decoupled linear formula
\begin{equation}
C_S^* = 1 + \sum_{k=1}^M \eta_k \bar{p}_{L,k} + \mathcal{O}\big(\|\Delta\bm{V}_{\mathrm{off}}\|_\infty p_L + p_L^2\big) = 1 + \mathcal{O}(p_L).
\label{eq:app_overhead_uncoupled_asymptotic}
\end{equation}
\end{corollary}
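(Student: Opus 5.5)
The plan is to start from the exact completed-square form obtained in the proof of Theorem~\ref{thm:app_asymptotic_overhead}: by Eq.~\eqref{eq:app_overhead_sum_completed_squares} and the vector identity Eq.~\eqref{eq:app_vector_sum_identity},
\[
\frac{(\pi_k+u_{0,k})^2}{\pi_k}=\frac{\big[(\bm{V}^{-1}\bar{\bm{p}}_L)_k\big]^2}{\pi_k}.
\]
The first equality in Eq.~\eqref{eq:app_overhead_completed_square} is pure algebra, so the whole task is to evaluate the right-hand side along each channel mode $k$. I would first do this exactly in the decoupled case $\Delta\bm{V}_{\mathrm{off}}=\bm{0}$ and then perturb.

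\emph{Decoupled case.} For diagonal $\bm{V}=\operatorname{diag}(V_{1,1},\dots,V_{M,M})$, the relation $\bar{\bm{p}}_L=\bm{p}_0+\bm{V}\bm{\pi}$ splits into the scalar relations $\bar p_{L,k}=p_{k,0}+V_{k,k}\pi_k$. Using $p_{k,0}=\alpha_k\bar p_{L,k}$, this gives $\pi_k=\bar p_{L,k}(1-\alpha_k)/V_{k,k}$. Moreover $(\bm{V}^{-1}\bar{\bm{p}}_L)_k=\bar p_{L,k}/V_{k,k}$. Substituting both,
\[
\frac{\bar p_{L,k}^2/V_{k,k}^2}{\bar p_{L,k}(1-\alpha_k)/V_{k,k}}=\frac{\bar p_{L,k}}{V_{k,k}(1-\alpha_k)}.
\]
Finally, $V_{k,k}=p_{k,k}-p_{k,0}=p_{k,k}+\mathcal{O}(p_L)$ by H3. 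Since $p_{k,k}\ge v_{\min}>0$ by H1, replacing $V_{k,k}$ by $p_{k,k}$ costs a relative $\mathcal{O}(p_L)$, i.e.\ an absolute $\mathcal{O}(p_L^2)$, which yields $\eta_k\bar p_{L,k}$.

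\emph{Off-diagonal perturbation.} Write $\bm{V}=\bm{V}_{\mathrm{d}}(\bm{I}+\bm{V}_{\mathrm{d}}^{-1}\Delta\bm{V}_{\mathrm{off}})$ with $\bm{V}_{\mathrm{d}}$ the diagonal part. Lemma~\ref{lem:app_levy_desplanques} (Varah bound) and Theorem~\ref{thm:app_levy_desplanques_pauli} give $\|\bm{V}_{\mathrm{d}}^{-1}\|_\infty=\mathcal{O}(1)$. A one-term Neumann expansion then gives
\[
\bm{V}^{-1}=\bm{V}_{\mathrm{d}}^{-1}+\mathcal{O}(\|\Delta\bm{V}_{\mathrm{off}}\|_\infty).
\]
Hence $(\bm{V}^{-1}\bar{\bm{p}}_L)_k=\bar p_{L,k}/V_{k,k}+\mathcal{O}(\|\Delta\bm{V}_{\mathrm{off}}\|_\infty p_L)$. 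Similarly, $\pi_k=(\bm{V}^{-1}(\bar{\bm{p}}_L-\bm{p}_0))_k$ picks up the same correction relative to its decoupled value.

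\emph{Main obstacle.} Both corrections enter through a ratio with $\pi_k$ in the denominator. The sector probabilities may span different powers of $p$ (e.g.\ the $Y$ sector), so an additive $\mathcal{O}(\|\Delta\bm{V}_{\mathrm{off}}\|_\infty p_L)$ error in the numerator is not automatically small relative to $\pi_k$. My first attempt would be to use H2 in the form $|u_{0,k}|=\mathcal{O}(\alpha\pi_k)$, so that $\pi_k+u_{0,k}=\pi_k(1+\mathcal{O}(\alpha))$. The ratio can then be written as $\pi_k(1+u_{0,k}/\pi_k)^2$, with $u_{0,k}/\pi_k$ bounded. This lets me perturb $u_{0,k}/\pi_k$ rather than the numerator and denominator separately, which should convert the leakage correction into a multiplicative $(1+\mathcal{O}(\|\Delta\bm{V}_{\mathrm{off}}\|_\infty))$ factor on a quantity of size $\mathcal{O}(\pi_k)=\mathcal{O}(p_L)$, giving the stated remainder. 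If that route fails for the smallest sectors, I would fall back on bounding those modes' contributions crudely by $\mathcal{O}(\pi_k)$, using the per-channel argument at the end of Theorem~\ref{thm:app_asymptotic_overhead}.

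Summing over $k$ and inserting into Eq.~\eqref{eq:app_overhead_sum_completed_squares} gives Eq.~\eqref{eq:app_overhead_uncoupled_asymptotic}. Since $\eta_k=\mathcal{O}(1)$ by H1 and H3, the total is $1+\mathcal{O}(p_L)$.
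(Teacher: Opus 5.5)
Your proposal follows the paper's own proof (project the exact identity $\bm{V}^{-1}\bar{\bm{p}}_L=\bm{u}_0+\bm{\pi}$ onto axis $k$, get $\pi_k+u_{0,k}$ and $\pi_k$ to leading order, and substitute into $(\pi_k+u_{0,k})^2/\pi_k$), and your decoupled computation and Neumann step are correct. The obstacle you flag is genuine: the paper passes the additive errors $\mathcal{O}(\delta)$, with $\delta\equiv\|\Delta\bm{V}_{\mathrm{off}}\|_\infty p_L+p_L^2$, through the division by $\pi_k$ without comment.

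Your Hypothesis~H2 idea does close it, but phrase it additively rather than as a multiplicative $1+\mathcal{O}(\|\Delta\bm{V}_{\mathrm{off}}\|_\infty)$ factor. The multiplicative form fails when the leakage $\sum_{j\neq k}V_{k,j}\pi_j$ from larger sectors dominates row $k$. Set $c_k\equiv\alpha_k/(1-\alpha_k)$ and $r_k\equiv u_{0,k}/\pi_k=\mathcal{O}(1)$. Then the exact identity $(\pi_k+u_{0,k})^2/\pi_k=\pi_k/(1-\alpha_k)^2+(u_{0,k}-c_k\pi_k)(2+r_k+c_k)$, combined with $\pi_k=(1-\alpha_k)\bar{p}_{L,k}/p_{k,k}+\mathcal{O}(\delta)$ and $u_{0,k}-c_k\pi_k=\mathcal{O}(\delta)$, gives $\eta_k\bar{p}_{L,k}+\mathcal{O}(\delta)$ without ever dividing by $\pi_k$. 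So no fallback is needed.
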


\begin{proof}
In the decoupled limit ($\|\Delta\bm{V}_{\mathrm{off}}\|_\infty \ll 1$), the diagonal displacement entries are $V_{k,k} = p_{k,k} - p_{k,0} = p_{k,k} + \mathcal{O}(p_L)$.
Projecting the exact vector sum identity in Eq.~\eqref{eq:app_vector_sum_identity} onto coordinate axis $k$ yields
\begin{equation}
\pi_k + u_{0,k} = (\bm{V}^{-1}\bar{\bm{p}}_L)_k = \frac{\bar{p}_{L,k}}{p_{k,k}} + \mathcal{O}\big(\|\Delta\bm{V}_{\mathrm{off}}\|_\infty p_L + p_L^2\big).
\end{equation}
By total probability along coordinate axis $k$, the directional sector probability expands as
\begin{equation}
\pi_k = \frac{\bar{p}_{L,k} - p_{k,0}}{p_{k,k} - p_{k,0}} + \mathcal{O}\big(\|\Delta\bm{V}_{\mathrm{off}}\|_\infty p_L + p_L^2\big) = \frac{(1 - \alpha_k)\bar{p}_{L,k}}{p_{k,k}} + \mathcal{O}\big(\|\Delta\bm{V}_{\mathrm{off}}\|_\infty p_L + p_L^2\big).
\end{equation}
Substituting $\pi_k + u_{0,k}$ and $\pi_k$ into the $k$-th term of the matrix quadratic form in Eq.~\eqref{eq:app_overhead_sum_completed_squares} gives
\begin{equation}
\frac{(\pi_k + u_{0,k})^2}{\pi_k} = \frac{\big[\bar{p}_{L,k} / p_{k,k}\big]^2}{(1 - \alpha_k)\bar{p}_{L,k} / p_{k,k}} + \mathcal{O}\big(\|\Delta\bm{V}_{\mathrm{off}}\|_\infty p_L + p_L^2\big) = \eta_k \bar{p}_{L,k} + \mathcal{O}\big(\|\Delta\bm{V}_{\mathrm{off}}\|_\infty p_L + p_L^2\big).
\end{equation}
Summing over all channel modes $k \in \{1, \dots, M\}$ yields Eq.~\eqref{eq:app_overhead_uncoupled_asymptotic}, completing the proof.
\end{proof}

\subsection{\texorpdfstring{Residual Logical Error, Calibration Perturbations, and Distance Suppression}{Residual Logical Error, Calibration Perturbations, and Distance Suppression}}
\label{subsec:app_proof_distance_suppression}
\label{subsec:app_residual_error_distance_suppression}
\label{subsec:coupled_sector_derivations}
\label{sec:appendix_residual_error}

In this subsection, we analyze the sensitivity of the minimal $(M+1)$-sector virtual error cancellation protocol to parameter calibration errors and establish the agreement-sector distance suppression of residual logical errors.
Under calibrated sector probabilities $\hat{\pi}_k = \pi_k(1 + \varepsilon_{\pi,k})$ (with $\sum_{k=0}^M \hat{\pi}_k = 1$) and calibrated error centroids $\hat{\bm{p}}_k$ with components $\hat{p}_{i,k} = p_{i,k}(1 + \varepsilon_{i,k})$ and displacement matrix $\hat{\bm{V}} \equiv (\hat{\bm{p}}_1 - \hat{\bm{p}}_0, \dots, \hat{\bm{p}}_M - \hat{\bm{p}}_0)$, substituting the calibrated parameters into Eq.~\eqref{eq:app_weights_closed_form} defines the plug-in calibrated sector weights $\hat{w}_0 = (1 + \bm{1}_M^\top \hat{\bm{V}}^{-1}\hat{\bm{p}}_0)/\hat{\pi}_0$ and $\hat{w}_k = -(\hat{\bm{V}}^{-1}\hat{\bm{p}}_0)_k/\hat{\pi}_k$ for $k \in \{1, \dots, M\}$.
While $\sum_{k=0}^M \hat{\pi}_k \hat{w}_k = 1$ holds by construction, the physical ensemble mean weight on the evaluation syndrome stream is $\bar{w} \equiv \EE_{\bm{s}}[\hat{w}(\bm{s})] = \sum_{k=0}^M \pi_k \hat{w}_k$.
In the Pauli channel expansion of Eq.~\eqref{eq:channel_inversion_expanded}, applying raw weights $\hat{w}_k$ without enforcing unit mean when $\hat{\pi}_k \neq \pi_k$ would introduce an unprojected trace-preservation drift $1 - \bar{w} = \Theta(\varepsilon_{\max} p_L)$.
Because sector memberships $\mathcal{S}_k$ are observed directly in classical post-processing, the trace-preservation condition $\EE_{\bm{s}}[w(\bm{s})] = 1$ of Eq.~\eqref{eq:pauli_cancellation_constraint} is enforced on the filter $w(\bm{s})$ in the linear estimator $\hat{\langle O \rangle}_{\mathrm{VEC}} = N^{-1}\sum_{i=1}^N w(\bm{s}_i) m_i$ either by normalizing the calibrated weights by their ensemble mean, $w(\bm{s}) = \hat{w}(\bm{s})/\bar{w}$ (implemented on an $N$-shot evaluation sample via the empirical mean weight $\bar{w}_N = N^{-1}\sum_{i=1}^N \hat{w}(\bm{s}_i)$), or by evaluating $\hat{\pi}_k$ directly from the empirical sector frequencies of the evaluation syndrome stream so that $\hat{\pi}_k = \pi_k$ ($\varepsilon_{\pi,k} = 0$) and $\bar{w} = 1$ identically (analogous to evaluating the sample mean on the syndrome stream in Subsection~\ref{subsubsec:app_cross_moment_decomp}).
Under either enforcement of trace preservation, the residual logical error vector across the $M$ Pauli channels is $\bm{p}_{\mathrm{res}} \equiv \bar{w}^{-1}\sum_{k=0}^M \pi_k \hat{w}_k \bm{p}_k$.
We now derive the exact first-order differential of $\bm{p}_{\mathrm{res}}$ under calibration perturbations, prove uniform shift invariance, and establish Theorem~\ref{thm:app_distance_suppression}.

\begin{proposition}[Exact First-Order Residual Differential and Binary Reduction]
\label{thm:app_exact_differential}
Under parameter perturbations $\hat{\pi}_k = \pi_k(1 + \varepsilon_{\pi,k})$ and relative centroid deviations $\hat{p}_{i,k} = p_{i,k}(1 + \varepsilon_{i,k})$ for $K = M + 1$, the first-order differential of the normalized residual logical error vector $\bm{p}_{\mathrm{res}} \equiv \bar{w}^{-1}\sum_{k=0}^M \pi_k \hat{w}_k \bm{p}_k$ along Pauli axis $k \in \{1, \dots, M\}$ decouples in the absence of cross-talk ($\Delta\bm{V}_{\mathrm{off}} \to \bm{0}$),
\begin{equation}
dp_{\mathrm{res}, k} = \frac{p_{k,0} p_{k,k}}{p_{k,k} - p_{k,0}} \left( \frac{\varepsilon_{\pi,k}}{1 - \pi_k} - \varepsilon_{0,k} + \varepsilon_{k,k} \right) + \mathcal{O}\big(p_{k,0}(1 - \pi_0) \|\bm{\varepsilon}_\pi\|_\infty\big) + \mathcal{O}\big(\|\Delta\bm{V}_{\mathrm{off}}\|_\infty \|\bm{p}_0\| \|\bm{\varepsilon}\|\big),
\label{eq:app_dp_res_closed_form}
\end{equation}
where $\varepsilon_{0,k} \equiv \varepsilon_{k,0}$ and $\varepsilon_{k,k}$ denote the relative perturbations of the reference and directional error rates $p_{k,0}$ and $p_{k,k}$ along axis $k$.
For a single error process ($K = 2$, with agreement probability $1-\pi$, uncontested error $\pa \equiv p_{k,0}$, contested error $\pc \equiv p_{k,k}$, and relative uncertainties $\varepsilon_a, \varepsilon_c, \varepsilon_m$), the calibrated weights yield normalized residual error
\begin{equation}
p_{\mathrm{res}}(\bm{\varepsilon}) = \frac{\frac{1 - \pi}{1 - \hat{\pi}}\hat{p}_c \pa - \frac{\pi}{\hat{\pi}}\hat{p}_a \pc}{\frac{1 - \pi}{1 - \hat{\pi}}\hat{p}_c - \frac{\pi}{\hat{\pi}}\hat{p}_a},
\label{eq:app_pres_rational_estimators}
\end{equation}
in which the singular factor $(\hat{p}_c - \hat{p}_a)^{-1}$ cancels between numerator and denominator, reducing Eq.~\eqref{eq:app_dp_res_closed_form} to the binary differential
\begin{equation}
dp_{\mathrm{res}} = \frac{\pa \pc}{\pc - \pa} \left( \frac{\varepsilon_m}{1 - \pi} - \varepsilon_a + \varepsilon_c \right).
\label{eq:app_dp_res_binary_reduction}
\end{equation}
\end{proposition}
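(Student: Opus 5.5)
The plan is to do the binary case ($K=2$) exactly first, and then lift it to each Pauli axis using the closed-form weights of Proposition~\ref{thm:app_variational_collapse}.

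\emph{Binary case.} For $M=1$ we have $V=p_c-p_a$ and $u_0=p_a/(p_c-p_a)$. The calibrated weights are
\[
\hat w_0=\frac{\hat p_c}{(1-\hat\pi)(\hat p_c-\hat p_a)},\qquad
\hat w_1=-\frac{\hat p_a}{\hat\pi(\hat p_c-\hat p_a)} .
\]
Form $\bar w=(1-\pi)\hat w_0+\pi\hat w_1$ and the numerator $(1-\pi)\hat w_0 p_a+\pi\hat w_1 p_c$. The common factor $(\hat p_c-\hat p_a)^{-1}$ cancels in the ratio, which gives Eq.~\eqref{eq:app_pres_rational_estimators} directly.

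At $\bm\varepsilon=0$ the numerator is $\frac{p_cp_a-p_ap_c}{p_c-p_a}=0$. So $p_{\mathrm{res}}(0)=0$, and the first-order term is just the derivative of the numerator divided by the unperturbed denominator $D_0=p_c-p_a$. To find it, write $\frac{1-\pi}{1-\hat\pi}\approx 1+\frac{\pi}{1-\pi}\varepsilon_m$ and $\frac{\pi}{\hat\pi}\approx 1-\varepsilon_m$. The numerator differential is then
\[
p_cp_a\Bigl(\varepsilon_c+\tfrac{\pi}{1-\pi}\varepsilon_m\Bigr)-p_ap_c(\varepsilon_a-\varepsilon_m)
= p_ap_c\Bigl(\tfrac{\varepsilon_m}{1-\pi}-\varepsilon_a+\varepsilon_c\Bigr).
\]
Dividing by $p_c-p_a$ yields Eq.~\eqref{eq:app_dp_res_binary_reduction}. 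This step is pure bookkeeping.

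\emph{General $K=M+1$.} Write $\bm p_{\mathrm{res}}=\bar w^{-1}\bm N$ with $\bm N=\sum_k\pi_k\hat w_k\bm p_k$. Since $\bm N(0)=\bm 0$ by Proposition~\ref{thm:app_variational_collapse}, and $\bar w=1+\mathcal O(\varepsilon)$, we get $d\bm p_{\mathrm{res}}=d\bm N$ to first order, so the normalization only enters at second order.

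The key step is to decouple the axes. Set $\Delta\bm V_{\mathrm{off}}=0$ and also drop the off-diagonal entries of $\bm p_0$ and $\bm p_k$ along axes other than $k$. In that limit $\hat{\bm V}$ is diagonal and $(\hat{\bm V}^{-1}\hat{\bm p}_0)_k=\hat p_{k,0}/(\hat p_{k,k}-\hat p_{k,0})$, so $\hat w_k$ depends only on axis-$k$ data. Component $k$ of $\bm N$ then reads
\[
N_k=\pi_0\hat w_0p_{k,0}+\pi_k\hat w_kp_{k,k}+\sum_{j\neq 0,k}\pi_j\hat w_jp_{k,j}.
\]
I would split $\hat w_0$ as $1/\hat\pi_0+\sum_j \hat u_{0,j}/\hat\pi_0$. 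Collecting the axis-$k$ pieces reproduces the binary computation with effective agreement share $1-\pi_k$, which is where the $\varepsilon_{\pi,k}/(1-\pi_k)$ factor comes from. The remaining pieces come from the other sectors' $\hat u_{0,j}$ inside $\hat w_0$, from $\varepsilon_{\pi,0}$ through $\hat\pi_0$, and from the $j\neq k$ sectors. Each of these is multiplied by $p_{k,0}$ times quantities of order $\pi_j\varepsilon$ (using H2, $u_{0,j}=\mathcal O(\alpha\pi_j)$). That bounds them by $\mathcal O(p_{k,0}(1-\pi_0)\|\bm\varepsilon_\pi\|_\infty)$.

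\emph{Cross-talk.} Finally, restoring $\Delta\bm V_{\mathrm{off}}$ perturbs $\hat{\bm V}^{-1}$ by $\mathcal O(\|\Delta\bm V_{\mathrm{off}}\|)$, using the Varah bound of Lemma~\ref{lem:app_levy_desplanques}. That perturbation acts on $\hat{\bm p}_0$ and multiplies $\varepsilon$, which produces the last error term.

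\emph{Main obstacle.} The step I expect to be hardest is the clean separation in $N_k$: showing that the axis-$k$ contributions exactly regroup into the binary form with $1-\pi_k$ in place of $1-\pi$. The factors $1/\hat\pi_0$ and $1-\pi_k$ differ at order $\pi_j$, and those differences must be pushed consistently into the $\mathcal O(p_{k,0}(1-\pi_0)\|\bm\varepsilon_\pi\|)$ remainder. If the direct regrouping gets messy, I would fall back on perturbing around the exact identity $\bm V^{-1}\bar{\bm p}=\bm u_0+\bm\pi$ from Eq.~\eqref{eq:app_vector_sum_identity} and differentiating it.
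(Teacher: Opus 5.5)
Your proposal is correct. Your binary computation is the paper's: substitute the calibrated weights, cancel $(\hat p_c-\hat p_a)^{-1}$, and differentiate the numerator at $\bm\varepsilon=\bm 0$. Your remark that $\bar w^{-1}$ only enters at second order also matches the paper.

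For general $K=M+1$, however, you take a different route. The paper differentiates \emph{before} decoupling. From $\hat\pi_k\hat w_k=-\hat u_{0,k}$, $\hat\pi_0\hat w_0=1+\bm 1_M^\top\hat{\bm u}_0$, $\bm u_0=\bm V^{-1}\bm p_0$ and the barycentric identity $\sum_k u_{0,k}\bm p_k=(1+\bm 1_M^\top\bm u_0)\bm p_0$, it obtains an exact, axis-free first-order formula valid for any $\bm V$:
\[
d\bm p_{\mathrm{res}}=-(1+\bm 1_M^\top\bm u_0)\,d\hat{\bm p}_0+\sum_{k=1}^M u_{0,k}\bigl[d\hat{\bm p}_k+(\varepsilon_{\pi,k}-\varepsilon_{\pi,0})\bm p_k\bigr].
\]
Only then does it project onto axis $k$, converting $\varepsilon_{\pi,k}-\varepsilon_{\pi,0}$ into $\varepsilon_{\pi,k}/(1-\pi_k)$ via $\pi_0\varepsilon_{\pi,0}=-\sum_j\pi_j\varepsilon_{\pi,j}$.

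Your decouple-first regrouping also works. With $p_{k,j}=p_{k,0}$ for $j\neq k$, one has exactly $N_k=p_{k,0}\bar w+\pi_k\hat w_kV_{kk}$. Granting your diagonal $\hat{\bm V}$, everything beyond the binary expression is $p_{k,0}\sum_{j\neq k}u_{0,j}(\varepsilon_{\pi,j}-\varepsilon_{\pi,0})$ plus the $\pi_0/\hat\pi_0$ versus $(1-\pi_k)/(1-\hat\pi_k)$ mismatch. H2 bounds both, as you say.

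What each route buys: yours makes the reduction to the binary problem more transparent. The paper's gives the cross-talk remainder almost by inspection, since the $V_{kj}\neq0$ pieces are explicit in the exact formula. Your cross-talk step is only sketched. It must also track $p_{k,j}\neq p_{k,0}$ inside $N_k$ (terms $\sum_j\pi_j\,d\hat w_j\,V_{kj}$), not just the perturbation of $\hat{\bm V}^{-1}$.

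One correction: $\Delta\bm V_{\mathrm{off}}=\bm 0$ does not make $\hat{\bm V}$ diagonal. For $j\neq k$ the calibrated entry is $\hat V_{kj}=p_{k,0}(\varepsilon_{k,j}-\varepsilon_{k,0})$, which enters $d\hat u_{0,k}$ and adds the first-order term $p_{k,0}\sum_{j\neq k}u_{0,j}(\varepsilon_{k,j}-\varepsilon_{k,0})$. This term is of the same order $p_{k,0}(1-\pi_0)$ as the stated remainder, but it is driven by centroid errors rather than by $\bm\varepsilon_\pi$. You should therefore either assume the calibration preserves the decoupled structure ($\varepsilon_{k,j}=\varepsilon_{k,0}$), or read that remainder with $\|\bm\varepsilon\|_\infty$ in place of $\|\bm\varepsilon_\pi\|_\infty$. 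The paper's projection step does not address this term either.
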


\begin{proof}
The plug-in calibrated filter weights $\hat{\bm{w}} = (\hat{w}_0, \dots, \hat{w}_M)^\top$ satisfy $\hat{w}_0 = (1 + \bm{1}_M^\top \hat{\bm{V}}^{-1}\hat{\bm{p}}_0)/\hat{\pi}_0$ and $\hat{w}_k = -(\hat{\bm{V}}^{-1}\hat{\bm{p}}_0)_k/\hat{\pi}_k$ for $k \in \{1, \dots, M\}$.
Enforcing the trace-preservation condition $\EE_{\bm{s}}[w(\bm{s})] = 1$ of Eq.~\eqref{eq:pauli_cancellation_constraint} via the normalized sector weights $\hat{w}_k / \bar{w}$ yields the residual logical error vector
\begin{equation}
\bm{p}_{\mathrm{res}} \equiv \frac{1}{\bar{w}} \sum_{k=0}^M \pi_k \hat{w}_k \bm{p}_k, \qquad \bar{w} \equiv \sum_{k=0}^M \pi_k \hat{w}_k.
\label{eq:app_p_res_def}
\end{equation}
At zero calibration error ($\hat{\bm{w}} = \bm{w}^*$), the unnormalized numerator vanishes ($\sum_{k=0}^M \pi_k w_k^* \bm{p}_k = \bm{0}$) and $\bar{w} = \sum_{k=0}^M \pi_k w_k^* = 1$, giving $\bm{p}_{\mathrm{res}} = \bm{0}$.
Differentiating Eq.~\eqref{eq:app_p_res_def} around $\hat{\bm{w}} = \bm{w}^*$ by the quotient rule, the variation of $1/\bar{w}$ multiplies $\sum_{k=0}^M \pi_k w_k^* \bm{p}_k = \bm{0}$ and vanishes to first order, so $d\bm{p}_{\mathrm{res}} = \sum_{k=0}^M \pi_k (d\hat{w}_k)\bm{p}_k$.
Substituting $\bm{p}_k = \bm{p}_0 + \bm{v}_k$ ($k \in \{1, \dots, M\}$) with $d\bar{w} = \sum_{k=0}^M \pi_k d\hat{w}_k$ then projects $d\bar{w}$ onto the reference centroid $\bm{p}_0$,
\begin{equation}
d\bm{p}_{\mathrm{res}} = \sum_{k=0}^M \pi_k (d\hat{w}_k) \bm{p}_k = (d\bar{w}) \bm{p}_0 + \sum_{k=1}^M \pi_k (d\hat{w}_k) \bm{v}_k,
\label{eq:app_dp_res_step1}
\end{equation}
where $d\hat{w}_k \equiv \hat{w}_k - w_k^*$.
Differentiating the weight relations $\hat{\pi}_k \hat{w}_k = -\hat{u}_{0,k}$ and $\hat{\pi}_0 \hat{w}_0 = 1 + \bm{1}_M^\top \hat{\bm{u}}_0$ gives
\begin{align}
\pi_k d\hat{w}_k &= -d u_{0,k} + \varepsilon_{\pi,k} u_{0,k} + \mathcal{O}\big(\|\bm{\varepsilon}\|^2\big) \qquad (k \in \{1, \dots, M\}), \\
\pi_0 d\hat{w}_0 &= \bm{1}_M^\top d\bm{u}_0 - \varepsilon_{\pi,0}\big(1 + \bm{1}_M^\top \bm{u}_0\big) + \mathcal{O}\big(\|\bm{\varepsilon}\|^2\big),
\end{align}
yielding $d\bar{w} = -\varepsilon_{\pi,0} + \sum_{k=1}^M (\varepsilon_{\pi,k} - \varepsilon_{\pi,0}) u_{0,k} + \mathcal{O}(\|\bm{\varepsilon}\|^2)$.
The coordinate variation $d\bm{u}_0 \equiv d(\bm{V}^{-1}\bm{p}_0)$ satisfies
\begin{equation}
d\bm{u}_0 = \bm{V}^{-1}\left[ (1 + \bm{1}_M^\top \bm{u}_0) d\hat{\bm{p}}_0 - \sum_{k=1}^M u_{0,k} d\hat{\bm{p}}_k \right].
\end{equation}
Multiplying by $\bm{V}$ and substituting into Eq.~\eqref{eq:app_dp_res_step1} together with $\bm{v}_k = \bm{p}_k - \bm{p}_0$ and $(1 + \bm{1}_M^\top \bm{u}_0)\bm{p}_0 = \sum_{k=1}^M u_{0,k}\bm{p}_k$ yields the exact multi-channel differential $d\bm{p}_{\mathrm{res}} = -(1 + \bm{1}_M^\top \bm{u}_0)d\hat{\bm{p}}_0 + \sum_{k=1}^M u_{0,k}[d\hat{\bm{p}}_k + (\varepsilon_{\pi,k} - \varepsilon_{\pi,0})\bm{p}_k]$.
Using $\pi_0 \varepsilon_{\pi,0} = -\sum_{j=1}^M \pi_j \varepsilon_{\pi,j}$ so that $\varepsilon_{\pi,k} - \varepsilon_{\pi,0} = \varepsilon_{\pi,k}/(1 - \pi_k) + \mathcal{O}((1 - \pi_0)\|\bm{\varepsilon}_\pi\|_\infty)$, the variation along each Pauli axis decouples when $\Delta\bm{V}_{\mathrm{off}} \to \bm{0}$, yielding Eq.~\eqref{eq:app_dp_res_closed_form}, with leading cross-axis correction $\frac{V_{13}}{V_{33}} p_{3,0}(-\varepsilon_{1,1} + \varepsilon_{1,3} - \varepsilon_{\pi,1} + \varepsilon_{\pi,3})$.
For the single-process binary reduction ($K = 2$), substituting the calibrated weights $\hat{w}_a = \frac{\hat{p}_c}{\hat{p}_c - \hat{p}_a}\frac{1}{1-\hat{\pi}}$ and $\hat{w}_c = -\frac{\hat{p}_a}{\hat{p}_c - \hat{p}_a}\frac{1}{\hat{\pi}}$ into $p_{\mathrm{res}} = [(1-\pi)\hat{w}_a \pa + \pi\hat{w}_c \pc] / [(1-\pi)\hat{w}_a + \pi\hat{w}_c]$ shows that the factor $(\hat{p}_c - \hat{p}_a)^{-1}$ cancels identically, establishing Eq.~\eqref{eq:app_pres_rational_estimators}.
Because the unnormalized numerator vanishes at $\bm{\varepsilon} = \bm{0}$ and $\bar{w}(\bm{0}) = 1$, the quotient rule confirms that $\bm{\nabla} p_{\mathrm{res}}|_{\bm{0}}$ equals the gradient of the numerator, which evaluates directly to Eq.~\eqref{eq:app_dp_res_binary_reduction}.
\end{proof}

\begin{lemma}[Uniform Shift Invariance]
\label{lem:app_uniform_shift_invariance}
If calibration errors induce a common-mode relative shift between reference and directional estimates along axis $k$ ($\varepsilon_{0,k} = \varepsilon_{k,k} = \varepsilon$, with exact mismatch calibration $\varepsilon_{\pi,k} = 0$), the differential vanishes identically in the absence of cross-talk,
\begin{equation}
-\varepsilon_{0,k} + \varepsilon_{k,k} = -\varepsilon + \varepsilon = 0 \implies dp_{\mathrm{res}, k} = 0 + \mathcal{O}\big(\|\Delta\bm{V}_{\mathrm{off}}\|_\infty \|\bm{p}_0\| \|\bm{\varepsilon}\|\big).
\end{equation}
Uniform calibration misestimations do not corrupt the unbiasing accuracy to first order.
\end{lemma}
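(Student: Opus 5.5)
The plan is to specialize the decoupled first-order differential of Proposition~\ref{thm:app_exact_differential} to the common-mode perturbation. Equation~\eqref{eq:app_dp_res_closed_form} expresses $dp_{\mathrm{res},k}$ along axis $k$, in the absence of cross-talk ($\Delta\bm{V}_{\mathrm{off}}\to\bm{0}$), as the product of the prefactor $p_{k,0}p_{k,k}/(p_{k,k}-p_{k,0})$ with the bracket $\varepsilon_{\pi,k}/(1-\pi_k)-\varepsilon_{0,k}+\varepsilon_{k,k}$. The remainders are $\mathcal{O}(p_{k,0}(1-\pi_0)\|\bm{\varepsilon}_\pi\|_\infty)$ and $\mathcal{O}(\|\Delta\bm{V}_{\mathrm{off}}\|_\infty\|\bm{p}_0\|\|\bm{\varepsilon}\|)$. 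So I would substitute the hypotheses of the lemma directly into the proposition's output rather than redo any expansion.

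The steps are as follows. First, set $\varepsilon_{\pi,k}=0$ to kill the mismatch term in the bracket. The exact-mismatch hypothesis is naturally read as $\bm{\varepsilon}_\pi=\bm{0}$, e.g.\ when $\hat{\pi}_k$ comes from empirical sector frequencies so that $\bar{w}=1$. Under that reading, the remainder proportional to $\|\bm{\varepsilon}_\pi\|_\infty$ also vanishes. Second, set $\varepsilon_{0,k}=\varepsilon_{k,k}=\varepsilon$, so the bracket collapses to $-\varepsilon+\varepsilon=0$. The prefactor stays finite because Theorem~\ref{thm:app_levy_desplanques_pauli} guarantees $V_{k,k}=p_{k,k}-p_{k,0}$ is bounded away from zero below threshold. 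The main term is therefore exactly zero, and only the cross-talk remainder $\mathcal{O}(\|\Delta\bm{V}_{\mathrm{off}}\|_\infty\|\bm{p}_0\|\|\bm{\varepsilon}\|)$ survives, which is the claimed statement.

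As a sanity check I would verify the binary case $K=2$ from Eq.~\eqref{eq:app_pres_rational_estimators} exactly, not just to first order. With $\hat{\pi}=\pi$ the ratios $(1-\pi)/(1-\hat{\pi})$ and $\pi/\hat{\pi}$ equal one. With $\hat{p}_a=\pa(1+\varepsilon)$ and $\hat{p}_c=\pc(1+\varepsilon)$, the common factor $(1+\varepsilon)$ cancels between numerator and denominator. That leaves $(\pc\pa-\pa\pc)/(\pc-\pa)=0$ to all orders, consistent with Eq.~\eqref{eq:app_dp_res_binary_reduction}.

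The main obstacle is interpretive rather than computational: the remainder in Eq.~\eqref{eq:app_dp_res_closed_form} involving $\|\bm{\varepsilon}_\pi\|_\infty$ involves all sector shifts, not only $\varepsilon_{\pi,k}$. The plan is to require the full vector $\bm{\varepsilon}_\pi=\bm{0}$. If instead only $\varepsilon_{\pi,k}=0$ is imposed, I would go back to the exact multi-channel differential in the proof of Proposition~\ref{thm:app_exact_differential}. There I would track the terms $u_{0,k}(\varepsilon_{\pi,k}-\varepsilon_{\pi,0})\bm{p}_k$ and absorb them into an additional $\mathcal{O}(p_{k,0}(1-\pi_0)\|\bm{\varepsilon}_\pi\|_\infty)$ correction alongside the cross-talk term.
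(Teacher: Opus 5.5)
Your proposal is correct and matches the paper's proof, which consists only of substituting $\varepsilon_{0,k}=\varepsilon_{k,k}=\varepsilon$ and $\varepsilon_{\pi,k}=0$ into Eq.~\eqref{eq:app_dp_res_closed_form}. Your added observations are also valid: the $\mathcal{O}\big(p_{k,0}(1-\pi_0)\|\bm{\varepsilon}_\pi\|_\infty\big)$ remainder vanishes only if the whole vector $\bm{\varepsilon}_\pi$ is zero (the lemma's displayed bound drops this term, which is higher order in $p_L$), and your all-orders check of the $K=2$ case via Eq.~\eqref{eq:app_pres_rational_estimators} is correct.
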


\begin{proof}
Directly follows by substituting $\varepsilon_{0,k} = \varepsilon_{k,k} = \varepsilon$ and $\varepsilon_{\pi,k} = 0$ into Eq.~\eqref{eq:app_dp_res_closed_form}.
\end{proof}

\begin{theorem}[Residual Error Calibration Suppression via Agreement-Sector Projection]
\label{thm:app_distance_suppression}
In the fault-tolerant regime below threshold ($p < p_{\mathrm{th}}$) for $K = M + 1$, because $p_{k,0} \ll p_{k,k} \sim \mathcal{O}(1)$ with $p_{k,k} \ge 1/4$, the prefactor in Eq.~\eqref{eq:app_dp_res_closed_form} simplifies to
\begin{equation}
\frac{p_{k,0} p_{k,k}}{p_{k,k} - p_{k,0}} = p_{k,0}\big(1 + \mathcal{O}(p_{k,0})\big).
\end{equation}
Taking the Euclidean norm across all $M$ Pauli coordinates yields
\begin{equation}
\|\bm{p}_{\mathrm{res}}\| = \mathcal{O}\big(\varepsilon_{\max} \|\bm{p}_0\|\big) = \mathcal{O}\big(\alpha \varepsilon_{\max} \|\bar{\bm{p}}\|\big),
\label{eq:app_distance_suppression_final}
\end{equation}
where $\varepsilon_{\max} \equiv \max_{i,k} \{ |\varepsilon_{\pi,k}|, |\varepsilon_{i,k}| \}$ and $\alpha \equiv \|\bm{p}_0\| / \|\bar{\bm{p}}\| \in (0, 1)$ denotes the consensus agreement error suppression ratio (Hypothesis~H3).
Consequently, the minimal resolving protocol ($K = M + 1$) suppresses residual calibration bias by the agreement factor $\alpha < 1$ relative to uncoupled estimation.
\end{theorem}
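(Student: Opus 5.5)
The plan is to start from the first-order residual differential of Proposition~\ref{thm:app_exact_differential}. Since $\bm{p}_{\mathrm{res}}$ vanishes identically at zero calibration error, a first-order Taylor expansion in the relative perturbations gives $\bm{p}_{\mathrm{res}} = d\bm{p}_{\mathrm{res}} + \mathcal{O}(\varepsilon_{\max}^2 \|\bm{p}_0\|)$. To justify this, the second-order remainder also has to be controlled at scale $\|\bm{p}_0\|$, and I will argue this below. The first task is the prefactor in Eq.~\eqref{eq:app_dp_res_closed_form}. I will write $p_{k,k}/(p_{k,k}-p_{k,0}) = (1 - p_{k,0}/p_{k,k})^{-1}$. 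By Hypothesis~H3, $p_{k,0} = \mathcal{O}(p_L)$, and by the diagonal bound $p_{k,k}\ge 1/4$ (from Theorem~\ref{thm:app_levy_desplanques_pauli} at the tie-break baseline), the ratio $p_{k,0}/p_{k,k}\le 4p_{k,0}\ll 1$. A geometric series then gives the stated $p_{k,0}(1+\mathcal{O}(p_{k,0}))$.

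Next I will bound the bracket $\varepsilon_{\pi,k}/(1-\pi_k) - \varepsilon_{0,k} + \varepsilon_{k,k}$. Since $\pi_k = \mathcal{O}(p_L)\le 1/2$ (Hypothesis~H2), the bracket has magnitude at most $(2+1+1)\varepsilon_{\max} = 4\varepsilon_{\max}$. Thus each decoupled coordinate satisfies $|dp_{\mathrm{res},k}|\le 4\varepsilon_{\max}\,p_{k,0}(1+\mathcal{O}(p_L))$. The two correction terms in Eq.~\eqref{eq:app_dp_res_closed_form} are $\mathcal{O}(p_{k,0}(1-\pi_0)\varepsilon_{\max}) = \mathcal{O}(p_L \varepsilon_{\max} p_{k,0})$ and $\mathcal{O}(\|\Delta\bm V_{\mathrm{off}}\|_\infty\|\bm p_0\|\varepsilon_{\max})$. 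Both are already proportional to $\varepsilon_{\max}\|\bm{p}_0\|$, because $\|\Delta\bm V_{\mathrm{off}}\|_\infty$ is bounded by a constant through Eq.~\eqref{eq:app_delta_V_off_bound}. Summing the squares over the $M$ coordinates and using $\sum_k p_{k,0}^2 = \|\bm p_0\|^2$ then gives $\|d\bm p_{\mathrm{res}}\| = \mathcal{O}(\varepsilon_{\max}\|\bm p_0\|)$. Finally, the definition $\alpha = \|\bm p_0\|/\|\bar{\bm p}\|$ from H3 converts this into the $\alpha\varepsilon_{\max}\|\bar{\bm p}\|$ form.

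The main obstacle is showing that the cross-talk coupling and the second-order remainder do not produce terms of order $\varepsilon_{\max}\|\bar{\bm p}\|$ without an $\alpha$ factor. For instance, one might worry about terms involving $\pi_k$ rather than $p_{k,0}$. To handle this, I will use the exact multi-channel differential from the proof of Proposition~\ref{thm:app_exact_differential},
\[
d\bm p_{\mathrm{res}} = -(1+\bm 1_M^\top\bm u_0)\,d\hat{\bm p}_0 + \sum_{k} u_{0,k}\bigl[d\hat{\bm p}_k + (\varepsilon_{\pi,k}-\varepsilon_{\pi,0})\bm p_k\bigr].
\]
Every term here carries either $d\hat{\bm p}_0 = \mathcal{O}(\varepsilon_{\max}\|\bm p_0\|)$ or a factor $u_{0,k}$. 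The factor $u_{0,k}$ satisfies $\|\bm u_0\|\le\|\bm V^{-1}\|\|\bm p_0\| = \mathcal{O}(\|\bm p_0\|/v_{\min})$ by H1 and the Varah bound (Lemma~\ref{lem:app_levy_desplanques}). Since $\|\bm p_k\|,\|d\hat{\bm p}_k\|\le 1+\varepsilon_{\max}$, this yields the bound directly, without the decoupling assumption.

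For the remainder, the weights are rational in the perturbations with denominators $\hat\pi_k$, $\hat\pi_0$, and $\det\hat{\bm V}$. These stay bounded away from zero for $\varepsilon_{\max}$ small, by H1 and the margin $p_{c,X}p_{c,Z}$. In addition, every weight deviation enters multiplied by $\hat{\bm u}_0 = \mathcal{O}(\|\bm p_0\|)$ or by $d\hat{\bm p}_0$. Together these give a remainder of $\mathcal{O}(\varepsilon_{\max}^2\|\bm p_0\|)$, which completes the estimate.
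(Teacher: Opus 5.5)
Your proposal is correct, and its core is the paper's own argument. The paper likewise starts from the decoupled first-order differential \eqref{eq:app_dp_res_closed_form}, rewrites the prefactor as $p_{k,0}/(1-p_{k,0}/p_{k,k})$, bounds the bracket by a multiple of $\varepsilon_{\max}$ (citing the projection of $d\bar{w}$ onto $\bm{p}_0$ in \eqref{eq:app_dp_res_step1}), and takes the $\ell_2$ norm.

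Your extra step goes beyond the paper, whose argument is first order in $\varepsilon$ and formally in the decoupled limit. You fall back on the multi-channel differential instead of the small-cross-talk correction term, and you control the second-order remainder. Two small repairs are needed there.

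\begin{itemize}
\item The term $u_{0,k}\,d\hat{\bm{p}}_k$ requires $\|d\hat{\bm{p}}_k\|\le\varepsilon_{\max}\|\bm{p}_k\|\le\varepsilon_{\max}$, which follows from $d\hat{p}_{i,k}=p_{i,k}\varepsilon_{i,k}$. The bound $1+\varepsilon_{\max}$ you wrote would lose the factor $\varepsilon_{\max}$.
\item $\hat{\pi}_k=\mathcal{O}(p_L)$ is not bounded away from zero. What actually saves the remainder estimate is that only the ratios $r_k\equiv\pi_k/\hat{\pi}_k=(1+\varepsilon_{\pi,k})^{-1}$ enter $\bm{p}_{\mathrm{res}}$.
\end{itemize}

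In fact your structural observation is exact, not just first order. Using $\hat{\bm{V}}\hat{\bm{u}}_0=\hat{\bm{p}}_0$, i.e.\ $\sum_k \hat{u}_{0,k}\hat{\bm{p}}_k=(1+\bm{1}_M^\top\hat{\bm{u}}_0)\hat{\bm{p}}_0$, one finds
\[
\bar{w}\,\bm{p}_{\mathrm{res}} = -r_0\big(1+\bm{1}_M^\top\hat{\bm{u}}_0\big)(\hat{\bm{p}}_0-\bm{p}_0) + \sum_{k=1}^M \hat{u}_{0,k}\big[r_0(\hat{\bm{p}}_k-\bm{p}_k) + (r_0-r_k)\bm{p}_k\big],
\qquad
\bar{w} = r_0 + \sum_{k=1}^M (r_0-r_k)\hat{u}_{0,k}.
\]
Hence $\bar{w}=1+\mathcal{O}(\varepsilon_{\max})$. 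The bound $\|\bm{p}_{\mathrm{res}}\|=\mathcal{O}(\varepsilon_{\max}\|\bm{p}_0\|)$, with a constant depending only on $M$ and $v_{\min}$, then follows directly once $\|\hat{\bm{V}}-\bm{V}\|_2\le v_{\min}/2$. The Taylor-remainder step can therefore be dropped entirely.
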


\begin{proof}
Below threshold, the reference error rates satisfy $p_{k,0} = \Theta(p_L) = \Theta((p/p_{\mathrm{th}})^{(d+1)/2})$ and $\|\bm{p}_0\| = \alpha \|\bar{\bm{p}}\|$ with $\alpha \in (0, 1)$, while directional rates satisfy $p_{k,k} \sim \mathcal{O}(1)$ with $p_{k,k} \ge 1/4 - \mathcal{O}(p_L)$.
The prefactor evaluates to
\begin{equation}
\frac{p_{k,0} p_{k,k}}{p_{k,k} - p_{k,0}} = \frac{p_{k,0}}{1 - p_{k,0}/p_{k,k}} = p_{k,0}\big(1 + \mathcal{O}(p_{k,0})\big).
\end{equation}
Because enforcing trace preservation projects the visibility shift $d\bar{w}$ onto $\bm{p}_0$ in Eq.~\eqref{eq:app_dp_res_step1} (or sets $d\bar{w} = 0$ identically when $\hat{\pi}_k = \pi_k$ uses the empirical evaluation-stream frequencies), bounding the bracketed relative error terms in Eq.~\eqref{eq:app_dp_res_closed_form} by $\varepsilon_{\max}$ and taking the $\ell_2$ norm across all $M$ coordinates yields $\|\bm{p}_{\mathrm{res}}\| = \mathcal{O}(\varepsilon_{\max} \|\bm{p}_0\|)$.
Since the unconditional error vector satisfies $\|\bar{\bm{p}}\| = \Theta(p_L) = \Theta((p/p_{\mathrm{th}})^{(d+1)/2})$ and the agreement error suppression ratio satisfies $\alpha \equiv \|\bm{p}_0\|/\|\bar{\bm{p}}\| \in (0, 1)$, we obtain $\|\bm{p}_{\mathrm{res}}\| = \mathcal{O}(\alpha \varepsilon_{\max} \|\bar{\bm{p}}\|)$, completing the proof.
\end{proof}

This result demonstrates two physical advantages over alternative mitigation and postselection strategies.
First, discarding disagreement shots via postselection leaves an unmitigated systematic error equal to the consensus centroid $\bm{p}_0 = \alpha \bar{\bm{p}}$, accompanied by an exponential sample rejection rate.
The four-sector filter eliminates this systematic bias without shot rejection, leaving only the calibration-induced residual error $\|\bm{p}_{\mathrm{res}}\| = \mathcal{O}(\alpha \varepsilon_{\max} \|\bar{\bm{p}}\|)$, which is suppressed by the relative calibration precision $\varepsilon_{\max} \ll 1$ below the postselection floor.
Second, in physical probabilistic error cancellation (PEC), calibration inaccuracies in physical inversion gates directly yield residual logical errors scaling with the unmitigated rate, $\|\bm{p}_{\mathrm{res}}^{\mathrm{PEC}}\| = \mathcal{O}(\varepsilon_{\max} \|\bar{\bm{p}}\|)$.
In contrast, four-sector mitigation achieves an additional suppression factor $\alpha \equiv \|\bm{p}_0\|/\|\bar{\bm{p}}\| \ll 1$, because errors in the calibrated filter weights act only on consensus-projected coordinates.

\subsection{\texorpdfstring{Microscopic Derivation of the Dual-Decoder Error Asymmetry Ratio and Conditional Centroids}{Microscopic Derivation of the Dual-Decoder Error Asymmetry Ratio and Conditional Centroids}}
\label{subsec:app_decoder_asymmetry_derivation}

We now provide the microscopic derivation connecting the dual-decoder error asymmetry ratio $r_j$ to the underlying logical error rates of the real-time decoder $\mathcal{D}_{\mathrm{rt}}$ and offline reference decoder $\mathcal{D}_{\mathrm{off}}$.
Let $E_{\mathrm{rt}, j}, E_{\mathrm{off}, j} \in \{0, 1\}$ denote the binary indicator variables for logical errors along check network $j \in \{X, Z\}$ after applying real-time and offline corrections, respectively.
Disagreement along check network $j$ corresponds to the binary indicator event $\chi_j^{\mathrm{dd}} = 1$, where $\chi_j^{\mathrm{dd}} \equiv \bm{1}[\mathcal{D}_{\mathrm{rt}, j}(\bm{s}) \neq \mathcal{D}_{\mathrm{off}, j}(\bm{s})]$.
This event decomposes into two disjoint physical outcomes,
\begin{equation}
P(\chi_j^{\mathrm{dd}} = 1) = P(E_{\mathrm{rt}, j} = 1, \, E_{\mathrm{off}, j} = 0) + P(E_{\mathrm{rt}, j} = 0, \, E_{\mathrm{off}, j} = 1) + \mathcal{O}(p_L^2).
\end{equation}
The unmitigated logical error rates of the two decoders along check network $j$ are given by the marginal expectations
\begin{align}
p_{L,j} &= P(E_{\mathrm{rt}, j} = 1) = P(E_{\mathrm{rt}, j} = 1, \, E_{\mathrm{off}, j} = 0) + P(E_{\mathrm{rt}, j} = 1, \, E_{\mathrm{off}, j} = 1), \\
p_{L,j}' &= P(E_{\mathrm{off}, j} = 1) = P(E_{\mathrm{rt}, j} = 0, \, E_{\mathrm{off}, j} = 1) + P(E_{\mathrm{rt}, j} = 1, \, E_{\mathrm{off}, j} = 1).
\end{align}
In the consensus sector ($\chi_j^{\mathrm{dd}} = 0$), both decoders output identical logical corrections, so logical errors occur only when both decoders fail simultaneously on the same shot.
The conditional agreement error rate satisfies
\begin{equation}
p_{I,j} \equiv P(E_{\mathrm{rt}, j} = 1 \mid \chi_j^{\mathrm{dd}} = 0) = \frac{P(E_{\mathrm{rt}, j} = 1, \, E_{\mathrm{off}, j} = 1)}{P(\chi_j^{\mathrm{dd}} = 0)} = P(E_{\mathrm{rt}, j} = 1, \, E_{\mathrm{off}, j} = 1) + \mathcal{O}(p_L^2),
\end{equation}
because the agreement probability is $P(\chi_j^{\mathrm{dd}} = 0) = 1 - \mathcal{O}(p_L)$.
Subtracting the joint error rate from the marginal error rates yields
\begin{align}
P(E_{\mathrm{rt}, j} = 1, \, E_{\mathrm{off}, j} = 0) &= p_{L,j} - p_{I,j} + \mathcal{O}(p_L^2), \\
P(E_{\mathrm{rt}, j} = 0, \, E_{\mathrm{off}, j} = 1) &= p_{L,j}' - p_{I,j} + \mathcal{O}(p_L^2).
\end{align}
The decoder error asymmetry ratio is defined as the quotient of these two disjoint error probabilities,
\begin{equation}
r_j \equiv \frac{P(E_{\mathrm{rt}, j} = 0, \, E_{\mathrm{off}, j} = 1)}{P(E_{\mathrm{rt}, j} = 1, \, E_{\mathrm{off}, j} = 0)} = \frac{p_{L,j}' - p_{I,j}}{p_{L,j} - p_{I,j}} + \mathcal{O}(p_L).
\label{eq:app_asymmetry_ratio_derivation}
\end{equation}
The conditional error probability of the real-time decoder in the contested sector ($\chi_j^{\mathrm{dd}} = 1$) is
\begin{equation}
p_{X,j} \equiv P(E_{\mathrm{rt}, j} = 1 \mid \chi_j^{\mathrm{dd}} = 1) = \frac{P(E_{\mathrm{rt}, j} = 1, \, E_{\mathrm{off}, j} = 0)}{P(E_{\mathrm{rt}, j} = 1, \, E_{\mathrm{off}, j} = 0) + P(E_{\mathrm{rt}, j} = 0, \, E_{\mathrm{off}, j} = 1)}.
\end{equation}
Dividing numerator and denominator by $P(E_{\mathrm{rt}, j} = 1, \, E_{\mathrm{off}, j} = 0)$ yields
\begin{equation}
p_{X,j} = \frac{1}{1 + r_j} + \mathcal{O}(p_L).
\label{eq:app_pX_asymmetry_relation}
\end{equation}
Substituting Eq.~\eqref{eq:app_pX_asymmetry_relation} into the single-channel overhead slope formula $\eta_j = 1/[p_{X,j}(1 - \alpha_j)]$ gives Eq.~\eqref{eq:dd_overhead_eta_formula},
\begin{equation}
\eta_j^{\mathrm{dd}} = \frac{1 + r_j}{1 - \alpha_j} + \mathcal{O}(p_L).
\label{eq:app_eta_asymmetry_relation}
\end{equation}
This establishes the operational interpolation of the dual-decoder overhead slope $\bar{\eta}^{\mathrm{dd}} \in [1, 2]$ between two physical limits:
\begin{enumerate}
\item \emph{Ideal offline oracle limit:} As the offline decoder approaches an ideal heralding oracle ($p_{L,j}' \to 0$), the offline error probability vanishes and yields $r_j \to 0$.
In this limit, the contested error probability reaches certainty ($p_{X,j} \to 1$), and the sampling overhead slope approaches $\eta_j^{\mathrm{dd}} \to 1/(1 - \alpha_j) \approx 1$.
\item \emph{Symmetric degeneracy limit:} When the offline decoder provides the same marginal accuracy as the real-time decoder on contested shots ($p_{L,j}' \approx p_{L,j}$), the asymmetry ratio approaches unity ($r_j \to 1$).
In this limit, the contested error rate reduces to the MWPM tie-break baseline $p_{X,j} \to 1/2$, and the overhead slope approaches $\eta_j^{\mathrm{dd}} \to 2/(1 - \alpha_j) \approx 2$.
\end{enumerate}

\end{document}